\newcommand{\ubar}[1]{\underaccent{\bar}{#1}}
\newcommand{\s}{\slashed}
\newtheorem{theorem}{Theorem}[section]
\newtheorem{proposition}[theorem]
{Proposition}
\newtheorem{lemma}{Lemma}[subsection]
\newtheorem{corollary}{Corollary}[theorem]
\newtheorem{definition}{Definition}[section]
\newtheorem{remark}{Remark}[section]
\DeclareMathOperator{\tr}{tr}
\DeclareMathOperator{\II}{II}
\title{Proof of a Null Penrose Conjecture using a new Quasi-local Mass}
\author{Henri Roesch}
\date{\today} 
\begin{document}
\maketitle

\begin{abstract}
We define an explicit quasi-local mass functional which is nondecreasing along \textit{all} foliations (satisfying a convexity assumption) of null cones. We use this new functional to prove the null Penrose conjecture under fairly generic conditions.
\end{abstract}

\tableofcontents

\section{Introduction}
A spacetime $(M,g)$ is defined to be a four dimensional smooth manifold $M$ equipped with a metric $g(\cdot,\cdot)$ (or $\langle\cdot,\cdot\rangle$) of Lorentzian signature $(-,+,+,+)$. We assume that the spacetime is time oriented, i.e. admits a nowhere vanishing timelike vector field, defined to be future-pointing.\\
\indent Throughout this paper, we will denote by $\Sigma$ a spacelike embedding of a sphere in $M$ with induced metric $\gamma$. It is well known that $\Sigma$ has trivial normal bundle $T^{\perp}\Sigma$ with induced metric of signature $(-,+)$. From any choice of null section $\ubar L\in \Gamma(T^{\perp}\Sigma)$, we have a unique null partner section $L\in\Gamma(T^{\perp}\Sigma)$ satisfying $\langle \ubar L,L\rangle = 2$, providing $T^{\perp}\Sigma$ with a null basis $\{L,\ubar L\}$. We also notice that any `boost' $\{\ubar L,L\}\to\{\ubar L_a,L_a\}$ given by:
$$\ubar L_a:= a\ubar L,\,\,\, L_a:= \frac{1}{a}L$$ 
(for $a\in\mathcal{F}(\Sigma)$ a non-vanishing smooth function on $\Sigma$) gives $\langle \ubar L_a,L_a\rangle=\langle \ubar L,L\rangle=2$ as well.\\
\indent Our convention for the second fundamental form II and mean curvature $\vec{H}$ of $\Sigma$ are
$$\text{II}(V,W) = D^{\perp}_VW,\,\,\,\,\vec{H} = \tr_\Sigma\text{II}$$
for $V,W\in \Gamma(T\Sigma)$ and $D$ the Levi-Civita connection of the spacetime. 

\begin{figure}[h]
\centering
\def\svgwidth{400pt} 
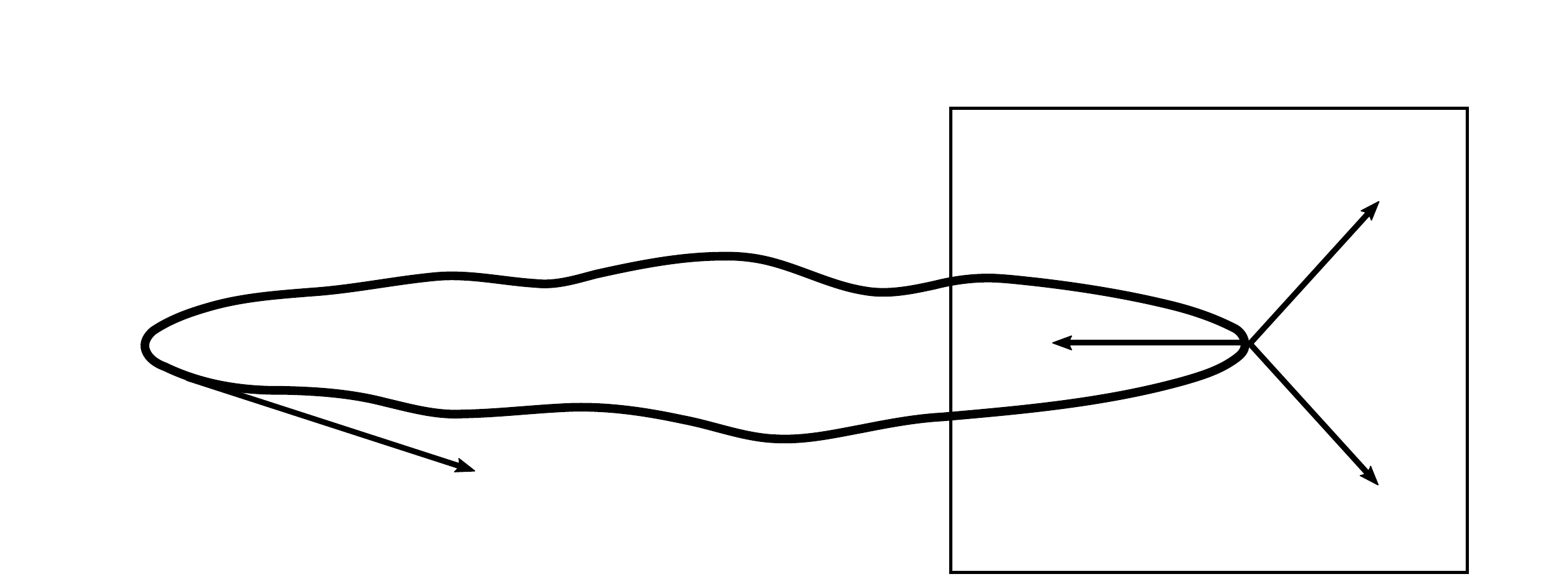
\end{figure}

\begin{definition} 
\label{d1}
Given a choice of null basis $\{\ubar L,L\}$, following the conventions of Sauter \cite{S}, we define the associated symmetric 2-tensors $\ubar\chi,\chi$ and torsion (connection 1-form) $\zeta$ by
\begin{align*}
\ubar\chi(V,W) &:= \langle D_V\ubar L,W\rangle = -\langle \ubar L,\II(V,W)\rangle\\
\chi(V,W) &:=\langle D_VL,W\rangle = -\langle L,\II(V,W)\rangle\\
\zeta(V) &:= \frac12\langle D_V\ubar L,L\rangle=-\frac12\langle D_VL,\ubar L\rangle
\end{align*}
where $V,W\in \Gamma(T\Sigma)$.
\end{definition}
Denoting the exterior derivative on $\Sigma$ by $\s{d}$, any boosted basis $\{\ubar L_a,L_a\}$ produces the associated tensors of Definition \ref{d1}:
\begin{align*}
\ubar\chi_a(V,W)&:=\langle D_V(a\ubar L),W\rangle = a\ubar\chi(V,W)\\
\chi_a(V,W)&:=\langle D_V(\frac{1}{a}L),W\rangle = \frac{1}{a}\chi(V,W)\\
\zeta_a(V)&:=\frac12\langle D_V(a\ubar L),\frac{1}{a}L\rangle = \zeta(V)+V\log|a|=(\zeta+\s{d}\log|a|)(V).
\end{align*}
For a symmetric 2-tensor $T$ on $\Sigma$ its \textit{trace-free} (or \textit{trace-less}) part is given by
$$\hat{T}:=T-\frac12(\tr_\gamma T)\gamma$$
allowing us to decompose $\ubar\chi$ into its \textit{shear} and \textit{expansion} components respectively:
$$\ubar\chi = \hat{\ubar\chi}+\frac12(\tr\ubar\chi)\gamma.$$
\begin{definition}
\label{d2}
We say $\Sigma$ \underline{is expanding along $\ubar L$} for some null section $\ubar L\in\Gamma(T^{\perp}\Sigma)$ provided that,
\begin{equation}
\langle-\vec{H},\ubar L\rangle = \tr\ubar\chi>0\tag{\dag}
\end{equation}
on all of $\Sigma$. 
\end{definition} 
Any infinitesimal flow of $\Sigma$ along $\ubar L$ gives, by first variation of area, $\dot{dA} = \langle-\vec{H},\ubar L\rangle dA= \tr\ubar\chi dA$. So the flow is locally area expanding ($\dot{dA}>0$) only if $\Sigma$ ``is expanding along $\ubar L$": 

\begin{figure}[h]
\centering
\def\svgwidth{350pt} 
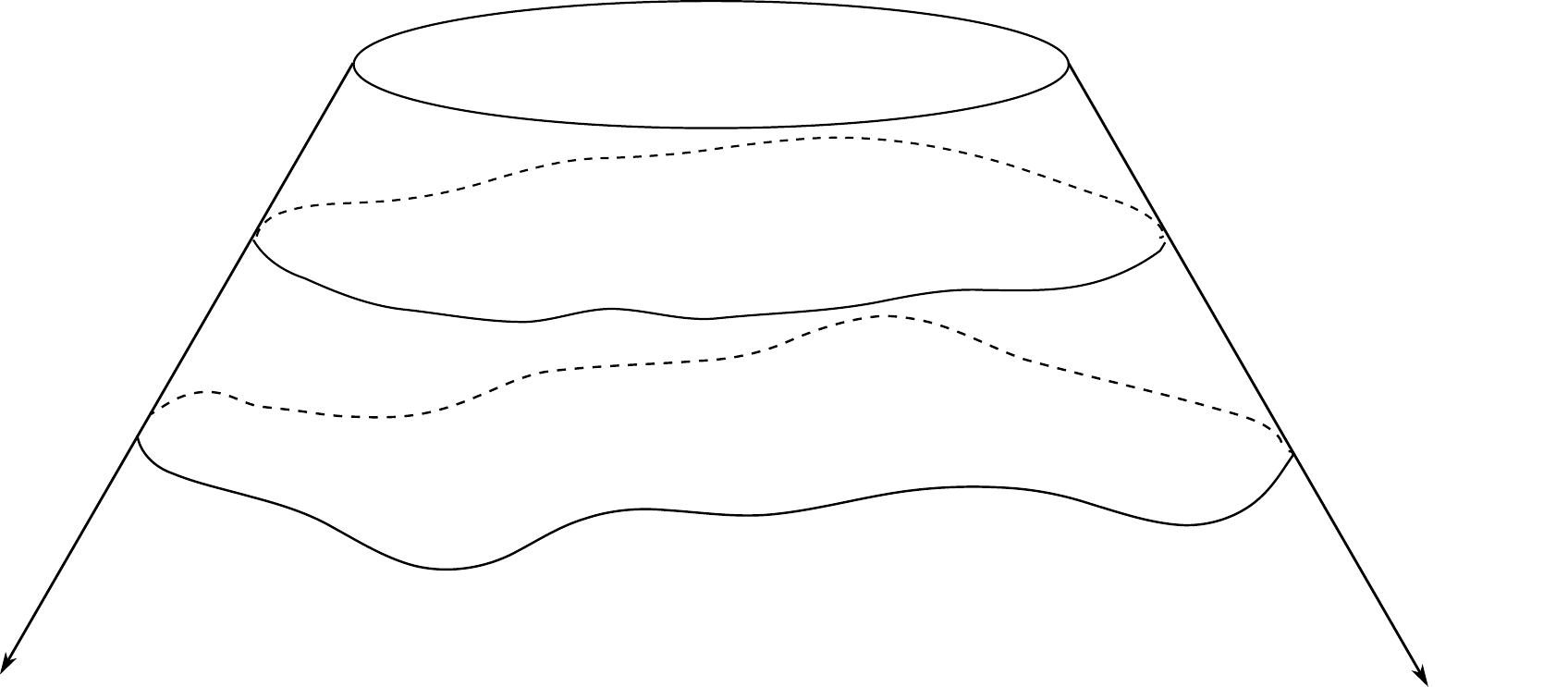
\end{figure}

\begin{remark}\label{r1}
In Section 4 we will show (Lemma \ref{l15}), whenever $\Omega$ is past asymptotically flat inside a spacetime satisfying the null energy condition, a consequence of the famous Raychaudhuri equation ((8), Section 3) is that any cross section $\Sigma\hookrightarrow\Omega$ is expanding along the past pointing null section $\ubar L\in \Gamma(T^\perp\Sigma)\cap\Gamma(T\Omega)|_{\Sigma}$. So inequality ($\dag$) holds for any foliation of $\Omega$ along $\ubar L_a$ where $a>0$ and we have an expanding null cone (as illustrated in the figure above).
\end{remark}
For $\Sigma$ expanding along some $\ubar L\in\Gamma(T^\perp\Sigma)$ we are able to choose a canonical null basis $\{L^-,L^+\}$ by requiring that our flow along $L^- = a\ubar L$ be uniformly area expanding ($\dot{dA} = dA$). From first variation of area, flowing along $a\ubar L$ gives
$$\dot{dA} = -\langle\vec{H},a\ubar L\rangle dA = a\tr\ubar\chi dA.$$
So we achieve a uniformly area expanding null flow when $a = \frac{1}{\tr\ubar\chi}$ giving:
\begin{definition}\label{d3} 
For $\Sigma$ expanding along some $\ubar L\in\Gamma(T^\perp\Sigma)$ we call the associated canonical uniformly area expanding null basis $\{L^-,L^+\}$ given by
$$L^-:=\frac{\ubar L}{\tr\ubar\chi},\,\,\, L^+:=\tr\ubar\chi L$$
the \underline{null inflation basis}.\\
We also define $\chi^{-(+)} := -\langle \II,L^{-(+)}\rangle$. It follows from the comments proceeding Definition \ref{d1} that 
\begin{align*}
\tr\chi^- &= 1\\ 
\tr\chi^+ &= \tr\ubar\chi\tr\chi = \langle\vec{H},\vec{H}\rangle
\end{align*}
and for $V\in\Gamma(T\Sigma)$ the torsion associated to this basis is given by
$$\tau(V)=\frac12\langle D_VL^-,L^+\rangle = (\zeta-\s{d}\log\tr\ubar\chi)(V).$$
\end{definition} 
We will denote the induced covariant derivative on $\Sigma$ by $\s\nabla$.
\begin{definition}\label{d4}
Assuming $\Sigma$ is expanding along $\ubar L$, for some $\ubar L\in\Gamma(T^\perp\Sigma)$, we define the geometric flux function
\begin{equation}
\rho = \mathcal{K}-\frac14\langle\vec{H},\vec{H}\rangle+ \s\nabla\cdot\tau
\end{equation}
where $\mathcal{K}$ represents the Gaussian curvature of $\Sigma$.\\
This allows us to define the associated quasi-local mass
\begin{equation}
m(\Sigma) = \frac12\Big(\frac{1}{4\pi}\int_\Sigma\rho^{\frac23}dA\Big)^{\frac32}.
\end{equation}
\end{definition}
For the induced covariant derivative $\s\nabla$ we denote the associated Laplacian on $\Sigma$ by $\s\Delta$. 
\begin{remark}\label{r2}
Whenever $\tr\chi^+ = \langle\vec{H},\vec{H}\rangle\neq0$, $\Sigma$ has two null inflation bases given by $\{L^-,L^+\}$ and $\{\frac{L^+}{\tr\chi^+},\tr\chi^+L^-\}$. As a result we typically have two distinct flux functions
\begin{align*}
\rho_{-}&=\mathcal{K}-\frac14\langle\vec{H},\vec{H}\rangle+\s\nabla\cdot\tau\\
\rho_{+}&=\mathcal{K}-\frac14\langle\vec{H},\vec{H}\rangle-\s\nabla\cdot\tau-\s\Delta\log|\langle\vec{H},\vec{H}\rangle|
\end{align*}
with associated mass functionals $m_\pm$. For the Bartnik datum $\alpha_H$ (see Definition \ref{d6}), we will see for a past pointing $\ubar L$ that $\rho_--\rho_+ = 2\s\nabla\cdot\alpha_H$ (Lemma \ref{l3}). For $\langle\vec{H},\vec{H}\rangle\neq0$, whenever $\Sigma$ is `time-flat' (i.e. $\s\nabla\cdot\alpha_H=0$) it follows that $\rho_- = \rho_+\implies m_- = m_+$.
\end{remark}
For a normal null flow off of some $\Sigma$ with null flow vector $\ubar L$, technically the flow speed is zero since $\langle \ubar L,\ubar L\rangle = 0$. In the case the $\Sigma$ expands along $\ubar L$ we define the \underline{expansion speed}, $\mathfrak{s}$, according to $\ubar L = \mathfrak{s}L^-$. We notice that $\mathfrak{s} = \tr\ubar\chi$.
We are now ready to state our first result.
\begin{theorem}\label{t1}
Let $\Omega$ be a null hypersurface foliated by spacelike spheres $\{\Sigma_s\}$ expanding along the null flow direction $\ubar L$ such that $|\rho(s)|>0$ for each $s$. Then the mass $m(s):=m(\Sigma_s)$ has rate of change 
$$\frac{dm}{ds} =\frac{(2m)^{\frac13}}{8\pi}\int_{\Sigma_s}\frac{\mathfrak{s}}{\rho^{\frac13}}\Big((|\hat{\chi}^-|^2+G(L^-,L^-))(\frac14\langle\vec{H},\vec{H}\rangle-\s\Delta\log|\rho|^{\frac13})+\frac12|\eta_\rho|^2+G(L^-,N)\Big)dA$$
where
\begin{itemize}
\item $G$ is the Einstein tensor for the ambient metric $g$
\item $\mathfrak{s}=\tr\ubar\chi$ is the expansion speed
\item $\eta_\rho:=2\hat{\chi}^-\cdot\s{d}\log|\rho|^{\frac13}-\tau$
\item $N:=|\s{d}\log|\rho|^{\frac13}|^2L^-+\s\nabla\log|\rho|^{\frac13}-\frac14L^+$
\end{itemize}
\end{theorem}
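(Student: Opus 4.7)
The plan is to differentiate the mass functional $m(s) = \tfrac{1}{2}\bigl(\tfrac{1}{4\pi}\int_{\Sigma_s}\rho^{2/3}\,dA\bigr)^{3/2}$ term by term and then reorganize the resulting integrand using the null structure equations for the inflation basis $\{L^-,L^+\}$. The chain rule, combined with the first variation $\dot{dA} = \tr\ubar\chi\,dA = \mathfrak{s}\,dA$ (valid since $\tr\chi^- = 1$ by Definition \ref{d3}), gives
$$\frac{dm}{ds} \;=\; \frac{(2m)^{1/3}}{8\pi}\int_{\Sigma_s}\frac{1}{\rho^{1/3}}\Bigl(\dot\rho + \tfrac{3}{2}\mathfrak{s}\,\rho\Bigr)\,dA.$$
Hence the entire task reduces to identifying $\dot\rho + \tfrac{3}{2}\mathfrak{s}\rho$, modulo total divergences on $\Sigma_s$ that vanish under integration, with $\mathfrak{s}$ times the bracketed integrand in the statement.

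To compute $\dot\rho = \dot{\mathcal K} - \tfrac14\tfrac{d}{ds}(\tr\ubar\chi\,\tr\chi) + \tfrac{d}{ds}(\s\nabla\!\cdot\!\tau)$, I would treat the three summands separately. For $\dot{\mathcal K}$, the variation $\dot\gamma = 2\ubar\chi = 2\mathfrak{s}\chi^-$ with $\tr\chi^- = 1$ produces the standard formula involving $\mathcal{K}$, $\s\Delta\mathfrak{s}$, and quadratic expressions in $\hat\chi^-$. For the mean-curvature-squared term I would apply the Raychaudhuri equation (equation (8)) to $\tr\ubar\chi = \mathfrak{s}$, producing $|\hat\chi^-|^2$ and $G(\ubar L,\ubar L) = \mathfrak{s}^2 G(L^-,L^-)$, together with the cross null-transport equation for $\tr\chi$ along $\ubar L$, which contributes $G(L^-,L^+)$. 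For $\tfrac{d}{ds}(\s\nabla\!\cdot\!\tau)$, I would commute $\partial_s$ past the divergence using the evolution of the Christoffel symbols of $\gamma$ along $\ubar L$ (a Lie-derivative correction proportional to $\ubar\chi$), then substitute the null transport equation for the torsion $\tau$, which brings in a further $G(L^-,\cdot)$ contraction together with a coupling to $\hat\chi^-$.

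Once $\dot\rho$ is fully expanded in terms of $\mathfrak{s}$, $\hat\chi^-$, $\tau$, derivatives of $\mathfrak{s}$, and contractions of $G$, I would divide by $\rho^{1/3}$ and integrate by parts to move Laplacians of $\mathfrak{s}$ onto $\rho^{-1/3}$. This is precisely the step that converts $\s\Delta\mathfrak{s}/\mathfrak{s}$ into the characteristic quantity $\s\Delta\log|\rho|^{1/3}$ and pairs it with the Raychaudhuri combination $|\hat\chi^-|^2 + G(L^-,L^-)$ to form the factor $\bigl(\tfrac14\langle\vec H,\vec H\rangle - \s\Delta\log|\rho|^{1/3}\bigr)$ seen in the statement. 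The remaining quadratic terms, involving $|\tau|^2$, $\tau\cdot(\hat\chi^-\!\cdot\!\s{d}\log|\rho|^{1/3})$, and $|\hat\chi^-\!\cdot\!\s{d}\log|\rho|^{1/3}|^2$, should assemble into the completed square $\tfrac12|\eta_\rho|^2$ with $\eta_\rho = 2\hat\chi^-\!\cdot\!\s{d}\log|\rho|^{1/3} - \tau$.

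The main obstacle will be this final algebraic reorganization: producing the completed square for $\eta_\rho$ and the specific linear combination $N = |\s{d}\log|\rho|^{1/3}|^2 L^- + \s\nabla\log|\rho|^{1/3} - \tfrac14 L^+$ simultaneously. I expect careful sign bookkeeping throughout, systematic use of the pairing $\langle L^-,L^+\rangle = 2$ (together with $\tr\chi^+ = \langle\vec H,\vec H\rangle$ to convert the surviving $G(L^-,L^+)$ contraction into the $-\tfrac14 L^+$ component of $N$), and the boost-invariance of $\rho$ built into the null inflation basis to serve both as a computational tool and as a consistency check. All Gauss-curvature and second-derivative contributions to $\dot\rho + \tfrac32\mathfrak{s}\rho$ not captured by the above structure must ultimately appear as a divergence on $\Sigma_s$ in order for the stated formula to hold.
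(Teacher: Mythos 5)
Your plan is correct and follows essentially the same route as the paper: derive the structure equations for $\mathcal{K}$, $\tr\ubar\chi$, $\tr\chi$ and the torsion along $\ubar L$ (Proposition \ref{p5}), assemble them into the propagation identity for $\dot\rho+\tfrac32\mathfrak{s}\rho$ (Theorem \ref{t3}), integrate against the weight $\rho^{-1/3}$ and integrate the divergence terms by parts to produce the $\s\Delta\log|\rho|^{1/3}$ factor, the completed square $\tfrac12|\eta_\rho|^2$, and the null vector $N$ (Corollaries \ref{c3}--\ref{c4}), then finish with the chain rule. The only cosmetic difference is that you propagate $\s\nabla\cdot\tau$ directly rather than splitting it into $\s\nabla\cdot\zeta-\s\Delta\log\tr\ubar\chi$ as in Corollary \ref{c2}, which is the same computation repackaged; just note that the divergence terms do not simply vanish under integration but must be integrated by parts against $\rho^{-1/3}$, as you in fact go on to describe.
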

If we assume therefore that our spacetime $M$ satisfies the null energy condition we can show our mass functional $m(\Sigma_s)$ is non-decreasing for foliations $\{\Sigma_s\}$ satisfying the following convexity condition:
\begin{definition}\label{d5}
Given a foliation of 2-spheres $\{\Sigma_s\}_{s\geq0}$ we say it is a \underline{(P)-foliation} provided: 
\begin{align*}
\rho&>0\\
\frac14\langle\vec{H},\vec{H}\rangle&\geq\s\Delta\log\rho^{\frac13}
\end{align*}
is satisfied on each $\Sigma_s$. We say $\{\Sigma_s\}_{s\geq0}$ is a strict (P)-foliation or \underline{(SP)-foliation} if additionally:
\begin{align*} 
\frac14\langle\vec{H},\vec{H}\rangle&=\s\Delta\log\rho^{\frac13},\,\text{for}\,\,s=0\\
\frac14\langle\vec{H},\vec{H}\rangle&>\s\Delta\log\rho^{\frac13},\,\text{for}\,\,s>0.
\end{align*}
\end{definition}
So for a (P)-foliation the null energy condition ensures the product of the first two terms of the integrand in Theorem \ref{t1} be non-negative. The second is non-negative since each $\Sigma_s$ is spacelike and the last term is non-negative from the null energy condition since $\langle N,N\rangle = 0$ and $\langle N,L^-\rangle = -\frac12<0$ (i.e. $N$ is null and at every point $p\in\Sigma$ lies inside the same connected component of the null cone in $T_pM$ as $L^-$).\\
\indent We will assume in Sections 4 and 5 that $\ubar L$ is past pointing. Adopting the same definitions as Mars and Soria \cite{MS1} (see Section 4.1) we have our second main result:
\begin{theorem}\label{t2}
Let $\Omega$ be a null hypersurface in a spacetime satisfying the null energy condition that extends to past null infinity. Then given the existence of a (P)-foliation $\{\Sigma_s\}$ we have
$$m(0) \leq \lim_{s\to\infty}m(\Sigma_s)=:M$$
(for $M\leq\infty$). If, in addition, $\Omega$ is past asymptotically flat with strong flux decay and $\{\Sigma_s\}$ asymptotically geodesic (see Section 4) then
$$M\leq m_B$$ 
where $m_B$ is the Bondi mass of $\Omega$. Moreover, in the case that $\tr\chi|_{\Sigma_0} = 0$ we have the null Penrose inequality
$$\sqrt{\frac{|\Sigma_0|}{16\pi}}\leq m_B.$$
Furthermore, when equality holds for an (SP)-foliation we conclude that equality holds for all foliations of $\Omega$ and the data ($\gamma$, $\ubar\chi$, $\tr\chi$ and $\zeta$) agree with some foliation of the standard null cone of the Schwarzschild spacetime.
\end{theorem}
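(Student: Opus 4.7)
The first inequality $m(0)\le M$ follows directly from Theorem~\ref{t1}. Under the (P)-condition and the null energy condition, each summand of the integrand is manifestly non-negative: $|\hat{\chi}^-|^2+G(L^-,L^-)\ge 0$ by NEC on the null vector $L^-$, the factor $\frac14\langle\vec H,\vec H\rangle-\s\Delta\log|\rho|^{1/3}\ge 0$ is literally the (P)-condition, $\frac12|\eta_\rho|^2\ge 0$ is automatic, and $G(L^-,N)\ge 0$ because a direct calculation gives $\langle N,N\rangle=0$ and $\langle N,L^-\rangle=-\frac12$, so $N$ lies on the same connected component of the null cone as $L^-$ at every point and NEC applies. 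Integrating in $s$ then yields $m(0)\le M$.

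\textbf{Bondi bound.} For $M\le m_B$ I would work in the past asymptotically flat / strong-flux-decay framework of Mars--Soria~\cite{MS1}. Fixing a Bondi-like parameterization along the asymptotically geodesic foliation, I would expand $\tr\ubar\chi,\tr\chi,\hat\chi^\pm,\zeta$ and the relevant Einstein tensor components in inverse powers of the area radius $r$; the flux-decay hypothesis then isolates the leading behavior of $\rho$ in terms of the standard Bondi mass aspect. One checks that the Jensen gap between $|\Sigma_s|^{-1}\int_{\Sigma_s}\rho^{2/3}\,dA$ and $\big(|\Sigma_s|^{-1}\int_{\Sigma_s}\rho\,dA\big)^{2/3}$ closes in the limit, so that $m(\Sigma_s)$ converges to the standard Bondi mass integral at past null infinity. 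This is the most laborious step, closely paralleling the asymptotic computations of~\cite{MS1}, and constitutes the \emph{main technical obstacle} of the theorem.

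\textbf{Penrose inequality.} The decisive step exploits the hypothesis $\tr\chi|_{\Sigma_0}=0$, which gives $\langle\vec H,\vec H\rangle=\tr\ubar\chi\,\tr\chi=0$ on $\Sigma_0$, so the (P)-condition at $s=0$ collapses to $\s\Delta\log\rho^{1/3}\le 0$. Since $\Sigma_0$ is closed, $\int_{\Sigma_0}\s\Delta\log\rho^{1/3}\,dA=0$ by the divergence theorem, and a non-positive continuous function with zero integral must vanish identically: $\s\Delta\log\rho^{1/3}\equiv 0$. Since $S^2$ admits no non-constant harmonic functions, $\rho$ is a positive constant on $\Sigma_0$. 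Gauss--Bonnet together with the divergence theorem gives $\int_{\Sigma_0}\rho\,dA=4\pi$, hence $\rho=4\pi/|\Sigma_0|$, and a direct substitution into Definition~\ref{d4} yields $m(\Sigma_0)=\sqrt{|\Sigma_0|/(16\pi)}$. Chaining with the two preceding steps produces the Penrose inequality $\sqrt{|\Sigma_0|/(16\pi)}=m(\Sigma_0)\le M\le m_B$.

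\textbf{Rigidity.} If equality holds along some (SP)-foliation, then $m(0)=M=m_B$, so $dm/ds\equiv 0$ and every non-negative summand in Theorem~\ref{t1} must vanish pointwise on $\Omega$. The strict (SP)-inequality for $s>0$ rules out the second factor in the first term, forcing $|\hat\chi^-|^2+G(L^-,L^-)\equiv 0$; hence $\hat\chi^-\equiv 0$ (shear-free) and $G(L^-,L^-)\equiv 0$. Vanishing of $\eta_\rho$ combined with $\hat\chi^-=0$ gives $\tau\equiv 0$ (time-flat), and $G(L^-,N)=0$ pins down the remaining cross component of $G$. These four structural conditions, together with the constancy of $\rho|_{\Sigma_0}$ from the previous step, propagate along the null structure equations to recover precisely the data of the standard Schwarzschild null cone, an identification already implicit in Sauter~\cite{S} and~\cite{MS1}. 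Since time-flatness makes $m_+=m_-$ by Remark~\ref{r2}, the equality persists across all foliations.
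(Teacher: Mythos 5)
Your monotonicity step and your treatment of the case $\tr\chi|_{\Sigma_0}=0$ are essentially the paper's arguments (the paper invokes the maximum principle where you integrate $\s\Delta\log\rho^{1/3}\le 0$ over the closed surface; these are equivalent). The serious problem is in your Bondi bound. You assert that the Jensen/H\"older gap between $m(\Sigma_{s_\star})$ and $E_H(\Sigma_{s_\star})$ closes in the limit, so that $m(\Sigma_{s_\star})$ converges to ``the standard Bondi mass integral.'' It does not close: by Theorem \ref{t7}, $s_\star^3\s\rho\to\frac{1}{2\phi^3}(\mathring{\mathcal{K}}\ubar\theta-\theta-\mathring\Delta\ubar\theta+4\mathring\nabla\cdot t_1)$, which is in general a non-constant function on the sphere, and H\"older is an equality only when $\rho$ is constant. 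If the gap did close you would obtain $\lim m=\lim E_H=E(\phi)$, the Bondi \emph{energy} of the reference frame coupled to your particular foliation, which over-estimates the Bondi mass --- your argument would then prove $M\le E(\phi)$ but not $M\le m_B$. The actual mechanism (Corollary \ref{c5}, Lemma \ref{l5}, Proposition \ref{p8}, Remark \ref{r8}) is that the persistence of the gap is exactly what saves the statement: $\lim_{s_\star\to\infty}m(\Sigma_{s_\star})=\frac14\big(\frac{1}{4\pi}\int(\mathring{\mathcal{K}}\ubar\theta-\theta-\mathring\Delta\ubar\theta+4\mathring\nabla\cdot t_1)^{2/3}\mathring{dA}\big)^{3/2}=\inf_{\phi>0}E(\phi)$, and this infimum over \emph{all} conformal factors is bounded above by $m_B$, which is the infimum of $E(\psi)$ over the restricted class of $\psi$ solving $1-\psi^2=\mathring\Delta\log\psi$. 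This identification, which requires the strong flux decay hypothesis and the pointwise expansion of $\s\rho$ in Theorem \ref{t7}, is the missing idea; property (P) is also needed here to guarantee the limiting integrand is non-negative so that Lemma \ref{l5} applies.

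Your rigidity paragraph also has a gap. The claim that equality ``persists across all foliations'' does not follow from Remark \ref{r2}, which concerns the two flux functions $\rho_\pm$ attached to the two null inflation bases of a \emph{single} surface, not different foliations of $\Omega$. The paper's route (Theorem \ref{t4}) is to first deduce $G(L^-,X)=0$ for every $X\in T_pM$ from the vanishing of the three integrand terms, then use the torsion propagation equation (Lemma \ref{l11}) as an ODE along the generators: for any boosted generator $\ubar L_a$ one gets $\mathcal{L}_{\ubar L_a}\tau^a+a\mathfrak{s}\tau^a=0$ with $\tau^a|_{\Sigma_0}=0$, hence $\tau^a\equiv0$, so every term in the monotonicity formula vanishes for \emph{every} foliation. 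One then shows $\rho$ is independent of the foliation used to compute it and integrates $\ubar L\rho=-\frac32\tr\ubar\chi\,\rho$ to produce the affine radius $r$ with $\rho=r_0/r^3$ and the explicit Schwarzschild-type data $\gamma=\omega^2\gamma_0$, $\zeta=-\s{d}\log\omega$, etc. Your appeal to ``propagation along the null structure equations'' points in the right direction, but the specific ODE argument and the foliation-independence of $\rho$ are the content you would need to supply.
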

\subsection{Background}
An interesting energy functional for a closed spacelike surface $\Sigma$ introduced by Hawking \cite{H} is defined by 
$$E_H(\Sigma) = \sqrt{\frac{|\Sigma|}{16\pi}}\Big(1-\frac{1}{16\pi}\int_\Sigma\langle\vec{H},\vec{H}\rangle dA\Big).$$
Named the \textit{Hawking Energy} this functional provides a measure of the energy content within $\Sigma$. We also notice by the Gauss-Bonnet and Divergence Theorems that
$$\int_\Sigma\rho dA = 8\pi\frac{E_H(\Sigma)}{\sqrt{\frac{|\Sigma|}{4\pi}}}$$
motivating in part why we call $\rho$ a flux function. \\
\indent We find various interesting scenarios where justification for $E_H$ as an energy is given. In Minkowski spacetime if $\Sigma$ is chosen to be any cross-section of the null cone of a point, work of Sauter (\cite{S}, Section 4.5) shows that $E_H(\Sigma)=0$, as expected of a flat vacuum spacetime. For Schwarzschild spacetime, the famous geometry modeling a static isolated black hole of fixed mass $M$, Sauter also shows when $\Sigma$ is any cross-section of the so called `standard null cone' $E_H(\Sigma)\geq M$ with equality if and only if $\Sigma$ is `time-symmetric' (\cite{S}, Lemma 4.4). This is reminiscent of the special relativistic understanding that an energy measurement $E=\sqrt{M^2+|\vec{p}|^2}$ for a particle always over-estimates its mass $M$ except when measured within its rest frame (i.e a frame where $\vec{p}=0$).\\
\indent Although the Hawking Energy enjoys monotonicity and convergence along certain flows, difficulty remains in assigning physical significance to the convergence of $E_H$ due to the lack of control on the asymptotics of such flows \cite{S,MS2}. We expect these difficulties may very well be symptomatic of the fact that an energy functional is particularly susceptible to the plethora of ways boosts can develop along any given flow.

\begin{figure}[h]
\centering
\def\svgwidth{300pt} 
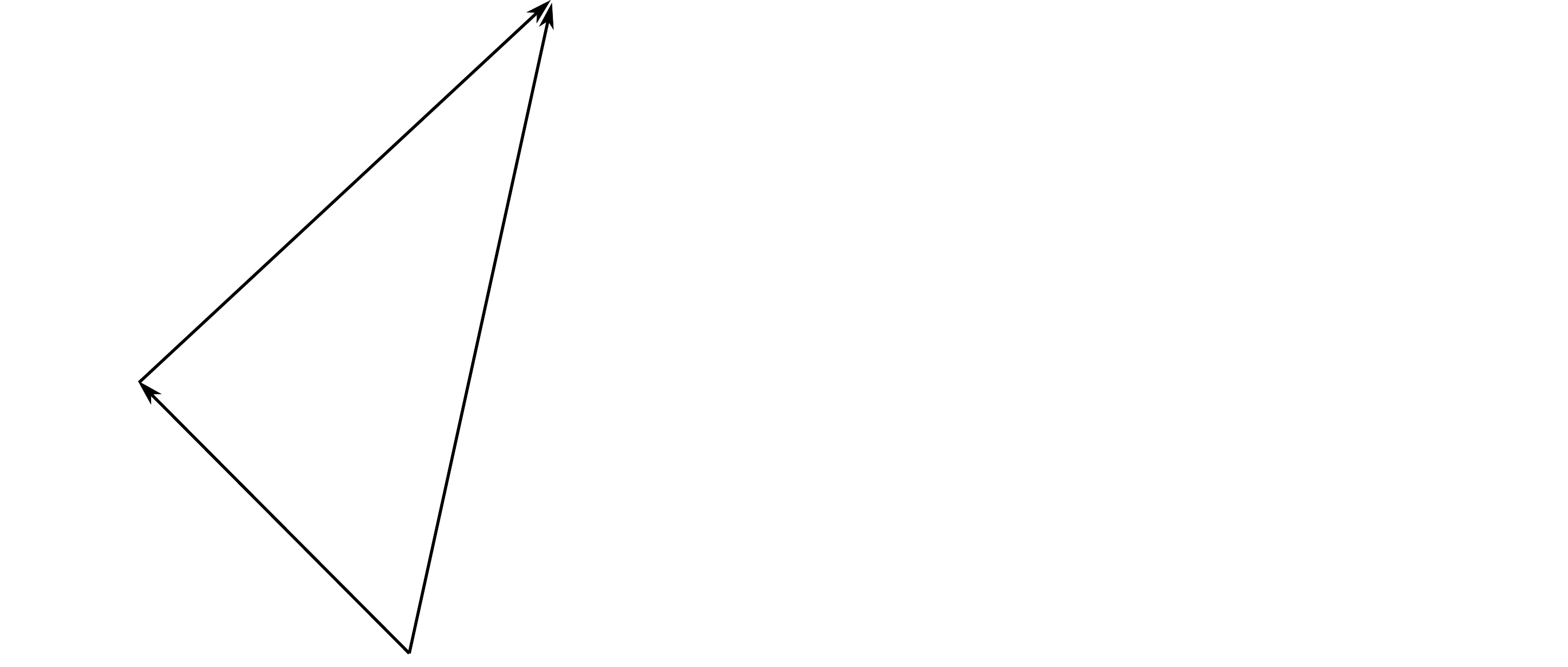
\end{figure}

Analogous to the addition of 4-velocities in special relativity, $P_1+P_2 = P_3\implies E_3 = E_1+E_2$, (as shown below) we expect an infinitesimal null flow of $\Sigma$ within a fixed reference frame to raise energy due to an influx of matter. However, with no a priori knowledge of the flow, we have no way to fix or even identify a reference frame. So it is likely that `phantom energy' will accumulate from infinitesimal boosts along the flow in analogy with special relativistic boosts, $P\to P'$ (i.e energy increases) or $P'\to P$ (i.e. energy decreases) as shown below. Geometrically we expect this to manifest along the flow in a (local) `tilting' of $\Sigma$. One may even expect a net decrease in energy as is evident in Schwarzschild spacetime (recall $E_H(\Sigma)\geq M$). This is not a problem, however, if we appeal instead to mass rather than energy since boosts leave mass invariant, $M^2=E^2-|\vec{p}|^2= (E')^2-|\vec{p}'|^2=(M')^2$. Moreover, by virtue of the Lorentzian triangle inequality (provided all vectors are time-like and either all future or all past pointing), along any given flow the mass should always increase:
$$M_3 = |(E_1+E_2,\vec{p_1}+\vec{p_2})|\geq |(E_1,\vec{p}_1)|+|(E_2,\vec{p}_2)| = M_1+M_2.$$
We hope therefore by appealing instead to a quasi local mass functional a larger class of valid flows and more generic monotonicity should arise. We approach the problem of finding such a mass functional by first finding an optimal choice of flux function for $E_H$.\\
\indent One such flux, first introduced by Christodoulou \cite{C}, is the `mass aspect function'
$$\mu = \mathcal{K}-\frac14\langle\vec{H},\vec{H}\rangle-\s{\nabla}\cdot\zeta$$
associated to an arbitrary null basis $\{\ubar L, L\}\subset \Gamma(T^\perp\Sigma)$. Using $\mu$ in his PhD thesis \cite{S}, Sauter showed the existence of flows on past null cones that render $E_H$ non-decreasing making explicit use of the fact that under a boost this mass aspect function changes via $\zeta$ according to
$$\zeta\to\zeta_a = \zeta+\s{d}\log|a|\implies \mu\to\mu_a = \mathcal{K}-\frac14\langle\vec{H},\vec{H}\rangle-\s\nabla\cdot\zeta-\s\Delta\log|a|.$$
From these observations, the divergence term in (1) (up to a sign) is somewhat motivated by an attempt to find a flux function independent of boosts. In fact, it follows in the case that $0<\langle\vec{H},\vec{H}\rangle=:H^2$ and $\ubar L$ is past pointing, that $\rho$ can be given in terms of the Bartnik data of $\Sigma$ as
$$\rho = \mathcal{K}-\frac14\langle\vec{H},\vec{H}\rangle+\s\nabla\cdot\alpha_H-\s\Delta\log H$$
(we refer the reader to Section 2 for definitions and proof). Moreover, in our two simplest cases, namely spherical cross-sections of the null cone of a point in a space form or the standard null cone of Schwarzchild spacetime, the last two terms cancel identically. Interestingly, work of Wang, Wang and Zhang \cite{CWZ} show deep connection between the 1-form $\alpha_H-\s{d}\log H$ and the underlying null geometry of a closed, co-dimension 2 surface $\Sigma$. For $\Sigma$ satisfying $\alpha_H = \s{d}\log H$ they show for various ambient structures that $\Sigma$ must be constrained to a shear-free ($\hat{\ubar\chi}=0$) null-hypersurface of spherical symmetry. In Section 2 (Proposition \ref{p2}) we show for a connected $\Sigma$ of arbitrary co-dimension inside a space form, if $\Sigma$ is expanding along some null section $\ubar L\in \Gamma(T^{\perp}\Sigma)$ such that $D^{\perp}\ubar L\propto \ubar L$, then it must be constrained to the null cone of point whenever $\hat{\ubar\chi}=0$. Leaning on work by Bray, Jauregui and Mars \cite{BJM} we also find direct motivation for (1) showing that $\rho$ arises naturally from variation of $E_H$ along null flows.\\
\indent In Section 2.2 we motivate our quasi local mass (2) by studying the particularly simple expression that results for $\rho$ on arbitrary cross sections of the standard null cone of Schwarzchild spacetime. In fact, for a standard null cone $\Omega$ of mass $M$ we show for any spherical cross section $\Sigma\subset\Omega$ that $m(\Sigma) = M$.\\
\indent One important application for a quasi local mass would is to use it to prove the Penrose conjecture \cite{P1,P2}:
$$\sqrt{\frac{|\Sigma_{\tiny{BH}}|}{16\pi}}\leq M$$
where $|\Sigma_{BH}|$ is the area of an isolated black-hole and $M$ is the mass of the system. In the appropriate setting this provides not only a strengthened version of the Positive Mass Theorem, but also insight regarding the mathematical validity of the \textit{weak cosmic censorship hypothesis} that Penrose employed in the formulation of his conjecture. As such, it is of great interest to both physicists and mathematicians alike. 
\newpage
\subsection{Outline}
This paper is organized as follows:
\begin{enumerate}
\item Section 1: Introduction
\item Section 2: Motivation\\
We present motivation for $\rho$ and $m$ with an analysis of null cones in space forms as well as the standard null cones of Schwarzschild geometry. We also show how (1) naturally arises when studying the variation of $E_H$ along arbitrary normal null flows.
\item Section 3: Propogation of $\rho$\\
We prove Theorem \ref{t1} by calculating the propagation of $\rho$ along arbitrary null flows in a null hypersurface $\Omega$. We also study the restrictions placed on $\Omega$ in the case that a flow satisfies $\frac{dm}{ds} = 0$, the case of equality for Theorem \ref{t2}.
\item Section 4: Foliation Comparison\\
Given an arbitrary cross section $\Sigma$ within a null hypersurface $\Omega$, we find its flux $\rho$ in terms of the data for a given background foliation. This allows us to prove Theorem \ref{t2} under the necessary decay assumptions.
\item Section 5: Spherical Symmetry\\
For a class of perturbations of the black hole exterior in a spherically symmetric spacetime, we show the existence of asymptotically flat null cones of strong flux decay that allow an (SP)-foliation. As a result, the existence of such perturbations satisfying the null energy condition give rise, via Theorem \ref{t2}, to the null Penrose conjecture.
\end{enumerate}

\newpage
\section{Motivation}
In this section we further develop our motivation for (1) based on an analysis of null cones within space forms. We then provide analysis of standard null cones in Kruskal spacetime to motivate (2) and for comparison with (SP)-foliations satisfying $\frac{dm}{ds} = 0$. We also show how an arbitrary variation of $E_H$ building on work of Bray, Jauregui and Mars \cite{BJM}, points toward $\rho$ being the optimal choice of flux function in the case of null flows. In this paper we will be using the following convention to construct the Riemann curvature tensor:
$$R_{XY}Z:= D_{[X,Y]}Z-[D_X,D_Y]Z.$$
From this will have need of the following versions of the Gauss and Codazzi equations:
\begin{proposition}\label{p1}
Suppose $\Sigma$ is a co-dimension 2 semi-Riemannian submanifold of $M^{n+1}$ that locally admits a normal null basis $\{\ubar L,L\}$ such that $\langle \ubar L,L\rangle = 2$. Then,
\begin{align}
&(n-1)\mathcal{K}-\frac{n-2}{n-1}\langle\vec{H},\vec{H}\rangle + \hat{\ubar\chi}\cdot\hat{\chi} = -R-2G(\ubar L,L)-\frac12\langle R_{\ubar L L}\ubar L,L\rangle\\
&\s\nabla\cdot\hat{\ubar\chi}(V) - \hat{\ubar\chi}(V,\vec\zeta) + \frac{n-2}{n-1}\tr\ubar\chi\zeta(V)-\frac{n-2}{n-1}Vtr\ubar\chi = G(V,\ubar L) - \frac12\langle R_{\ubar L V}L,\ubar L\rangle
\end{align}
for $V\in\Gamma(T\Sigma)$ and $(n-1)\mathcal{K}$ the scalar curvature of $\Sigma$.
\end{proposition}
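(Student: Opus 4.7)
The plan is to specialize the standard contracted Gauss and Codazzi identities to the null normal basis $\{\ubar L, L\}$. A preliminary step is to express the second fundamental form in this basis: since $\langle \ubar L, L\rangle = 2$, the dual pair in the normal bundle is $\{\tfrac12 L, \tfrac12 \ubar L\}$, so the definitions of $\ubar\chi$ and $\chi$ immediately yield
$$\II(V,W) = -\tfrac12\ubar\chi(V,W)L - \tfrac12\chi(V,W)\ubar L,\qquad \vec H = -\tfrac12(\tr\ubar\chi)L - \tfrac12(\tr\chi)\ubar L,$$
and in particular $\langle \vec H, \vec H\rangle = \tr\ubar\chi\,\tr\chi$.

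For (4), I would take the Gauss equation (in the sign adapted to the paper's convention $R_{XY}Z = D_{[X,Y]}Z - [D_X,D_Y]Z$) and contract it over a $\Sigma$-orthonormal frame $\{e_i\}$. The left-hand side produces the scalar curvature $(n-1)\mathcal{K}$. The $\II$-product term unfolds, using the decomposition above, to $\tr\ubar\chi\,\tr\chi - \ubar\chi\cdot\chi$, and the shear--trace identity $\ubar\chi\cdot\chi = \hat{\ubar\chi}\cdot\hat\chi + \tfrac{1}{n-1}\tr\ubar\chi\,\tr\chi$ (valid in dimension $n-1$) supplies the coefficient $\tfrac{n-2}{n-1}\langle \vec H,\vec H\rangle$ and the shear-coupling $\hat{\ubar\chi}\cdot\hat\chi$ appearing in (4). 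The traced ambient piece $\sum_{ij}\langle R^M_{e_ie_j}e_i,e_j\rangle$ is not the scalar curvature of $M$; completing the $\Sigma$-frame to all of $TM$ by adjoining $\{\ubar L, L\}$ and using the inverse-metric entries $g^{\ubar L L} = g^{L\ubar L} = \tfrac12$ contributes exactly the missing pieces $-2\,\text{Ric}(\ubar L,L) - \tfrac12\langle R_{\ubar L L}\ubar L, L\rangle$. Writing $\text{Ric} = G + \tfrac12 R g$ and using $g(\ubar L,L)=2$ then converts the Ricci term into $-R - 2G(\ubar L,L)$, producing (4).

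For (5), I would apply the Codazzi equation $(D_X\II)(Y,Z) - (D_Y\II)(X,Z) = \pm (R^M_{XY}Z)^\perp$ and extract its $L$-component. Since $L$ and $\ubar L$ are null, the normal-bundle derivatives reduce to $D_X^\perp L = -\zeta(X)L$ and $D_X^\perp \ubar L = \zeta(X)\ubar L$, so the $L$-component of $(D_X\II)(Y,Z)$ is $-\tfrac12\bigl[(\s\nabla_X \ubar\chi)(Y,Z) - \zeta(X)\ubar\chi(Y,Z)\bigr]$. Antisymmetrizing in $X \leftrightarrow Y$, tracing $Y = Z$ over $\{e_i\}$, and substituting $\ubar\chi = \hat{\ubar\chi} + \tfrac{1}{n-1}\tr\ubar\chi\,\gamma$ produces the left side of (5) once the pure-trace portions of $\s\nabla\cdot\ubar\chi$ and $\ubar\chi(V,\vec\zeta)$ combine with $V\tr\ubar\chi$ and $\tr\ubar\chi\,\zeta(V)$ to give the uniform coefficient $\tfrac{n-2}{n-1}$. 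For the right side, the same device of completing the $\Sigma$-trace to a full $TM$-trace via the null basis converts $\sum_i\tfrac12\langle R^M_{Ve_i}e_i,\ubar L\rangle$ into $G(V,\ubar L) - \tfrac12\langle R_{\ubar L V}L, \ubar L\rangle$ after trading Ricci for Einstein.

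The main obstacle is purely bookkeeping: reconciling the paper's non-standard Riemann sign convention with the classical forms of Gauss and Codazzi, threading the factor $\tfrac12$ arising from the null-basis inverse metric through both identities, and checking that the shear--expansion split produces the uniform coefficient $\tfrac{n-2}{n-1}$ in arbitrary dimension. Beyond these, every step is a routine expansion in the null frame.
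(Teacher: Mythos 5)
Your proposal is correct and follows essentially the same route as the paper: decompose $\vec{\II}$ and $\vec H$ in the null basis, complete the $\Sigma$-traces of the Gauss and Codazzi equations to full $TM$-traces using $g^{-1}|_\Sigma = \tfrac12\ubar L\otimes L+\tfrac12 L\otimes\ubar L+\gamma^{-1}$, split $\ubar\chi$ into shear and expansion to produce the $\tfrac{n-2}{n-1}$ coefficients, and trade Ricci for Einstein at the end. The remaining work is exactly the sign and factor-of-$\tfrac12$ bookkeeping you identify, and all of your intermediate formulas (e.g. $D_V^\perp\ubar L=\zeta(V)\ubar L$, $D_V^\perp L=-\zeta(V)L$, and the $L$-coefficient of $(\s\nabla_V\II)$) check out against the paper's computation.
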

\begin{proof}
From the Gauss equation (cite) we have,
$$\langle \s{R}_{V,W}U,S\rangle = \langle R_{V,W}U,S\rangle + \langle\vec{\II}(V,U),\vec{\II}(W,S)\rangle - \langle\vec{\II}(V,S),\vec{\II}(W,U)\rangle$$
for $\s{R},\,R$ the Riemann tensors of $\Sigma, M$ respectively and $V,W,U,S\in\Gamma(T\Sigma)$. Restricted to $\Sigma$ the ambient metric has inverse
$$g^{-1}|_{\Sigma}=\frac12\ubar L\otimes L+\frac12 L\otimes \ubar L+\gamma^{-1}$$
so taking a trace over $V,U$ then $W,S$ in $\Sigma$ we have:
\begin{align*}
\langle \s{R}_{VW}U,S\rangle &\xrightarrow{(V,U)} \s{Ric}(W,S)\xrightarrow{(W,S)} (n-1)\mathcal{K}\\
\langle R_{VW}U,S\rangle &\xrightarrow{(V,U)} Ric(W,S) - \frac12(\langle R_{\ubar L W}L,S\rangle+\langle R_{L W}\ubar L,S\rangle)\xrightarrow{(W,S)} R - 2Ric(\ubar L, L) - \frac12\langle R_{\ubar L L}\ubar L,L\rangle.
\end{align*}
Since $\vec{\II} = \hat{\II}+\frac{1}{n-1}\vec{H}\gamma = -\frac12\hat{\ubar\chi}L-\frac12\hat{\chi}\ubar L + \frac{1}{n-1}\vec{H}\gamma$ we have
\begin{align*}
\langle\vec{\II},\vec{\II}\rangle = \frac12(\hat{\ubar\chi}\otimes\hat{\chi}+\hat{\chi}\otimes\hat{\ubar\chi})-\frac{\langle\vec{H},L\rangle}{2(n-1)}(\hat{\ubar\chi}\otimes\gamma+\gamma\otimes\hat{\ubar\chi})-\frac{\langle\vec{H},\ubar L\rangle}{2(n-1)}(\hat{\chi}\otimes\gamma+\gamma\otimes\hat{\chi})+\Big(\frac{1}{n-1}\Big)^2\langle\vec{H},\vec{H}\rangle\gamma\otimes\gamma
\end{align*}
so returning to our trace 
\begin{align*}
&\langle\vec{\II}(V,U),\vec{\II}(W,S)\rangle\xrightarrow{(V,U),(W,S)}\langle\vec{H},\vec{H}\rangle\\
&\langle\vec{\II}(V,S),\vec{\II}(W,U)\rangle\xrightarrow{(V,U),(W,S)}\hat{\ubar\chi}\cdot\hat{\chi}+\frac{1}{n-1}\langle\vec{H},\vec{H}\rangle.
\end{align*}
Equating terms according to the Gauss equation we have
\begin{align*}
(n-1)\mathcal{K} &= R-2Ric(\ubar L,L)-\frac12\langle R_{\ubar L L}\ubar L,L\rangle-\hat{\ubar\chi}\cdot\hat{\chi}+(1-\frac{1}{n-1})\langle\vec{H},\vec{H}\rangle\\
&=-R-2G(\ubar L,L)-\frac12\langle R_{\ubar L L}\ubar L,L\rangle-\hat{\ubar\chi}\cdot\hat{\chi} +\frac{n-2}{n-1}\langle\vec{H},\vec{H}\rangle
\end{align*}
having used $G(\cdot,\cdot)=Ric(\cdot,\cdot)-\frac12 R\langle\cdot,\cdot\rangle$, (3) follows.\\
\indent From the Codazzi equation \cite{O} (pg 115), for any $V,W,U\in \Gamma(T\Sigma)$,
$$R^{\perp}_{VW}U = -(\s\nabla_V\text{II})(W,U) + (\s\nabla_W\text{II})(V,U)$$
where
$$(\s\nabla_V\text{II})(W,U):= D_V^{\perp}(\text{II}(W,U)) - \text{II}(\s\nabla_VW,U) - \text{II}(W,\s\nabla_VU).$$
So given our choice of null normal $\ubar L$ we see that
\begin{align*}
\langle D_V^{\perp}(\text{II}(W,U)),\ubar L\rangle &= -V(\ubar\chi(W,U)) - \langle \text{II}(W,U),D_V\ubar L\rangle\\
&=-V(\ubar\chi(W,U)) - \frac12\langle \text{II}(W,U),\ubar L\rangle\langle L, D_V\ubar L\rangle\\
&= -V(\ubar\chi(W,U)) + \ubar\chi(W,U)\zeta(V)\\
\langle (\s\nabla_V\text{II})(W,U),\ubar L\rangle &=  -V(\ubar\chi(W,U))+\ubar\chi(W,U)\zeta(V) + \ubar\chi(\s\nabla_VW,U)+\ubar\chi(W,\s\nabla_VU)\\
&=\zeta(V)\ubar\chi(W,U)-(\s\nabla_V\ubar\chi)(W,U).
\end{align*}
Therefore,
$$\langle R_{VW}U,\ubar L\rangle = (\s\nabla_V\ubar\chi)(W,U)-(\s\nabla_W\ubar\chi)(V,U) - \zeta(V)\ubar\chi(W,U) + \zeta(W)\ubar\chi(V,U).$$
Taking a trace over $V,U$ we conclude,
\begin{align*}
Ric(W,\ubar L) - \frac12\langle R_{\ubar L,W}L,\ubar L\rangle &= \s\nabla\cdot\ubar\chi(W) - Wtr\ubar\chi - \ubar\chi(W,\vec{\zeta}) + tr\ubar\chi\zeta(W)\\
&=\s\nabla\cdot\hat{\ubar\chi}(W) - \frac{n-2}{n-1}Wtr\ubar\chi - \hat{\ubar\chi}(W,\vec{\zeta}) + \frac{n-2}{n-1}tr\ubar\chi\zeta(W)
\end{align*}
and notice that $G(W,\ubar L) = Ric(W,\ubar L)$ since $\langle \ubar L,W\rangle = 0$.
\end{proof}
\subsection{Null Cone of a point in a Space Form}
In this section we spend some time studying (1) and (2) on cross-sections of our first example of a null hypersurface, the null cone of a point in a space form. We adopt the notation as in \cite{O} where $\mathbb{R}^n_\nu$ corresponds to the manifold $\mathbb{R}^n$ endowed with the standard inner product of index $\nu$. 
\begin{lemma}\label{l1}
Suppose $\Sigma^k\hookrightarrow\mathbb{R}^n_\nu$ ($k\geq 2$) is a connected semi-Riemannian submanifold admitting a non-trivial section $\vec{n}\in\Gamma(T^\perp\Sigma)$ such that $D^\perp\vec{n}=\eta\vec{n}$ for some 1-form $\eta$. Then the following are equivalent
\begin{enumerate}
\item $p\mapsto\exp(-\vec{n}|_p)$ is constant
\item $\eta=0$ and $\langle \II,-\vec{n}\rangle=\gamma$
\item $\langle\II,-\vec{n}\rangle=\gamma$
\end{enumerate}
where $\gamma=\langle\cdot,\cdot\rangle|_\Sigma$ and $\exp:T\mathbb{R}^n_\nu\to\mathbb{R}^n_\nu$ is the exponential map.
\end{lemma}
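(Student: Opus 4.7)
The plan is to prove the circular chain $(1)\Rightarrow(2)\Rightarrow(3)\Rightarrow(1)$. Since $\mathbb{R}^n_\nu$ is flat, its exponential map is simply $\exp_p(X)=p+X$, so condition $(1)$ amounts to the clean geometric statement that $\vec{n}|_p = p - q$ for some fixed $q\in\mathbb{R}^n_\nu$; I will keep this picture in mind throughout. Before starting, I observe that since $\Sigma$ is connected and $D^{\perp}\vec{n}=\eta\vec{n}$ is a linear first-order ODE along curves, non-triviality of the section $\vec{n}$ upgrades to $\vec{n}$ being nowhere zero, which I will need when dividing out by $\vec{n}$ below.

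For $(1)\Rightarrow(2)$, I would differentiate the identity $p - \vec{n}|_p = q$ along any $V\in\Gamma(T\Sigma)$ to obtain $D_V\vec{n} = V$, and then decompose both sides along $T\Sigma\oplus T^{\perp}\Sigma$. The normal part yields $\eta(V)\vec{n}=0$, whence $\eta\equiv 0$, while the tangential part, computed via $\langle D_V\vec{n},W\rangle = -\langle\vec{n},\II(V,W)\rangle$ for $W\in T\Sigma$ (using $\langle\vec{n},W\rangle=0$), gives $\langle\II(V,W),-\vec{n}\rangle = \langle V,W\rangle = \gamma(V,W)$. The implication $(2)\Rightarrow(3)$ is immediate.

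The heart of the argument is $(3)\Rightarrow(1)$, where the flatness of $\mathbb{R}^n_\nu$ enters essentially. Under $(3)$, the tangential computation above forces the tangential part of $D_V\vec{n}$ to be $V$ itself, so that $D_V\vec{n} = V + \eta(V)\vec{n}$. Once I establish $\eta\equiv 0$, the identity $D_V(p - \vec{n}|_p) = V - D_V\vec{n} = 0$ makes $p - \vec{n}|_p$ locally, hence globally, constant by connectedness of $\Sigma$. To obtain $\eta\equiv 0$ I would compute $[D_V,D_W]\vec{n}$ in two ways: directly by iterating $D_V\vec{n} = V + \eta(V)\vec{n}$ with the flat ambient connection (so that $D_V W - D_W V = [V,W]$), and through $R\equiv 0$, which in the paper's convention gives $[D_V,D_W]\vec{n} = D_{[V,W]}\vec{n} = [V,W] + \eta([V,W])\vec{n}$. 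After terms proportional to $[V,W]$ cancel, the surviving identity is
$$d\eta(V,W)\,\vec{n} + \eta(W)V - \eta(V)W = 0.$$
The normal component gives $d\eta=0$, while the tangential component $\eta(W)V - \eta(V)W = 0$ combined with $k=\dim\Sigma\geq 2$ (testing on two linearly independent tangent vectors at any point) forces $\eta|_p = 0$.

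The only non-routine step is precisely this passage from $(3)$ to $\eta\equiv 0$: the hypothesis $D^{\perp}\vec{n}\propto\vec{n}$ alone constrains $\eta$ very weakly, so the argument genuinely requires the interplay between ambient flatness and the rigidity condition $(3)$ on $\II$, and the dimension bound $k\geq 2$ is used precisely at the final algebraic step.
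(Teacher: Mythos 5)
Your proof is correct and follows essentially the same route as the paper's: $(1)\Leftrightarrow(2)$ by differentiating the field $P-\vec{n}$ and splitting into tangential and normal parts, and $(3)\Rightarrow(2)$ by extracting from ambient flatness the pointwise identity $\eta(W)V=\eta(V)W$ and invoking $k\geq 2$ --- the paper reaches the equivalent identity $\eta(V)\langle W,U\rangle=\eta(W)\langle V,U\rangle$ via the Codazzi equation, which is exactly the tangential component of your computation of $R_{VW}\vec{n}=0$, and then traces instead of testing on independent vectors. Your preliminary remark that non-triviality of $\vec{n}$ upgrades to nowhere-vanishing via the linear ODE $D^\perp\vec{n}=\eta\vec{n}$ on connected $\Sigma$ is a worthwhile point that the paper leaves implicit.
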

\begin{proof}
Choosing an origin $\vec{o}$ for $\mathbb{R}^n_\nu$ with associated position vector field $P = x^i\partial_i\in \Gamma(T\mathbb{R}^n_\nu$) it follows that 
$$\exp(-\vec{n}|_{\vec{p}}) = (P-\vec{n})|_{\vec{p}}$$
where, by an abuse of notation, we have omitted the composition of canonical isometries $T_{\vec{p}}\mathbb{R}^n_\nu\to T_{\vec{o}}\mathbb{R}^n_\nu\to \mathbb{R}^n_\nu$ identifying $\vec{p}$ with $P|_{\vec{p}}$. As a result, for any $V\in\Gamma(T\Sigma)$:
\begin{align*}
d(\exp(-\vec{n}))(V) &= D_V(P-\vec{n})\\
&=V-D_V\vec{n}\\
&=(V-D^{\|}_V\vec{n})-\eta(V)\vec{n}
\end{align*}
and we conclude that $\exp\circ(-\vec{n})$ is locally constant (or constant when $\Sigma$ is connected) if and only if both $V = D^{\|}_V\vec{n}$ for any $V\in\Gamma(T\Sigma)$ and $\eta=0$. Since $D^{\|}_V\vec{n} = V$ for any $V\in\Gamma(T\Sigma)$ is equivalent to $-\langle\II(V,W),\vec{n}\rangle (= \langle W,D_V\vec{n}\rangle)=\langle V,W\rangle$ for any $V,W\in\Gamma(T\Sigma)$ we have that 1.$\iff$2.\\
2.$\implies$3. is trivial. To show 3.$\implies$2. we start by taking any $U,V,W\in\Gamma(T\Sigma)$ so that the Codazzi equation gives
$$\langle(\s\nabla_V\II)(W,U),\vec{n}\rangle = \langle(\s\nabla_W\II)(V,U),\vec{n}\rangle$$
where
\begin{align*}
\langle(\s\nabla_V\II)(W,U),\vec{n}\rangle&:=\langle D^\perp_V(\II(W,U))-\II(\s\nabla_VW,U)-\II(W,\s\nabla_VU),\vec{n}\rangle\\
&=-(\s\nabla_V\gamma)(W,U)-\langle\II(W,U),D_V\vec{n}\rangle\\
&=\eta(V)\langle W,U\rangle
\end{align*}
and therefore $\eta(V)\langle W,U\rangle = \eta(W)\langle V,U\rangle$. Taking a trace over $V,U$ we conclude that $\eta(W) = k\eta(W)$ so that $k\geq2$ forces $\eta = 0$ as desired.
\end{proof}
The Hyperquadrics of $\mathbb{R}^n_\nu$ correspond to the complete, totally umbilic hypersurfaces $H_C$ of constant curvature $C$ (provided $C\neq 0$) given by
$$H_C:=\{\vec{v}\in\mathbb{R}^n_\nu|\langle \vec{v},\vec{v}\rangle = C\}$$
where $C$ runs through all values in $\mathbb{R}$. When $C=0$, $\Omega = H_0$ is the collection of all null geodesics emanating from the origin called the \textit{null-cone} centered at the origin. Consequently $\Omega+\vec{p}$ corresponds to the null cone at the point $\vec{p}$. Similarly for a space form $M$ we will define the null cone of a point $p\in M$ as the collection of all null geodesics emanating from $p$.
\begin{proposition}\label{p2}
Suppose $\Sigma^k\hookrightarrow M^{n-1}$ ($k\geq 2$) is a connected semi-Riemannian submanifold of a space form $M$. Suppose that $\Sigma$ is expanding along some null section $\ubar L$ satisfying $D^{\perp}\ubar L=\zeta\ubar L$ for some 1-form (or torsion) $\zeta$. Then the following are equivalent
\begin{enumerate}
\item $p\mapsto\exp(-\frac{k\ubar L}{\tr\ubar\chi}|_p)$ is constant
\item $\tau:=\zeta-\s{d}\log\tr\ubar\chi = 0$ and $\hat{\ubar\chi}=0$
\item $\hat{\ubar\chi}=0$
\end{enumerate}
where $\ubar\chi:=-\langle \ubar L,\II\rangle$.
\end{proposition}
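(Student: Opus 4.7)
The plan is to reduce the proposition to Lemma \ref{l1} by means of an ambient embedding of the space form $M$. Set $\vec{n}:=\frac{k\ubar L}{\tr\ubar\chi}$, a null section of $T^{\perp}\Sigma$. A direct Leibniz computation using $D^{\perp}\ubar L=\zeta\otimes\ubar L$ yields
$$D^{\perp}\vec{n}=\bigl(\zeta-\s{d}\log\tr\ubar\chi\bigr)\otimes\vec{n}=\tau\otimes\vec{n},$$
while $\langle\II,-\vec{n}\rangle=\frac{k}{\tr\ubar\chi}\ubar\chi$, so the condition $\hat{\ubar\chi}=0$ is precisely $\langle\II,-\vec{n}\rangle=\gamma$. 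Thus conditions (2) and (3) of the proposition translate, respectively, into the two-part condition ``$\eta=0$ and $\langle\II,-\vec{n}\rangle=\gamma$'' and the single condition ``$\langle\II,-\vec{n}\rangle=\gamma$'' of Lemma \ref{l1}. The only substantive step remaining is to translate condition (1), which uses the exponential map of $M$, into the flat-ambient version that Lemma \ref{l1} requires.

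The plan for this step is the following. When $M$ has zero sectional curvature, $M=\mathbb{R}^{n-1}_{\nu}$ and Lemma \ref{l1} applies directly to $\Sigma\hookrightarrow M$. Otherwise, realize $M$ isometrically as a hyperquadric $H_C\subset\mathbb{R}^n_{\nu'}$ (for the signature-appropriate $\nu'$), with position vector $P$ normal to $M$ and umbilic second fundamental form $\II_M(V,W)=-\frac{1}{C}\langle V,W\rangle P$. Since $\vec{n}$ is tangent to $M$ and orthogonal inside $TM$ to every $V\in T\Sigma$, one has $\II_M(V,\vec{n})=0$ and hence $D^{\mathbb{R}^n_{\nu'}}_V\vec{n}=D^M_V\vec{n}$; decomposing $T^{\perp}_{\mathbb{R}^n_{\nu'}}\Sigma=T^{\perp}_M\Sigma\oplus T^{\perp}_{\mathbb{R}^n_{\nu'}}M|_{\Sigma}$ and noting that the $P$-component of $D^M_V\vec{n}$ vanishes, the ambient normal connection applied to $\vec{n}$ is again $\tau\otimes\vec{n}$. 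The same umbilic identity, combined with $\langle P,\vec{n}\rangle=0$, shows that the $\vec{n}$-paired second fundamental forms of $\Sigma$ in $M$ and in $\mathbb{R}^n_{\nu'}$ agree.

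It remains to check that the exponential maps agree on $-\vec{n}$. Because $\vec{n}|_p$ is null and $\langle P,\vec{n}|_p\rangle=0$,
$$\langle p-t\vec{n}|_p,\,p-t\vec{n}|_p\rangle=C-2t\langle P,\vec{n}\rangle|_p+t^2\langle\vec{n},\vec{n}\rangle|_p=C,$$
so the affine segment $t\mapsto p-t\vec{n}|_p$ lies entirely in $H_C$, is consequently a geodesic of $M$ (by the umbilic formula its ambient acceleration vanishes identically), and realizes the null geodesic with initial velocity $-\vec{n}|_p$. Hence $\exp_M(-\vec{n}|_p)=p-\vec{n}|_p=\exp_{\mathbb{R}^n_{\nu'}}(-\vec{n}|_p)$, matching condition (1) of Lemma \ref{l1}. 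With all three translations in place, Lemma \ref{l1} applied to $\Sigma\hookrightarrow\mathbb{R}^n_{\nu'}$ with the null section $\vec{n}$ delivers the desired equivalences; the $k\geq 2$ hypothesis enters through Lemma \ref{l1}'s Codazzi trace, which is exactly what will force $\tau=0$ once $\hat{\ubar\chi}=0$ is known. The main obstacle, and what I expect to need the most care, is bookkeeping of the three distinct normal bundles $T^{\perp}_M\Sigma$, $T^{\perp}_{\mathbb{R}^n_{\nu'}}\Sigma$, and $T^{\perp}_{\mathbb{R}^n_{\nu'}}M$ so that the identifications of $D^{\perp}\vec{n}$, $\langle\II,\vec{n}\rangle$, and the exponential map really do survive the reduction.
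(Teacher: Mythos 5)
Your reduction to Lemma \ref{l1} is the same strategy the paper uses, and the local computations are all correct: $D^{\perp}\vec{n}=\tau\otimes\vec{n}$, the identification of $\hat{\ubar\chi}=0$ with $\langle\II,-\vec{n}\rangle=\gamma$, the umbilic argument showing the ambient and intrinsic normal connections and second fundamental forms agree, and the observation that affine null rays through $H_C$ stay in $H_C$ and realize its null geodesics. The translation of conditions (2) and (3) is therefore sound, since these are pointwise tensorial statements.

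The genuine gap is the sentence ``realize $M$ isometrically as a hyperquadric $H_C\subset\mathbb{R}^n_{\nu'}$.'' A space form of constant curvature $C\neq 0$ is in general only \emph{locally} isometric to the hyperquadric: the paper invokes O'Neill's theorem that $M$ and $H_C$ share a common semi-Riemannian covering $\mathcal{O}$, which yields isometries only between small neighborhoods $U_q\subset\Sigma$ and patches $V_{\vec q}\subset H_C$ (anti-de Sitter space, for instance, is the universal cover of $H_C$ rather than $H_C$ itself, and quotients are also allowed). This is harmless for $(2)\Leftrightarrow(3)$, but condition (1) is a \emph{global} statement about the map $p\mapsto\exp_M(-\vec{n}|_p)$ on all of $\Sigma$, and your identity $\exp_M(-\vec{n}|_p)=p-\vec{n}|_p$ only makes sense after the global embedding you assumed. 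To close the implication $3\Rightarrow 1$ you still need two steps the paper supplies: first, transport the focusing null geodesics back and forth through the covering and use completeness of $M$ to produce an actual focal point $o\in M$ for each local patch $U_q$; second, an open--closed connectedness argument on $\Sigma$ showing that the set of points focusing to a fixed $o$ is both open and has open complement, so that all of $\Sigma$ focuses to the same point. Conversely, for $1\Rightarrow 3$ you must lift a neighborhood of the given null geodesic $q\to o$ in $M$ to one in $H_C$ before Lemma \ref{l1} can be applied. With those covering-space and connectedness steps added, your argument matches the paper's proof.
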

\begin{proof}
If $M$ has constant curvature $C\neq 0$ we find a Hyperquadric $H_C$ of $\mathbb{R}^n_\nu$ (for some $\nu$) of the same dimension and index as M. It's a well known fact that $M$ and $H_C$ have isometric semi-Riemannian coverings (\cite{O}, (pg224) Theorem 17) which we identify and denote by $\mathcal{O}$. As a result, for any $q\in \Sigma\subset M$ we find a $\vec{q}\in H_C$ with isometric neighborhoods. Moreover, we find an open set $U_q\subset\Sigma$ of $q$ which is isometric onto some $V_{\vec{q}}\subset H_C$. Without loss of generality we will also identify $T^\perp(U_q)$ and $T^\perp(V_{\vec{q}})$. Denoting the ambient connection on $\mathbb{R}^n_\nu$ by $\bar{D}$ and the unit normal of $H^{n-1}_C\subset\mathbb{R}_\nu^n$ by $\vec{N}$ we conclude, for the null section $L^- = \frac{\ubar L}{\tr\ubar\chi}$, that $\bar{D}_V^{\perp}L^- = \tau(V)L^-+\langle\vec{N},\vec{N}\rangle\langle \bar{D}_V L^-,\vec{N}\rangle\vec{N}$. So given that all Hyperquadrics are totally umbilic it follows that $\langle \bar{D}_VL^-,\vec{N}\rangle\propto\langle V,L^-\rangle = 0$ and therefore $\bar{D}^\perp_V L^-=\tau(V) L^-$.\\
\indent We wish to show 3.$\implies$1. From the hypothesis we have that $\ubar\chi = \frac{1}{k}\tr\ubar\chi\gamma$ so it follows that $k\ubar\chi^-=-\langle kL^-,\II\rangle = \gamma$ and Lemma \ref{l2} applies for $V_{\vec{q}}\subset\mathbb{R}^n_\nu$. We conclude that $V_{\vec{q}}$ is contained inside the null-cone of a point $\vec{o}\in\mathbb{R}_\nu^n$, where $\vec{o} = \exp_{n,\nu}(-kL^-)|_{V_{\vec{q}}}$, and every $\vec{p}\in V_{\vec{q}}$ is connected to $\vec{o}$ by a null geodesic in $\mathbb{R}_\nu^n$ along $kL^-|_{\vec{p}}=\frac{k\ubar L}{\tr\ubar\chi}\Big|_{\vec{p}}\in T^\perp_{\vec{p}}V_{\vec{q}}\subset T_{\vec{p}}H_C$. Since $H_C$ is complete and totally umbilic these null geodesics must remain within $H_C$. 
Up to a possible shrinkage of $V_{\vec{q}}$ we may lift a neighborhood of the geodesic $\vec{q}\to\vec{o}$ to a neighborhood of some null geodesic $\tilde q\to\tilde o$ in $\mathcal{O}$ concluding that the isometric image $V_{\tilde{q}}$ of $V_{\vec{q}}$ contracts to $\tilde{o}$ along null geodesics. Since $M$ is complete the null geodesic $\tilde{q}\to\tilde{o}$ in turn gives rise to a null geodesic $q\to o$ in $M$ and up to an additional shrinkage we conclude that $U_q$ contracts along null geodesics onto $o$:

\begin{figure}[h]
\centering
\def\svgwidth{400pt} 
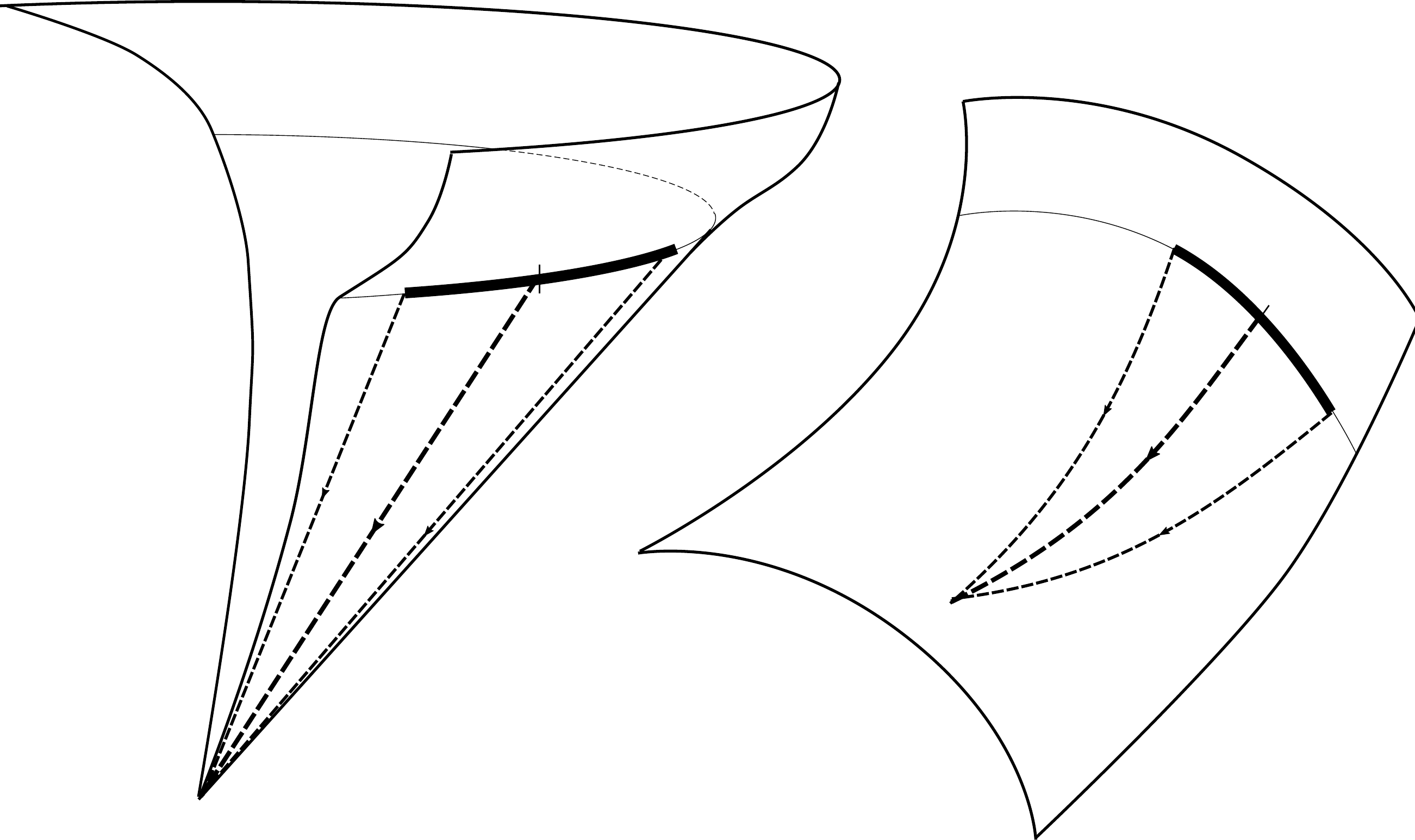
\end{figure}

\noindent In fact our argument shows that the union of all points in $\Sigma$ that get transported to $o$ must form an open subset of $\Sigma$. Conversely, if any point in $\Sigma$ gets transported to a point other than $o$ the same follows for a neighborhood around that point in $\Sigma$. By connectedness, all of $\Sigma$ must be transported to $o$ along null geodesics as desired.\\
\indent For 1.$\implies$3. we take a null geodesic from $q\in\Sigma$ along $kL^- = \frac{k\ubar L}{\tr\ubar\chi}$ to the focal point, at say, $o\in M$. Similarly as before this gives rise to a tubular neighborhood around some null geodesic $\vec{q}\to\vec{o}$ in $H_C$ within which $V_{\vec{q}}$ is contracted along null geodesics onto $\vec{o}$. Since $H_C$ is totally umbilic $V_{\vec{q}}$ is transported to $\vec{o}$ along null geodesics in $\mathbb{R}_\nu^n$ forming part of the null cone at $\vec{o} = \exp_{n,\nu}(-kL^-)|_{V_{\vec{q}}}$. Lemma \ref{l2} applies once again and we conclude that $k\chi^- =\gamma \implies\hat{\ubar\chi}=0$ on $V_{\vec{q}}$ hence on $U_q$ (since they have isometric neighborhoods). Since $q$ was arbitrary chosen the result follows.\\
Once again 2.$\implies$3. is trivial. To show 3.$\implies$2. we have similarly as in Lemma \ref{l2} from the Codazzi equation for $\Sigma\hookrightarrow M$ and $M$ of constant curvature that:
$$\tau(V)\langle W,U\rangle=\tau(W)\langle V,U\rangle$$
so that a trace over $V,U$ yields again $\tau(W) = k\tau(W)$ and therefore $\tau = 0$.
\end{proof}
For any connected, co-dimension 2 surface, from the fact that $\langle D_V\ubar L,\ubar L\rangle = \frac12V\langle\ubar L,\ubar L\rangle = 0$, it necessarily follows that $D^\perp \ubar L=\zeta\ubar L$ for any null section $\ubar L\in \Gamma(T^\perp\Sigma)$ and some associated 1-form $\zeta$. In particular, $\Sigma$ will be contained inside the null cone of a point inside a space form $M$ if we're able to find a null section $\ubar L$ along which $\Sigma$ is expanding and shear-free. Along such $\ubar L$ it follows from Lemma \ref{l1} for $C=0$ and Proposition \ref{p2} for $C\neq 0$ that $\tau=0$. So for Lorentzian space forms of dimension-4 (i.e. `Minkowski spacetime' for $C=0$, `de Sitter spacetime' for $C>0$ and `anti-de Sitter spacetime' for $C<0$) that (3) implies $\Sigma$ has flux $\rho = K - \frac14\langle\vec{H},\vec{H}\rangle=C$. When $\Sigma$ is a 2-sphere, by the Gauss-Bonnet Theorem, we conclude that $$m(\Sigma)=|E_H(\Sigma)|=\frac{|C|}{2}\Big(\frac{|\Sigma|}{4\pi}\Big)^{\frac32}.$$
The reader may be wondering why the need for the divergence term in (1) when it vanishes altogether. We take as our first hint the fact that vanishing $\tau=\zeta-\s d\log\tr\ubar\chi$ is characteristic of spherical cross sections of $\Omega$ which subsequently may obscure it's contribution. In the paper by Wang, Wang and Zhang (\cite{CWZ} Theorem 3.13, Theorem 5.2) the authors prove $\tau=0$ to be sufficient in spacetimes of constant curvature to constrain a closed, co-dimension 2 surface $\Sigma$ to a shear-free null hypersurface of spherical symmetry. Proof follows from the following Lemma and Proposition \ref{p2} when $\Sigma$ is a 2-sphere:
\begin{lemma}\label{l2}
Suppose $\Sigma$ is a spacelike 2-sphere expanding along some $\ubar L$ inside a space form $M$. Suppose also $D^{\perp}\ubar L=\zeta\ubar L$ for some 1-form $\zeta$ then
$$\tau:=\zeta-\s{d}\log \tr\ubar{\chi}=0\implies \hat{\ubar\chi}=0.$$
\end{lemma}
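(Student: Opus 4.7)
I would attack this by applying the Codazzi equation (4) from Proposition \ref{p1} in the null inflation basis $\{L^-,L^+\}$. The key observation is that under the replacement $\ubar L\to L^-$, $L\to L^+$, the associated torsion becomes exactly $\tau$ (by Definition \ref{d3}) and the trace becomes $\tr\chi^-\equiv 1$, so the two $\tr\ubar\chi$-terms in (4) combine cleanly: with $n=3$ and $\dim\Sigma=2$ we have $\tfrac{n-2}{n-1}=\tfrac{1}{2}$, $V\tr\chi^-=0$, and $\hat{\chi}^-=\hat{\ubar\chi}/\tr\ubar\chi$.

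The resulting identity is
$$\s\nabla\cdot\hat{\chi}^-(V) - \hat{\chi}^-(V,\vec\tau) + \tfrac{1}{2}\tau(V) = G(V,L^-) - \tfrac{1}{2}\langle R_{L^-V}L^+,L^-\rangle.$$
In a space form, the ambient Riemann tensor has the form $R_{XY}Z=K(\langle X,Z\rangle Y-\langle Y,Z\rangle X)$ and $G$ is a constant multiple of the metric, so every pairing on the right-hand side carries a factor $\langle V,L^\pm\rangle=0$. The RHS thus vanishes identically, and imposing the hypothesis $\tau=0$ collapses the identity to $\s\nabla\cdot\hat{\chi}^-=0$.

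The final step invokes the fact that on a 2-sphere any symmetric, trace-free, divergence-free 2-tensor is identically zero. This is classical: in isothermal coordinates such a tensor corresponds to a holomorphic quadratic differential on the underlying Riemann surface, and $S^2$ admits no non-trivial holomorphic quadratic differentials (by Riemann-Roch, since the canonical bundle has negative degree). Hence $\hat{\chi}^-\equiv 0$, and combining with $\tr\ubar\chi>0$ and $\hat{\ubar\chi}=\tr\ubar\chi\cdot\hat{\chi}^-$ gives $\hat{\ubar\chi}=0$.

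The main obstacle, and the only step doing genuinely non-formal work, is this concluding vanishing theorem: the rest is a direct specialization of Codazzi, and the sphere hypothesis is used essentially only in converting ``divergence-free trace-free'' to ``zero''. On a surface of higher genus the same argument would yield only that $\hat{\chi}^-$ corresponds to a holomorphic quadratic differential, and further input would be required to conclude vanishing.
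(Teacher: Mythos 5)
Your proposal is correct and follows essentially the same route as the paper: both specialize the Codazzi equation to the normal $L^-$ (for which the torsion is $\tau$ and $\tr\chi^-\equiv 1$), use the space-form hypothesis to kill the curvature terms, obtain $\s\nabla\cdot\hat{\chi}^-=0$ from $\tau=0$, and conclude via the injectivity of the divergence operator on trace-free symmetric 2-tensors over $S^2$ (equivalently, the absence of nonzero holomorphic quadratic differentials). The only cosmetic difference is that you invoke the already-traced Codazzi identity (4) while the paper traces the untraced Codazzi equation directly.
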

\begin{proof}
As used in Proposition \ref{p2} to prove the implication in the opposite direction, we start with the Codazzi equation. For $L^- = \frac{\ubar L}{\tr\ubar\chi}$ we recall that $D^\perp L^- = \tau L^-=0$ and $\tr\chi^-=1$ so we have:
\begin{align*}
\langle\s\nabla_V\II(W,U),L^-\rangle&=\langle D^\perp_V(\II(W,U))-\II(\s\nabla_VW,U)-\II(W,\s\nabla_VU),L^-\rangle\\
&=-(\s\nabla_V\chi^-)(W,U)-\langle\II(W,U),D^\perp_VL^-\rangle\\
&=-(\s\nabla_V\chi^-)(W,U)\\
&=-(\s\nabla_V\hat{\chi}^-)(W,U)\\
0=\langle R^\perp_{VW}U,L^-\rangle&=\langle-\s\nabla_V\II(W,U)+\s\nabla_W\II(V,U),L^-\rangle\\
&=(\s\nabla_V\hat{\chi}^-)(W,U)-(\s\nabla_W\hat{\chi}^-)(V,U).
\end{align*}
Taking a trace over $V,U$ this implies $\s\nabla\cdot\hat{\chi}^-=0$. Since $\Sigma$ is a topological 2-sphere it's a well known consequence of the Uniformization Theorem (see for example \cite{C}) that the divergence operator on symmetric trace-free 2-tensors is injective so that $\hat{\ubar\chi}=\tr\ubar\chi\hat{\chi}^-=0$.
\end{proof} 
\begin{definition}\label{d6}
We say a 2-sphere $\Sigma$ is \underline{admissible} if 
$$\langle\vec{H},\vec{H}\rangle=H^2>0.$$
\end{definition}
In the case that $\Sigma$ is admissible we're able to construct the orthonormal frame field $$\{e_r=-\frac{\vec{H}}{H},e_t\}$$
for $e_t$ future pointing. The associated connection 1-form is given by
$$\alpha_H(V):=\langle D_Ve_r,e_t\rangle.$$
From the following known Lemma (\cite{CWZ}), Proposition \ref{p2} and Lemma \ref{l2}, a necessary and sufficient condition for an admissible sphere $\Sigma$ to be constrained to the past(future) light-cone of a point in a space form is given by $\alpha_H = \pm\s d\log H$:
\begin{lemma}\label{l3}
For $\Sigma$ admissible
$$\tau = \pm\alpha_H-\s d\log H$$
from which we conclude that
$$\rho_{\mp} = K-\frac14\langle\vec{H},\vec{H}\rangle\pm\s\nabla\cdot\alpha_H-\s \Delta\log H$$
where $+/-$ indicates whether $L^-$ is past/future pointing.
\end{lemma}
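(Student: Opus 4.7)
The plan is to express the null inflation basis $\{L^-,L^+\}$ in terms of the orthonormal frame $\{e_r,e_t\}$ on $T^\perp\Sigma$ and then compute $\tau$ by direct substitution. Using the normal-bundle inverse metric $\tfrac12 L^-\!\otimes\! L^+ + \tfrac12 L^+\!\otimes\! L^-$ together with the identities $\tr\chi^- = 1$ and $\tr\chi^+ = H^2$ from Definition \ref{d3}, one gets
$$\vec H = -\tfrac{H^2}{2}L^- - \tfrac{1}{2}L^+,\qquad\text{so}\qquad e_r = -\vec H/H = \tfrac{H}{2}L^- + \tfrac{1}{2H}L^+.$$
The partner $e_t$ is the unique unit timelike normal orthogonal to $e_r$, future-pointing. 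Writing $e_t = aL^- + bL^+$, the constraints $\langle e_t,e_t\rangle = -1$ and $\langle e_r,e_t\rangle = 0$ force $a = \mp\tfrac{H}{2}$, $b = \pm\tfrac{1}{2H}$, with the sign fixed by requiring $e_t$ to be future-pointing; since $\langle L^-,L^+\rangle = 2 > 0$ places $L^-$ and $L^+$ in opposite halves of the null cone, one checks that $L^-$ past-pointing selects $e_t = -\tfrac{H}{2}L^- + \tfrac{1}{2H}L^+$, while $L^-$ future-pointing flips both signs. Inverting gives
$$L^- = \tfrac{1}{H}(e_r \mp e_t),\qquad L^+ = H(e_r \pm e_t),$$
where upper (lower) signs correspond to $L^-$ past-(future-)pointing.

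The computation of $\tau$ is then linear algebra. Applying $D_V$ to $L^-$ and pairing with $L^+$:
$$\langle D_V L^-, L^+\rangle = V(H^{-1})\cdot H\cdot \langle e_r \mp e_t,\,e_r \pm e_t\rangle + \langle D_V e_r \mp D_V e_t,\,e_r \pm e_t\rangle.$$
The first inner product equals $\langle e_r,e_r\rangle - \langle e_t,e_t\rangle = 2$, contributing $-2(\s d\log H)(V)$. For the second, the identities $\langle D_V e_r, e_r\rangle = \langle D_V e_t, e_t\rangle = 0$, together with $\langle D_V e_t, e_r\rangle = -\alpha_H(V)$ (obtained by differentiating $\langle e_r, e_t\rangle = 0$), collapse the expression to $\pm 2\alpha_H(V)$. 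Halving by the factor in the definition of $\tau$ delivers $\tau(V) = \pm\alpha_H(V) - (\s d\log H)(V)$, the first claim of the lemma.

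The flux identity is immediate from (1): taking $\s\nabla\cdot$ of the above and using $\s\nabla\cdot\s d\log H = \s\Delta\log H$,
$$\rho_{\mp} = \mathcal{K} - \tfrac14\langle\vec H,\vec H\rangle \pm \s\nabla\cdot\alpha_H - \s\Delta\log H,$$
where the label $\rho_\mp$ matches the convention of Remark \ref{r2}: interchanging $L^-$ and $L^+$ (the second null inflation basis) corresponds exactly to reversing the orientation of $L^-$, producing the second flux. The only real obstacle is consistent sign bookkeeping through the past/future convention for $L^-$; once the $\vec H$ decomposition in the null inflation basis is fixed, the rest is the symmetric/antisymmetric algebra above.
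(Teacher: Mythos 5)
Your proof is correct and is essentially the paper's argument run in the opposite direction: the paper substitutes the null frame into $\alpha_H(V)=\langle D_Ve_r,e_t\rangle$ (via $e_r/H=\vec I=\frac12(\frac{\ubar L}{\tr\ubar\chi}+\frac{L}{\tr\chi})$ and $He_t=\mp\frac12(\tr\chi\ubar L-\tr\ubar\chi L)$), whereas you substitute the orthonormal frame into $\tau(V)=\frac12\langle D_VL^-,L^+\rangle$; the sign bookkeeping, including your observation that $\langle L^-,L^+\rangle=2>0$ forces opposite time-orientations, matches the paper's $\pm$ convention exactly.
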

\begin{proof}
Since $-\vec{H}=\frac12(\tr\chi \ubar L+\tr\ubar\chi L)$ we see $H^2=\tr\ubar\chi\tr\chi$ so that the inverse mean curvature vector is given by
$$\vec{I}:=-\frac{\vec{H}}{H^2}=\frac12\Big(\frac{\ubar L}{\tr\ubar\chi}+\frac{L}{\tr\chi}\Big).$$
As a result,
\begin{align*}
\alpha_H(V) &=\langle D_Ve_r,e_t\rangle\\
&=\langle D_V \frac{e_r}{H},He_t\rangle\\
&=\langle D_V\frac12\Big(\frac{\ubar L}{\tr\ubar\chi}+\frac{L}{\tr\chi}\Big),\mp\frac{1}{2}(\tr\chi\ubar L-\tr\ubar\chi L)\rangle\\
&=\pm\frac{1}{4}\Big(\langle D_V\frac{\ubar L}{\tr\ubar\chi},\tr\ubar\chi L\rangle - \langle D_V\frac{\tr\ubar\chi L}{H^2},H^2\frac{\ubar L}{\tr\ubar\chi}\rangle\Big)\\
&=\pm\frac14\Big(\langle D_V\frac{\ubar L}{\tr\ubar\chi},\tr\ubar\chi L\rangle +\langle D_V (H^2\frac{\ubar L}{\tr\ubar\chi}),\frac{\tr\ubar\chi L}{H^2}\rangle\Big)\\
&=\pm\frac14\Big(2\langle D_V\frac{\ubar L}{\tr\ubar\chi},\tr\ubar\chi L\rangle +2V\log H^2\Big)\\
&=\pm\Big(\zeta(V)-V\log\tr\ubar\chi+V\log H\Big)
\end{align*}
\end{proof}
Wang, Wang and Zhang (\cite{CWZ} Theorem B') also extend their result to expanding, co-dimension 2 surfaces $\Sigma$ in $n$-dimensional Schwarzschild spacetime ($n\geq 4$). Namely, that any such $\Sigma$ satisfying $\alpha_H = d\log H$ must be inside a shear-free null hypersurfaces of symmetry, or the `standard null-cone' in this geometry. So with the hopes of further illuminating modification of $E_H$ by way of the flux function $\rho$ we move on to this setting in dimension $4$. 
\subsection{Schwarzschild Geometry}
The Schwarzschild spacetime models a static black hole of mass $M$ given by the metric
$$g_S=-\mathfrak{h}dt\otimes dt+\mathfrak{h}^{-1}dr\otimes dr+r^2(d\vartheta\otimes d\vartheta+(\sin\vartheta)^2 d\varphi\otimes d\varphi)$$
where $\mathfrak{h} = 1-\frac{2M}{r}$ for $2M>r>0,\,r> 2M$.
The maximal extension of this geometry is called the Kruskal spacetime $(\mathbb{P}\times_r\mathbb{S}^2,g_K)$ which is given by the warped product of the Kruskal Plane $\mathbb{P}:=\{uv>-2Me^{-1}\}$ and the standard round $\mathbb{S}^2$ with warping function $r=g^{-1}(uv)$ for $g(r) = (r-2M)e^{\frac{r}{2M}-1},\,r>0$. The metric and its inverse is given by:
\begin{align*}
g_K &= F(r)(dv\otimes du+dv\otimes du) + r^2(d\vartheta\otimes d\vartheta+(\sin\vartheta)^2 d\varphi\otimes d\varphi)\\
g_K^{-1}&=\frac{1}{F}(\partial_v\otimes\partial_u +\partial_u\otimes\partial_v)+r^{-2}(\partial_\vartheta\otimes\partial_\vartheta+(\sin\vartheta)^{-2}\partial_\varphi\otimes\partial_\varphi)
\end{align*}
where $F(r) = \frac{8M^2}{r}e^{1-\frac{r}{2M}}$. We recover the Schwarzschild spacetime on $v>0,\,u\neq0$ with the coordinate change $t = 2M\log|\frac{v}{u}|$ (\cite{O}).\\
Each round $\mathbb{S}^2$ has area $4\pi r^2$ so we interpret $r$ as a `radius' function and $F(r)$ gives rise to unbounded curvature at $r=0$ and the `black hole' singularity. A standard past null-cone of Schwarzchild spacetime $\Omega$ is the hypersurface given by fixing the coordinate $v$, say $v=v_0$. Denoting the gradient of a function $f$ by $Df$ we recognize the null vector field $ \frac{\partial_u}{F} = Dv$ restricts to $\Omega$ as both a tangent (since $\partial_u(v) = 0$) and normal (since $Dv\perp T\Omega$) vector field. It follows that $Dv\in T^\perp\Omega\cap T\Omega$ and the induced metric on $\Omega$ degenerate, so $\Omega$ is an example of a \textit{null hypersurface}. From the identity $D_{Df}Df=\frac12D|Df|^2$ we see $\frac{\partial_u}{F}$ is geodesic and $\Omega$ is realized as the past light cone of a section of the event horizon ($r=2M$) as shown below:

\begin{figure}[h]
\centering
\def\svgwidth{500pt} 
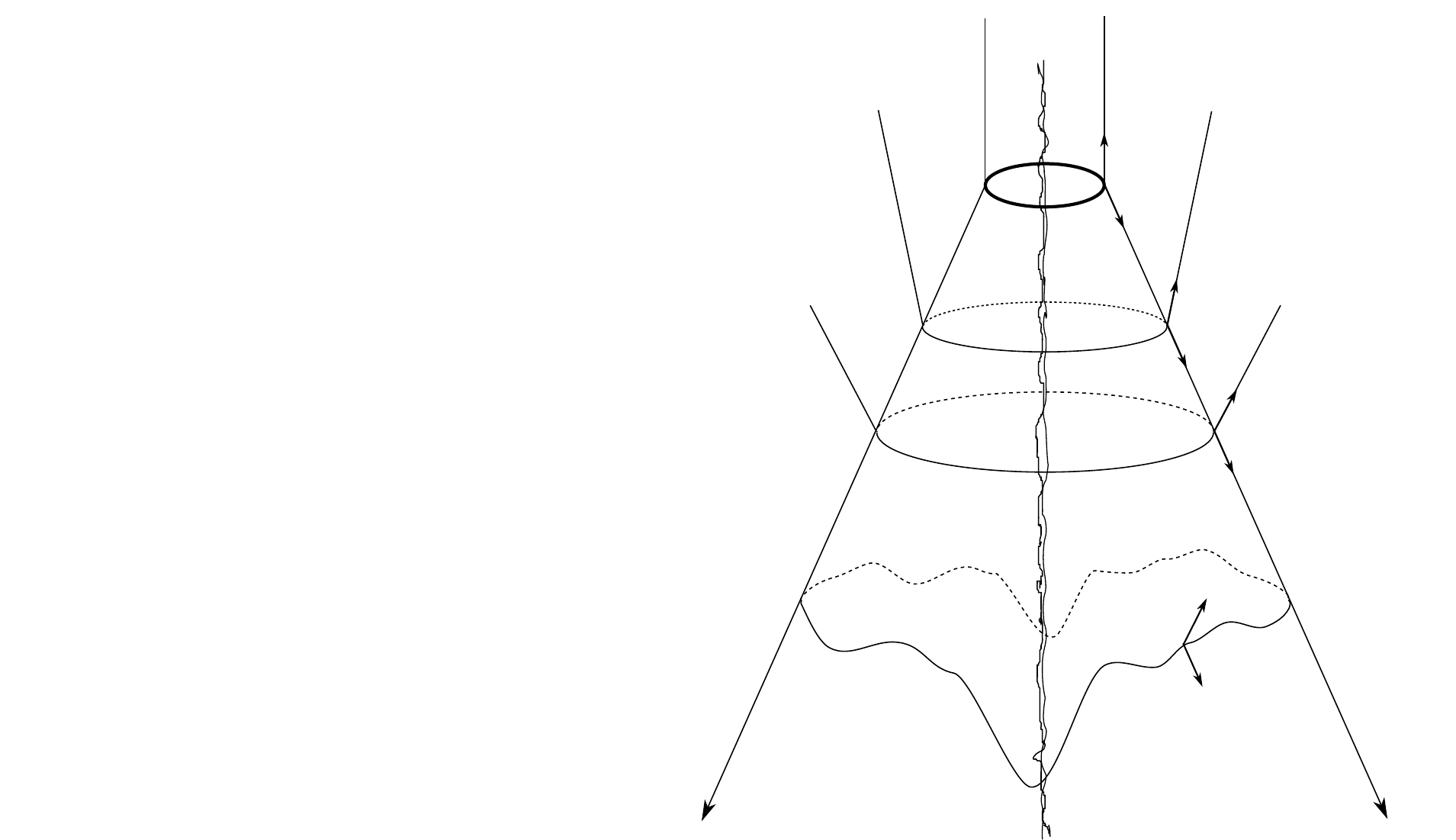
\end{figure}

Setting $\ubar L = D(4M\log v) = \frac{4M}{v}\frac{\partial_u}{F}$ we see $\ubar L(r) = \frac{4M}{v}\frac{r_u}{F} = \frac{4M}{v}\frac{v}{g'(r)F} = \frac{4M}{v}\frac{v}{4M}=1$. We conclude that $r$ restricts to an affine parameter along the geodesics generating $\Omega$ and therefore any cross section $\Sigma$ can be given as a graph over $\mathbb{S}^2$ in $\Omega$ with graph function $\omega = r|_\Sigma$. We extend $\omega$ to the rest of $\Omega$ by assigning $\ubar L(\omega) =0$ and to a neighborhood of $\Omega$ by assigning $\partial_v\omega=0$. From the canonical, homothetic embedding onto the leaves $\mathbb{S}^2\hookrightarrow \mathbb{P}\times_r\mathbb{S}^2$ we obtain the lifted vector fields $V\in \mathcal{L}(\mathbb{S}^2)\subset \Gamma(T(\mathbb{P}\times_r\mathbb{S}^2))$ such that $\langle\partial_u(\partial_v),V\rangle=[\partial_v(\partial_u),V]=0$. It follows that $\mathcal{L}(\mathbb{S}^2)|_{\Sigma_r}=\Gamma(T\Sigma_r)$ (for $\Sigma_r:=\{r=const.,v=v_0\}$) and therefore $\tilde V := V+V\omega\ubar L\in \Gamma(T\Sigma)$ since
$$\tilde V(r-\omega) = -V\omega+V\omega\ubar L(r) = -V\omega+V\omega=0.$$
Since $\Sigma=\{4M\log \frac{v}{v_0} = 0,\,r=\omega\}$ we have $\ubar L, D(r-\omega)\in\Gamma(T^\perp\Sigma)$ are linearly independent so that $L = a\ubar L+bD(r-\omega)$ and we wish to solve for $a,b$. Now $Dr = r_u\frac{\partial_v}{F}+r_v\frac{\partial_u}{F} = \frac{v}{4M}(\partial_v+r_v\ubar L)$ and $D\omega =\nabla\omega$ for $\nabla$ the induced covariant derivative on $\Sigma_r$ giving 
$$L=(a+b\frac{v_0}{4M}r_v)\ubar L+\frac{bv_0}{4M}\partial_v-b\nabla\omega.$$ 
For simplicity we set $A = a+b\frac{v_0}{4M}r_v$ and solve for $A,b$ in $L = A\ubar L+b(\frac{v_0}{4M}\partial_v)-b\nabla\omega$:
\begin{align*}
2 &= \langle L,\ubar L\rangle = b\langle\frac{v_0}{4M}\partial_v,\ubar L\rangle = b\\
0 &= \langle L,L\rangle = 2Ab+b^2|\nabla\omega|^2 = 4(A+|\s\nabla\omega|^2)
\end{align*}
having used $\nabla\omega= \s\nabla\omega - |\s\nabla\omega|^2\ubar L$ in the second equality. We conclude that
\begin{align*}
L &= \frac{v_0}{2M}\partial_v-|\s\nabla\omega|^2\ubar L - 2(\s\nabla \omega-|\s\nabla\omega|^2\ubar L)\\
&=\frac{v_0}{2M}\partial_v+|\s\nabla\omega|^2\ubar L-2\s\nabla\omega.
\end{align*}
\begin{lemma}\label{l4}
Given a cross section $\Sigma:=\{r=\omega\}$ of the standard null cone $\Omega:=\{v=v_0\}$ in Kruskal spacetime we have for the generator $\ubar L$ satisfying $\ubar L(r) = 1$ that: 
\begin{align*}
\langle \tilde V,\tilde W\rangle &= \omega^2(\tilde V,\tilde W)\\
\ubar\chi(\tilde V,\tilde W)&=\frac{1}{\omega}\langle\tilde V,\tilde W\rangle\\
\tr\ubar\chi&=\frac{2}{\omega}\\
\chi(\tilde V,\tilde W) &= \frac{1}{\omega}(1-\frac{2M}{\omega}+|\s\nabla\omega|^2)\langle \tilde V,\tilde W\rangle-2H^\omega(\tilde V,\tilde W)\\
\tr\chi&=\frac{2}{\omega}\Big(1-\frac{2M}{\omega}-\omega^2\s\Delta\log\omega\Big)\\
\zeta (\tilde V) &= -\tilde V\log\omega\\
\rho&=\frac{2M}{\omega^3}
\end{align*}
where $\tilde V,\tilde W\in \Gamma(T\Sigma)$ and $(\cdot,\cdot)$ the round metric on $\mathbb{S}^2$.
\end{lemma}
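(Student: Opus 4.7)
The plan is to exploit the warped-product structure $\mathbb{P}\times_r \mathbb{S}^2$ together with three ingredients: (i) lifts $V\in\mathcal{L}(\mathbb{S}^2)$ are $g_K$-orthogonal to $\partial_u,\partial_v$ and commute with them (so in particular $[V,\ubar L]=0$); (ii) $\ubar L=D(4M\log v)$ is a null gradient and hence geodesic; and (iii) the algebraic identity $F(r)g'(r)=4M$, which collapses the awkward exponentials. The first formula is immediate from (i), since the only surviving term in the expansion of $\langle\tilde V,\tilde W\rangle=\langle V+V\omega\ubar L, W+W\omega\ubar L\rangle$ is $\langle V, W\rangle=r^2(V,W)|_\Sigma$.

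For $\ubar\chi$ I would use $[V,\ubar L]=0$ to rewrite $D_V\ubar L=D_{\ubar L}V$, then apply Koszul to the triple $(\ubar L, V, W)$; every boundary term vanishes because $V$, $W$, and $[V,W]$ are all $\mathbb{S}^2$-lifts and hence $g_K$-orthogonal to $\ubar L$. What remains is $\langle D_{\ubar L}V, W\rangle = \tfrac12\ubar L\langle V,W\rangle = \tfrac{1}{r}\langle V, W\rangle$, which on $\Sigma$ yields $\ubar\chi(\tilde V, \tilde W)=\tfrac{1}{\omega}\langle\tilde V,\tilde W\rangle$ and $\tr\ubar\chi=2/\omega$. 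The torsion $\zeta(\tilde V)=\tfrac12\langle D_V\ubar L, L\rangle$ is handled analogously: splitting $L$ into its three pieces, the $\partial_v$ and $\ubar L$ contributions are killed by the same Koszul/warped-product identities (using (iii) to verify $D_{\partial_u}\partial_v=0$), and only the $-2\s\nabla\omega$ piece survives, yielding $\zeta(\tilde V)=-\tilde V\log\omega$.

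The substantive calculation is $\chi(\tilde V, \tilde W)=\langle D_{\tilde V}L, \tilde W\rangle$. I would split $L$ into its three pieces and treat each in turn. The $\partial_v$ piece reduces via Koszul to $\tfrac{r_v}{r}\langle V, W\rangle$, and on $\Sigma$ the substitutions $uv_0=g(\omega)$ and $Fg'=4M$ convert the coefficient into $\tfrac{1}{\omega}(1-2M/\omega)$. The $\ubar L$ piece produces $|\s\nabla\omega|^2\,\ubar\chi(\tilde V, \tilde W)=\tfrac{|\s\nabla\omega|^2}{\omega}\langle V, W\rangle$ from the previous step. The crucial piece $-2\s\nabla\omega$ is tangent to $\Sigma$, so the Gauss formula $D_XY=\s\nabla_X Y+\II(X,Y)$ kills the normal part and identifies $\langle D_{\tilde V}\s\nabla\omega, \tilde W\rangle=H^\omega(\tilde V, \tilde W)$ with the intrinsic Hessian. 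Tracing and applying the two-dimensional identity $\s\Delta\omega-|\s\nabla\omega|^2/\omega=\omega\s\Delta\log\omega$ produces the stated form of $\tr\chi$.

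Finally, for $\rho$, I would first observe $\tau=\zeta-\s d\log\tr\ubar\chi=-\s d\log\omega-\s d\log(2/\omega)=0$, which reduces $\rho$ to $\mathcal{K}-\tfrac14\langle\vec H,\vec H\rangle$. The conformal Gauss formula in two dimensions, applied to $\gamma=\omega^2(\cdot,\cdot)$, gives $\mathcal{K}=\omega^{-2}(1-\omega^2\s\Delta\log\omega)$, while $\tfrac14\langle\vec H,\vec H\rangle=\tfrac14\tr\ubar\chi\,\tr\chi=\omega^{-2}(1-2M/\omega-\omega^2\s\Delta\log\omega)$; subtracting yields $\rho=2M/\omega^3$ after a clean cancellation. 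The main technical obstacle is the $\chi$ calculation, specifically the recognition that the $-2\s\nabla\omega$ contribution collapses to the intrinsic Hessian on $\Sigma$ via the Gauss formula. Every other step is either a Koszul identity, a warped-product computation, or a direct substitution from the given formulas for $u(\omega)$, $F(\omega)$, and $g'(\omega)$.
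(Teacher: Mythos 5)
Your proposal is correct and follows essentially the same route as the paper: restrict the warped-product metric, compute $D_{\tilde V}\ubar L=\frac{1}{\omega}V$ from the geodesic property and the Koszul formula, decompose $L=\frac{v_0}{2M}\partial_v+|\s\nabla\omega|^2\ubar L-2\s\nabla\omega$ and evaluate each piece (with $Fg'=4M$ and $uv_0=g(\omega)$ collapsing the $\partial_v$ coefficient to $\frac{1}{\omega}(1-\frac{2M}{\omega})$, and the tangential $-2\s\nabla\omega$ piece giving the intrinsic Hessian), then use $\tau=0$ together with the conformal curvature formula to get $\rho=2M/\omega^3$. All steps check out.
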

\begin{proof}
The first identity follows trivially from the metric $g_K$ upon restriction to $\Sigma$. From the Koszul formula and the fact that $\ubar L$ is geodesic it follows that $D_{\tilde V}\ubar L = \frac{\ubar{L}(r)}{r}V|_{\Sigma}=\frac{1}{\omega}V$, denoting the Hessian of $\omega$ on $\Sigma$ by $H^\omega$ we therefore have
\begin{align*}
\ubar\chi(\tilde V,\tilde W)&=\langle D_{\tilde V}\ubar L,\tilde W\rangle\\
&=\frac{1}{\omega}\langle V,W\rangle\\
\chi(\tilde V,\tilde W)&=\frac{v_0}{2M}\langle D_{\tilde V}\partial_v,\tilde W\rangle+|\s\nabla\omega|^2\langle D_V\ubar L,W\rangle-2\langle D_{\tilde V}\s\nabla\omega,\tilde W\rangle\\
&=\frac{v_0}{2M}\Big(\langle D_V\partial_v,W\rangle+V\omega\langle D_{\ubar L}\partial_v,W\rangle+W\omega\langle D_V\partial_v,\ubar L\rangle+V\omega W\omega\langle D_{\ubar L}\partial_v,\ubar L\rangle\Big)\\
&\indent+|\s\nabla\omega|^2\ubar\chi(V,W)-2H^\omega(\tilde V,\tilde W)\\
&=\frac{v_0r_v}{2M\omega}\langle V,W\rangle+\frac{1}{\omega}|\s\nabla\omega|^2\langle V,W\rangle-2H^\omega(\tilde V,\tilde W)
\end{align*}
where in the forth line we use the Koszul formula to evaluate the first term and metric compatibility to show the last three terms vanish. We have
$$\frac{vr_v}{2Mr} = \frac{v}{2Mr}\frac{u}{g'(r)} = \frac{1}{2Mr}\frac{Fg}{4M}=\frac{1}{r}(1-\frac{2M}{r})$$
so the second and forth identities follow upon restriction to $\Sigma$. The third identity is simply a trace over $\Sigma$ of the second. Similarly the fifth follows our taking a trace of the forth and employing the fact that
$$\s\Delta\omega-\frac{1}{\omega}|\s\nabla\omega|^2= \omega\s\Delta\log\omega.$$
For $\zeta$:
\begin{align*}
\zeta(\tilde V) &= \frac12\langle D_{\tilde V}\ubar L,L\rangle\\
&=\frac{1}{2\omega}\langle V,\frac{v_0}{2M}\partial_v+|\s\nabla\omega|^2\ubar L - 2\s\nabla \omega\rangle\\
&=-\frac{1}{\omega}\langle V,\s\nabla\omega\rangle\\
&=-\frac{1}{\omega}\langle \tilde V,\s\nabla\omega\rangle\\
&=-\tilde V\log\omega.
\end{align*}
From the first identity we conclude that $\Sigma$ has Gaussian curvature $\mathcal{K}=\frac{1}{\omega^2}-\s\Delta\log\omega$ and therefore
$$\langle\vec{H},\vec{H}\rangle=\tr\chi\tr\ubar\chi= 4(\mathcal{K}-\frac{2M}{\omega^3}).$$
Since $\zeta - \s{d}\log\tr\ubar\chi = -\s{d}\log\omega -(-\s{d}\log\omega) = 0$ on $\Sigma$ we have 
$$\rho = \mathcal{K}-\frac14\langle\vec{H},\vec{H}\rangle = \frac{2M}{\omega^3}.$$
\end{proof}
It follows, in Schwarzschild spacetime, that all foliations to the past of a section of the event horizon ($r=2M$) inside the standard null cone ($v=v_0$) are (SP)-foliations since $\Sigma=\{r=\omega\}$ satisfies
$$\frac14\langle\vec{H},\vec{H}\rangle-\frac13\s\Delta\log\rho = \frac{1}{\omega^2}(1-\frac{2M}{\omega})>0\iff \omega>2M.$$
Moreover, equality is reached only at the horizon itself indicating physical significance to our property (P). One of the motivating factors for our choice of mass functional (2) comes from our ability, in this special case, to extract the exact mass content within any $\Sigma\subset\Omega$:
$$m(\Sigma) = \frac12\Big(\frac{1}{4\pi}\int_\Sigma(\frac{2M}{\omega^3})^\frac23dA\Big)^\frac32=\frac12\Big(\frac{1}{4\pi}\int_\Sigma\frac{(2M)^\frac23}{\omega^2}\omega^2d\mathbb{S}^2\Big)^\frac32 = M.$$
\begin{lemma}\label{l5}
Suppose $\Sigma$ is a compact Riemannian manifold, then for any $f\in\mathcal{F}(\Sigma)$
$$\Big(\int f^{\frac23}dA\Big)^{\frac32} = \inf_{\psi>0}\Big(\sqrt{\int\psi^2dA}\int\frac{|f|}{\psi}dA\Big)$$
\end{lemma}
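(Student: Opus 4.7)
The plan is to establish the identity by proving the inequality $\leq$ via Hölder and then exhibiting a near-minimizing sequence. First I would split $|f|^{2/3}$ as the product $(|f|/\psi)^{2/3}\cdot\psi^{2/3}$ and apply Hölder's inequality on the compact Riemannian manifold $\Sigma$ with conjugate exponents $p=3/2$ and $q=3$. This yields
$$\int_\Sigma |f|^{2/3}\,dA \leq \Big(\int_\Sigma \frac{|f|}{\psi}\,dA\Big)^{2/3}\Big(\int_\Sigma \psi^2\,dA\Big)^{1/3}.$$
Raising both sides to the power $3/2$ gives
$$\Big(\int_\Sigma |f|^{2/3}\,dA\Big)^{3/2} \leq \sqrt{\int_\Sigma \psi^2\,dA}\;\int_\Sigma\frac{|f|}{\psi}\,dA$$
for every positive $\psi$, which shows the LHS of the lemma is a lower bound for the infimum on the RHS.

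Second, to show equality I would examine when the Hölder bound is saturated. The equality case requires $|f|/\psi$ to be proportional to $\psi^2$, i.e.\ $\psi^3\propto|f|$, so the natural candidate is $\psi=|f|^{1/3}$. Since the lemma requires $\psi>0$ pointwise and $f$ may vanish on a subset of $\Sigma$, I would instead use the smooth, strictly positive approximation $\psi_\epsilon:=(|f|+\epsilon)^{1/3}$ for $\epsilon>0$. A direct computation gives
$$\int_\Sigma \psi_\epsilon^2\,dA = \int_\Sigma (|f|+\epsilon)^{2/3}\,dA, \qquad \int_\Sigma \frac{|f|}{\psi_\epsilon}\,dA = \int_\Sigma \frac{|f|}{(|f|+\epsilon)^{1/3}}\,dA.$$
Since $f$ is smooth on the compact manifold $\Sigma$, both integrands are uniformly bounded (by $(\|f\|_\infty+1)^{2/3}$, say, once $\epsilon\leq 1$) and converge pointwise to $|f|^{2/3}$ as $\epsilon\to 0^+$. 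The Dominated Convergence Theorem therefore shows both integrals tend to $\int_\Sigma|f|^{2/3}\,dA$, so the product converges:
$$\sqrt{\int_\Sigma \psi_\epsilon^2\,dA}\;\int_\Sigma \frac{|f|}{\psi_\epsilon}\,dA \;\longrightarrow\; \Big(\int_\Sigma |f|^{2/3}\,dA\Big)^{3/2}.$$
This matches the lower bound from the first step and identifies the infimum.

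The only subtlety is the strict positivity requirement on $\psi$; the natural minimizer $\psi=|f|^{1/3}$ fails admissibility precisely on the zero set of $f$, but the $\epsilon$-regularization bypasses this without affecting the infimum. There is no genuine obstacle — the lemma is essentially an $L^{3/2}$\textendash$L^3$ duality statement, packaged in the form convenient for the variational manipulations of $m(\Sigma)$ appearing later.
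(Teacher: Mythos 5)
Your proof is correct and follows essentially the same route as the paper: H\"older's inequality with exponents $3/2$ and $3$ for the lower bound, and the regularized test function $\psi_\epsilon=(|f|+\epsilon)^{1/3}$ together with the Dominated Convergence Theorem to show the infimum is attained in the limit. The only cosmetic difference is that you compute the limit of both factors directly, whereas the paper first bounds the product by $\bigl(\int(|f|+\epsilon)^{2/3}dA\bigr)^{3/2}$ and then passes to the limit; the substance is identical.
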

\begin{proof}
by choosing $\psi^3_\epsilon = |f|+\epsilon$ for some $\epsilon>0$ it's a simple verification that
$$\Big(\int(|f|+\epsilon)^{\frac23}dA\Big)^{\frac32}\geq \sqrt{\int\psi^2_\epsilon dA}\int\frac{|f|}{\psi_\epsilon}dA\geq\inf_{\psi>0}\sqrt{\int\psi^2dA}\int\frac{|f|}{\psi}dA$$
so by the Dominated Convergence Theorem
$$\Big(\int f^{\frac23}dA\Big)^{\frac32}=\lim_{\epsilon\to0}\Big(\int(|f|+\epsilon)^{\frac23}dA\Big)^{\frac32}\geq\inf_{\psi>0}\sqrt{\int\psi^2dA}\int\frac{|f|}{\psi}dA.$$
We show the inequality holds in the opposite direction from H\"{o}lder's inequality
$$\int f^{\frac23}dA = \int(\frac{f}{\psi})^{\frac23}\psi^{\frac23}dA\leq\Big(\sqrt{\int\psi^2dA}\Big)^{\frac23}\Big(\int\frac{|f|}{\psi}dA\Big)^{\frac23}$$
where the result follows from raising both sides to the $\frac32$ power and taking an infimum over all $\psi>0$.
\end{proof}
\noindent So given any 2-sphere with non-negative flux $\rho\geq0$ in an arbitrary spacetime, defining $E_H^\psi(\Sigma):=\frac{1}{8\pi}\sqrt{\frac{\int\psi^2dA}{4\pi}}\int\frac{\rho}{\psi}dA$, we conclude that
$$m(\Sigma)=\inf_{\psi>0}E_H^\psi(\Sigma)\leq E_H(\Sigma)$$
as desired. Recalling our use of H\"{o}lder's inequality in the proof of Lemma \ref{l5}, we see that $m(\Sigma) = E_H(\Sigma)$ if and only if $\rho$ is constant on $\Sigma$. So for $\Sigma:=\{r = \omega\}\subset\Omega$, where $\Omega$ is the standard null cone in Schwarzschild spacetime, we see that $m(\Sigma)$ underestimates the Hawking energy $E_H(\Sigma)$ with equality only if $\rho$ hence $\omega$ is constant. Namely the round spheres within `time-symmetric' slices given by $t=const>0\iff\frac{v}{u}=const>0$ (so that $v=v_0\implies r=const$) as expected from Sauters work (\cite{S}, Lemma 4.4). Foliating $\Omega$ with time-symmetric spheres is known to correspond asymptotically with round spheres in the `rest-space' of the black hole and should therefore give rise to an energy measurement that matches the total mass $M$. We will show this by taking any foliation $\{\Sigma_s\}$ of $\Omega$ approaching round spheres (asymptotically corresponding to coordinate spheres in a ``boosted" frame) measuring the total energy as $\lim_{s\to\infty}E_H(\Sigma_s)$ (see, for example, \cite{MS1}). Setting $\omega_s:=r|_{\Sigma_s}$ and $\bar{\omega}_s = \frac{1}{4\pi}\int \omega_s d\mathbb{S}^2$ we have (\cite{MS1}, Corollary 3) that:
$$\frac{\bar{\omega}_s}{\omega_s}\xrightarrow{s\to\infty} \sqrt{1+|\vec{a}|^2}+\sum_{i=1}^3a_iY^i$$
for some $\vec{a}\in \mathbb{R}^3$ and $\{Y_i\}$ the $l=1$ spherical harmonics of round $\mathbb{S}^2$.
It follows that
$$E_H(\Sigma_s)=\frac{1}{8\pi}\sqrt{\frac{|\Sigma_s|}{4\pi}}\int_{\Sigma_s}\rho dA =\frac{M}{4\pi}\sqrt{\frac{1}{4\pi}\int\Big(\frac{\omega_s}{\bar{\omega}_s}\Big)^2d\mathbb{S}^2}\int_{\Sigma_s}\frac{\bar{\omega}_s}{\omega_s}d\mathbb{S}^2 \xrightarrow{s\to\infty} M\sqrt{1+|\vec{a}|^2}$$
so that the energy approaches the mass $M$ only if $\vec{a}=0$ i.e. $\frac{\bar{\omega}_s}{\omega}\to1$. Clearly this corresponds asymptotically to the $r=const$ foliation inside $\Omega$ i.e. the time symmetric spheres. Herein it seems the difficulty lies in finding foliations such that $E_H(\Sigma_s)$ \textit{increases} to the Bondi mass. Even in Schwarzschild spacetime, if insistent upon the use of $E_H$, our only choice of foliation increasing to the mass $M$ is to foliate with time symmetric spheres. Not only is this flow highly specialized it dictates strong restrictions on our initial choice of $\Sigma$. This is to be expected of a quasi-local energy as mentioned in Section 1.1 due to its inherent sensitivity to boosts in our abstract reference frame along the flow. We hope that our quasi-local mass functional $m(\Sigma)$ requires less rigidity in our choice of foliation in appealing instead to the intrinsic mass content within $\Sigma$ rather than energy. We direct the reader therefore to Section 3 for an immediate analysis of the propagation of $m(\Sigma)$ along any given foliation of a null-cone in arbitrary spacetimes.

\subsection{Variation of $E_H$}

In this section we spend some time studying arbitrary normal variations of $E_H$ on admissible spheres following work of Bray, Jauregui and Mars (\cite{BJM}). The authors of \cite{BJM} consider `uniform area expanding flows' according to the flow vector $\partial_s = \vec{I}+\beta\vec{I}^\perp$ so we first spend some time extending their Plane Theorem to incorporate arbitrary normal flows $\partial_s = \alpha\vec{I}+\beta\vec{I}^\perp$. Subsequently, we show that an arbitrary null flow is obstructed from monotonicity by a term with direct dependence upon $\rho$ in analogy with the variation found by Christodoulou regarding the mass aspect function $\mu$ (see, for example, \cite{S} Theorem 4.1). We hope that this points towards $\rho$ being potentially closer to an optimal choice of flux for the Hawking Energy $E_H$ in capturing the ambient spacetime.\\
\indent The following proposition is known (see \cite{BHMS}, Lemma 4), we provide proof to complement the Plane Theorem of \cite{BJM} and to establish the result in the notation introduced in Definition \ref{d6}.
\begin{proposition}[Plane Derivation]\label{p3}
Suppose $\Omega\cong I\times\mathbb{S}^2$, for some interval $I\subset \mathbb{R}$, is a hypersurface of $M$ and $\alpha\neq0$ is a smooth function on $\Omega$. Assuming the existence of a foliation of $\Omega$ by admissible spheres $\{\Sigma_s\}$ according to the level set function $s:\Omega\to\mathbb{R}$ whereby $\partial_s|_{\Sigma_s}=\alpha\vec{I}=-\alpha\frac{\vec{H}}{H^2}$ then we have
\begin{align*}
\frac{1}{\sqrt{\frac{|\Sigma_s|}{(16\pi)^3}}}\frac{dE_H}{ds} &=\int_{\Sigma_s}(\bar{\alpha}-\alpha)(2\mathcal{K}_s - \frac12 H^2-2\s{\Delta}\log H)dA\\
&+\int\alpha(2G(e_t,e_t) + |\hat{\text{II}_r}|^2+|\hat{\text{II}_t}|^2 +2|\alpha_H|^2+2|\s{\nabla}\log H|^2 )dA
\end{align*}
for $\text{II}_{r(t)} = \langle\vec{\text{II}},e_{r(t)}\rangle$ where $e_{r(t)}$ is given in Definition \ref{d5}.
\end{proposition}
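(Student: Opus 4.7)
The plan is to reduce $dE_H/ds$ to variations of $A(s):=|\Sigma_s|$ and $I(s):=\int_{\Sigma_s}H^2\,dA$, then handle each. By first variation of area under the normal flow $\partial_s = \alpha\vec I = -\alpha\vec H/H^2$, we have $\dot{dA} = -\langle \vec H,\partial_s\rangle\,dA = \alpha\, dA$, hence $\dot A = A\bar\alpha$. Applying the product rule to $E_H = \sqrt{A/(16\pi)}\,(1 - I/(16\pi))$, multiplying through by $\sqrt{(16\pi)^3/A}$, and using Gauss--Bonnet ($\int\mathcal K\,dA = 4\pi$) yields
\begin{equation*}
\frac{1}{\sqrt{|\Sigma_s|/(16\pi)^3}}\,\frac{dE_H}{ds} = \int_{\Sigma_s}\bar\alpha\Big(2\mathcal K - \tfrac12 H^2\Big)\,dA - \int_{\Sigma_s}\alpha H^2\,dA - \int_{\Sigma_s}\dot{H^2}\,dA.
\end{equation*}
It then suffices to compute $\dot{H^2}$ pointwise and to integrate by parts.

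The core computation is the variation of $\vec H$ along $\vec X := \alpha\vec I = (\alpha/H)e_r$. I would apply the standard first variation formula for the mean curvature vector of a spacelike surface in a Lorentzian manifold and take the inner product with $\vec H = -H e_r$ to extract $\dot{H^2} = 2\langle\vec H,\tfrac{D^\perp}{ds}\vec H\rangle$. The tangential part of $D_V e_r$ equals $-\II_r(V,\cdot)^\sharp$ and its normal part is proportional to $\alpha_H(V)\,e_t$, so $\s\Delta^\perp((\alpha/H)e_r) = \s\Delta(\alpha/H)\,e_r + (\alpha/H)|\alpha_H|^2\,e_r + (\text{cross terms in }e_t)$; pairing against $\vec H$ isolates the $e_r$-component and produces the main contributions $H\s\Delta(\alpha/H)$ and $\alpha H|\alpha_H|^2$, together with ambient-curvature and $\II$-quadratic contributions from the remaining summands.

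To finish, integrate $\int H\s\Delta(\alpha/H)\,dA$ by parts against $H$, obtaining $\int(\alpha/H)\s\Delta H\,dA = \int\alpha\big(\s\Delta\log H + |\s\nabla\log H|^2\big)\,dA$. Since $\int\bar\alpha\,\s\Delta\log H\,dA = 0$ by the divergence theorem, the $\s\Delta\log H$ contributions recombine into the $(\bar\alpha-\alpha)(-2\s\Delta\log H)$ summand of the statement, leaving $|\s\nabla\log H|^2$ paired with $\alpha$. The Gauss equation (Proposition \ref{p1}), translated from the null basis $\{\ubar L,L\}$ to the mean-curvature frame $\{e_r,e_t\}$ via the relations of Lemma \ref{l3}, rewrites the surviving ambient-curvature traces in terms of $\mathcal K$, $|\hat\II_r|^2$, $|\hat\II_t|^2$ and $G(e_t,e_t)$, which assemble into the remaining quadratic expression.

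The main obstacle will be the careful bookkeeping of the mean-curvature frame $\{e_r,e_t\}$: since $e_r = -\vec H/H$ is itself determined by $\vec H$, the frame rotates in the normal plane along the flow at a rate governed precisely by $\alpha_H$, and this rotation is what generates the $2|\alpha_H|^2$ term (with contributions arising both from the $\s\Delta^\perp$ expansion and from the $\II$-quadratic summand). Sign conventions for the Lorentzian Gauss equation (with $e_t$ timelike) must be handled consistently to recover $|\hat\II_t|^2$ and $G(e_t,e_t)$ with the correct signs. No boundary terms enter the integration by parts since each $\Sigma_s$ is closed.
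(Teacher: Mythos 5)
Your proposal is correct in outline and reaches the stated formula, but it computes the central quantity by a genuinely different route than the paper. The reduction to $\int\bar{\alpha}(2\mathcal{K}-\frac12H^2)\,dA-\int\alpha H^2\,dA-\int\dot{(H^2)}\,dA$ via first variation of area and Gauss--Bonnet is identical to the paper's. Where you diverge is in evaluating $\dot{(H^2)}=2\langle\vec{H},D_{\partial_s}\vec{H}\rangle$: you invoke the codimension-2 (Simons-type) variation formula for $\vec{H}$ in the ambient spacetime, so that $\alpha_H$ enters as the connection 1-form of the normal bundle through $\s{\Delta}^{\perp}\big((\alpha/H)e_r\big)$, and you then convert the ambient curvature trace using the doubly-traced Gauss equation of Proposition \ref{p1} translated to the $\{e_r,e_t\}$ frame. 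The paper instead works one codimension at a time: Lemma \ref{l6} computes $\langle\vec{H},D_{\partial_s}\vec{H}\rangle=-\alpha|\II_r|^2-\alpha\,Ric_\Omega(e_r,e_r)-H\s{\Delta}(\alpha/H)$ intrinsically in the hypersurface $\Omega$ by a bare-hands Lie-bracket computation, then traces the Gauss equation twice ($\Sigma_s\subset\Omega$, then $\Omega\subset M$); there $|\alpha_H|^2$ and $|\hat{\II}_t|^2$ appear as the mixed and tangential components of the second fundamental form of $\Omega$ in $M$, exploiting the fact that $e_t\in\Gamma(T^\perp\Omega)$ precisely because the flow direction $e_r$ is tangent to $\Omega$. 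Your route buys a single clean conceptual formula and never touches the intrinsic geometry of $\Omega$, but it imports a variation formula the paper does not prove and pushes all the difficulty into the normal-frame rotation and Lorentzian sign bookkeeping you flag (e.g. $\langle\s{\Delta}^{\perp}e_r,e_r\rangle=+|\alpha_H|^2$ only because $\langle e_t,e_t\rangle=-1$); the paper's route is longer but every step is elementary, and its two-stage Gauss-equation decomposition is what isolates $G(e_t,e_t)$ directly rather than forcing you to disentangle it from $G(\ubar{L},L)$ and $\langle R_{\ubar{L}L}\ubar{L},L\rangle$ by linear algebra in the normal plane. If you execute your plan, the one step I would insist you write out fully is that frame translation of Proposition \ref{p1}, since the target expression requires $G(e_t,e_t)$ alone and not the combinations that appear naturally in the null basis.
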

Before proving Proposition \ref{p3} we will first need to find the second variation of area:
\begin{lemma}\label{l6}
$$\langle \vec{H}, D_{\partial_s}\vec{H}\rangle = -\alpha|\text{II}_r|^2 - \alpha Ric_\Omega(e_r,e_r) -H\s{\Delta}(\frac{\alpha}{H})$$
\end{lemma}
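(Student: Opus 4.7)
The plan is to reduce the claim to a standard first variation of scalar mean curvature inside $\Omega$. Since each $\Sigma_s$ is admissible, $\vec H$ is spacelike, so $e_r=-\vec H/H$ is a spacelike unit normal to $\Sigma_s$ in $M$. The hypothesis $\partial_s|_{\Sigma_s}=\alpha\vec I=(\alpha/H)e_r$ then forces $e_r\in T\Omega$; combined with the spacelike $T\Sigma_s$, this makes $\Omega$ necessarily spacelike (hence carrying an induced Riemannian metric), with $e_r$ serving as a unit normal of the hypersurface $\Sigma_s\subset\Omega$.

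First I would use the identity $\langle \vec H, D_{\partial_s}\vec H\rangle=\tfrac12\partial_s\langle \vec H,\vec H\rangle=H\,\partial_s H$, which reduces the lemma to the equivalent statement $\partial_s H=-\s\Delta(\alpha/H)-(\alpha/H)\bigl(|\II_r|^2+\mathrm{Ric}_\Omega(e_r,e_r)\bigr)$. This is precisely the classical first variation of scalar mean curvature for a hypersurface in a Riemannian ambient manifold, specialized to $\Sigma_s\subset\Omega$ flowing at speed $F=\alpha/H$ along its unit normal $\nu=e_r$. In the paper's convention one has $\tr \II_r=\langle \vec H,e_r\rangle=-H$, so that under the standard sign convention $A(V,W):=\langle D_Ve_r,W\rangle=-\II_r(V,W)$ the scalar mean curvature of $\Sigma_s$ in $\Omega$ is $\tr A=+H$ while $|A|^2=|\II_r|^2$. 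The standard formula $\partial_s(\tr A)=-\s\Delta F-F(|A|^2+\mathrm{Ric}_\Omega(\nu,\nu))$ then delivers the displayed identity, and multiplying by $H$ yields the lemma.

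The only real obstacle is keeping the sign bookkeeping consistent between the paper's convention $\II(V,W)=D_V^\perp W$ and the standard scalar second fundamental form. If one prefers a fully self-contained derivation rather than invoking the Riemannian formula, one may introduce a Lie-extended orthonormal frame $\{E_A\}$ on $\Sigma_s$ with $[\partial_s,E_A]=0$, compute $\partial_s\gamma_{AB}=-2F\,\II_r(E_A,E_B)$, and then evolve $\II_r(E_A,E_B)$ by expanding $D_{\partial_s}D_{E_A}E_B$ via the Ricci identity of $\Omega$, which produces a commutator term $R_\Omega(\partial_s,E_A)E_B$ whose appropriate trace supplies $\mathrm{Ric}_\Omega(e_r,e_r)$ together with the Laplacian of $F$. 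As a quick sanity check on all signs, the Euclidean sphere of radius $\rho$ in $\mathbb{R}^n$ has $H=(n-1)/\rho$, $|\II_r|^2=(n-1)/\rho^2$, $\mathrm{Ric}_\Omega=0$, and the choice $\alpha=H$ makes $\partial_s=e_r$, so both sides of the claimed identity reduce to $-(n-1)^2/\rho^3$.
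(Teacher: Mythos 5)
Your proposal is correct. The reduction $\langle\vec H,D_{\partial_s}\vec H\rangle=\tfrac12\partial_s\langle\vec H,\vec H\rangle=H\,\partial_s H$ is valid (both $\vec H=-He_r$ and $\partial_s$ are tangent to $\Omega$, so the ambient and induced connections give the same answer), your sign bookkeeping $A=-\II_r$, $\tr A=H$, $|A|^2=|\II_r|^2$ is right, and the evolution equation $\partial_s(\tr A)=-\s\Delta F-F\bigl(|A|^2+\mathrm{Ric}_\Omega(\nu,\nu)\bigr)$ for a normal variation $\partial_s=F\nu$ of a hypersurface in a Riemannian manifold is indeed the classical one; multiplying by $H$ gives exactly the lemma. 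The difference from the paper is one of packaging rather than substance: the paper never passes to the scalar quantity $H\,\partial_s H$, but instead differentiates $\vec H=\gamma^{ij}\vec\II_{ij}$ directly, tracking the variation of the inverse metric, the commutators, the curvature term $\mathrm{Ric}_\Omega(\partial_s,\vec H)=-\alpha\,\mathrm{Ric}_\Omega(e_r,e_r)$, and the Hessian of $\partial_s=(\alpha/H)e_r$ --- that is, it rederives in situ precisely the first-variation formula you cite, so your ``fully self-contained'' alternative is essentially the paper's proof. Your route buys brevity and a clean conceptual statement (the lemma is nothing but the Riemannian first variation of mean curvature for $\Sigma_s$ inside the spacelike hypersurface $\Omega$, spacelike because $e_r\in T\Omega$), at the cost of importing an external formula whose sign conventions must be matched to the paper's $\II(V,W)=D_V^{\perp}W$ --- which you do correctly, and your sphere check confirms it.
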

\begin{proof}
In this lemma we temporarily denote the induced covariant derivative on $\Omega$ by $D$ noticing that $\langle\vec{H},D_{\partial_s}\vec{H}\rangle$ calculates the same quantity as if the ambient connection was used. Taking a local basis $\{X_1,X_2\}$ along the foliation we define $\gamma_{ij}:=\langle X_i,X_j\rangle$ giving rise to the inverse metric $\gamma^{ij}$. For any $V\in \Gamma(T\Omega)$ parallel to the leaves of the foliation (i.e. $V|_{\Sigma_s} = \Gamma(T\Sigma_s)$ we have $[\partial_s,V]s = \partial_s(V(s))-V(\partial_s(s)) = 0$ giving $[\partial_s,V]|_{\Sigma_s}\in \Gamma(T\Sigma_s)$. As such
\begin{align*}
\langle \vec{H},D_{\partial_s}\vec{H}\rangle &= \langle \vec{H}, D_{\partial_s}(\gamma^{ij}\vec{\text{II}}_{ij})\rangle\\
&= -\gamma^{ik}\gamma^{jl}(\langle D_{\partial_s}X_k,X_l\rangle + \langle D_{\partial_s}X_l,X_k\rangle)\langle \vec{\text{II}}_{ij},\vec{H}\rangle + \gamma^{ij}\langle D_{{\partial_s}}(D_{X_i}X_j-\s{\nabla}_{X_i}X_j),\vec{H}\rangle\\
&= -2\gamma^{ik}\gamma^{jl}(\langle [{\partial_s},X_k],X_l\rangle - \langle \vec{\text{II}}_{kl},{\partial_s} \rangle) \langle \vec{\text{II}}_{ij},\vec{H}\rangle\\
&\indent+\gamma^{ij}(\langle -R_{{\partial_s} X_i}X_j +D_{X_i}D_{{\partial_s}}X_j +D_{[{\partial_s}, X_i]}X_j - D_{\s{\nabla}_{X_i}X_j}{\partial_s},\vec{H}\rangle 
\end{align*}
where we used the fact that $[{\partial_s},\s\nabla_{X_i}X_j]|_{\Sigma_s}\in T\Sigma_s$ to get the last term.  From the fact that $\partial_s = \alpha\vec{I}$ it follows that
\begin{align*}
2\gamma^{ik}\gamma^{jl}(\langle [{\partial_s},X_k],X_l\rangle - \langle \vec{\text{II}}_{kl},{\partial_s}\rangle)\langle \vec{\text{II}}_{ij},\vec{H}\rangle
&=2\gamma^{ik}\langle D_{[\partial_s,X_k]}X_i,\vec{H}\rangle-2(-\frac{\alpha}{H^2})\gamma^{ik}\gamma^{jl}\langle\vec{\II}_{kl},\vec{H}\rangle\langle\vec{\II}_{ij},\vec{H}\rangle\\
 &= 2\gamma^{ik}\langle D_{[{\partial_s},X_k]}X_i,\vec{H}\rangle + 2\alpha|\text{II}_r|^2\\
\gamma^{ij}\langle D_{X_i}D_{{\partial_s}}X_j + D_{[{\partial_s},X_i]}X_j,\vec{H}\rangle 
&=\gamma^{ij}\langle D_{X_i}[\partial_s,X_j]+D_{X_i}D_{X_j}\partial_s+D_{[\partial_s,X_i]}X_j,\vec{H}\rangle\\
&=\gamma^{ij}\langle [X_i,[\partial_s,X_j]]+D_{[\partial_s,X_j]}X_i+D_{[\partial_s,X_i]}X_j+D_{X_i}D_{X_j}\partial_s,\vec{H}\rangle\\
&= 2\gamma^{ij}\langle D_{[{\partial_s},X_i]}X_j, \vec{H}\rangle +\gamma^{ij}\langle D_{X_i}D_{X_j}{\partial_s},\vec{H}\rangle
\end{align*}
having used the fact that $[X_i,[\partial_s,X_j]]\in\Gamma(T\Sigma_s)$ to get the final equality. This allows us to simplify to
\begin{align*}
\langle \vec{H}, D_{{\partial_s}}\vec{H}\rangle = -2\alpha|\text{II}_r|^2+Ric_\Omega({\partial_s},\vec{H}) + \gamma^{ij}\langle(D_{X_i}D_{X_j}{\partial_s}-D_{\s{\nabla}_{X_i}X_j}{\partial_s}),\vec{H}\rangle.
\end{align*}
Given also that ${\partial_s} = \frac{\alpha}{H}e_r$ we see $\langle\partial_s,D_{X}e_r\rangle = \frac{\alpha}{2H}X\langle e_r,e_r\rangle = 0$ for any $X\in\Gamma(T\Sigma_s)$ so we simplify the last two terms
\begin{align*}
\langle D_{X_i}D_{X_j}\partial_s,\vec{H}\rangle &= -HX_iX_j\langle\partial_s,e_r\rangle+HX_i\langle\partial_s,D_{X_j}e_r\rangle+H\langle D_{X_j}\partial_s,D_{X_i}e_r\rangle\\
&=-HX_iX_j(\frac{\alpha}{H})+\alpha\langle D_{X_j}e_r,D_{X_i}e_r\rangle\\
&=-HX_iX_j(\frac{\alpha}{H})+\alpha\gamma^{kl}\langle D_{X_j}e_r,X_k\rangle\langle D_{X_i}e_r,X_l\rangle\\
&=-HX_iX_j(\frac{\alpha}{H})+\alpha|\II_r|^2\\
\langle D_{\s{\nabla}_{X_i}X_j}{\partial_s},\vec{H}\rangle &=-H\s\nabla_{X_i}X_j\langle\partial_s,e_r\rangle+H\langle\partial_s,D_{\s\nabla_{X_i}X_j}e_r\rangle\\
&=-H\s\nabla_{X_i}X_j(\frac{\alpha}{H})
\end{align*}
and the result follows after we collect all the terms and take a trace over $i,j$.
\end{proof}
\begin{proof}{(Proposition \ref{p3})}
The proof follows in parallel to the Plane Theorem of \cite{BJM} (Theorem 2.1). From the first variation of area formula:
\begin{align*}
\dot{dA_s} &= -\langle \vec{H},{\partial_s}\rangle dA_s = \alpha dA_s\\
\implies \dot{|\Sigma_s|} &= |\Sigma_s|\bar{\alpha}(s).
\end{align*}
So variation of the Hawking Energy gives:
\begin{align*}
\frac{d{E_H}}{ds} &= \frac{d}{ds}\Big(\sqrt{\frac{|\Sigma_s|}{(16\pi)^3}}\Big(16\pi-\int H^2 dA_s\Big)\Big)\\
&= \sqrt{\frac{|\Sigma_s|}{(16\pi)^3}}\Big(\frac12\bar{\alpha}\Big(16\pi -\int H^2 dA_s\Big) -2\int \langle \vec{H},D_{{\partial_s}}\vec{H}\rangle dA_s - \int \alpha H^2dA_s\Big)\\
&= \sqrt{\frac{|\Sigma_s|}{(16\pi)^3}}\Big(\int \bar{\alpha}(2\mathcal{K}_s-\frac12 H^2)dA_s + \int 2\alpha|\text{II}_r|^2 + 2\alpha Ric_\Omega(e_r,e_r) + 2H\s{\Delta}(\frac{\alpha}{H})dA - \int \alpha H^2dA\Big)
\end{align*}
where we used the Gauss-Bonnet Theorem and Proposition \ref{p3} respectively to get the first and second integrands of the last line. As in \cite{BJM} we now trace the Gauss equation for $\Sigma_s$ in $\Omega$ twice over $\Sigma_s$ to get
$$2Ric_\Omega(e_r,e_r) = S - 2\mathcal{K}_s +H^2 - |\text{II}_r|^2$$
for $S$ the scalar curvature of $\Omega$. We then trace the Gauss equation for $\Omega$ in $M$ twice over $\Omega$ to conclude
$$S = 2G(e_t,e_t) + 2|\alpha_H|^2 + |\text{II}_t|^2$$
since $e_t\in \Gamma(T^{\perp}\Omega)$. Substitution into our variation of $E_H$ therefore gives us after some algebraic manipulation that
\begin{align*}
\frac{dE_H}{ds} &= \sqrt{\frac{|\Sigma_s|}{(16\pi)^3}}\Big(\int (\bar{\alpha} -\alpha)(2\mathcal{K}_s-\frac12 H^2)dA_s\\
&+ \int 2G(e_t,e_t) + (|\text{II}_r|^2 - \frac12H^2) +|\text{II}_t|^2 + 2|\alpha_H|^2 + 2H\s{\Delta}(\frac{\alpha}{H})dA_s\Big).
\end{align*} 
First performing an integration by parts on the last term
$$\int H\s{\Delta}(\frac{\alpha}{H}) dA = \int (\s{\Delta}H)\frac{\alpha}{H}dA$$
followed by the identity $\frac{\s{\Delta}H}{H} = \s{\Delta}\log H + |\s{\nabla}\log H|^2$ we obtain the first line of the variation in Proposition \ref{p3}. The second follows from the fact that
\begin{align*}
|\text{II}_r|^2 &= |\hat{\text{II}}_r|^2 + \frac12H^2\\
\text{II}_t &= \hat{\text{II}}_t
\end{align*}
\end{proof}
We refer the reader to \cite{BJM} (Theorem 2.2) for proof of the Cylinder Theorem:
\begin{proposition}\label{p4}
Under the same hypotheses as in Proposition \ref{p3} with $\partial_s=\beta\vec{I}^{\perp}$ for some smooth function $\beta\neq0$ on $\Omega$ and $\vec{I}^\perp = \frac{e_t}{H}$ we have
$$\frac{1}{\sqrt{\frac{|\Sigma_s|}{(16\pi)^3}}}\frac{dE_H}{ds}=\int_{\Sigma_s} \beta(2G(e_t,e_r)+2\langle \hat{\text{II}}_r,\hat{\text{II}}_t\rangle+4\alpha_H(\s{\nabla}\log H)+2\s{\nabla}\cdot\alpha_H) dA_s.$$
\end{proposition}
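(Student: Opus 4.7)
The plan is to follow the Plane Theorem's template (Proposition \ref{p3}) with $\partial_s$ now proportional to $e_t$ instead of $e_r$. The key simplification is orthogonality: since $\vec{H} = -He_r$ and $\langle e_r,e_t\rangle = 0$, we have $\langle\vec{H},\partial_s\rangle = 0$, so the first variation of area gives $\dot{dA}=0$ and $|\Sigma_s|$ is constant along the flow. Variation of $E_H$ therefore collapses to
$$\frac{dE_H}{ds} = -\,2\sqrt{\frac{|\Sigma_s|}{(16\pi)^3}}\int_{\Sigma_s}\langle\vec{H},D_{\partial_s}\vec{H}\rangle\,dA,$$
so the entire task reduces to computing the pointwise quantity $\langle\vec{H},D_{\partial_s}\vec{H}\rangle$ and re-arranging it in a convenient form.

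For that pointwise computation I would imitate the derivation of Lemma \ref{l6}: pick a local frame $\{X_1,X_2\}$ for the foliation with $[\partial_s,X_i]|_{\Sigma_s}\in\Gamma(T\Sigma_s)$, write $\vec{H}=\gamma^{ij}\vec{\II}_{ij}$, and differentiate. The $D_{\partial_s}\gamma^{ij}$ piece contributes a $\langle D_{X_k}\partial_s,X_l\rangle$ factor which, for $\partial_s=\frac{\beta}{H}e_t$, is essentially $-\frac{\beta}{H}\II_t$; contracting against $\langle\vec{\II},\vec{H}\rangle = -H\II_r$ and extracting trace-free parts produces the $2\beta\langle\hat{\II}_r,\hat{\II}_t\rangle$ term. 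Commuting $D_{\partial_s}$ past $D_{X_i}$ in the $\gamma^{ij}D_{\partial_s}\vec{\II}_{ij}$ piece introduces the Riemann tensor, whose trace over $\Sigma_s$ with $\partial_s\parallel e_t$ and $\vec{H}\parallel e_r$ outputs $2\beta\,\text{Ric}(e_t,e_r) = 2\beta\, G(e_t,e_r)$, using $\langle e_r,e_t\rangle=0$ to discard the scalar curvature contribution.

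The torsion and log-$H$ terms emerge from the remaining connection pieces. Every time $\partial_s = \frac{\beta}{H}e_t$ is differentiated by a tangent vector $X$, the product rule separates into a $\beta$-scaled derivative of $e_t$, which yields $\alpha_H$ through $\alpha_H(X)=\langle D_X e_r,e_t\rangle$, and a derivative of $1/H$, which yields $\s\nabla\log H$. The cross terms between these two pieces assemble into $4\alpha_H(\s\nabla\log H)$. To convert the leftover bare $\alpha_H$ contribution into the stated divergence $\s\nabla\cdot\alpha_H$, I would pair the Codazzi equation for $\Sigma_s\hookrightarrow M$ with $e_t$: this relates a tangential derivative of $\II_t$ to tangent-normal Riemann components and, after the $\Sigma_s$-trace, reorganises into $\s\nabla\cdot\alpha_H$ modulo the shear and $\alpha_H\cdot\s\nabla\log H$ pieces already accounted for.

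The hardest step is the coefficient bookkeeping in the preceding paragraph: one must show that \emph{all} derivatives of $H$ assemble into the single product $4\alpha_H(\s\nabla\log H)$ and do not leak into spurious $|\s\nabla\log H|^2$ or $\s\Delta\log H$ terms as they did in Proposition \ref{p3}. Heuristically, no $\s\Delta$ term can appear once $\dot{dA}=0$ eliminates the Gauss--Bonnet contribution that drove $\s\Delta\log H$ in the Plane case, but verifying this explicitly is the delicate part. The cleanest sanity check is to test the resulting formula against the Schwarzschild standard null cone of Lemma \ref{l4}, where $\alpha_H=\s d\log H$ and $\hat{\II}_r,\hat{\II}_t$ vanish, so the right-hand side must reduce to $2\beta G(e_t,e_r) = 0$ identically for any $\beta$; this check also fixes the remaining sign conventions.
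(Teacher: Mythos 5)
The paper offers no internal proof of this proposition (it defers to \cite{BJM}, Theorem 2.2), so your proposal has to stand on its own. Its skeleton is correct and is essentially the route the citation takes: since $\langle\vec{H},\partial_s\rangle=-\beta\langle e_r,e_t\rangle=0$ the area form is constant, the Gauss--Bonnet term never appears, and everything reduces to $\frac{dE_H}{ds}=-2\sqrt{|\Sigma_s|/(16\pi)^3}\int\langle\vec{H},D_{\partial_s}\vec{H}\rangle dA_s$. Running the Lemma \ref{l6} computation with the ambient connection, the metric-variation piece contributes $-2\beta\langle\hat{\text{II}}_r,\hat{\text{II}}_t\rangle$ and the curvature commutator contributes $Ric(\partial_s,\vec{H})=-\beta\,Ric(e_t,e_r)=-\beta G(e_t,e_r)$ (one also needs the small observation that the trace over $T\Sigma_s$ alone already equals the full Ricci contraction, because $\langle R_{e_te_r}e_r,e_r\rangle=0$ and $\langle e_t,e_r\rangle=0$), which after the overall factor $-2$ gives the first two terms of the stated integrand.

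The gap is in your third paragraph. The terms $2\s{\nabla}\cdot\alpha_H+4\alpha_H(\s{\nabla}\log H)$ do not come from the Codazzi equation; they come from the Hessian term $\gamma^{ij}\langle D_{X_i}D_{X_j}\partial_s-D_{\s{\nabla}_{X_i}X_j}\partial_s,\vec{H}\rangle$ already present in the Lemma \ref{l6} template. Writing $\partial_s=\tfrac{\beta}{H}e_t$ and using $D_Xe_t=-\gamma^{kl}\text{II}_t(X,X_k)X_l-\alpha_H(X)e_r$, that trace evaluates pointwise to $2\alpha_H(\s{\nabla}\beta)-2\beta\,\alpha_H(\s{\nabla}\log H)+\beta\,\s{\nabla}\cdot\alpha_H+\beta\langle\hat{\text{II}}_r,\hat{\text{II}}_t\rangle$. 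Two consequences your sketch misses: first, the identity is only an \emph{integral} identity --- you must integrate $2\alpha_H(\s{\nabla}\beta)$ by parts into $-2\beta\,\s{\nabla}\cdot\alpha_H$, which combines with the $+\beta\,\s{\nabla}\cdot\alpha_H$ above to give the net coefficient $-\beta$ of $\s{\nabla}\cdot\alpha_H$ in $\langle\vec{H},D_{\partial_s}\vec{H}\rangle$ (hence $+2\beta$ in $dE_H/ds$); no appeal to Codazzi is needed or correct here. Second, the extra $+\beta\langle\hat{\text{II}}_r,\hat{\text{II}}_t\rangle$ from this same term is what reduces the $-2\beta\langle\hat{\text{II}}_r,\hat{\text{II}}_t\rangle$ of the metric variation to the final $-\beta$. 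Your closing sanity check also rests on a false premise: on a general cross-section of the Schwarzschild standard null cone $\hat{\text{II}}_r$ and $\hat{\text{II}}_t$ do not vanish (by Lemma \ref{l4}, $\hat{\chi}=-2\hat{H^\omega}\neq 0$ unless $\omega$ is constant), so the integrand reduces to $2\beta G(e_t,e_r)$ only on the round, time-symmetric spheres.
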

The full variation of $E_H$ is known from (\cite{BHMS}, Lemma 3), we are now in a position to show it within our context:
\begin{corollary}\label{c1}
Under the same hypotheses as Proposition \ref{p3} and \ref{p4} with $\partial_s = \alpha\vec{I}+\beta\vec{I}^{\perp}$ 
\begin{align*}
\frac{1}{\sqrt{\frac{|\Sigma_s|}{(16\pi)^3}}}\frac{dE_H}{ds}&=\int_{\Sigma_s} (\bar{\alpha}-\alpha)(2\mathcal{K}_s - \frac12 H^2-2\s{\Delta}\log H)dA_s\\
&+\int_{\Sigma_s} \alpha(2G(e_t,e_t) + |\hat{\text{II}_r}|^2+|\hat{\text{II}_t}|^2 +2|\alpha_H|^2+2|\s{\nabla}\log H|^2 )dA_s\\
&+\int_{\Sigma_s} \beta(2G(e_t,e_r)+2\langle \hat{\text{II}}_r,\hat{\text{II}}_t\rangle+4\alpha_H(\s{\nabla}\log H)+2\s{\nabla}\cdot\alpha_H) dA_s
\end{align*}
\end{corollary}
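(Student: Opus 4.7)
The plan is to reduce the claim to the conjunction of Propositions \ref{p3} and \ref{p4} via linearity of the first variation. The functional $E_H(\Sigma)$ is assembled from the area and an integral of $\langle\vec{H},\vec{H}\rangle$ against the area element, so its derivative at an instant $s=s_0$ is a linear functional of the initial normal velocity $\partial_s|_{\Sigma_{s_0}}$. This linearity is manifest through the first variation formula $\dot{dA}=-\langle\vec{H},\partial_s\rangle\,dA$ and through Lemma \ref{l6} (and its analogue for the $\vec{I}^\perp$ direction), each of which depends linearly on the initial velocity through standard pointwise-linear differential operators acting on $\partial_s$.

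Granting this linearity, I would fix $s_0$ and decompose $\partial_s|_{\Sigma_{s_0}}=\alpha\vec{I}+\beta\vec{I}^\perp$. I would then construct two auxiliary smooth one-parameter families of admissible spheres $\Sigma^1_\tau$ and $\Sigma^2_\tau$, both passing through $\Sigma_{s_0}$ at $\tau=0$, with initial normal velocities $\alpha\vec{I}$ and $\beta\vec{I}^\perp$ respectively (obtained, e.g., by flowing along the respective normal vector fields for short time). Proposition \ref{p3} applied to $\Sigma^1_\tau$ yields the first two lines of the claimed identity as the value of $\frac{d}{d\tau}E_H(\Sigma^1_\tau)\big|_{\tau=0}$, while Proposition \ref{p4} applied to $\Sigma^2_\tau$ yields the third line as $\frac{d}{d\tau}E_H(\Sigma^2_\tau)\big|_{\tau=0}$. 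Summing the two and invoking linearity of the first variation gives the formula at $s_0$; since $s_0$ is arbitrary, the corollary follows on every leaf of the foliation.

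The only point requiring any attention is that the auxiliary families $\Sigma^i_\tau$ inherit the hypotheses of Propositions \ref{p3} and \ref{p4}, namely admissibility $H^2>0$ and the existence of a level-set function for the local foliation. Admissibility is an open condition preserved under short-time smooth normal deformation, and the two auxiliary flows generate local foliations by construction. Thus no genuine analytic obstacle arises, and the entire content of the corollary is already packaged in Propositions \ref{p3}, \ref{p4}, and Lemma \ref{l6}; there is no cross term between $\alpha$ and $\beta$ precisely because we are computing a first-order variation rather than a second-order one.
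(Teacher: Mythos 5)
Your proposal is correct and follows essentially the same route as the paper: the paper's proof of Corollary \ref{c1} likewise observes that the first variations of the area form and of the mean curvature vector are $\mathbb{R}$-linear in the flow vector, and therefore sums the contributions of Propositions \ref{p3} and \ref{p4} over the decomposition $\partial_s=\alpha\vec{I}+\beta\vec{I}^{\perp}$. Your additional remarks on admissibility being an open condition and on the absence of cross terms at first order are consistent with, though not spelled out in, the paper's one-line argument.
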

\begin{proof}
As in \cite{BJM} (Theorem 1.13) variation of $E_H$ is achieved by summing the contributions from Propositions 3 and 4 since the variation of the area form and the mean curvature vector are known to be $\mathbb{R}$-linear over the flow vector decomposition.
\end{proof}
Subsequently, we achieve an arbitrary past(future) directed null flow by setting $\alpha=\mp \beta>0$ in Corollary \ref{c1} giving $\partial_s = \alpha(\vec{I}\mp\vec{I}^{\perp})$ and
\begin{align*}
\frac{1}{\sqrt{\frac{|\Sigma_s|}{(16\pi)^3}}}\frac{dE_H}{ds} &= \int_{\Sigma_s}(\bar{\alpha}-\alpha)(2\mathcal{K}_s-\frac12 H^2 \pm 2\s{\nabla}\cdot\alpha_H-2\s{\Delta}\log H)dA_s\\
&+\int_{\Sigma_s}\alpha(2G(e_t,e_t\mp e_r) +|\hat{\text{II}}_r \mp \hat{\text{II}}_t|^2+2|\alpha_H\mp \s{\nabla}\log H|^2)dA_s.
\end{align*}
It follows in an energy dominated spacetime that the only obstruction to a non-decreasing Hawking energy is the integrand
$$(\bar{\alpha}-\alpha)(2\mathcal{K}_s-\frac12 H^2 \pm 2\s{\nabla}\cdot\alpha_H-2\s{\Delta}\log H)=2(\bar{\alpha}-\alpha)\rho_{\mp}.$$
In particular $\frac{dE_H}{ds}\geq 0$ for any foliation where $\rho$ is constant on each $\Sigma_s$, moreover, since $m_{\mp}(\Sigma_s)=E_H(\Sigma_s)$ in this case (provided also $\rho_{\mp}\geq 0$) we have monotonicity of our quasi local mass as well. We extend beyond this case in the next section with the proof of Theorem \ref{t1}.
\newpage
\section{Propagation of $\rho$}
In this section we will work towards proving Theorem \ref{t1} by finding the propagation of our flux function $\rho$ along an arbitrary null flow.
\subsection{Setup}
We adopt the same setup as in \cite{MS1} which we summarize here in order to introduce our notation:\\
Suppose $\Omega$ is a smooth connected, null hypersurface embedded in $(M,\langle\cdot,\cdot\rangle)$. Here we let $\ubar L$ be a smooth, non-vanishing, null vector field of $\Omega$, $\ubar L\in\Gamma(T\Omega)$. It's a well known fact (see, for example, \cite{C}) that the integral curves of $\ubar L$ are pre-geodesic so we're able to find $\kappa\in \mathcal{F}(\Omega)$ such that $D_{\ubar L}\ubar L = \kappa\ubar L$. 

\begin{wrapfigure}{r}{7.5cm}
\centering
\def\svgwidth{300pt} 
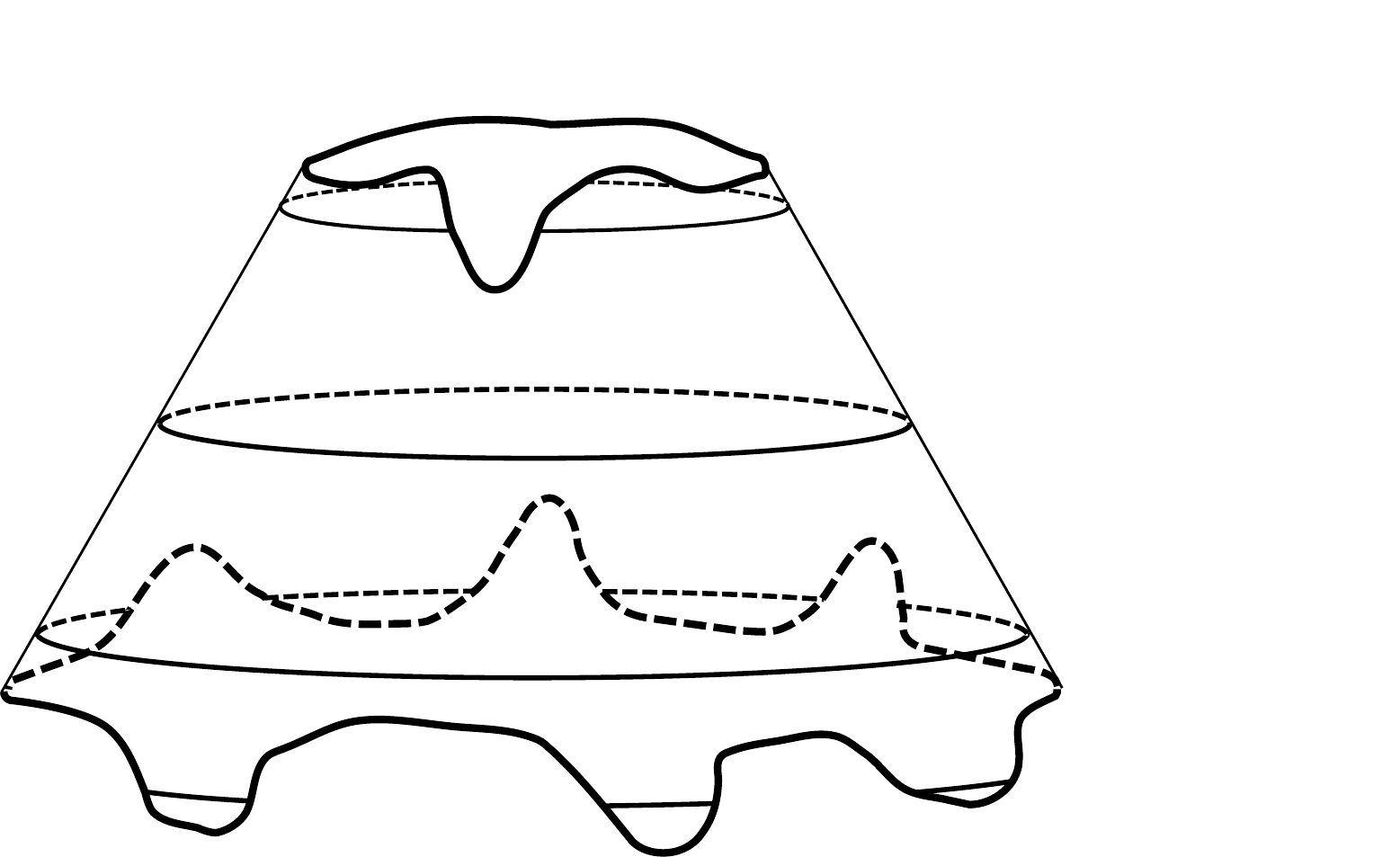
\caption*{$s_3<s_2<S_-<s_0<S_+<s_1$}
\end{wrapfigure}

We assume the existence of an embedded sphere $\Sigma$ in $\Omega$ such that any integral curve of $\ubar L$ intersects $\Sigma$ precisely once. As previously used, we will refer to such $\Sigma$ as \textit{cross sections} of $\Omega$. This gives rise to a natural submersion $\pi:\Omega\to\Sigma$ sending $p\in\Omega$ to the intersection with $\Sigma$ of the integral curve $\gamma_p^{\ubar L}$ of $\ubar L$ for which $\gamma_p^{\ubar L}(0) = p$. Given $\ubar L$ and a constant $s_0$ we may construct a function $s\in\mathcal{F}(\Omega)$ from $\ubar L(s)=1$ and $s|_{\Sigma}= s_0$. For $q\in\Sigma$, if $(s_-(q),s_+(q))$ represents the range of $s$ along $\gamma_q^{\ubar L}$ then letting ${S_-}=\sup_{\Sigma}s_-$ and ${S_+}=\inf_{\Sigma}s_+$ we notice that the interval $(S_-,S_+)$ is non-empty. Given that $\ubar L(s)=1$ the Implicit Function Theorem gives for $t\in(S_-,S_+)$ that $\Sigma_t:= \{p\in\Omega|s(p)=t\}$ is diffeomorphic to $\mathbb{S}^2$ through $\Sigma$. For $s<S_-$ or $s>S_+$, in the case that $\Sigma_s$ is non-empty, although smooth it may no longer be connected. We have that the collection $\{\Sigma_s\}$ gives a foliation of $\Omega$.\\
We construct another null vector field $L$ by assigning at every $p\in\Omega$ $L|p\in T_pM$ be the unique null vector satisfying $\langle \ubar L,L\rangle = 2$ and $\langle L,v\rangle = 0$ for any $v\in T_p\Sigma_{s(p)}$. As before each $\Sigma_s$ is endowed with an induced metric $\gamma_s$, two null second fundamental forms $\ubar\chi = -\langle \vec{\II},\ubar L\rangle$ and $\chi=-\langle \vec{\II},L\rangle$ as well as the connection 1-form (or torsion) $\zeta(V) = \frac12\langle D_V\ubar L,L\rangle$. We will need the following known result (\cite{S}):
\begin{lemma}\label{l7} Given $V\in\Gamma(T\Sigma_s)$,
\begin{itemize}
\item $D_V\ubar L = \vec{\ubar\chi}(V) + \zeta(V)\ubar L$
\item $D_V L = \vec{\chi}(V) -\zeta(V)L$
\item $D_{\ubar L}L = -2\vec{\zeta}-\kappa L$
\end{itemize}
\end{lemma}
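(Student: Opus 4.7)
The plan is to prove each identity by orthogonally decomposing the relevant covariant derivative into a component tangent to $\Sigma_s$ and a component in the normal bundle $T^\perp \Sigma_s$, then reading off the coefficients using the definitions of $\ubar\chi$, $\chi$, and $\zeta$ from Definition \ref{d1}. Since $\{\ubar L, L\}$ is a null basis of $T^\perp \Sigma_s$ with $\langle \ubar L, L\rangle = 2$, any normal vector $X$ can be recovered as $X = \tfrac{1}{2}\langle X, L\rangle \ubar L + \tfrac{1}{2}\langle X, \ubar L\rangle L$. The tangential part of $D_V \ubar L$ (resp.\ $D_V L$) is characterized by testing against an arbitrary $W \in \Gamma(T\Sigma_s)$ and using the defining identities $\langle D_V\ubar L, W\rangle = \ubar\chi(V,W)$ and $\langle D_V L, W\rangle = \chi(V,W)$, which produce the $\Sigma_s$-dual vectors $\vec{\ubar\chi}(V)$ and $\vec{\chi}(V)$ after raising an index by $\gamma^{-1}$.

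For the first identity, the normal part of $D_V\ubar L$ follows from $\langle D_V \ubar L, \ubar L\rangle = \tfrac{1}{2} V\langle \ubar L, \ubar L\rangle = 0$ (killing the $L$-coefficient) and $\langle D_V \ubar L, L\rangle = 2\zeta(V)$ (giving $\ubar L$-coefficient $\zeta(V)$). The second identity is entirely symmetric: $\langle D_V L, L\rangle = 0$ kills the $\ubar L$-coefficient, while $\langle D_V L, \ubar L\rangle = -2\zeta(V)$ supplies $L$-coefficient $-\zeta(V)$.

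For the third identity, the one mild subtlety is that $\ubar L$ is not tangent to the leaves, so I would first extend $W \in \Gamma(T\Sigma_s)$ to a neighborhood by parallel transporting along the foliation so that $Ws = 0$; then $[\ubar L, W]s = \ubar L(Ws) - W(\ubar L s) = 0$, so $[\ubar L, W] \in \Gamma(T\Sigma_s)$, giving $\langle L, [\ubar L, W]\rangle = 0$. Differentiating $\langle L, W\rangle = 0$ yields
\[
\langle D_{\ubar L} L, W\rangle = -\langle L, D_{\ubar L} W\rangle = -\langle L, D_W \ubar L\rangle - \langle L, [\ubar L, W]\rangle = -2\zeta(W),
\]
so the tangential part of $D_{\ubar L} L$ is $-2\vec{\zeta}$. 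For the normal part, $\langle D_{\ubar L} L, L\rangle = 0$ kills the $\ubar L$-coefficient, and
\[
\langle D_{\ubar L} L, \ubar L\rangle = \ubar L \langle L, \ubar L\rangle - \langle L, D_{\ubar L} \ubar L\rangle = -\kappa \langle L, \ubar L\rangle = -2\kappa
\]
supplies $L$-coefficient $-\kappa$.

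No step is genuinely difficult; the whole lemma is an exercise in metric compatibility paired with the definitions, and the only place requiring a moment's care is the extension of $W$ used in identity three, which ensures the bracket term drops out.
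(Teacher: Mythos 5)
Your proof is correct and follows exactly the strategy the paper indicates (and leaves to the reader): decompose each derivative against the null frame $\{\ubar L,L\}$ and a tangential test vector $W$, using metric compatibility, Definition \ref{d1}, and the fact that a suitable extension of $W$ makes $[\ubar L,W]$ tangent to the leaves so the bracket term in the third identity vanishes. All signs and coefficients check out, so this is simply the worked-out version of the paper's one-line verification.
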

where, given $V,W\in\Gamma(T\Sigma)$, the vector fields $\vec{\zeta},\,\vec{\ubar\chi}(V)$ are uniquely determined by $\langle \vec{\zeta},V\rangle = \zeta(V)$ and $\langle \vec{\ubar\chi}(V),W\rangle = \ubar\chi(V,W)$.
\begin{proof}
It suffices to check all identities agree by taking the metric inner product with vectors $\ubar L,L$ and an extension $W$ satisfying $W|_{\Sigma_s}\in\Gamma(T\Sigma_s)$ keeping in mind that $[\ubar L,W]|_{\Sigma_s}\in\Gamma(T\Sigma_s)$. We leave this verification to the reader.
\end{proof}
For any cross section $\Sigma$ of $\Omega$ and $v\in T_q(\Sigma)$ we may extend $v$ along the generator $\gamma_q^{\ubar L}$ according to
\begin{align*}
\dot{V}(s) &= D_{V(s)}\ubar L\\
V(0)&=v.
\end{align*} 
Since $x\in T_p\Omega\iff \langle \ubar L|_p,x\rangle = 0$ we see from the fact that $\dot{\langle V(s),\ubar L\rangle}=\langle D_{V(s)}\ubar L,\ubar L\rangle +\kappa\langle V(s),\ubar L\rangle= \frac12V(s)\langle \ubar L,\ubar L\rangle+\kappa\langle V(s),\ubar L\rangle=\kappa\langle V(s),\ubar L\rangle$ and $\langle V(0),\ubar L\rangle=0$ we can solve to get $\langle V(s),\ubar L\rangle=0$ for all $s$. As a result any section $W\in\Gamma(T\Sigma)$ is extended to all of $\Omega$ satisfying $[\ubar L,W]=0$. We also notice along each generator $0=[\ubar L,W]s = \ubar L(Ws)=\dot{Ws}$ such that $Ws|_\Sigma=0$ forces $Ws = 0$ on all of $\Omega$. We conclude that $W|_{\Sigma_s}\in\Gamma(T\Sigma_s)$ and denote by $E(\Sigma)\subset\Gamma(T\Omega)$ the set of such extensions off of $\Sigma$ along $\ubar L$.  We also note that linear independence is preserved along generators by standard uniqueness theorems allowing us to extend basis fields $\{X_1,X_2\}\subset \Gamma(T\Sigma)$ off of $\Sigma$ as well.
\subsection{The Structure Equations}
We will need to propagate the Christoffel symbols with the known result (\cite{S}):
\begin{lemma}\label{l8}
Given $U,V,W\in E(\Sigma)$,
$$\langle [\ubar L,\s\nabla_VW],U\rangle = (\s\nabla_V\ubar\chi)(W,U)+(\s\nabla_W\ubar\chi)(V,U)-(\s\nabla_U\ubar\chi)(V,W)$$
where $\s\nabla$ the induced covariant derivative on each $\Sigma_s$.
\end{lemma}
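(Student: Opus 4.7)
The plan is to differentiate the Koszul formula for $\s\nabla_V W$ along $\ubar L$, exploiting the fact that every vector in $E(\Sigma)$ commutes with $\ubar L$.

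First I would record the two input ingredients. On the one hand, on each leaf $\Sigma_s$ the Koszul formula reads
$$2\langle\s\nabla_V W,U\rangle = V\langle W,U\rangle + W\langle V,U\rangle - U\langle V,W\rangle + \langle[V,W],U\rangle - \langle[V,U],W\rangle - \langle[W,U],V\rangle.$$
On the other hand, for any $X,Y\in E(\Sigma)$, Lemma~\ref{l7} gives $D_{\ubar L}X = D_X\ubar L = \vec{\ubar\chi}(X)+\zeta(X)\ubar L$ (using $[\ubar L,X]=0$), hence $\ubar L\langle X,Y\rangle = 2\ubar\chi(X,Y)$ since $\langle \ubar L, X\rangle = \langle \ubar L, Y\rangle = 0$. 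Moreover, by the Jacobi identity, $[\ubar L,V]=[\ubar L,W]=[\ubar L,U]=0$ forces $[V,W],[V,U],[W,U]\in E(\Sigma)$, so the same formula applies to them as well.

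Next I would apply $\ubar L$ to the Koszul formula. Because everything that appears inside a bracket $\langle\cdot,\cdot\rangle$ lies in $E(\Sigma)$ and commutes with $\ubar L$, each $\ubar L\langle\cdot,\cdot\rangle$ collapses to $2\ubar\chi(\cdot,\cdot)$, and $\ubar L$ passes through the vector-field directional derivatives on the right (e.g.\ $\ubar L(V\langle W,U\rangle) = V\ubar L\langle W,U\rangle = 2V\ubar\chi(W,U)$). On the left I would write
$$\ubar L\langle\s\nabla_V W,U\rangle = \langle D_{\ubar L}\s\nabla_V W,U\rangle + \langle\s\nabla_V W,D_{\ubar L}U\rangle,$$
and since $\s\nabla_V W|_{\Sigma_s}$ is tangent to $\Sigma_s$, Lemma~\ref{l7} again gives $\langle D_{\s\nabla_V W}\ubar L,U\rangle = \ubar\chi(\s\nabla_V W,U)$ and $\langle \s\nabla_V W,D_{\ubar L}U\rangle = \ubar\chi(\s\nabla_V W,U)$. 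Combining,
$$\langle[\ubar L,\s\nabla_V W],U\rangle = \ubar L\langle\s\nabla_V W,U\rangle - 2\ubar\chi(\s\nabla_V W,U).$$

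The last step is algebraic: express each term $V\ubar\chi(W,U)$ via the defining identity $(\s\nabla_V\ubar\chi)(W,U) = V\ubar\chi(W,U)-\ubar\chi(\s\nabla_V W,U)-\ubar\chi(W,\s\nabla_V U)$, and use the torsion-free identity $\s\nabla_X Y - \s\nabla_Y X = [X,Y]$ together with the symmetry of $\ubar\chi$ to handle the cross terms. Grouping $\ubar\chi(\s\nabla_V W,U)+\ubar\chi(\s\nabla_W V,U) = 2\ubar\chi(\s\nabla_V W,U) - \ubar\chi([V,W],U)$ and similarly for the other two pairs should cancel the $[V,W],[V,U],[W,U]$ contributions coming from the Koszul bracket terms together with the extra $-2\ubar\chi(\s\nabla_V W,U)$, leaving precisely $(\s\nabla_V\ubar\chi)(W,U)+(\s\nabla_W\ubar\chi)(V,U)-(\s\nabla_U\ubar\chi)(V,W)$.

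The only delicate point is bookkeeping: one must be careful that $V,W,U\in E(\Sigma)$ but $\s\nabla_V W$ is generally not in $E(\Sigma)$, so $\ubar L\langle \s\nabla_V W,U\rangle$ does \emph{not} equal $2\ubar\chi(\s\nabla_V W,U)$. That discrepancy is exactly what produces the Lie-bracket term on the left, and tracking it properly is the main (and essentially only) obstacle.
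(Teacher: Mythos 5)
Your proposal is correct and follows essentially the same route as the paper: apply $\ubar L$ to the Koszul formula, use $[\ubar L,X]=0$ for $X\in E(\Sigma)$ to convert every $\ubar L\langle\cdot,\cdot\rangle$ into $2\ubar\chi(\cdot,\cdot)$, isolate $\langle[\ubar L,\s\nabla_VW],U\rangle$ on the left by splitting off $2\ubar\chi(\s\nabla_VW,U)$, and rewrite the right-hand side in terms of $\s\nabla\ubar\chi$ so that the same $2\ubar\chi(\s\nabla_VW,U)$ cancels. The bookkeeping you flag as the delicate point is exactly the cancellation the paper's proof performs.
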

\begin{proof}
Starting from the Koszul formula
$$2\langle\s\nabla_VW,U\rangle = V\langle W,U\rangle+W\langle V,U\rangle - U\langle V,W\rangle-\langle V,[W,U]\rangle+\langle W,[U,V]\rangle+\langle U,[V,W]\rangle$$
we apply $\ubar L$ to the left hand term to get
$$\ubar L\langle \s\nabla_VW,U\rangle = \langle D_{\ubar L}\s\nabla_VW,U\rangle+\langle \s\nabla_VW,D_{\ubar L}U\rangle=\langle [\ubar L,\s\nabla_VW],U\rangle + 2\ubar\chi(\s\nabla_VW,U)$$
and to the right keeping in mind that $[V,W]\in E(\Sigma)$
\begin{align*}
\ubar L\Big(V\langle W,U\rangle&+W\langle V,U\rangle - U\langle V,W\rangle-\langle V,[W,U]\rangle+\langle W,[U,V]\rangle+\langle U,[V,W]\rangle\Big)\\
&=V\ubar L\langle W,U\rangle+W\ubar L\langle V,U\rangle-U\ubar L\langle V,W \rangle - 2\ubar\chi(V,[W,U])+2\ubar\chi(W,[U,V])+2\ubar\chi(U,[V,W])\\
&=2\Big(V\ubar\chi(W,U)+W\ubar\chi(V,U)-U\ubar\chi(V,W)- \ubar\chi(V,[W,U])+\ubar\chi(W,[U,V])+\ubar\chi(U,[V,W])\Big)\\
&=2\Big((\s\nabla_V\ubar\chi)(W,U)+(\s\nabla_W\ubar\chi)(V,U)-(\s\nabla_U\ubar\chi)(V,W)+2\ubar\chi(\s\nabla_VW,U)\Big).
\end{align*}
Equating terms according to the Koszul formula the result follows upon cancellation of the term $\ubar\chi(\s\nabla_VW,U)$.
\end{proof}
\begin{lemma}\label{l9}
For $S,U,V,W\in E(\Sigma)$,
\begin{align*}
\langle [\ubar L, \s R_{VW}U],S\rangle &= (\s\nabla_W\s\nabla_V\ubar\chi)(U,S)-(\s\nabla_V\s\nabla_W\ubar\chi)(U,S)+(\s\nabla_W\s\nabla_U\ubar\chi)(V,S)\\
&-(\s\nabla_V\s\nabla_U\ubar\chi)(W,S)+(\s\nabla_V\s\nabla_S\ubar\chi)(W,U)-(\s\nabla_W\s\nabla_S\ubar\chi)(V,U)
\end{align*}
where $\s{R}$ the induced Riemann curvature tensor on $\Sigma_s$.
\end{lemma}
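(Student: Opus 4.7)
The plan is to package Lemma \ref{l8} as an identity for a tensor $\mathcal{T}$ on each $\Sigma_s$ and then use Lie derivative calculus for the curvature of the family $\{(\Sigma_s, \s\nabla)\}$. Set
$$\mathcal{T}(V, W, U) := (\s\nabla_V\ubar\chi)(W, U) + (\s\nabla_W\ubar\chi)(V, U) - (\s\nabla_U\ubar\chi)(V, W),$$
which is manifestly tensorial in all three slots. Lemma \ref{l8} then reads $\langle [\ubar L, \s\nabla_V W], U\rangle = \mathcal{T}(V, W, U)$ whenever $V, W, U \in E(\Sigma)$, and this identifies $\mathcal{T}(V, W, \cdot)$ with the metric dual of $(\mathcal{L}_{\ubar L}\s\nabla)(V, W)$, the Lie derivative along $\ubar L$ of the one-parameter family of induced connections on the leaves.

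I would then apply $\mathcal{L}_{\ubar L}$ to the curvature identity $\s R_{VW}U = \s\nabla_{[V,W]}U - \s\nabla_V\s\nabla_W U + \s\nabla_W\s\nabla_V U$. Since $V, W, U \in E(\Sigma)$ have vanishing bracket with $\ubar L$, the Jacobi identity gives $[V, W] \in E(\Sigma)$ and $\mathcal{L}_{\ubar L}(\s R_{VW}U) = [\ubar L, \s R_{VW}U]$. Repeated use of the product rule $\mathcal{L}_{\ubar L}(\s\nabla_V Z) = (\mathcal{L}_{\ubar L}\s\nabla)(V, Z) + \s\nabla_V\mathcal{L}_{\ubar L} Z$, combined with $\mathcal{L}_{\ubar L} U = 0$ and $\mathcal{L}_{\ubar L}(\s\nabla_W U) = (\mathcal{L}_{\ubar L}\s\nabla)(W, U)$, reduces $[\ubar L, \s R_{VW}U]$ to a sum of five $(\mathcal{L}_{\ubar L}\s\nabla)$-contractions, with two of them differentiated by $\s\nabla_V$ and $\s\nabla_W$.

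Taking the inner product with $S \in E(\Sigma)$ and rewriting every $(\mathcal{L}_{\ubar L}\s\nabla)$-contraction in terms of $\mathcal{T}$, the inner-derivative terms $\mathcal{T}(V, \s\nabla_W U, S)$ and $\mathcal{T}(W, \s\nabla_V U, S)$ produced directly cancel against the analogous contributions emerging from the Leibniz expansions of $\s\nabla_V((\mathcal{L}_{\ubar L}\s\nabla)(W, U))$ and $\s\nabla_W((\mathcal{L}_{\ubar L}\s\nabla)(V, U))$. The torsion-free identity $[V, W] = \s\nabla_V W - \s\nabla_W V$ annihilates the remaining $\mathcal{T}([V,W] + \s\nabla_W V - \s\nabla_V W, U, S)$ piece, leaving exactly
$$\langle [\ubar L, \s R_{VW}U], S\rangle = (\s\nabla_W\mathcal{T})(V, U, S) - (\s\nabla_V\mathcal{T})(W, U, S).$$
Expanding the outer covariant derivatives through the three tensorial terms defining $\mathcal{T}$ recovers the six claimed summands.

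The main obstacle is precisely this bookkeeping cancellation: without the reformulation of Lemma \ref{l8} as a statement about the tensor $\mathcal{L}_{\ubar L}\s\nabla$, the non-tensorial pieces produced by the Leibniz rule are easy to mis-sign, and the role of the torsion-free identity in effecting the final collapse is easily obscured. A direct expansion in a local $E(\Sigma)$-basis $\{X_1, X_2\}$ would also succeed, but reorganizing through $\mathcal{L}_{\ubar L}\s\nabla$ makes the structural antisymmetrized-divergence content transparent.
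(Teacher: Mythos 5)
Your proposal is correct and rests on the same key input as the paper --- Lemma \ref{l8}, read as a formula for $\mathcal{L}_{\ubar L}\s\nabla$ --- but the execution is organized differently. The paper expands $\langle \s R_{VW}U,S\rangle$ into first derivatives of inner products via metric compatibility, applies $\ubar L$ to the resulting scalar, works at a point where $\s\nabla_VW|_q=0$ and $[V,W]|_q=0$, and must track the extra $2\ubar\chi(\s R_{VW}U,S)$ contributions on both sides until they cancel. You instead compute the vector field $[\ubar L,\s R_{VW}U]=\mathcal{L}_{\ubar L}(\s R_{VW}U)$ directly and invoke the Palatini-type first-variation identity, which with the paper's sign convention $\s R_{VW}U=\s\nabla_{[V,W]}U-\s\nabla_V\s\nabla_WU+\s\nabla_W\s\nabla_VU$ reads $\mathcal{L}_{\ubar L}(\s R_{VW}U)=(\s\nabla_W A)(V,U)-(\s\nabla_V A)(W,U)$ for $A=\mathcal{L}_{\ubar L}\s\nabla$; lowering the free index with the leafwise metric (legitimate, since $\s\nabla\gamma=0$ on each $\Sigma_s$, even though $\mathcal{L}_{\ubar L}\gamma=2\ubar\chi\neq0$) and expanding $\mathcal{T}=A^\flat$ gives the six summands with the correct signs --- I checked this against the paper's convention and it comes out right. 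What your route buys is that the $\ubar\chi$ bookkeeping disappears entirely (those terms only arise when one differentiates the scalar $\langle\s R_{VW}U,S\rangle$ rather than the vector field), and the normal-coordinate trick becomes unnecessary because all five Leibniz terms are manifestly tensorial; the cost is that you must justify the tensoriality of $A$ on the leaves and the commutation of index-lowering with $\s\nabla$, both of which you handle correctly. The one point worth making explicit in a write-up is that $[\ubar L,\s\nabla_VW]$ is tangent to the leaves (it annihilates $s$), so that $A$ is genuinely a leafwise vector-valued 2-tensor and Lemma \ref{l8} determines it completely.
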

\begin{proof}
We notice any $f\in\mathcal{F}(\Sigma)$ can be extended to all of $\Omega$ by imposing $\ubar L(f)=0$ along generators. As such $fV\in E(\Sigma)$ and $[\ubar L, \s{R}_{fVW}U]=[\ubar L,\s{R}_{VfW}U]=[\ubar L,\s{R}_{VW}fU]=[\ubar L, f\s{R}_{VW}U]=f[\ubar L,\s{R}_{VW}U]$. Within $E(\Sigma)$ we conclude that both $\langle[\ubar L,\s{R}_{VW}U],S\rangle$ and the right hand side of the identity restricts to 4-tensors pointwise on each $\Sigma_s$. It therefore suffices to prove the identity pointwise. In particular, for any $v,w\in T_q\Sigma_s$ we extend to vector fields $V,W\in E(\Sigma)$ such that $\s{\nabla}_VW|_q=0$. The Riemann tensor on $\Sigma_s$ reads
\begin{align*}
\langle \s R_{VW}U,S\rangle &=\langle \s\nabla_{[V,W]}U,S\rangle-\langle\s\nabla_V\s\nabla_WU,S\rangle+
\langle\s\nabla_W\s\nabla_VU,S\rangle\\
&=\langle\s\nabla_{[V,W]}U,S\rangle -V\langle\s\nabla_WU,S\rangle+\langle\s\nabla_WU,\s\nabla_VS\rangle+W\langle\s\nabla_VU,S\rangle-\langle\s\nabla_VU,\s\nabla_WS\rangle 
\end{align*}
so applying $\ubar L$ to the terms on the right assuming restriction to $q\in\Sigma_s$ we have
\begin{align*}
&\ubar L\Big(\langle\s\nabla_{[V,W]}U,S\rangle -V\langle\s\nabla_WU,S\rangle+\langle\s\nabla_WU,\s\nabla_VS\rangle+W\langle\s\nabla_VU,S\rangle-\langle\s\nabla_VU,\s\nabla_WS\rangle\Big)\\
&=\langle[\ubar L,\s\nabla_{[V,W]}U],S\rangle-V\ubar L\langle\s\nabla_WU,S\rangle+W\ubar L\langle\s\nabla_VU,S\rangle\\
&=-V\langle[\ubar L,\s\nabla_WU],S\rangle-2V\ubar\chi(\s\nabla_WU,S)+W\langle[\ubar L,\s\nabla_VU],S\rangle+2W\ubar\chi(\s\nabla_VU,S)
\end{align*}
where the first term in the second line vanishes as a result of Lemma \ref{l8} since $[V,W]\in E(\Sigma)$ and $[V,W]|_q=0$. Using Lemma \ref{l8} on the first and third terms of the third line we get
\begin{align*}
=-V\Big((\s\nabla_W\ubar\chi)(U,S)&+(\s\nabla_U\ubar\chi)(W,S)-(\s\nabla_S\ubar\chi)(W,U)\Big)-2V\ubar\chi(\s\nabla_WU,S)\\
&+W\Big((\s\nabla_V\ubar\chi)(U,S)+(\s\nabla_U\ubar\chi)(V,S)-(\s\nabla_S\ubar\chi(V,U)\Big)+2W\ubar\chi(\s\nabla_VU,S)\\
=-(\s\nabla_V\s\nabla_W\ubar\chi)(U,S)&-(\s\nabla_V\s\nabla_U\ubar\chi)(W,S)+(\s\nabla_V\s\nabla_S\ubar\chi)(W,U)-2V\ubar\chi(\s\nabla_WU,S)\\
&+(\s\nabla_W\s\nabla_V\ubar\chi)(U,S)+(\s\nabla_W\s\nabla_U\ubar\chi)(V,S)-(\s\nabla_W\s\nabla_S\ubar\chi)(V,U)+2W\ubar\chi(\s\nabla_VU,S).
\end{align*}
We also note that restriction to $q\in\Sigma_s$ gives
$$0=(\s\nabla_V\ubar\chi)(\s\nabla_WU,S)=V\ubar\chi(\s\nabla_WU,S)-\ubar\chi(\s\nabla_V\s\nabla_WU,S)$$
allowing us to simplify the remaining terms above
$$-2V\ubar\chi(\s\nabla_WU,S)+2W\ubar\chi(\s\nabla_VU,S) = 2\Big(-\ubar\chi(\s\nabla_V\s\nabla_WU,S)+\ubar\chi(\s\nabla_W\s\nabla_VU,S)\Big)
=2\ubar\chi(\s R_{VW}U,S).$$
Since
$$\ubar L\langle\s R_{VW}U,S\rangle = \langle[\ubar L,\s R_{VW}U],S\rangle + 2\ubar\chi(\s R_{VW}U,S)$$
the result follows upon cancellation of $2\ubar\chi(\s R_{VW}U,S)$ given that $q$ was arbitrarily chosen.
\end{proof}
Now we're in a position to find the structure equations that we'll need to propagate $\rho$. Recalling that the tensors $\gamma_s$, $\ubar\chi$, $\chi$ and $\zeta$ are restrictions of associated tensors on $\Omega$ we measure their propagation with the Lie derivative along $\ubar L$. The following proposition is known (\cite{S},\cite{G}), we provide proof for completeness:
\begin{proposition}[Structure Equations]\label{p5}
\begin{align}
\ubar L\mathcal{K} &= -\tr\ubar\chi\mathcal{K}-\frac12\s\Delta \tr\ubar\chi+\s\nabla\cdot(\s\nabla\cdot\hat{\ubar\chi})\\
\mathcal{L}_{\ubar L}\gamma &= 2\ubar\chi\\
\mathcal{L}_{\ubar L}\ubar\chi &= -\ubar\alpha + \frac12|\hat{\ubar\chi}|^2\gamma + \tr\ubar\chi\hat{\ubar\chi}+\frac14(\tr\ubar\chi)^2\gamma + \kappa\ubar\chi\\
\ubar L\tr\ubar\chi &= -\frac12(\tr\ubar\chi)^2 - |\hat{\ubar\chi}|^2 - G(\ubar L,\ubar L) +\kappa \tr\ubar\chi\\
\mathcal{L}_{\ubar L}\chi&=\Big(\mathcal{K}+\hat{\ubar\chi}\cdot\hat{\chi}+\frac12G(\ubar L,L)\Big)\gamma+\frac12\tr\ubar\chi\hat{\chi}+\frac12\tr\chi\hat{\ubar\chi}-\hat{G}-2S(\s\nabla\zeta)-2\zeta\otimes\zeta-\kappa\chi\\
\ubar L\tr\chi &= G(\ubar L,L)+2\mathcal{K}-2\s{\nabla}\cdot\zeta-2|\zeta|^2-\langle\vec{H},\vec{H}\rangle-\kappa\tr\chi\\
\mathcal{L}_{\ubar L}\zeta &= G_{\ubar L} - \s\nabla\cdot\hat{\ubar\chi} - \tr\ubar\chi\zeta+\frac12\s{d}\tr\ubar\chi+\s{d}\kappa
\end{align}
where $\ubar\alpha$ is the symmetric 2-tensor given by $\ubar\alpha(V,W) = \langle R_{\ubar L V}\ubar L,W\rangle$, $S(T)$ represents the symmetric part of a 2-tensor $T$, $G_{\ubar L} = G(\ubar L,\cdot)|_{\Sigma_s}$ and $\hat{G} = G|_{\Sigma_s}-\frac12(\tr_{\gamma}G)\gamma$.
\end{proposition}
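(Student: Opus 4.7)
The strategy for all seven identities is uniform: extend $V, W \in \Gamma(T\Sigma)$ to $E(\Sigma)$ so that $[\ubar L, V] = [\ubar L, W] = 0$. Under this extension one has $(\mathcal{L}_{\ubar L}T)(V,W,\ldots) = \ubar L(T(V,W,\ldots))$, and two tools handle the bulk of the computation: the commutator $D_{\ubar L}D_V Z = D_V D_{\ubar L}Z - R_{\ubar L V}Z$ (immediate from $[\ubar L, V] = 0$ and the paper's convention $R_{XY}Z = D_{[X,Y]}Z - [D_X, D_Y]Z$), together with Lemma \ref{l7}, which decomposes $D_V\ubar L$, $D_V L$, and $D_{\ubar L}L$ into the null basis.

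Equation (2) is immediate: $\ubar L\langle V, W\rangle = \langle D_V\ubar L, W\rangle + \langle V, D_W\ubar L\rangle = 2\ubar\chi(V,W)$ by symmetry of $\ubar\chi$. For (3), I differentiate $\ubar\chi(V,W) = \langle D_V\ubar L, W\rangle$ along $\ubar L$: the commutator produces $\langle D_V(\kappa\ubar L), W\rangle - \langle R_{\ubar L V}\ubar L, W\rangle = \kappa\ubar\chi(V,W) - \ubar\alpha(V,W)$, while $\langle D_V\ubar L, D_{\ubar L}W\rangle = \langle D_V\ubar L, D_W\ubar L\rangle = \langle\vec{\ubar\chi}(V), \vec{\ubar\chi}(W)\rangle$ expands via the 2-dimensional identity
\[
\langle \vec T(V), \vec T(W)\rangle = \tfrac12|\hat T|^2\gamma(V,W) + \tr T\, \hat T(V,W) + \tfrac14(\tr T)^2\gamma(V,W),
\]
valid for any symmetric 2-tensor $T$ on a 2-sphere. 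Tracing (3) with $\mathcal{L}_{\ubar L}\gamma^{ab} = -2\ubar\chi^{ab}$ (dual to (2)) then yields (4), after identifying $\tr_\gamma\ubar\alpha = G(\ubar L, \ubar L)$ from the null decomposition of $\mathrm{Ric}(\ubar L, \ubar L)$, in which the $\langle R_{\ubar L\ubar L}\ubar L, L\rangle$ and $\langle R_{\ubar L L}\ubar L, \ubar L\rangle$ pieces vanish by antisymmetry.

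Equations (5)--(7) follow the same program applied to $\chi(V, W) = \langle D_V L, W\rangle$ and $2\zeta(V) = \langle D_V\ubar L, L\rangle$. For (5), $D_{\ubar L}L = -2\vec\zeta - \kappa L$ supplies the $-2S(\s\nabla\zeta)$, $-\kappa\chi$, and $-2\zeta\otimes\zeta$ contributions once one symmetrizes in $(V, W)$; the polarized 2-dimensional identity feeds out $\tfrac12(\hat{\ubar\chi}\cdot\hat\chi)\gamma + \tfrac12\tr\ubar\chi\,\hat\chi + \tfrac12\tr\chi\,\hat{\ubar\chi} + \tfrac14\tr\ubar\chi\tr\chi\,\gamma$ from the symmetric part of $\langle\vec\chi(V), \vec{\ubar\chi}(W)\rangle$. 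The remaining ambient piece $-\langle R_{\ubar L V}L, W\rangle$ is decomposed into its $\gamma$-trace and trace-free parts on $\Sigma_s$, with the trace read off from the Gauss equation of Proposition \ref{p1} to recover $\mathcal{K} + \tfrac12\hat{\ubar\chi}\cdot\hat\chi + \tfrac12 G(\ubar L, L) - \tfrac14\langle\vec H, \vec H\rangle$, and the trace-free part identified with $-\hat G$. Equation (6) is the $\gamma$-trace of (5). For (7), differentiating $2\zeta(V) = \langle D_V\ubar L, L\rangle$ and expanding $D_{\ubar L}L$ by Lemma \ref{l7} produces the $-\tr\ubar\chi\,\zeta$ and $\s d\kappa$ terms, while the Codazzi-type identity of Proposition \ref{p1} converts the surviving ambient curvature component into $G_{\ubar L} - \s\nabla\cdot\hat{\ubar\chi} + \tfrac12\s d\tr\ubar\chi$.

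Equation (1) is the most delicate, and the cleanest route is the standard variational formula for the intrinsic scalar curvature under the symmetric metric deformation $\dot\gamma = 2\ubar\chi$:
\[
\ubar L\,\s R = -\s\Delta\tr_\gamma(2\ubar\chi) + \s\nabla\cdot\s\nabla\cdot(2\ubar\chi) - \langle 2\ubar\chi, \s{Ric}\rangle.
\]
In dimension two $\s R = 2\mathcal{K}$ and $\s{Ric} = \mathcal{K}\gamma$, so dividing by two and using $\ubar\chi = \hat{\ubar\chi} + \tfrac12\tr\ubar\chi\,\gamma$ (which reduces $\s\nabla\cdot\s\nabla\cdot\ubar\chi$ to $\s\nabla\cdot\s\nabla\cdot\hat{\ubar\chi} + \tfrac12\s\Delta\tr\ubar\chi$) produces (1) directly. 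Alternatively, one can double-trace Lemma \ref{l9} using $\mathcal{L}_{\ubar L}\gamma^{ab} = -2\ubar\chi^{ab}$, arriving at the same identity. The principal obstacle throughout is systematic bookkeeping: matching the paper's non-standard Riemann sign convention, tracking $\kappa$-corrections from the non-affine parameterization of $\ubar L$, and aligning the tensor-level decompositions of $\langle R_{\ubar L V}L, W\rangle$ and $\langle R_{\ubar L V}\ubar L, L\rangle$ with the scalar Gauss and Codazzi identities of Proposition \ref{p1}.
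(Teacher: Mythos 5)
Your proposal is correct. For the propagation of $\gamma$, $\ubar\chi$, $\tr\ubar\chi$, $\chi$, $\tr\chi$ and $\zeta$ you follow essentially the paper's own route: work in Lie-propagated frames, commute $D_{\ubar L}$ past $D_V$ at the cost of an ambient curvature term, expand $D_V\ubar L$, $D_VL$ and $D_{\ubar L}L$ via Lemma \ref{l7}, use the $2\times2$ trace identity to split $\langle\vec{\ubar\chi}(V),\vec{\chi}(W)\rangle$ into shear and expansion pieces, and convert the surviving ambient components with the Gauss and Codazzi contractions of Proposition \ref{p1}; your identification $\tr_\gamma\ubar\alpha=G(\ubar L,\ubar L)$ is exactly how the paper closes the $\tr\ubar\chi$ equation. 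Your symmetrization shortcut for $\mathcal{L}_{\ubar L}\chi$ (discarding the antisymmetric part because both sides are a priori symmetric) is a legitimate economy over the paper's Lemma \ref{l10}, which tracks the $\mathrm{curl}\,\zeta$ and antisymmetric $\vec\chi$--$\vec{\ubar\chi}$ contributions explicitly and watches them cancel. The one genuinely different choice is the propagation of $\mathcal{K}$: you invoke the first variation of scalar curvature, $\dot{\s R}=-\s\Delta\tr_\gamma\dot\gamma+\s\nabla\cdot\s\nabla\cdot\dot\gamma-\dot\gamma\cdot\s{Ric}$, applied to $\dot\gamma=2\ubar\chi$, which in dimension two collapses immediately to the stated identity. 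This bypasses the paper's Lemmas \ref{l8} and \ref{l9} (propagation of the Christoffel symbols and of the intrinsic Riemann tensor), which are developed solely for that equation; the small price is that you must justify that Lie dragging along $\ubar L$ reduces the problem to a one-parameter family of metrics on a fixed sphere (it does, by pulling $\gamma_s$ back along the flow of $\ubar L$, since Gaussian curvature is diffeomorphism-covariant), and the benefit is a substantially shorter derivation. One minor misattribution that does not affect validity: the $-2\zeta\otimes\zeta$ term in $\mathcal{L}_{\ubar L}\chi$ arises from pairing $D_VL$ with $D_{\ubar L}W=D_W\ubar L$, not from the $D_VD_{\ubar L}L$ term, which supplies only $-2S(\s\nabla\zeta)$ and $-\kappa\chi$.
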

\begin{proof}
We prove each equation in turn, when used we will assume $S,U,V,W\in E(\Sigma)$:
\begin{enumerate}
\item Since $\Sigma$ is of dimension two we have
$$\mathcal{K}\{\langle V,U\rangle\langle W,S\rangle-\langle V,S\rangle\langle W,U\rangle\}=\langle\s R_{VW}U,S\rangle$$
Applying $\ubar L$ to the left hand side of the equality we get
\begin{align*}
(\ubar L\mathcal{K})\{\langle V,U\rangle\langle W,S\rangle&-\langle V,S\rangle\langle W,U\rangle\}\\
&+2\mathcal{K}\{\langle W,S\rangle\ubar\chi(V,U)+\langle V,U\rangle\ubar\chi(W,S)-\langle W,U\rangle\ubar\chi(V,S)-\langle V,S\rangle\ubar\chi(W,U)\}
\end{align*}
so that a trace over $V,U$ and then $W,S$ gives
$$2\ubar L\mathcal{K}+4\tr\ubar\chi\mathcal{K}.$$
Applying $\ubar L$ to the right hand side we have
$$\ubar L\langle\s R_{VW}U,S\rangle = \langle[\ubar L,\s R_{VW}U],S\rangle+2\ubar\chi(\s R_{VW}U,S)$$
allowing us to use Lemma \ref{l9}. Taking a trace over $V,U$ and $W,S$ we get
$$2\s\nabla\cdot\s\nabla\cdot\ubar\chi-2\s\Delta tr\ubar\chi + 2tr\ubar\chi K$$
having used the fact that $\s {Ric}=K\gamma$ in obtaining the last term. Equating terms we conclude that
$$\ubar LK= \s\nabla\cdot\s\nabla\cdot\ubar\chi-\s\Delta tr\ubar\chi-tr\ubar\chi K = \s\nabla\cdot\s\nabla\cdot\hat{\ubar\chi}-\frac12\s\Delta tr\ubar\chi-tr\ubar\chi K$$
\item Coming from Lemma \ref{l7} we have already made extensive use of this identity:\\
 $\begin{aligned}[t]
(\mathcal{L}_{\ubar L}\gamma)(V,W) &=\ubar L \langle V,W\rangle = \langle D_{\ubar L}V,W\rangle+\langle V,D_{\ubar L}W\rangle\\
&=\langle D_V\ubar L,W\rangle + \langle V,D_W\ubar L\rangle\\
&=2\ubar\chi(V,W)
\end{aligned}$
\item $\begin{aligned}[t]
(\mathcal{L}_{\ubar L}\ubar\chi)(V,W) &=\ubar L\ubar\chi(V,W) = \ubar L\langle D_V\ubar L, W\rangle\\
&=\langle D_{\ubar L}D_V \ubar L, W\rangle + \langle D_V \ubar L, D_{\ubar L}W\rangle\\
&= \langle R_{V\ubar L}\ubar L + D_VD_{\ubar L}\ubar L,W\rangle + \langle \vec{\ubar\chi}(V) +\zeta(V)\ubar L, \vec{\ubar\chi}(W)+\zeta(W)\ubar L\rangle\\
&=-\langle R_{\ubar L V}\ubar L, W\rangle +\kappa\ubar\chi(V,W) + \langle \vec{\ubar\chi}(V),\vec{\ubar\chi}(W)\rangle
\end{aligned}$

having used Lemma \ref{l7} to get the third line. Since $\vec{\ubar\chi}(V) = \vec{\hat{\ubar\chi}}(V)+\frac12\tr\ubar\chi V$ we see that
\begin{align*}
\langle\vec{\ubar\chi}(V),\vec{\ubar\chi}(W)\rangle &= \langle\vec{\hat{\ubar\chi}}(V),\vec{\hat{\ubar\chi}}(W)\rangle+\tr\ubar\chi\hat{\ubar\chi}(V,W)+\frac14(\tr\ubar\chi)^2\langle V,W\rangle\\
&=\frac12|\hat{\ubar\chi}|^2\langle V,W\rangle+\tr\ubar\chi\hat{\ubar\chi}(V,W)+\frac14(\tr\ubar\chi)^2\langle V,W\rangle
\end{align*}
using the fact that $AB+BA=tr(AB)\mathbb{I}$ for traceless symmetric $2\times2$ matrices to get the second line. The result follows.
\item We will denote tensor contraction between the contravariant $a$-th and covariant $b$-th slots by $C_b^a$. Extending a local basis off of $\Sigma$ and applying Gram-Schmidt we get an orthonormal frame field $\{E_1,E_2\}$ allowing us to write $g^{-1}|_\Omega=\gamma^{-1} = E_1\otimes E_1+E_2\otimes E_2$ and $\gamma = E^{\flat}_1\otimes E^{\flat}_1+E^{\flat}_2\otimes E^{\flat}_2$. It's an easy exercise to show $C_1^2 \gamma^{-1}\otimes\gamma = \delta - \ubar L\otimes ds$ whereby $\delta(\eta,X) = \eta(X)$ for any 1-form $\eta$ and vector field $X$. Since $\delta$ and $\ubar L\otimes ds$ are Lie constant along $\ubar L$
$$0=\mathcal{L}_{\ubar L}C_1^2\gamma^{-1}\otimes\gamma = C_1^2(\mathcal{L}_{\ubar L}\gamma^{-1}\otimes\gamma+\gamma^{-1}\otimes2\ubar\chi)$$
giving
\begin{align*}
-2C_1^2C_1^2\gamma^{-1}\otimes\ubar\chi\otimes\gamma^{-1} &=-C_1^2(C_1^2\gamma^{-1}\otimes 2\ubar\chi)\otimes\gamma^{-1}\\
&=C_1^2(C_1^2\mathcal{L}_{\ubar L}\gamma^{-1}\otimes\gamma)\otimes\gamma^{-1}\\
&=C_1^2\mathcal{L}_{\ubar L}\gamma^{-1}\otimes(C_2^1\gamma\otimes\gamma^{-1})\\
&=C_1^2\mathcal{L}_{\ubar L}\gamma^{-1}\otimes(E^{\flat}_1\otimes E_1+E^{\flat}_2\otimes E_2)\\
&=\mathcal{L}_{\ubar L}\gamma^{-1}.
\end{align*}
As a result

$\begin{aligned}[t]
\ubar L\tr\ubar\chi &= \mathcal{L}_{\ubar L}C^1_1C^2_2\gamma^{-1}\otimes\ubar\chi\\
&= C^1_1C^2_2(\mathcal{L}_{\ubar L}\gamma^{-1}\otimes\ubar\chi+\gamma^{-1}\otimes\mathcal{L}_{\ubar L}\ubar\chi)\\
&=-2|\ubar\chi|^2+\tr\mathcal{L}_{\ubar L}\ubar\chi\\
&=-2|\ubar\chi|^2-Ric(\ubar L,\ubar L) + \kappa \tr\ubar\chi + |\ubar\chi|^2\\
&= -(\hat{\ubar\chi}+\frac12\tr\ubar\chi\gamma)\cdot(\hat{\ubar\chi}+\frac12\tr\ubar\chi\gamma) - G(\ubar L,\ubar L) +\kappa\tr\ubar\chi\\
&= -\frac12(\tr\ubar\chi)^2 - |\hat{\ubar\chi}|^2-G(\ubar L,\ubar L)+\kappa\tr\ubar\chi
\end{aligned}$
\item$\begin{aligned}[t]
(\mathcal{L}_{\ubar L}\chi)(V,W)&=\ubar L\langle D_VL,W\rangle=\langle D_{\ubar L}D_VL,W\rangle+\langle D_VL,D_{\ubar L}W\rangle\\
&=\langle R_{V\ubar L}L+D_VD_{\ubar L}L,W\rangle+\langle \vec{\chi}(V)-\zeta(V)L,\vec{\ubar\chi}(W)+\zeta(W)\ubar L\rangle\\
&=\langle R_{V\ubar L}L,W\rangle+V\langle -2\vec{\zeta}-\kappa L,W\rangle+\langle2\vec{\zeta}+\kappa L,D_VW\rangle+\langle\vec{\chi}(V),\vec{\ubar\chi}(W)\rangle-2\zeta(V)\zeta(W)\\
&=\langle R_{V\ubar L}L,W\rangle-2(\s\nabla_V\zeta)(W)-\kappa\chi(V,W)+\langle\vec{\chi}(V),\vec{\ubar\chi}(W)\rangle-2\zeta(V)\zeta(W).
\end{aligned}$

Having used Lemma \ref{l7} to get the second and third lines. 
\begin{lemma}\label{l10} The first term satisfies the identity\\
$\begin{aligned}[t]
\langle R_{V\ubar L}W,L\rangle &=-\Big(\mathcal{K}-\frac14\langle\vec{H},\vec{H}\rangle+\frac12\hat{\ubar\chi}\cdot\hat{\chi}+\frac12G(\ubar L,L)\Big)\langle V,W\rangle+\hat{G}(V,W)\\
&\indent-(\text{curl}\,\zeta)(V,W)+\frac12\Big(\langle\vec{\chi}(V),\vec{\ubar\chi}(W)\rangle-\langle\vec{\ubar\chi}(V),\vec{\chi}(W)\rangle\Big)
\end{aligned}$

\end{lemma}
\begin{proof}
From the first Bianchi identity followed by the Ricci equation (\cite{O}, pg125)

$\begin{aligned}[t]
\langle R_{V\ubar L}W,L\rangle+\langle R_{\ubar L W}V,L\rangle &= \langle R_{VW}\ubar L,L\rangle\\
&=\langle\s{R}^{\perp}_{VW}\ubar L,L\rangle+\langle\tilde{\II}(V,L),\tilde{\II}(W,\ubar L)\rangle-\langle\tilde{\II}(V,\ubar L),\tilde{\II}(W,L)\rangle
\end{aligned}$

where using Lemma \ref{l7}

$\begin{aligned}[t]
\langle \s{R}^{\perp}_{VW}\ubar L,L\rangle&:= \langle D^{\perp}_{[V,W]}\ubar L - [D^{\perp}_V,D^{\perp}_W]\ubar L,L\rangle=-2(\s\nabla_V\zeta)(W)+2(\s\nabla_W\zeta)(V)=-2(\text{curl}\zeta)(V,W)\\
\tilde{\II}(V,\ubar L)&:= D^{||}_V\ubar L=\vec{\ubar\chi}(V)\\
\tilde{\II}(V,L)&:= D^{||}_VL= \vec{\chi}(V).
\end{aligned}$

We conclude that the antisymmetric part satisfies
$$\frac12\Big(\langle R_{V\ubar L}W,L\rangle-\langle R_{W\ubar L}V,L\rangle\Big) = -(\text{curl}\zeta)(V,W)+\frac12\Big(\langle\vec{\chi}(V),\vec{\ubar\chi}(W)\rangle-\langle\vec{\ubar\chi}(V),\vec{\chi}(W)\rangle\Big).$$
Next we find that

$\begin{aligned}[t]
G(V,W)&=\text{Ric}(V,W)-\frac12R\langle V,W\rangle\\
&=\frac12\Big(\langle R_{\ubar L V}L,W\rangle+\langle R_{LV}\ubar L,W\rangle\Big)+\tr_\gamma\langle R_{(\cdot)V}(\cdot),W\rangle+\frac12(G(\ubar L,L)+\tr_\gamma G)\langle V,W\rangle.
\end{aligned}$

Since $\Sigma$ is of dimension two we must have that $\tr_\gamma\langle R_{(\cdot)V}(\cdot),W\rangle\propto \langle V,W\rangle$ with factor of proportionality $\mathcal{K}-\frac14\langle\vec{H},\vec{H}\rangle+\frac12\hat{\ubar\chi}\cdot\hat{\chi}$ coming from Proposition \ref{p1}. We conclude therefore that the symmetric part satisfies
$$\frac12\Big(\langle R_{V\ubar L}W,L\rangle+\langle R_{W\ubar L}V,L\rangle\Big) = \hat{G}(V,W)-(\mathcal{K}-\frac14\langle\vec{H},\vec{H}\rangle+\frac12\hat{\ubar\chi}\cdot\hat{\chi}+\frac12G(\ubar L,L))\langle V,W\rangle$$
and the result follows as soon as we sum up the antisymmetric and symmetric contributions.
\end{proof}
Combining the previous lemma with the propagation of $\chi$ we have

$\begin{aligned}[t]
(\mathcal{L}_{\ubar L}\chi)(V,W)&=\Big(\mathcal{K}-\frac14\langle\vec{H},\vec{H}\rangle+\frac12\hat{\ubar\chi}\cdot\hat{\chi}+\frac12G(\ubar L,L)\Big)\langle V,W\rangle-\hat{G}(V,W)\\
&\indent+(\text{curl}\zeta)(V,W)-\frac12\Big(\langle\vec{\chi}(V),\vec{\ubar\chi}(W)\rangle-\langle\vec{\ubar\chi}(V),\vec{\chi}(W)\rangle\Big)\\
&\indent-2(\s\nabla_V\zeta)(W)-\kappa\chi(V,W)+\langle\vec{\chi}(V),\vec{\ubar\chi}(W)\rangle-2\zeta(V)\zeta(W)\\
&=\Big(\mathcal{K}-\frac14\langle\vec{H},\vec{H}\rangle+\frac12\hat{\ubar\chi}\cdot\hat{\chi}+\frac12G(\ubar L,L)\Big)\langle V,W\rangle - \hat{G}(V,W)\\
&+\indent\frac12\Big(\langle\vec{\ubar\chi}(V),\vec{\chi}(W)\rangle+\langle\vec{\chi}(V),\vec{\ubar\chi}(W)\rangle\Big)-(\s\nabla_V\zeta)(W)-(\s\nabla_W\zeta)(V)\\
&\indent-2\zeta(V)\zeta(W)-\kappa\chi(V,W).
\end{aligned}$

Using again the fact that $AB+BA = \tr(AB)\mathbb{I}$ for symmetric, traceless $2\times2$ matrices it follows that 
$$\langle \vec{\chi}(V),\vec{\ubar\chi}(W)\rangle+\langle\vec{\ubar\chi}(V),\vec{\chi}(W)\rangle =(\hat{\chi}\cdot\hat{\ubar\chi})\langle V,W\rangle+\tr\chi\hat{\ubar\chi}(V,W)+\tr\ubar\chi\hat{\chi}(V,W)+\frac12\langle\vec{H},\vec{H}\rangle\langle V,W\rangle$$
 giving the result.

\item $\begin{aligned}[t]
\ubar L\tr\chi &=\mathcal{L}_{\ubar L}(C_1^1C_2^2\gamma^{-1}\otimes\chi)\\
&=-2\ubar\chi\cdot\chi+\tr(\mathcal{L}_{\ubar L}\chi)\\
&=-2\ubar\chi\cdot\chi+2\mathcal{K}+2\hat{\ubar\chi}\cdot\hat{\chi}+G(\ubar L,L)-2\s\nabla\cdot\zeta-2|\zeta|^2-\kappa\tr\chi\\
&=G(\ubar L,L)+2\mathcal{K}-2\s\nabla\cdot\zeta-2|\zeta|^2-\langle\vec{H},\vec{H}\rangle-\kappa\tr\chi
\end{aligned}$
\item $\begin{aligned}[t]
(\mathcal{L}_{\ubar L}\zeta)(V)&=\ubar L \zeta(V) = \frac12\ubar L\langle D_V\ubar L,L\rangle\\
&= \frac12\langle D_{\ubar L}D_V\ubar L,L\rangle + \frac12\langle D_V \ubar L,D_{\ubar L}L\rangle\\
&= \frac12\langle R_{V\ubar L}\ubar L + D_VD_{\ubar L}\ubar L,L\rangle + \frac12\langle \vec{\ubar\chi}(V)+\zeta(V)\ubar L, -2\vec{\zeta} - \kappa L\rangle\\
&=\frac12\langle R_{V\ubar L}\ubar L,L\rangle + V\kappa+ \kappa\zeta(V) - \ubar\chi(V,\vec{\zeta}) -\kappa\zeta(V)\\
&=-\s\nabla\cdot\hat{\ubar\chi}(V) + \frac12V\tr\ubar\chi - \tr\ubar\chi\zeta(V)+G(V,\ubar L)  +V\kappa
\end{aligned}$

having used Lemma \ref{l7} to obtain the third line and the Codazzi equation (4) to get the fifth. 
\end{enumerate}
\end{proof}

From Proposition \ref{p5} we have the propagation of the first two terms of $\rho$ for the third and forth we'll need
\begin{corollary}\label{c2} 
Assuming $\{\Sigma_s\}$ is expanding along $\ubar L$ we have
\begin{align*}
\ubar L(\s{\nabla}\cdot\zeta) &= -2\s{\nabla}\cdot(\hat{\ubar\chi}\cdot\zeta)-2\tr\ubar\chi\s{\nabla}\cdot\zeta
-\s{\nabla}\cdot\s{\nabla}\cdot\hat{\ubar\chi}+\frac12\s{\Delta}\tr\ubar\chi-\s{d}\tr\ubar\chi\cdot\zeta
+\s{\nabla}\cdot G_{\ubar L}+\s{\Delta}\kappa\\
\ubar L\s{\Delta}\log\tr\ubar\chi &= -2\s{\nabla}\cdot(\hat{\ubar\chi}\cdot\s{d}\log\tr\ubar\chi)
-\frac32\tr\ubar\chi\s{\Delta}\log\tr\ubar\chi-\frac12\tr\ubar\chi|\s{d}\log\tr\ubar\chi|^2
-\s{\Delta}\frac{|\hat{\ubar\chi}|^2+G(\ubar L,\ubar L)}{\tr\ubar\chi}+\s{\Delta}\kappa
\end{align*}
\end{corollary}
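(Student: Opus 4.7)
Both identities will follow from a single master formula for the divergence of a 1-form along $\ubar L$. The plan is to first derive
\begin{equation*}
\ubar L(\s\nabla\cdot\omega) \;=\; -2\ubar\chi\cdot\s\nabla\omega + \s\nabla\cdot\mathcal{L}_{\ubar L}\omega - 2(\s\nabla\cdot\ubar\chi)(\vec\omega) + \vec\omega(\tr\ubar\chi)
\end{equation*}
for any smooth 1-form $\omega$ on the foliation, and then to specialize to $\omega=\zeta$ and $\omega=\s{d}\log\tr\ubar\chi$, combining with entries (10) and (7) of Proposition \ref{p5} respectively.

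\textbf{Derivation of the master formula.} Write $\s\nabla\cdot\omega = \gamma^{ij}(\s\nabla_{X_i}\omega)(X_j)$ for a local basis $\{X_i\}\subset E(\Sigma)$. Differentiating along $\ubar L$ and using $\ubar L(\gamma^{ij})=-2\gamma^{ik}\gamma^{jl}\ubar\chi_{kl}$ (established inside the proof of entry (6) of Proposition \ref{p5}) produces the term $-2\ubar\chi\cdot\s\nabla\omega$. For the remaining piece $\gamma^{ij}\mathcal{L}_{\ubar L}(\s\nabla\omega)(X_i,X_j)$, take $V,W\in E(\Sigma)$, differentiate $(\s\nabla_V\omega)(W)=V(\omega(W))-\omega(\s\nabla_V W)$ along $\ubar L$, use $[\ubar L,V]=[\ubar L,W]=0$, and apply Lemma \ref{l8} to evaluate $\omega([\ubar L,\s\nabla_V W])=\langle [\ubar L,\s\nabla_V W],\vec\omega\rangle$ to obtain
\begin{equation*}
\mathcal{L}_{\ubar L}(\s\nabla\omega)(V,W) \;=\; (\s\nabla_V\mathcal{L}_{\ubar L}\omega)(W) - (\s\nabla_V\ubar\chi)(W,\vec\omega) - (\s\nabla_W\ubar\chi)(V,\vec\omega) + (\s\nabla_{\vec\omega}\ubar\chi)(V,W).
\end{equation*}
Tracing in $V,W$ and combining with the first piece yields the master formula.

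\textbf{First identity.} Specialize to $\omega=\zeta$ and substitute $\mathcal{L}_{\ubar L}\zeta$ from (10). Decompose $\ubar\chi = \hat{\ubar\chi}+\tfrac12\tr\ubar\chi\,\gamma$: this splits $-2\ubar\chi\cdot\s\nabla\zeta$ into $-2\hat{\ubar\chi}\cdot\s\nabla\zeta-\tr\ubar\chi\,\s\nabla\cdot\zeta$, while the identities $\s\nabla\cdot(\tfrac12\tr\ubar\chi\,\gamma)=\tfrac12\s{d}\tr\ubar\chi$ and $\s{d}\tr\ubar\chi(\vec\zeta)=\vec\zeta(\tr\ubar\chi)$ collapse the $\vec\zeta$-terms to $-2(\s\nabla\cdot\hat{\ubar\chi})(\vec\zeta)$. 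The product rule $\s\nabla\cdot(\hat{\ubar\chi}\cdot\zeta)=\hat{\ubar\chi}\cdot\s\nabla\zeta+(\s\nabla\cdot\hat{\ubar\chi})(\vec\zeta)$ then packages these terms as $-2\s\nabla\cdot(\hat{\ubar\chi}\cdot\zeta)$. Finally, expanding $\s\nabla\cdot\mathcal{L}_{\ubar L}\zeta$ using $\s\nabla\cdot(\tr\ubar\chi\,\zeta)=\s{d}\tr\ubar\chi\cdot\zeta+\tr\ubar\chi\,\s\nabla\cdot\zeta$ contributes a second copy of $-\tr\ubar\chi\,\s\nabla\cdot\zeta$ (yielding the total coefficient $-2$) together with $\s\nabla\cdot G_{\ubar L}$, $-\s\nabla\cdot\s\nabla\cdot\hat{\ubar\chi}$, $-\s{d}\tr\ubar\chi\cdot\zeta$, $\tfrac12\s\Delta\tr\ubar\chi$, and $\s\Delta\kappa$.

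\textbf{Second identity and main obstacle.} For the second, set $\omega=\s{d}f$ with $f=\log\tr\ubar\chi$. Since $\mathcal{L}_{\ubar L}\s{d}f=\s{d}\ubar L f$, the same splitting of $\ubar\chi$ and packaging reduce the master formula to
\begin{equation*}
\ubar L\s\Delta f \;=\; -2\s\nabla\cdot(\hat{\ubar\chi}\cdot\s{d}f) - \tr\ubar\chi\,\s\Delta f + \s\Delta\ubar L f.
\end{equation*}
Entry (7) gives $\ubar L f = -\tfrac12\tr\ubar\chi - (|\hat{\ubar\chi}|^2+G(\ubar L,\ubar L))/\tr\ubar\chi + \kappa$, and the scalar identity $\s\Delta\tr\ubar\chi = \tr\ubar\chi(\s\Delta f + |\s{d}f|^2)$ (just $\s\Delta e^f$ rewritten) converts $-\tfrac12\s\Delta\tr\ubar\chi$ into $-\tfrac12\tr\ubar\chi\,\s\Delta f - \tfrac12\tr\ubar\chi|\s{d}f|^2$, which combined with the existing $-\tr\ubar\chi\,\s\Delta f$ produces the coefficient $-\tfrac32$ in front of $\tr\ubar\chi\,\s\Delta\log\tr\ubar\chi$. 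The only nontrivial step is the derivation of the master formula itself: because $\s\nabla_V W$ does not in general lie in $E(\Sigma)$, the bracket $[\ubar L,\s\nabla_V W]$ cannot be dropped and must be handled via Lemma \ref{l8}, which is precisely what produces the three $\s\nabla\ubar\chi$-terms that power both final identities. Everything else is routine algebraic reduction of Proposition \ref{p5}.
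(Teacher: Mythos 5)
Your proof is correct and follows essentially the same route as the paper: the paper's proof of Corollary \ref{c2} derives exactly your master formula, $\ubar L(\s{\nabla}\cdot\eta)=-\tr\ubar\chi\,\s{\nabla}\cdot\eta-2\s{\nabla}\cdot(\hat{\ubar\chi}\cdot\eta)+\s{\nabla}\cdot(\mathcal{L}_{\ubar L}\eta)$, by computing $\mathcal{L}_{\ubar L}(\s{\nabla}\eta)(V,W)=\s{\nabla}_V(\mathcal{L}_{\ubar L}\eta)(W)-\eta([\ubar L,\s{\nabla}_VW])$ and handling the bracket with Lemma \ref{l8}, then specializes to $\eta=\zeta$ and $\eta=\s{d}\log\tr\ubar\chi$ just as you do. The only correction needed is to the citations: in the paper's numbering the propagation equations you invoke are (11) for $\mathcal{L}_{\ubar L}\zeta$ and (8) for $\ubar L\tr\ubar\chi$.
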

\begin{proof} When used we will assume $V,W\in E(\Sigma)$.\\
For any 1-form $\eta$ on $\Omega$ we have
\begin{align*}
\mathcal{L}_{\ubar L}(\s{\nabla}\eta)(V,W)&=\ubar L(\s{\nabla}_V\eta(W)) = V\ubar L\eta(W)-\ubar L\eta(\s{\nabla}_VW)\\
&=V(\mathcal{L}_{\ubar L}\eta)(W)-(\mathcal{L}_{\ubar L}\eta)(\s{\nabla}_VW)-\eta([\ubar L,\s{\nabla}_VW])\\
&=\s{\nabla}_V(\mathcal{L}_{\ubar L}\eta)(W)-\eta([\ubar L,\s{\nabla}_VW])
\end{align*}
from which we find
\begin{align*}
\ubar L(\s{\nabla}\cdot\eta) &= C_1^1C_2^2(\mathcal{L}_{\ubar L}\gamma^{-1}\otimes\s{\nabla}\eta+\gamma^{-1}\otimes\mathcal{L}_{\ubar L}(\s{\nabla}\eta))\\
&=-2\ubar\chi\cdot\s{\nabla}\eta+\tr(\mathcal{L}_{\ubar L}\s{\nabla}\eta)\\
&=-2(\hat{\ubar\chi}+\frac12\tr\ubar\chi\gamma)\cdot\s{\nabla}\eta+\s{\nabla}\cdot(\mathcal{L}_{\ubar L}\eta)-\eta(2\vec{\s{\nabla}\cdot\hat{\ubar\chi}})
\end{align*}
the last term coming from Lemma \ref{l8} after taking a trace over $V,W$. We conclude that
$$\ubar L(\s{\nabla}\cdot\eta)=-\tr\ubar\chi\s{\nabla}\cdot\eta-2\s{\nabla}\cdot(\hat{\ubar\chi}\cdot\eta)+\s{\nabla}\cdot(\mathcal{L}_{\ubar L}\eta).$$
The first part of the corollary now straight forwardly follows from Proposition \ref{p5} for $\eta = \zeta$. For the second, since $\s{\Delta}\log\tr\ubar\chi = \s{\nabla}\cdot\s{d}\log\tr\ubar\chi$ we have
$$\ubar L\s{\Delta}\log\tr\ubar\chi=-\tr\ubar\chi\s{\Delta}\log\tr\ubar\chi-2\s{\nabla}\cdot(\hat{\ubar\chi}\cdot\s{d}\log\tr\ubar\chi)+\s{\nabla}\cdot(\mathcal{L}_{\ubar L}\s{d}\log\tr\ubar\chi).$$
From the fact that
$$\s{\nabla}\cdot(\mathcal{L}_{\ubar L}\s{d}\log\tr\ubar\chi) = \s{\nabla}\cdot(\s{d}\ubar L\log\tr\ubar\chi) = \s{\Delta}\Big(-\frac12\tr\ubar\chi-\frac{|\hat{\ubar\chi}|^2+G(\ubar L,\ubar L)}{\tr\ubar\chi}+\kappa\Big)$$
the result follows as soon as we make the substitution
$$\s{\Delta}\tr\ubar\chi = \tr\ubar\chi\Big(\s{\Delta}\log\tr\ubar\chi+|\s{d}\log\tr\ubar\chi|^2\Big)$$
\end{proof}
\begin{theorem}[Propagation of $\rho$]\label{t3}
Assuming $\{\Sigma_s\}$ is expanding along the flow vector $\ubar L=\mathfrak{s}L^-$ we conclude that
\begin{align*}
\dot{\rho}+\frac32\mathfrak{s}\rho &= \frac{\mathfrak{s}}{2}\Big(\frac12\langle\vec{H},\vec{H}\rangle\Big(|\hat{\chi}^-|^2+G(L^-,L^-)\Big)+|\tau|^2-\frac12G(L^-,L^+)\Big)\\
&\quad+\s\Delta\Big(\mathfrak{s}(|\hat{\chi}^-|^2+G(L^-,L^-))\Big)-2\s\nabla\cdot(\mathfrak{s}\hat{\chi}^-\cdot\tau)+\s\nabla\cdot(\mathfrak{s} G_{L^-})
\end{align*}
\end{theorem}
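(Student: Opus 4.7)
The plan is to write $\rho$ in the explicit form
$$\rho = \mathcal{K} - \tfrac14 \tr\ubar\chi\,\tr\chi + \s\nabla\cdot\zeta - \s\Delta\log\tr\ubar\chi,$$
(using $\tau = \zeta - \s{d}\log\tr\ubar\chi$ and $\langle\vec{H},\vec{H}\rangle = \tr\ubar\chi\,\tr\chi$) and then apply $\ubar L$ term by term using Proposition \ref{p5} and Corollary \ref{c2}. Throughout I would use the normalization $\ubar L = \mathfrak{s}L^{-}$ with partner $L = L^{+}/\mathfrak{s}$, which gives $\tr\ubar\chi = \mathfrak{s}$, $\hat{\ubar\chi} = \mathfrak{s}\hat\chi^{-}$, $\tr\chi = \langle\vec H,\vec H\rangle/\mathfrak{s}$, $G(\ubar L,\ubar L) = \mathfrak{s}^{2} G(L^{-},L^{-})$ and $G(\ubar L,L) = G(L^{-},L^{+})$, to translate the structure equations into the data appearing in the statement.

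First I would expand $\ubar L\langle\vec H,\vec H\rangle = \tr\chi\,\ubar L\tr\ubar\chi + \tr\ubar\chi\,\ubar L\tr\chi$ from (8) and (10); the two $\kappa$ contributions drop out immediately because they enter with opposite signs. Adding this to $\ubar L\mathcal{K}$ from (5) and to $\ubar L(\s\nabla\cdot\zeta) - \ubar L\s\Delta\log\mathfrak{s}$ from Corollary \ref{c2}, the two $\s\Delta\kappa$ terms of Corollary \ref{c2} cancel, the $\pm\tfrac12\s\Delta\mathfrak{s}$ contributions cancel, and the $\pm\s\nabla\cdot\s\nabla\cdot\hat{\ubar\chi}$ terms cancel. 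The identities
$$\s\Delta\!\left(\frac{|\hat{\ubar\chi}|^{2}+G(\ubar L,\ubar L)}{\mathfrak{s}}\right) = \s\Delta\bigl(\mathfrak{s}(|\hat\chi^{-}|^{2}+G(L^{-},L^{-}))\bigr),\qquad \s\nabla\cdot\bigl(\hat{\ubar\chi}\cdot(\zeta - \s{d}\log\mathfrak{s})\bigr) = \s\nabla\cdot(\mathfrak{s}\hat\chi^{-}\cdot\tau)$$
then produce the second-order output terms in the theorem.

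Next I would add $\tfrac32\mathfrak{s}\rho$ to $\ubar L\rho$. The factor $\tfrac32$ is tuned precisely to kill every remaining linear-in-$\rho$ quantity: the net $-\tfrac32\mathfrak{s}\mathcal{K}$ produced by combining (5) with the $2\mathcal{K}$ piece of (10), the net $\tfrac38\mathfrak{s}\langle\vec H,\vec H\rangle$ from the same source, the $\tfrac32\mathfrak{s}\,\s\nabla\cdot\zeta$ accumulated from (10) and Corollary \ref{c2}, and the $-\tfrac32\mathfrak{s}\,\s\Delta\log\mathfrak{s}$ from Corollary \ref{c2}. What remains is exactly the stated right-hand side, provided one recognises the perfect square
$$\tfrac{\mathfrak{s}}{2}|\zeta|^{2} - \s{d}\mathfrak{s}\cdot\zeta + \tfrac{\mathfrak{s}}{2}|\s{d}\log\mathfrak{s}|^{2} = \tfrac{\mathfrak{s}}{2}|\zeta - \s{d}\log\mathfrak{s}|^{2} = \tfrac{\mathfrak{s}}{2}|\tau|^{2},$$
which is the source of the $|\tau|^{2}$ term.

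The main obstacle is purely algebraic bookkeeping: each of the quantities $\mathcal{K}$, $\langle\vec H,\vec H\rangle$, $\s\Delta\mathfrak{s}$, $\s\nabla\cdot\s\nabla\cdot\hat{\ubar\chi}$, $\s\Delta\kappa$, $\s\nabla\cdot\zeta$ and $\s\Delta\log\mathfrak{s}$ enters through two or three separate identities with differing coefficients, and one must track them all to witness every cancellation. The clean emergence of $|\tau|^{2}$ after regrouping (as opposed to some ungauge-invariant combination of $\zeta$ and $\s{d}\log\mathfrak{s}$) is the principal sanity check that the computation has been executed correctly and that $\rho$ is genuinely the boost-invariant object Section 1 advertised.
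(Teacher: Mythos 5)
Your proposal is correct and follows essentially the same route as the paper: differentiate $\rho=\mathcal{K}-\frac14\tr\ubar\chi\tr\chi+\s\nabla\cdot\zeta-\s\Delta\log\tr\ubar\chi$ term by term via (5), (8), (10) and Corollary \ref{c2}, track the cancellations (the $\kappa$ and $\s\Delta\kappa$ terms, $\pm\frac12\s\Delta\tr\ubar\chi$, $\pm\s\nabla\cdot\s\nabla\cdot\hat{\ubar\chi}$), recognise the perfect square giving $\frac{\mathfrak{s}}{2}|\tau|^2$, absorb the remaining pieces into $-\frac32\mathfrak{s}\rho$, and convert to the inflation basis. The conversion dictionary and the identified cancellations all check out against the paper's computation.
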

\begin{proof} From Proposition \ref{p5} and Corollary \ref{c2} the proof reduces to an exercise in algebraic manipulation
\begin{align*}
\mathcal{L}_{\ubar L}\rho&=\mathcal{L}_{\ubar L}K - \frac14tr\chi\mathcal{L}_{\ubar L}\tr\ubar\chi-\frac14\tr\ubar\chi\mathcal{L}_{\ubar L}tr\chi+\ubar L\s\nabla\cdot\zeta-\ubar L\s\Delta\log \tr\ubar\chi\\
&=\Big(\s\nabla\cdot\s\nabla\cdot\hat{\ubar\chi}-\frac12\s\Delta \tr\ubar\chi-\tr\ubar\chi K\Big)-\frac14tr\chi\Big(-\frac12tr^2\ubar\chi-|\hat{\ubar\chi}|^2-G(\ubar L,\ubar L)+\kappa \tr\ubar\chi\Big)\\
&\quad-\frac14\tr\ubar\chi\Big(G(\ubar L,L)+2K-2\s\nabla\cdot\zeta-2|\zeta|^2-\langle\vec{H},\vec{H}\rangle-\kappa tr\chi\Big)\\
&\quad-2\s\nabla\cdot(\hat{\ubar\chi}\cdot\zeta)-2\tr\ubar\chi\s\nabla\cdot\zeta-\s\nabla\cdot\s\nabla\cdot\hat{\ubar\chi}
+\frac12\s\Delta \tr\ubar\chi-\s{d}\tr\ubar\chi\cdot\zeta+\s\nabla\cdot G_{\ubar L}+\s\Delta\kappa\\
&\quad+2\s\nabla\cdot(\hat{\ubar\chi}\cdot\s{d}\log \tr\ubar\chi)+\frac32\tr\ubar\chi\s\Delta\log \tr\ubar\chi+\frac12\tr\ubar\chi|\s{d}\log \tr\ubar\chi|^2+\s\Delta\frac{|\hat{\ubar\chi}|^2+G(\ubar L,\ubar L)}{\tr\ubar\chi}-\s\Delta\kappa
\end{align*}
\begin{align*}
&=-\frac32\tr\ubar\chi K+\frac18\tr\ubar\chi\langle\vec{H},\vec{H}\rangle+\frac14\langle\vec{H},\vec{H}\rangle\Big(\frac{|\hat{\ubar\chi}|^2+G(\ubar L,\ubar L)}{\tr\ubar\chi}\Big)-\frac14\tr\ubar\chi G(\ubar L,L) - \frac32\tr\ubar\chi\s\nabla\cdot\zeta\\
&\quad+\frac14\tr\ubar\chi\langle\vec{H},\vec{H}\rangle-2\s\nabla\cdot(\hat{\ubar\chi}\cdot(\zeta-\s{d}\log \tr\ubar\chi))+\frac32\tr\ubar\chi\s\Delta\log \tr\ubar\chi\\
&\quad+\frac12\tr\ubar\chi|\zeta|^2-\s{d}\tr\ubar\chi\cdot\zeta+\frac12\tr\ubar\chi|\s{d}\log \tr\ubar\chi|^2\\
&=-\frac32\tr\ubar\chi\rho+\frac14\langle\vec{H},\vec{H}\rangle\Big(\frac{|\hat{\ubar\chi}|^2+G(\ubar L,\ubar L)}{\tr\ubar\chi}\Big)+\frac12\tr\ubar\chi|\zeta-\s{d}\log \tr\ubar\chi|^2-\frac14\tr\ubar\chi G(\ubar L,L)\\
&\quad+\s\Delta\frac{|\hat{\ubar\chi}|^2+G(\ubar L,\ubar L)}{\tr\ubar\chi}-2\s\nabla\cdot(\hat{\ubar\chi}\cdot(\zeta-\s{d}\log \tr\ubar\chi))+\s\nabla\cdot G_{\ubar L}.
\end{align*}
The result therefore follows as soon as we express all terms according to the inflation basis $\{L^-,L^+\}$ where $\{\Sigma_s\}$ is a flow along $\ubar L = \mathfrak{s}L^-$ of speed $\mathfrak{s} = \tr\ubar\chi$.
\end{proof}
\begin{corollary}\label{c3} For $\{\Sigma_s\}$ expanding along the flow vector $\ubar L = \mathfrak{s}L^-$ and any $u\in\mathcal{F}(\Sigma_s)$
\begin{align*}
\int_{\Sigma_s}e^u\Big(\dot{\rho}+\frac32\mathfrak{s}\rho\Big)dA&=\int_{\Sigma_s} \mathfrak{s} e^u\Big(\Big(|\hat{\chi}^-|^2+G(L^-,L^-)\Big)\Big(\frac14\langle\vec{H},\vec{H}\rangle+\s\Delta u\Big)\\
&\indent+\frac12|2\hat{\chi}^-\cdot\s{d}u+\tau|^2+G(L^-, |\s\nabla u|^2L^--\s\nabla u-\frac14L^+)\Big)dA
\end{align*}
\end{corollary}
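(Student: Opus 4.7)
The plan is to start from the pointwise propagation formula of Theorem \ref{t3}, multiply through by $e^u$, and integrate against $dA$ over the closed 2-sphere $\Sigma_s$. Since $\Sigma_s$ has no boundary, the task reduces to shifting the Laplacian and two divergence operators appearing on the right of Theorem \ref{t3} onto $e^u$ by integration by parts, and then reassembling the result into the stated form.

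First I would apply integration by parts using $\s\nabla e^u = e^u\,\s\nabla u$ and $\s\Delta e^u = e^u(\s\Delta u + |\s\nabla u|^2)$:
\begin{align*}
\int_{\Sigma_s} e^u\,\s\Delta\bigl(\mathfrak{s}(|\hat\chi^-|^2+G(L^-,L^-))\bigr)\,dA &= \int_{\Sigma_s}\mathfrak{s}\,e^u(|\hat\chi^-|^2+G(L^-,L^-))(\s\Delta u+|\s\nabla u|^2)\,dA,\\
-2\int_{\Sigma_s} e^u\,\s\nabla\cdot(\mathfrak{s}\,\hat\chi^-\cdot\tau)\,dA &= 2\int_{\Sigma_s}\mathfrak{s}\,e^u\,\hat\chi^-(\s\nabla u,\vec\tau)\,dA,\\
\int_{\Sigma_s} e^u\,\s\nabla\cdot(\mathfrak{s}\,G_{L^-})\,dA &= -\int_{\Sigma_s}\mathfrak{s}\,e^u\,G(L^-,\s\nabla u)\,dA.
\end{align*}
Adding these to the undifferentiated algebraic contributions from Theorem \ref{t3} and grouping the Einstein-tensor contributions $|\s\nabla u|^2G(L^-,L^-)$, $-G(L^-,\s\nabla u)$ and $-\tfrac14 G(L^-,L^+)$ via linearity, the integrand of the left-hand side becomes
$$\mathfrak{s}\,e^u\Big[(|\hat\chi^-|^2+G(L^-,L^-))\bigl(\tfrac14\langle\vec H,\vec H\rangle+\s\Delta u\bigr)+|\hat\chi^-|^2|\s\nabla u|^2+2\hat\chi^-(\s\nabla u,\vec\tau)+\tfrac12|\tau|^2+G(L^-,|\s\nabla u|^2L^--\s\nabla u-\tfrac14L^+)\Big].$$

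The final step is the pointwise algebraic identity that for any trace-free symmetric $(0,2)$-tensor $T$ and 1-form $\omega$ on a Riemannian 2-manifold one has $2|T\cdot\omega|^2 = |T|^2|\omega|^2$, as is immediate in any orthonormal frame diagonalising $T$. Applied with $T=\hat\chi^-$ and $\omega=\s{d}u$, and using symmetry of $\hat\chi^-$ to rewrite $2\hat\chi^-(\s\nabla u,\vec\tau)$ as $2\langle\hat\chi^-\cdot\s{d}u,\tau\rangle$, the three middle terms collapse into the perfect square $\tfrac12|2\hat\chi^-\cdot\s{d}u+\tau|^2$, reproducing the claimed integrand. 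The main obstacle is purely bookkeeping; the sole conceptual ingredient is the two-dimensional identity for trace-free symmetric tensors, which is precisely what forces the square $|\eta_\rho|^2$ to appear in Theorem \ref{t1} once one specialises to $u=\log|\rho|^{1/3}$.
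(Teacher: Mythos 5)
Your proposal is correct and follows essentially the same route as the paper: integrate the propagation identity of Theorem \ref{t3} against $e^u$, shift the Laplacian and two divergences onto $e^u$ via $\s\Delta e^u = e^u(\s\Delta u + |\s\nabla u|^2)$, and collapse the cross terms with the two-dimensional trace-free identity $2|\hat\chi^-\cdot\s{d}u|^2 = |\hat\chi^-|^2|\s\nabla u|^2$. No gaps.
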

\begin{proof}
We start by integrating by parts on the last three terms of Theorem \ref{t3}
\begin{align*}
\int e^u&\Big(\s\Delta(\mathfrak{s}(|\hat{\chi}^-|^2+G(L^-,L^-)))-2\s\nabla\cdot(\mathfrak{s}\hat{\chi}^-\cdot\tau)+\s\nabla\cdot (\mathfrak{s} G_{L^-})\Big)dA\\
&=\int \mathfrak{s}e^u\Big(e^{-u}(\s\Delta e^u)(|\hat{\chi}^-|^2+G(L^-,L^-))+2\hat{\chi}^-(\s\nabla u,\vec{\tau})-G(L^-,\s\nabla u)\Big)dA\\
&=\int \mathfrak{s}e^{u}\Big((\s\Delta u+|\s\nabla u|^2)(|\hat{\chi}^-|^2+G(L^-,L^-))+2\hat{\chi}^-(\s\nabla u,\vec{\tau})-G(L^-,\s\nabla u)\Big)dA\\
&=\int \mathfrak{s}e^u\Big((|\hat{\chi}^-|^2+G(L^-,L^-))\s{\Delta}u+|\hat{\chi}^-|^2|\s{\nabla}u|^2+2\hat{\ubar\chi}(\s\nabla u,\vec{\tau})+G(L^-,|\s{\nabla}u|^2L^--\s\nabla u)\Big)dA.
\end{align*}
As a result
\begin{align*}
\int e^u\Big(\dot{\rho}+\frac32\mathfrak{s}\rho\Big)dA &=\int \mathfrak{s}e^u\Big((|\hat{\chi}^-|^2+G(L^-,L^-))\Big(\frac14\langle\vec{H},\vec{H}\rangle+\s{\Delta}u\Big)\\
&\quad+|\hat{\chi}^-|^2|\s{\nabla}u|^2+2\hat{\chi}^-(\s{\nabla}u,\vec{\tau})+\frac12|\tau|^2+G(L^-,|\s{\nabla}u|^2L^--\s{\nabla}u-\frac14 L^+)\Big)dA.
\end{align*}
Since $\hat{\chi}^-$ is symmetric and trace-free it follows that $|\hat{\chi}^-\cdot\s{d}u|^2=\frac12|\hat{\chi}^-|^2|\s{\nabla}u|^2$ from which the first three terms of the second line simplifies to give
$$|\hat{\chi}^-|^2|\s{\nabla}u|^2+2\hat{\chi}^-(\s{\nabla}u,\vec{\tau})+\frac12|\tau|^2=\frac12|2\hat{\chi}^-\cdot\s{d}u+\tau|^2$$
\end{proof}
\begin{remark}\label{r3}
An interesting consequence of the above corollary in spacetimes satisfying the null energy condition is the fact that any $u\in\mathcal{F}(\Sigma)$ gives
$$\int e^{u}\Big(\dot{\rho}+\frac32\mathfrak{s}\rho\Big)dA\geq\int \mathfrak{s}e^u(|\hat{\chi}^-|^2+G(L^-,L^-))\Big(\frac14\langle\vec{H},\vec{H}\rangle+\s{\Delta}u\Big)dA$$
\end{remark}
The proof of Theorem \ref{t1} is a simple consequence of the following corollary:
\begin{corollary}\label{c4} Assuming $\{\Sigma_s\}$ is expanding along the flow vector $\ubar L=\mathfrak{s}L^-$ with each $\Sigma_s$ of non-zero flux ($|\rho(s)|>0$) then
\begin{align*}
\frac{d}{ds}\int_{\Sigma_s}\rho^{\frac23}dA = \int_{\Sigma_s}\dot{(\rho^{\frac23})}+\mathfrak{s}\rho^{\frac23}dA&=\frac23\int_\Sigma\frac{\mathfrak{s}}{\rho^{\frac13}}\Big(\Big(|\hat{\chi}^-|^2+G(L^-,L^-)\Big)\Big(\frac14\langle\vec{H},\vec{H}\rangle-\frac{1}{3}\s\Delta \log|\rho|\Big)\\
&\indent+\frac12|\frac23\hat{\chi}^-\cdot\s{d}\log|\rho|-\tau|^2\\
&\indent+G(L^-, \frac19|\s\nabla\log|\rho||^2L^-+\frac13\s\nabla\log|\rho|-\frac14L^+)\Big)dA
\end{align*}
\end{corollary}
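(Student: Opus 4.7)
The strategy is to reduce the claim to Corollary \ref{c3} via an appropriate choice of test function, after invoking the first variation of area. Since the flow is generated by $\ubar L = \mathfrak{s}L^-$, one has $\dot{dA} = \mathfrak{s}\,dA$, which immediately yields
$$\frac{d}{ds}\int_{\Sigma_s}\rho^{\frac23}\,dA = \int_{\Sigma_s}\Big(\dot{(\rho^{\frac23})}+\mathfrak{s}\rho^{\frac23}\Big)\,dA,$$
the first equality in the statement. Applying the chain rule, $\dot{(\rho^{\frac23})} = \tfrac{2}{3}\rho^{-\frac13}\dot{\rho}$, so factoring gives
$$\dot{(\rho^{\frac23})}+\mathfrak{s}\rho^{\frac23} = \tfrac{2}{3}\rho^{-\frac13}\Big(\dot{\rho}+\tfrac{3}{2}\mathfrak{s}\rho\Big),$$
which reduces the problem to evaluating $\tfrac{2}{3}\int \rho^{-\frac13}(\dot\rho + \tfrac{3}{2}\mathfrak{s}\rho)\,dA$.

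At this point I would set $u := -\tfrac{1}{3}\log|\rho|$, which is well-defined since $|\rho|>0$ on each $\Sigma_s$, so that $e^u = |\rho|^{-\frac13}$, and insert this choice into Corollary \ref{c3}. The three identities $\s\nabla u = -\tfrac{1}{3}\s\nabla\log|\rho|$, $\s\Delta u = -\tfrac{1}{3}\s\Delta\log|\rho|$, and $|\s\nabla u|^2 = \tfrac{1}{9}|\s\nabla\log|\rho||^2$ then translate each of the three terms on the right-hand side of Corollary \ref{c3} directly into the corresponding term of the claimed identity. Concretely, $\tfrac{1}{4}\langle\vec H,\vec H\rangle + \s\Delta u$ produces the coefficient on the first line; the identity $|2\hat{\chi}^-\cdot\s d u + \tau|^2 = |\tfrac{2}{3}\hat{\chi}^-\cdot\s d\log|\rho| - \tau|^2$ supplies the second line; and the substitution $|\s\nabla u|^2 L^- - \s\nabla u = \tfrac{1}{9}|\s\nabla\log|\rho||^2 L^- + \tfrac{1}{3}\s\nabla\log|\rho|$ yields the Einstein-tensor expression on the third.

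The only subtlety worth flagging is the sign of $\rho$: the left-hand side features $\rho^{\frac23}$, which must be interpreted as the nonnegative quantity $|\rho|^{\frac23}$. Since $\rho$ is continuous and nowhere zero, $\mathrm{sgn}(\rho)$ is constant on each leaf, so $\dot{(|\rho|^{\frac23})} = \tfrac{2}{3}|\rho|^{-\frac13}\mathrm{sgn}(\rho)\dot\rho$ pairs cleanly with $e^u = |\rho|^{-\frac13}$ under the convention $\rho^{-\frac13}:=\mathrm{sgn}(\rho)|\rho|^{-\frac13}$ (the standard real cube root), and the computation above goes through verbatim. Beyond this sign bookkeeping there is no genuine analytical obstacle, since Corollary \ref{c3} already encapsulates the content extracted from Theorem \ref{t3}; the present corollary is essentially a well-chosen substitution, selected precisely so that the factor $e^u = \rho^{-\frac13}$ converts $\dot{\rho}+\tfrac{3}{2}\mathfrak{s}\rho$ into the derivative of the mass-density $\rho^{\frac23}$ from which the quasi-local mass $m(\Sigma)$ in Definition \ref{d4} is built.
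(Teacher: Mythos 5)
Your proposal is correct and follows essentially the same route as the paper: the first variation of area gives the first equality, and the second follows by applying Corollary \ref{c3} with $e^u$ proportional to $|\rho|^{-\frac13}$ (the paper takes $e^u=\frac23|\rho|^{-\frac13}$, which is equivalent since the constant drops out of $\s{d}u$ and $\s\Delta u$), with the same sign bookkeeping for $\rho<0$. Nothing further is needed.
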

\begin{proof}
From the first variation of Area formula
$$\dot{dA} = -\langle\vec{H},\ubar L\rangle dA=-\mathfrak{s}\langle\vec{H},L^-\rangle dA=\mathfrak{s}dA$$
we get the first equality. For the second we apply Corollary \ref{c3} with $e^{u} = \frac23|\rho|^{-\frac13}$, canceling the sign in the case that $\rho<0$.
\end{proof}
\subsection{Case of Equality}
\begin{lemma}\label{l11}
For $\{\Sigma_s\}$ expanding along $\ubar L = \mathfrak{s}L^-$ we have
$$\mathcal{L}_{\ubar L}\tau+\mathfrak{s}\tau+\s\nabla\cdot(\mathfrak{s}\hat{\chi}^-) = \mathfrak{s}G_{L^-}+\s{d}(\mathfrak{s}(|\hat{\chi}^-|^2+G(L^-,L^-)))$$
\end{lemma}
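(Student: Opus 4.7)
The plan is to compute $\mathcal{L}_{\ubar L}\tau$ directly from the definition $\tau = \zeta - \s d\log\mathfrak{s}$ (where $\mathfrak{s} = \tr\ubar\chi$, recalling $L^- = \ubar L/\tr\ubar\chi$) by splitting it into the contribution of $\zeta$, which Proposition \ref{p5}(9) controls, and the contribution of $\s d\log\mathfrak{s}$, which is scalar-exact and therefore accessible via Proposition \ref{p5}(7). The key point is that the Lie derivative commutes with $\s d$ on functions, so
\[
\mathcal{L}_{\ubar L}\s d\log\mathfrak{s} = \s d(\ubar L\log\mathfrak{s}) = \s d\!\left(\frac{\ubar L\tr\ubar\chi}{\tr\ubar\chi}\right),
\]
and plugging in (7) expresses this in terms of $\mathfrak{s}$, $|\hat{\ubar\chi}|^2/\mathfrak{s}$, $G(\ubar L,\ubar L)/\mathfrak{s}$, and $\kappa$.

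The next step is to add the two pieces together. From (9), $\mathcal{L}_{\ubar L}\zeta$ contributes $G_{\ubar L} - \s\nabla\cdot\hat{\ubar\chi} - \mathfrak{s}\zeta + \tfrac12\s d\mathfrak{s} + \s d\kappa$, and from the computation above, $-\mathcal{L}_{\ubar L}\s d\log\mathfrak{s}$ contributes $\tfrac12\s d\mathfrak{s} + \s d\bigl(\mathfrak{s}(|\hat{\chi}^-|^2 + G(L^-,L^-))\bigr) - \s d\kappa$, after recognising $|\hat{\ubar\chi}|^2/\mathfrak{s} = \mathfrak{s}|\hat{\chi}^-|^2$ and $G(\ubar L,\ubar L)/\mathfrak{s} = \mathfrak{s} G(L^-,L^-)$ via the rescaling $\hat{\ubar\chi} = \mathfrak{s}\hat{\chi}^-$, $\ubar L = \mathfrak{s} L^-$. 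The $\s d\kappa$ terms cancel, and the two $\tfrac12\s d\mathfrak{s}$ terms combine with $-\mathfrak{s}\zeta$ as
\[
-\mathfrak{s}\zeta + \s d\mathfrak{s} = -\mathfrak{s}(\zeta - \s d\log\mathfrak{s}) = -\mathfrak{s}\tau.
\]

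Finally, I convert $G_{\ubar L} = \mathfrak{s} G_{L^-}$ and $\s\nabla\cdot\hat{\ubar\chi} = \s\nabla\cdot(\mathfrak{s}\hat{\chi}^-)$, transposing the $-\mathfrak{s}\tau$ and $-\s\nabla\cdot(\mathfrak{s}\hat{\chi}^-)$ to the left-hand side, which yields exactly the claimed identity. There is no real obstacle here: the only thing to watch is that the $\s d\kappa$ terms appearing in both (7) and (9) must cancel, which they do precisely because $\tau$ was defined to be boost-invariant — this is the same mechanism that originally motivated subtracting $\s d\log\tr\ubar\chi$ from $\zeta$ in Definition \ref{d3}, and it is what ensures the right-hand side contains no $\kappa$-dependence despite $\ubar L$ not being geodesic.
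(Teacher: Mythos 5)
Your proof is correct and is essentially the paper's own argument: combine the propagation equation for $\zeta$ (equation (11)) with $\mathcal{L}_{\ubar L}\s{d}\log\tr\ubar\chi = \s{d}(\ubar L\log\tr\ubar\chi)$ computed from the Raychaudhuri equation (8), cancel the $\s{d}\kappa$ terms, and rewrite everything in the inflation basis. The only slips are the equation labels — you cite (9) and (7) where you mean (11) and (8) — but the identities you actually quote are the correct ones, so the argument goes through exactly as in the paper.
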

\begin{proof}
By combining (8) and (11):
\begin{align*}
\mathcal{L}_{\ubar L}(\zeta-\s{d}\log\tr\ubar\chi) &= G_{\ubar L}-\s\nabla\cdot\hat{\ubar\chi}-\tr\ubar\chi\zeta+\frac12\s{d}\tr\ubar\chi+\s{d}\kappa
-\s{d}\Big(-\frac12\tr\ubar\chi-\frac{|\hat{\ubar\chi}|^2+G(\ubar L,\ubar L)}{\tr\ubar\chi}+\kappa\Big)\\
&=-\tr\ubar\chi(\zeta-\s{d}\log\tr\ubar\chi)-\s\nabla\cdot\hat{\ubar\chi}+G_{\ubar L}+\s{d}\frac{|\hat{\ubar\chi}|^2+G(\ubar L,\ubar L)}{\tr\ubar\chi}.
\end{align*}
The result follows as soon as we switch to the inflation basis $\{L^-,L^+\}$.
\end{proof}
\begin{theorem}\label{t4}
Let $\Omega$ be a null hypersurface in a spacetime satisfying the null energy condition with vector field $\ubar L$ tangent to the null generators of $\Omega$. Suppose $\{\Sigma_s\}$ is an expanding (SP)-foliation defined as the level sets of a function $s:\Omega\to\mathbb{R}$ satisfying $\ubar L(s) = 1$ and achieves the case of equality $\frac{dm}{ds}=0$. Then all foliations achieve equality, moreover, we find an affine level set function $r\in\mathcal{F}(\Omega)$ with $r_0:=r|_{\Sigma_{s_0}}\circ \pi$ such that any surface $\Sigma:=\{r=\omega\circ\pi\}$, for $\omega\in\mathcal{F}(\Sigma_{s_0})$, has data:
\begin{align*}
\gamma &= \omega^2\gamma_0\\
\ubar\chi&=\omega\gamma_0\\
\tr\ubar\chi&=\frac{2}{\omega}\\
\tr\chi&=\frac{2}{\omega}(\mathcal{K}_0-\frac{r_0}{\omega}-\omega^2\s\Delta\log\omega)\\
\zeta&=-\s{d}\log\omega\\
\rho&=\frac{r_0}{\omega^3}
\end{align*}
where $r_0^2\gamma_0$ is the metric on $\Sigma_{s_0}$ and $\mathcal{K}_0$ the Gaussian curvature associated to $\gamma_0$.\\
In the case that $\tr\chi|_{\Sigma_{s_0}}=0$ our data corresponds with the the standard null cone in Schwarzschild spacetime of mass $M=\frac{r_0}{2}$.
\end{theorem}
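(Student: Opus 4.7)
My plan is to exploit the equality case of Corollary \ref{c4} to extract a sequence of pointwise rigidities on $\Omega$, and then integrate the structure equations of Proposition \ref{p5} along the null generators to recover the announced data.

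Under the null energy condition, the integrand of Corollary \ref{c4} decomposes into three pointwise non-negative pieces: the product $(|\hat{\chi}^-|^2+G(L^-,L^-))(\tfrac14\langle\vec H,\vec H\rangle-\tfrac13\s\Delta\log|\rho|)$, the square $\tfrac12|\eta_\rho|^2$, and the causal term $G(L^-,N)$. The hypothesis $\tfrac{dm}{ds}\equiv 0$ forces each to vanish identically. On the set $s>0$ the (SP)-condition makes the second factor of the first piece strictly positive, so $\hat{\chi}^-\equiv 0$ and $G(L^-,L^-)\equiv 0$ there; by continuity of the data along the flow both extend to $s=0$. Substituting $\hat\chi^-=0$ into $\eta_\rho=2\hat\chi^-\cdot\s{d}\log|\rho|^{1/3}-\tau=0$ yields $\tau\equiv 0$. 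Lemma \ref{l11} then collapses to $\mathfrak{s}G_{L^-}=0$, and $G(L^-,N)=0$ combined with the null decomposition of $N$ forces $G(L^-,L^+)=0$; in total $G(L^-,\cdot)\equiv 0$ on $\Omega$.

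Next I would affinely reparametrize the null generators by $r\in\mathcal{F}(\Omega)$ with $\ubar L=\partial_r$ and $\kappa=0$, setting $\omega=r|_{\Sigma_{s_0}}\circ\pi$. The evolution equation for $\tr\ubar\chi$ in Proposition \ref{p5} reduces (with $\hat{\ubar\chi}=0$, $G(\ubar L,\ubar L)=0$, $\kappa=0$) to the scalar ODE $\partial_r\tr\ubar\chi=-\tfrac12(\tr\ubar\chi)^2$, which integrates to $\tr\ubar\chi=2/\omega$ after normalizing $r|_{\Sigma_{s_0}}=r_0$. Combined with $\hat{\ubar\chi}=0$, the metric evolution $\mathcal{L}_{\partial_r}\gamma=\tr\ubar\chi\,\gamma$ integrates to $\gamma=\omega^2\gamma_0$, and hence $\ubar\chi=\omega\gamma_0$. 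The torsion identity $\tau=\zeta-\s{d}\log\tr\ubar\chi=0$ immediately gives $\zeta=-\s{d}\log\omega$. The propagation $\dot\rho+\tfrac32\mathfrak{s}\rho=0$ from Theorem \ref{t3}, whose right-hand side now vanishes by the preceding rigidities, integrates to $\rho=r_0/\omega^3$ after computing the base value $\rho|_{\Sigma_{s_0}}=1/r_0^2$. The announced formula for $\tr\chi$ then follows by substituting the conformal identity $\mathcal{K}=\omega^{-2}\mathcal{K}_0-\s\Delta\log\omega$ into $\rho=\mathcal{K}-\tfrac14\tr\ubar\chi\tr\chi$ and solving for $\tr\chi$.

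Finally, imposing $\tr\chi|_{\Sigma_{s_0}}=0$ at $\omega=r_0$ (where $\s\Delta\log\omega=0$) forces $\mathcal{K}_0=1$, so $\gamma_0$ is the unit round metric, and comparison with Lemma \ref{l4} identifies $\Omega$ with the standard null cone in Kruskal of mass $M=r_0/2$. The claim that every foliation realizes equality is then automatic: all of the data on any cross section is determined by $\omega$ via the explicit formulas above, and the flux $\rho=r_0/\omega^3$ gives $m(\Sigma)=r_0/2$ independently of the defining function, matching the Schwarzschild computation at the end of Section 2.2. I expect the principal technical obstacle to be rigorously propagating the pointwise rigidities through the boundary $s=0$ where the (SP) inequality degenerates to equality: continuity of $\hat\chi^-$ and $G(L^-,L^-)$ along generators is needed to conclude vanishing there, and one must verify that the structure-equation integration along the generator foliation does not pick up any boundary artifact at $\Sigma_{s_0}$.
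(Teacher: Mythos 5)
Your extraction of the pointwise rigidities from the equality case of Corollary \ref{c4} --- $\hat{\chi}^-=0$ and $G(L^-,L^-)=0$, then $\tau=0$, then $G_{L^-}=0$ via Lemma \ref{l11}, then $G(L^-,L^+)=0$ from the $N$-term --- matches the paper, and your care about propagating the vanishing to $s=0$ by continuity is fine. The gaps are in the second half. First, a normalization error: if you fix the affine parameter by integrating $\partial_r\tr\ubar\chi=-\frac12(\tr\ubar\chi)^2$ to $\tr\ubar\chi=2/r$ and then set $r_0=r|_{\Sigma_{s_0}}$, the identity $\rho|_{\Sigma_{s_0}}=1/r_0^2$ is not something you can ``compute'' --- it already fails for $\Sigma_0=\{r=3M\}$ in Schwarzschild with the area-radius parametrization, where $\rho=2M/(3M)^3\neq 1/(3M)^2$. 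The two normalizations $\tr\ubar\chi|_{\Sigma_0}=2/r_0$ and $\rho|_{\Sigma_0}=1/r_0^2$ are only simultaneously achievable because $\rho$ is invariant under the per-generator rescaling $\ubar L\to a\ubar L$ while $\tr\ubar\chi$ scales linearly; the paper first defines $r_0$ and $r$ from $\rho$ and only then chooses the scale of $\ubar L$ to match. You must make this choice explicitly, or your $\rho=r_0/\omega^3$ and the resulting $\tr\chi$ formula are off by the factor $\rho_0r_0^2$.

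Second, and more seriously: your ODE integrations only establish the data on the leaves of the original foliation. The theorem asserts the formulas for arbitrary graphs $\Sigma=\{r=\omega\circ\pi\}$, and your argument that ``all foliations achieve equality'' rests on $\rho_\Sigma=r_0/\omega^3$ holding for every such $\Sigma$ --- which is precisely the nontrivial point, since the flux of a cross-section is built from its own adapted normal basis and a priori differs from the restriction of the foliation's $\rho$. As written your argument is circular. The paper closes this by first proving $\tau^a=0$ for every rescaled generator $\ubar L_a$ (Lemma \ref{l11} plus ODE uniqueness, using $G_{L^-}=0$), which already yields equality for all foliations, and then showing $\rho$ is foliation-independent by comparing the transport equations $\ubar L_a\rho_a=3\rho_a\ubar L_a\log\tr\ubar\chi$ along different generators; alternatively one can invoke the Flux Comparison Theorem \ref{t5}, whose correction terms all carry factors of $\hat K$ or $G_{\ubar L}$ and hence vanish (Remark \ref{r5}). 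One of these steps is indispensable. A smaller point: in the case $\tr\chi|_{\Sigma_{s_0}}=0$, the term $\s\Delta\log\omega$ does not vanish at $\omega=r_0$ unless $r_0$ is constant, and that constancy is exactly what the (SP) equality at $s=0$ together with the maximum principle delivers ($\langle\vec{H},\vec{H}\rangle=0$ forces $\s\Delta\log\rho|_{\Sigma_0}=0$, hence $\rho|_{\Sigma_0}$ constant); you cannot simply assume it.
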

\begin{proof}
Without loss of generality we assume $s_0=0$. Immediately from Corollary \ref{c4} we conclude for this particular foliation that
\begin{align*}
|\hat{\chi}^-|^2+G(L^-,L^-) &= 0\\
|\frac23\hat{\chi}^-\cdot\s{d}\log\rho-\tau|^2&=0\\
G(L^-,\frac19|\s\nabla\log\rho|^2L^-+\frac13\s\nabla\log\rho-\frac14L^+) &=0.
\end{align*}
So from the first equality we have both $\hat{\chi}^- = 0$ and $G(L^-,L^-)=0$. Combined with the second equality we conclude that $\tau = 0$ for this particular foliation and therefore Lemma \ref{l11} ensures that $G_{L^-} = 0$ as well. Finally we may therefore utilize the final equality to conclude also that $G(L^+,L^-)$ = 0 so that, for any $p\in\Omega$ and any $X\in T_pM$, we have
$$G(L^-,X) = 0.$$
From this and Lemma \ref{l11} we have for any foliation off of $\Sigma_0$ generated by some $\ubar L_a$ ($a>0$) that
$$\mathcal{L}_{\ubar L_a}\tau^a+a\mathfrak{s}\tau^a=0.$$
Given that $\tau^a|_{\Sigma_0}=\tau|_{\Sigma_0}=0$ this enforces $\tau^a=0$ by standard uniqueness theorems.\\
We recognise this implies the case of equality for all foliations so without loss of generality we assume that $\ubar L$ is geodesic.
We are now in a position to show that the flux $\rho\in\mathcal{F}(\Omega)$ is independent of the foliation from which it is constructed. In particular, for any $a>0$, foliating off of $\Sigma_0$ along the generator $\ubar L_a$ will construct a $\rho_a$ which we would like to show agrees pointwise on $\Omega$ with $\rho$.\\
From Theorem \ref{t3} we have
$$\ubar L\rho=-\frac32\tr\ubar\chi\rho = 3\rho\ubar L\log\tr\ubar\chi$$
so for any $p\in\Omega$ solving this ODE along the geodesic $\gamma_{\pi(p)}^{\ubar L}(s)$ gives
$$\frac{\rho\circ s(p)}{\rho(0)} = \Big(\frac{\tr\ubar\chi(p)}{\tr\ubar\chi(0)}\Big)^3.$$
For the generator $\ubar L_a$ Theorem \ref{t3} gives
\begin{align*}
\ubar L_a\rho_a = -\frac32\tr\ubar\chi_a\rho_a &= 3\rho_a(\ubar L_a\log\tr\ubar\chi_a-\kappa_a)\\
&=3\rho_a\ubar L_a(\log\tr\ubar\chi_a-\log a) \\
&=3\rho_a\ubar L_a(\log\tr\ubar\chi)
\end{align*}
where the penultimate line comes from the fact that $\kappa_a\ubar L_a = D_{\ubar L_a}\ubar L_a = a\ubar L(a)\ubar L = \ubar L_a(\log a)\ubar L_a$ and the final line from the fact that $\tr\ubar\chi_a = a\tr\ubar\chi$. Solving this ODE along the pregeodesic $\gamma_{\pi(p)}^{\ubar L_a}(t)$ we have
$$\frac{\rho_a\circ t(p)}{\rho_a(0)}=\Big(\frac{\tr\ubar\chi(p)}{\tr\ubar\chi(0)}\Big)^3 = \frac{\rho\circ s(p)}{\rho(0)}.$$
Since we're foliating off of $\Sigma_0$ in both cases and $\rho|_{\Sigma_0}$ is independent of our choice of null basis we have $\rho(p) = \rho_a(p)$ as desired.\\
We therefore define the functions $r_0$ and $r$ according to
$$\frac{1}{r_0^2} = \rho|_{\Sigma_{s_0}},\,\,\,\frac{r_0\circ\pi}{r^3} = \rho$$
(i.e. $r|_{\Sigma_0} = r_0$) so that Theorem \ref{t3} gives $-3\frac{r_0}{r^4}\ubar L_a(r)=\ubar L_a(\rho) =-\frac32\tr\ubar\chi_a\rho =-\frac32\tr\ubar\chi_a \frac{r_0}{r^3}$ and therefore $\ubar L_a(r) = \frac12\tr\ubar\chi_a r$. It follows that if we scale $\ubar L$ such that $\tr\ubar\chi|_{\Sigma_0} = \frac{2}{r_0}$ then $\ubar L(\tr\ubar\chi r) = -\frac12(\tr\ubar\chi)^2 r+\tr\ubar\chi(\frac12\tr\ubar\chi r) = 0$ implies that $\tr\ubar\chi = \frac{2}{r}$ and $\ubar L(r) = 1$. So $r$ is in fact our level set function. For $r_0^2\gamma_0$ the metric on $\Sigma_0$, by Lie dragging $\gamma_0$ along $\ubar L$ to all of $\Omega$ we have
$$\mathcal{L}_{\ubar L}(r^2\gamma_0) = 2r\gamma_0 = \frac{2}{r}(r^2\gamma_0) = \tr\ubar\chi (r^2\gamma_0).$$
So from (6), $\mathcal{L}_{\ubar L}(r^2\gamma_0-\gamma) = \tr\ubar\chi(r^2\gamma_0-\gamma)$ and $r_0^2\gamma_0- \gamma(r_0)=0$ giving $\gamma(r) = r^2\gamma_0$ by uniqueness. We conclude that for any $0\leq\omega\in\mathcal{F}(\Sigma_0)$ the cross-section $\Sigma:=\{r=\omega\circ \pi\}$ has metric $\gamma_\omega =\gamma(r)|_{\Sigma}= \omega^2\gamma_0$ with Gaussian curvature $\mathcal{K}_\omega = \frac{1}{\omega^2}\mathcal{K}_0-\s\Delta\log\omega$. Moreover, 
\begin{align*}
\frac{r_0}{\omega^3} &= \rho_\omega = \mathcal{K}_\omega-\frac14\langle\vec{H},\vec{H}\rangle\\
&=\frac{1}{\omega^2}\mathcal{K}_0-\s\Delta\log\omega-\frac{1}{2\omega}\tr\chi_\omega
\end{align*}
having used the fact that $\rho_\omega=\rho|_{\Sigma}$ (from independence of foliation) in the first line and $\tr\ubar\chi_\omega = \tr\ubar\chi|_{\Sigma}$ in the second. We conclude that,
$$\tr\chi_\omega = \frac{2}{\omega}(\mathcal{K}_0-\frac{r_0}{\omega}-\omega^2\s\Delta\log\omega).$$
In the case that $\tr\chi|_{\Sigma_0} = 0$ property (SP) forces $\frac{1}{r_0^2}=\rho|_{\Sigma_0}$ to be constant by way of the maximum principle. From our expression for $\tr\chi_{r_0}$ we conclude that $\mathcal{K}_0=1$ and therefore $\gamma_0$ is a round metric on $\mathbb{S}^2$. 
\end{proof}
\begin{remark}\label{r4}
We bring to the attention of the reader that due the lack of information regarding the term $\hat{G}$ in (9) we are unable to conclude with any knowledge of the datum $\chi$ on $\Sigma$. In the case of vacuum this no longer poses a problem and one is able to correlate $\chi|_{\Sigma}$ with $\chi|_{\Sigma_{r_0}}$ as shown by Sauter (\cite{S}, Lemma 4.3).
\end{remark}
\newpage
\section{Foliation Comparison}
In this section we show how the flux function $\rho$ of an arbitrary cross section of $\Omega$ decomposes in terms of the flux of the background foliation. With the appropriate asymptotic decay on $\Omega$ this allows us to prove Theorem \ref{t2}.
\subsection{Additional Setup}
We follow once again the construction of \cite{MS1} starting with a background foliation as constructed in Section 3 off of an initial cross-section $\Sigma_{s_0}$. As before, each $\Sigma_s$ allows a null basis $\{\ubar L,l\}$  such that $\langle \ubar L,l\rangle=2$. Also from section 3 we have the diffeomorphism $p\mapsto (\pi(p),s(p))$ of $\Omega$ onto its image. Therefore any cross-section with associated embedding $\Phi:\mathbb{S}^2\to\Omega$ is equivalently realized with the map $\tilde{\Phi}=(\pi,s)\circ\Phi$. Expressing the component functions $\Psi:=\pi\circ\Phi$ and $\omega:=s\circ\Phi$ we recognize that $\Psi:\mathbb{S}^2\to\Sigma_{s_0}$ is a diffeomorphism and therefore the embedding $\Phi:\mathbb{S}^2\to\Omega$ is uniquely characterized as a graph over $\Sigma_{s_0}$ with graph function $\omega\circ\Psi^{-1}$. 
\begin{wrapfigure}{r}{7.5cm}
\centering
\def\svgwidth{300pt} 
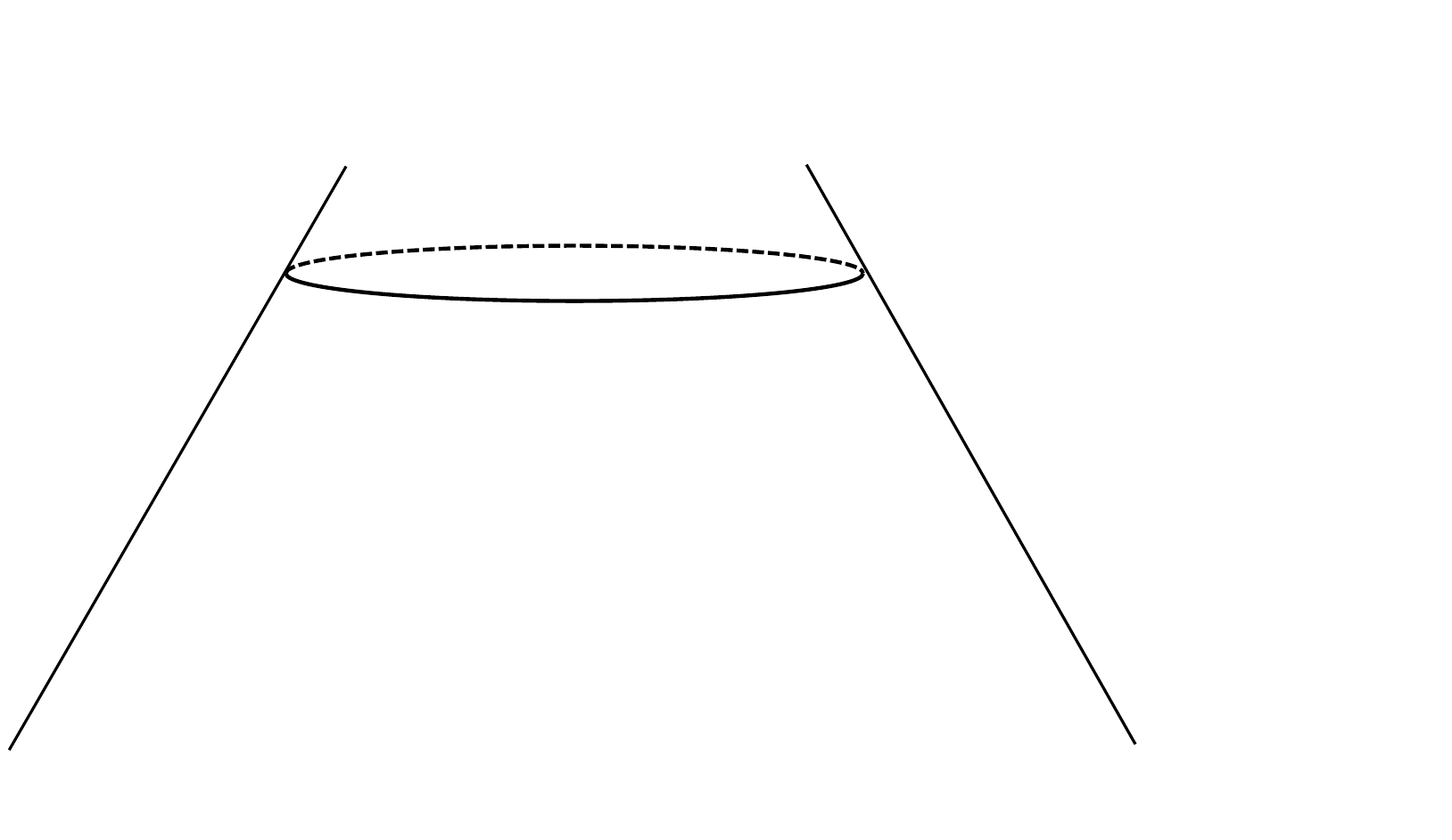
\end{wrapfigure}
Without confusion we will simply denote the graph function by $\omega$ and it's associated cross section by $\Sigma_\omega$. We wish to compare both the intrinsic and extrinsic geometry of $\Sigma_\omega$ at a point $q$ with the geometry of the surface $\Sigma_{s(q)}$. We extend $\omega$ to all of $\Omega$ in the usual way by imposing it be constant along generators of $\ubar L$, in other words, $\omega(p):=(\omega\circ\pi)(p)$. For the extrinsic geometry of $\Sigma_\omega$ we have the null-normal basis $\{\ubar L,L\}$ whereby $L$ is given by the conditions $\langle \ubar L,L\rangle = 2$ and $\langle V,L\rangle=0$ for any $V\in\Gamma(T\Sigma_\omega)$. As before $\Sigma_\omega$ has second fundamental form decomposing into the null components $\ubar\chi$ (associated to $\ubar L$) and $\chi$ (associated to $L$) with torsion $\zeta$. For each $\Sigma_s$ we equivalently decompose the second fundamental form into the components $K$ (associated to $\ubar L$) and $Q$ (associated to $l$) with torsion $t$. We will denote the induced covariant derivative on $\Sigma_s$ by $\nabla$ and on $\Sigma_\omega$ by $\s{\nabla}$. The following lemma is known (\cite{MS1},\cite{S}):
\begin{lemma}\label{l12}
Given $q\in \Sigma_\omega\cap\Sigma_{s(q)}$ the map given by
\begin{align*}
T_\omega:T_q\Sigma_{s(q)}&\to T_q\Sigma_\omega\\
v&\to \tilde v:=v+v\omega \ubar L
\end{align*}
is a well defined isomorphism with natural extension $E(\Sigma_{s_0})\to E(\Sigma_\omega)$.
 Moreover,
\begin{itemize}
\item$\gamma_\omega(\tilde V,\tilde W) = \gamma_s(V,W)$
\item$\ubar\chi(\tilde V,\tilde W) = K(V,W)$
\item$\zeta(\tilde V) = t(V)-K(V,\nabla\omega)+\kappa\langle V,\nabla\omega\rangle$
\item$\begin{aligned}[t]
\chi(\tilde V,\tilde W) &= Q(V,W)-2t(V)\langle W,\nabla\omega\rangle-2t(W)\langle V,\nabla\omega\rangle-|\nabla\omega|^2K(V,W)-2H^\omega(V,W)\\
&\quad+2K(V,\nabla\omega)\langle W,\nabla\omega\rangle+2K(W,\nabla\omega)\langle V,\nabla\omega\rangle-2\kappa\langle V,\nabla\omega\rangle\langle W,\nabla\omega\rangle
\end{aligned}$
\item$\tr\chi=\tr Q-4t(\nabla\omega)-2(\Delta\omega-2\hat{K}(\nabla\omega,\nabla\omega))+\tr K|\nabla\omega|^2-2\kappa|\nabla\omega|^2$
\end{itemize}
for $H^\omega$ the Hessian of $\omega$ on $\Sigma_s$.
\end{lemma}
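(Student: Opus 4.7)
My first move is to verify that $T_\omega$ is well defined. Since $\Sigma_\omega$ is the zero locus of $s-\omega$ (with $\omega$ extended to $\Omega$ by $\ubar L\omega = 0$), at $q$ I check
$$\tilde v(s-\omega) = v(s) + (v\omega)\ubar L(s) - v(\omega) - (v\omega)\ubar L(\omega) = 0 - v\omega + v\omega - 0 = 0,$$
using $v(s)=0$, $\ubar L(s)=1$, $\ubar L(\omega)=0$. Bijectivity is immediate because projection along $\ubar L$ inverts $T_\omega$, and the extension $E(\Sigma_{s_0})\to E(\Sigma_\omega)$ is given by the same formula globally on $\Omega$: for $V\in E(\Sigma_{s_0})$ we have $[\ubar L,V]=0$ and $\ubar L(V\omega) = V(\ubar L\omega)+[\ubar L,V]\omega = 0$, whence $[\ubar L,\tilde V]=0$. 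The metric identity then follows at once from $\langle\ubar L,\ubar L\rangle = \langle\ubar L,W\rangle = 0$, and the $\ubar\chi$ identity from $D_{\tilde V}\ubar L = D_V\ubar L + (V\omega)\kappa\ubar L$ (using $D_{\ubar L}\ubar L = \kappa\ubar L$) combined with $\langle D_V\ubar L,\ubar L\rangle = 0$.

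The crucial preparatory step for the remaining identities is to find the null partner $L$ on $\Sigma_\omega$ in terms of the background basis $\{\ubar L,l\}$. Writing $L = a\ubar L + bl + X$ with $X\in T\Sigma_s$, the three conditions $\langle\ubar L,L\rangle = 2$, $\langle L,\tilde V\rangle = 0$ for all $V\in T\Sigma_s$, and $\langle L,L\rangle = 0$ force successively $b=1$, $X = -2\nabla\omega$, and $a = -|\nabla\omega|^2$, giving
$$L = -|\nabla\omega|^2\ubar L + l - 2\nabla\omega.$$
With this in hand, $\zeta(\tilde V) = \tfrac12\langle D_{\tilde V}\ubar L,L\rangle$ unfolds directly: the three contributions of $L$ yield $0$, $\langle D_V\ubar L,l\rangle = 2t(V)$, and $-2\langle D_V\ubar L,\nabla\omega\rangle = -2K(V,\nabla\omega)$, while $(V\omega)\kappa\langle\ubar L,L\rangle = 2\kappa\langle V,\nabla\omega\rangle$, producing the stated formula after dividing by two.

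For $\chi(\tilde V,\tilde W) = \langle D_{\tilde V}L,\tilde W\rangle$ I expand $D_{\tilde V}L$ as the sum of the derivatives of the three summands of $L$ and pair with $\tilde W = W + W\omega\ubar L$, discarding any term containing $\langle\ubar L,\ubar L\rangle$ or $\langle\ubar L,W\rangle$. The $-|\nabla\omega|^2\ubar L$ piece collapses, via the $\ubar\chi$ identity already proved, to $-|\nabla\omega|^2 K(V,W)$. The $l$ piece uses $\langle D_Vl,W\rangle = Q(V,W)$, $\langle D_Vl,\ubar L\rangle = -\langle l,D_V\ubar L\rangle = -2t(V)$, together with Lemma \ref{l7} applied to the background basis, which yields $D_{\ubar L}l = -2\vec t - \kappa l$ and hence $\langle D_{\ubar L}l,W\rangle = -2t(W)$, $\langle D_{\ubar L}l,\ubar L\rangle = -2\kappa$. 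The $-2\nabla\omega$ piece contributes $-2H^\omega(V,W)$ from the tangential part, and the cross terms $2(W\omega)K(V,\nabla\omega) + 2(V\omega)K(W,\nabla\omega)$ come from the identity $\langle D_{\ubar L}\nabla\omega,W\rangle = -K(W,\nabla\omega)$, which I derive by applying $\ubar L$ to $\langle\nabla\omega,W\rangle = W\omega$ with $[\ubar L,W]=0$ and $\ubar L(W\omega)=0$. The stated $\chi$ formula follows after collecting terms. Finally, $\tr\chi$ is obtained by tracing in an orthonormal frame $\{E_i\}$ for $\gamma_s$ at $q$, which by the metric identity is also orthonormal for $\gamma_\omega$ via $\{\tilde E_i\}$; splitting $K = \hat K + \tfrac12\tr K\,\gamma_s$ to absorb a $\tr K|\nabla\omega|^2$ contribution produces the $\hat K(\nabla\omega,\nabla\omega)$ term. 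I expect the principal obstacle to be the bookkeeping in the $\chi$ computation: the cross terms with one or two factors of $V\omega, W\omega$ look redundant but recombine in several different ways, and the sign in $\langle D_{\ubar L}\nabla\omega,W\rangle = -K(W,\nabla\omega)$ is easy to get wrong; organizing the expansion strictly by the three summands of $L$ and reducing each inner product to $K, Q, t, H^\omega, \kappa$ before summation keeps the structure manageable.
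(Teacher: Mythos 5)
Your proposal is correct and follows essentially the same route as the paper's proof (itself modeled on Mars--Soria): well-definedness via $\tilde v(s-\omega)=0$, the explicit null partner $L=l-|\nabla\omega|^2\ubar L-2\nabla\omega$, and the $\chi$ computation organized into the three summands of $L$ with the same reductions to $K$, $Q$, $t$, $H^\omega$, $\kappa$. All the individual identities you cite (including $\langle D_{\ubar L}\nabla\omega,W\rangle=-K(W,\nabla\omega)$ and the $\hat K$ splitting in the trace) check out against the paper's calculation.
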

\begin{proof}
For completeness we include a similar proof as in \cite{MS1} (Proposition 1). Since $T_\omega:T_q\Sigma_{s(q)}\to T_q\Sigma_\omega$ is clearly injective it suffices to show $\tilde v\in T_q\Sigma_\omega$. This follows from the fact that $\tilde v(s-\omega) = v(s-\omega)+v\omega \ubar L(s-\omega) = -v\omega+v\omega = 0$ since $\Sigma_\omega$ is locally characterised by $s|_{\Sigma_\omega}=\omega$. For the extension $\tilde V= V+V\omega\ubar L$ we note that $[\ubar L,V]=0\implies[\ubar L,\tilde V]=0$ and it follows that $\tilde V\in E(\Sigma_\omega)$ (in fact $\tilde{E}(\Sigma_{s_0}) = E(\Sigma_\omega)$). From this and the fact that $D_{\ubar L}\ubar L = \kappa \ubar L$ the first two identities follow straight forwardly. For the third identity we find that $L=l-|\nabla\omega|^2\ubar L-2\nabla\omega$ since
\begin{align*}
\langle L,\ubar L\rangle &= \langle l,\ubar L\rangle = 2\\
\langle L,\tilde V\rangle&=\langle l,V\omega \ubar L\rangle-2\langle\nabla\omega,V\rangle = 2V\omega-2V\omega = 0
\end{align*}
giving
\begin{align*}
\zeta(\tilde V) &= \frac12\langle D_{V+V\omega \ubar L}\ubar L,l-|\nabla\omega|^2\ubar L-2\nabla\omega\rangle\\
&=\frac12\langle D_V\ubar L+\kappa V\omega\ubar L,l-|\nabla\omega|^2\ubar L-2\nabla\omega\rangle\\
&=t(V)-\frac14|\nabla\omega|^2V\langle \ubar L,\ubar L\rangle-\langle D_V\ubar L,\nabla\omega\rangle+\kappa\langle V,\nabla\omega\rangle\\
&=t(V)-K(V,\nabla\omega)+\kappa\langle V,\nabla\omega\rangle.
\end{align*}
For comparison between $\chi$ and $Q$ we calculate $\chi(\tilde V,\tilde W) = \langle D_{V+V\omega \ubar L}(l-|\nabla\omega|^2\ubar L-2\nabla\omega),W+W\omega \ubar L\rangle$ in three parts
\begin{align*}
\langle D_{V+V\omega \ubar L}l,W+W\omega \ubar L\rangle &= Q(V,W)+V\omega\langle D_{\ubar L} l,W\rangle+W\omega\langle D_Vl,\ubar L \rangle+V\omega W\omega\langle D_{\ubar L}l,\ubar L\rangle\\
&=Q(V,W)-V\omega\langle l,D_{\ubar L}W\rangle-W\omega\langle l,D_V\ubar L\rangle-V\omega W\omega\langle l,D_{\ubar L}\ubar L\rangle\\
&=Q(V,W)-2V\omega t(W)-2W\omega t(V)-2\kappa V\omega W\omega\\
-|\nabla\omega|^2\langle D_{V+V\omega \ubar L}\ubar L,W+W\omega \ubar L\rangle &= -|\nabla\omega|^2\langle D_V\ubar L,W+W\omega \ubar L\rangle\\
&=-|\nabla\omega|^2K(V,W)-\frac12|\nabla\omega|^2W\omega V\langle \ubar L,\ubar L\rangle\\
&=-|\nabla\omega|^2K(V,W)\\
-2\langle D_{V+V\omega \ubar L}\nabla\omega,W+W\omega \ubar L\rangle&=-2\langle D_V\nabla\omega,W\rangle-2W\omega\langle D_V\nabla\omega,\ubar L\rangle-2V\omega \langle D_{\ubar L}\nabla\omega,W\rangle\\
&\quad\,-2V\omega W\omega \langle D_{\ubar L}\nabla\omega,\ubar L\rangle\\
&=-2H^\omega(V,W)+2W\omega K(\nabla\omega,V)-2V\omega \ubar LW\omega+2V\omega K(\nabla\omega,W)\\
&\quad\,+2V\omega W\omega \langle \nabla\omega,D_{\ubar L}\ubar L\rangle\\
&=-2H^\omega(V,W)+2W\omega K(\nabla\omega,V)+2V\omega K(\nabla\omega,W)
\end{align*}
the third to last line coming from $\langle D_{\ubar L}\nabla\omega,W\rangle = \ubar L\langle \nabla\omega, W\rangle-\langle \nabla\omega, D_{\ubar L}W\rangle=\ubar LW\omega-K(W,\nabla\omega)$. Collecting all the terms the result follows. The final identity follows upon taking a trace.
\end{proof}
We are now ready to prove our first main result of this section. On $\Sigma_\omega$ we will denote the flux function (1) by $\s\rho$ and on $\Sigma_s$ by $\rho$ the following theorem provides comparison between the two
\begin{theorem}[Flux Comparison Theorem]\label{t5}
At any $q\in\Sigma_\omega\cap\Sigma_s$ we have 
\begin{align*}
\s\rho=\rho&+\s\nabla\cdot\Big(\frac{|\hat{K}|^2+G(\ubar L,\ubar L)}{\tr K}\s\nabla\omega\Big)+\frac12\Big(|\hat{K}|^2+G(\ubar L,\ubar L)\Big)|\nabla\omega|^2\\
&+\nabla\omega\frac{|\hat{K}|^2+G(\ubar L,\ubar L)}{\tr K}+G({\ubar L},\nabla\omega)-2\hat{K}(\vec{t}-\nabla\log\tr K,\nabla\omega)
\end{align*}
\end{theorem}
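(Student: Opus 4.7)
The strategy is to evaluate $\s\rho - \rho$ at a point $q \in \Sigma_\omega \cap \Sigma_{s(q)}$ by separately comparing the three ingredients of the flux function---the Gaussian curvature, the squared mean-curvature norm, and the divergence of the torsion---between $\Sigma_\omega$ and $\Sigma_{s(q)}$. Throughout I would lean on Lemma~\ref{l12}, which via the isomorphism $T_\omega:T_q\Sigma_{s(q)}\to T_q\Sigma_\omega$ identifies the two induced metrics and the second fundamental form along $\ubar L$ (so in particular $\tr\ubar\chi = \tr K$ as functions on $\Omega$), and expresses $\chi$, $\zeta$, $\tr\chi$ on $\Sigma_\omega$ in terms of the foliation data $Q$, $t$, $\tr Q$ plus correction terms driven by $\omega$ and its derivatives on $\Sigma_{s(q)}$.

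First I would settle the torsion. The chain rule gives $\s d\log\tr\ubar\chi(\tilde V) = V\log\tr K + V\omega\cdot\ubar L\log\tr K$; invoking the structure equation (7) of Proposition~\ref{p5} to substitute $\ubar L\log\tr K = -\tfrac12\tr K - \tfrac{|\hat K|^2 + G(\ubar L,\ubar L)}{\tr K} + \kappa$, and combining with Lemma~\ref{l12}'s expression for $\zeta(\tilde V)$ together with the decomposition $K(V,\nabla\omega) = \hat K(V,\nabla\omega) + \tfrac12\tr K\cdot V\omega$, the $\kappa$-terms and the $\tfrac12\tr K\cdot V\omega$ terms cancel to give
\[
\tau_\omega(\tilde V) \;=\; \tau_s(V) \;-\; \hat K(V,\nabla\omega) \;+\; \frac{|\hat K|^2 + G(\ubar L,\ubar L)}{\tr K}\,V\omega,
\]
where $\tau_s = t - d\log\tr K$. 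This already exhibits the coefficient $\tfrac{|\hat K|^2+G(\ubar L,\ubar L)}{\tr K}$ pervading the final formula.

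Second, I would apply the Gauss equation (3) of Proposition~\ref{p1} at both cross-sections. Because they share the normal $\ubar L$, the ambient scalar curvature $R$ drops out of the difference, and the $\ubar\alpha$-contributions only see the replacement $l \mapsto L = l - |\nabla\omega|^2\ubar L - 2\nabla\omega$; the identity $\ubar\alpha(\ubar L,\cdot) = 0$ kills the $\ubar L$ pieces. The upshot is
\[
\bigl(\mathcal{K}_\omega - \tfrac14\langle\vec H_\omega,\vec H_\omega\rangle\bigr) - \bigl(\mathcal{K}_s - \tfrac14\langle\vec H_s,\vec H_s\rangle\bigr) = |\nabla\omega|^2 G(\ubar L,\ubar L) + 2G(\ubar L,\nabla\omega) - \tfrac12(\hat{\ubar\chi}\cdot\hat\chi - \hat K\cdot\hat Q) + \tfrac14\bigl(\ubar\alpha(l,l) - \ubar\alpha(L,L)\bigr),
\]
where the shear residue is computed directly from Lemma~\ref{l12}'s formula for $\chi(\tilde V,\tilde W)$, while the $\ubar\alpha$-residue $\ubar\alpha(l,\nabla\omega) - \ubar\alpha(\nabla\omega,\nabla\omega)$ is traded using the Codazzi equation (4) for divergences of $\hat K$ paired with $\nabla\omega$ and torsion-shear cross terms. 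This is where the $-2\hat K(\vec t - \nabla\log\tr K,\nabla\omega)$ piece of the final formula first materialises.

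Finally, I would take the divergence $\s\nabla\cdot\tau_\omega$ on $\Sigma_\omega$ using the torsion identity above. Its three summands respectively contribute $\nabla\cdot\tau_s$ up to a connection discrepancy between $\s\nabla$ and $\nabla$ (which produces further $\hat K$-and-$\nabla\omega$ terms), a divergence of $\hat K(\cdot,\nabla\omega)$ which after a further application of Codazzi cleans up the remaining torsion-shear term, and the full divergence $\s\nabla\cdot\bigl(\tfrac{|\hat K|^2+G(\ubar L,\ubar L)}{\tr K}\s\nabla\omega\bigr)$ whose expansion spins off the directional-derivative term $\nabla\omega\,\tfrac{|\hat K|^2+G(\ubar L,\ubar L)}{\tr K}$. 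The main obstacle---where essentially all the bookkeeping lives---is tracking the $\s\nabla$-versus-$\nabla$ connection discrepancy carefully enough that the $|\nabla\omega|^2$ terms scattered across the three ingredients collect into exactly $\tfrac12(|\hat K|^2 + G(\ubar L,\ubar L))|\nabla\omega|^2$, and that the $2G(\ubar L,\nabla\omega)$ from the Gauss comparison combines with a cancelling $-G(\ubar L,\nabla\omega)$ arising from the divergence piece to leave precisely $G(\ubar L,\nabla\omega)$, thereby assembling the clean identity stated in the theorem.
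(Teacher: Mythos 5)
Your plan follows the same skeleton as the paper's proof---Lemma~\ref{l12} to transport the data, the propagation equations of Proposition~\ref{p5} for the $\ubar L$-derivatives, the Gauss equation (3) for the curvature/mean-curvature difference, and the Codazzi equation (4) for the final cancellation---but with one genuinely useful reorganization: you transform the full torsion $1$-form $\tau=\zeta-\s{d}\log\tr\ubar\chi$ in a single step before taking any divergence. Your identity $\tau_\omega(\tilde V)=\tau_s(V)-\hat K(V,\nabla\omega)+\frac{|\hat K|^2+G(\ubar L,\ubar L)}{\tr K}\,V\omega$ is correct (the inhomogeneous term comes from equation (8), not (7), but that is only a citation slip), and it is cleaner than the paper's route, which computes $\s\nabla\cdot\zeta$ and $\s\Delta\log\tr\ubar\chi$ separately in Steps 1 and 2, each carrying a swarm of $\kappa$-terms that only cancel at the very end; in your formula $\kappa$ has already dropped out pointwise. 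The paper never records this pointwise transformation law, and adopting it would shorten its Steps 1 and 2 considerably. Your Gauss-equation comparison also matches the paper's Step 3.

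One caution about your final step. When you differentiate $\tau_s(W)$ along $\tilde V=V+V\omega\,\ubar L$ to form $\s\nabla\cdot\tau_\omega$, the term $V\omega\,\ubar L(\tau_s(W))$ is \emph{not} part of the ``connection discrepancy between $\s\nabla$ and $\nabla$''; it is a transversal derivative that must be evaluated with the propagation law for $\tau$ (the computation of Lemma~\ref{l11}, i.e.\ (8) and (11) combined), and it is precisely this term that injects $G_{\ubar L}(\nabla\omega)$ and the $\s\nabla\cdot\hat K$-type contributions which later cancel against the Gauss-equation residue via Codazzi. Your closing remark that a $G(\ubar L,\nabla\omega)$ ``arises from the divergence piece'' shows you expect this, but the plan as written attributes all corrections in that summand to the change of connection (Lemma~\ref{l13} in the paper), which would lose these terms if followed literally. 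With that accounting made explicit, the argument closes exactly as the paper's does.
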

\begin{remark}\label{r5}
Revisiting Theorem \ref{t4} and the case that $\hat{\ubar\chi}=G(\ubar L,\cdot) = 0$, Theorem \ref{t5} provides an alternative proof that $\s\rho$ agrees with $\rho$ point wise.
\end{remark}
\begin{proof} When used, we assume $V,W,U\in E(\Sigma_{s_0}) (\implies \tilde V,\tilde W,\tilde U\in E(\Sigma_\omega)$).
We will need to know how to relate the covariant derivatives between the two surfaces so first a lemma
\begin{lemma}\label{l13} 
$T_{\omega}\Big(\nabla_VW+V\omega \vec{K}(W)+W\omega\vec{K}(V)-K(V,W)\nabla\omega\Big)=\s{\nabla}_{\tilde V}\tilde W$
\end{lemma}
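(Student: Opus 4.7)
The plan is to verify the identity by characterizing $\s\nabla_{\tilde V}\tilde W$ as the unique tangential projection (to $\Sigma_\omega$) of $D_{\tilde V}\tilde W$, then computing this projection directly using the background geometry of $\Sigma_s$. Explicitly, I would show that $D_{\tilde V}\tilde W - T_\omega(u)$ lies in the normal bundle of $\Sigma_\omega$ (spanned by $\ubar L$ and $L = l-|\nabla\omega|^2\ubar L - 2\nabla\omega$), where $u := \nabla_VW+V\omega\vec K(W)+W\omega\vec K(V)-K(V,W)\nabla\omega$.

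First I would expand $D_{\tilde V}\tilde W$ using $\tilde V = V+V\omega\,\ubar L$ and $\tilde W=W+W\omega\,\ubar L$, noting that $V,W\in E(\Sigma_{s_0})$ gives $[\ubar L,V]=[\ubar L,W]=0$, so that $D_{\ubar L}W = D_W\ubar L$ and $D_{\ubar L}V=D_V\ubar L$. Using Lemma \ref{l7} for the background basis $\{\ubar L,l\}$ (so $D_V\ubar L=\vec K(V)+t(V)\ubar L$ and $D_{\ubar L}\ubar L = \kappa\ubar L$) together with $\ubar L(W\omega)=W(\ubar L\omega)+[\ubar L,W]\omega=0$, I can collect terms to obtain
\begin{align*}
D_{\tilde V}\tilde W = D_VW + V\omega\,\vec K(W)+W\omega\,\vec K(V)+A\,\ubar L
\end{align*}
for an explicit function $A=V(W\omega)+W\omega\,t(V)+V\omega\,t(W)+\kappa V\omega W\omega$.

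Next I would decompose $D_VW$ along the $\Sigma_s$ splitting: $D_VW=\nabla_VW+\II(V,W)$ where, since $-K = \langle\II,\ubar L\rangle$ and $-Q=\langle \II,l\rangle$ together with $\langle\ubar L,l\rangle=2$ force $\II(V,W) = -\tfrac12 K(V,W)\,l-\tfrac12 Q(V,W)\,\ubar L$. Substituting this and comparing with $T_\omega(u) = u + (u\omega)\ubar L$ yields
\begin{align*}
D_{\tilde V}\tilde W - T_\omega(u) = -\tfrac12 K(V,W)\,\bigl(l-2\nabla\omega\bigr) + B\,\ubar L
\end{align*}
for some scalar $B$. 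The rewriting $l-2\nabla\omega = L+|\nabla\omega|^2\ubar L$ (from Lemma \ref{l12}'s identity $L=l-|\nabla\omega|^2\ubar L-2\nabla\omega$) then shows this difference lives entirely in $\mathrm{span}\{\ubar L,L\}=(T\Sigma_\omega)^\perp$. Hence the tangential part of $D_{\tilde V}\tilde W$ equals $T_\omega(u)$, which by definition is $\s\nabla_{\tilde V}\tilde W$.

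The only genuine obstacle is bookkeeping: one must carefully separate the ambient computation of $D_{\tilde V}\tilde W$ into a piece recognizable as $T_\omega(u)$ plus something normal. The miraculous cancellation is that the $\nabla\omega$-term produced by passing from $l$ to $L$ is exactly matched by the $-K(V,W)\nabla\omega$ term in $u$, so no further tangential correction is needed; all residual $\ubar L$-contributions (the function $B$) are harmless because $\ubar L$ is normal to $\Sigma_\omega$.
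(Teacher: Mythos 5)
Your proposal is correct. The underlying idea is the same as the paper's --- identify $\s\nabla_{\tilde V}\tilde W$ as the tangential projection of $D_{\tilde V}\tilde W$ onto $T\Sigma_\omega$ --- but the execution is genuinely different. The paper works ``weakly'': it writes $\s\nabla_{\tilde V}\tilde W = D_{\tilde V}\tilde W+\frac12\ubar\chi(\tilde V,\tilde W)L+\frac12\chi(\tilde V,\tilde W)\ubar L$ and pairs with an arbitrary test vector $U\in E(\Sigma_{s_0})$; since $\langle\ubar L,U\rangle=0$, only the $L$-component of the second fundamental form of $\Sigma_\omega$ survives, contributing exactly $-K(V,W)U\omega$ via $L=l-|\nabla\omega|^2\ubar L-2\nabla\omega$, and the rest follows from metric compatibility. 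You instead expand $D_{\tilde V}\tilde W$ directly using Lemma \ref{l7} for the background basis ($D_V\ubar L=\vec K(V)+t(V)\ubar L$, $D_{\ubar L}\ubar L=\kappa\ubar L$) and the splitting $\II_{\Sigma_s}(V,W)=-\frac12 K(V,W)l-\frac12 Q(V,W)\ubar L$, then observe that the residual $-\frac12K(V,W)(l-2\nabla\omega)+B\ubar L$ lies in $\mathrm{span}\{\ubar L,L\}$. Your route costs more bookkeeping --- you must carry $Q$, $t$, $\kappa$ and the scalar $B$, all of which are ultimately irrelevant because they multiply normal vectors --- but it makes the mechanism explicit: the conversion $l-2\nabla\omega=L+|\nabla\omega|^2\ubar L$ is precisely what produces the $-K(V,W)\nabla\omega$ term in $u$. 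The paper's pairing argument is leaner because the $\ubar L$-directed garbage is annihilated from the outset. Both are complete proofs; just make sure you justify $\ubar L(W\omega)=0$ (which follows from $\ubar L\omega=0$ and $[\ubar L,W]=0$) and that $S\in\Gamma(T\Sigma_{s(q)})\mapsto T_\omega(S)$ exhausts $T\Sigma_\omega$, as the paper does when asserting $\s\nabla_{\tilde V}\tilde W=T_\omega(S)$ for some $S$.
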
 
\begin{proof}
Since $\s\nabla_{\tilde V}\tilde W|_q = (S+S\omega\ubar L)|_q=T_\omega(S|_q)$ for some $S\in\Gamma(T\Sigma_{s(q)})$ it follows that $\langle \s\nabla_{\tilde V}\tilde W,U\rangle = \langle S,U\rangle$ for any $U\in E(\Sigma_{s_0})$. We find
\begin{align*}
\langle \s\nabla_{\tilde V}\tilde W, U\rangle &= \langle D_{\tilde V}\tilde W+\frac12\ubar\chi(\tilde V,\tilde W)L+\frac12\chi(\tilde V,\tilde W)\ubar L,U\rangle\\
&=\langle D_{\tilde V}\tilde W,U\rangle + \frac12K(V,W)\langle L,U\rangle\\
&= \tilde V\langle W,U\rangle - \langle \tilde W, D_{\tilde V}U\rangle + \frac12K(V,W)\langle l - |\nabla\omega|^2\ubar L - 2\nabla\omega,U\rangle\\
&=(V+V\omega \ubar L)\langle W,U\rangle - \langle W+W\omega \ubar L,D_{V+V\omega \ubar L}U\rangle-K(V,W)U\omega\\
&=V\langle W,U\rangle +2V\omega K(W,U) - \Big(\langle W,\nabla_VU\rangle + V\omega K(W,U) -W\omega K(V,U)\Big) - K(V,W)U\omega\\
&=\Big(V\langle W,U\rangle - \langle W,\nabla_VU\rangle\Big)+K(W,U)V\omega+K(V,U)W\omega-K(V,W)U\omega\\
&=\langle \nabla_VW+V\omega \vec{K}(W)+W\omega\vec{K}(V)-K(V,W)\nabla\omega,U\rangle
\end{align*} 
so $S = \nabla_VW+V\omega \vec{K}(W)+W\omega\vec{K}(V)-K(V,W)\nabla\omega$ since $E(\Sigma_{s_0})|_{\Sigma_{s(q)}} = \Gamma(T\Sigma_{s(q)})$.
\end{proof}
Now we proceed with the proof of Theorem \ref{t5} in 3 parts:\\\\
\underline{Part 1} \textit{Comparison between $\s{\nabla}\cdot\zeta$ and $\nabla\cdot t$:}\\\\
From Lemmas \ref{l12} and \ref{l13} we have
\begin{align*}
(\s\nabla_{\tilde V}\zeta)(\tilde W)&= \tilde V(\zeta (\tilde W)) - \zeta(\s\nabla_{\tilde V}\tilde W)\\ 
&= (V+V\omega \ubar L)\Big(t(W) - K(W,\nabla\omega)+\kappa\langle W,\nabla\omega\rangle\Big)\\
&\indent-t\Big(\nabla_VW+V\omega\vec{K}(W)+W\omega\vec{K}(V)-K(V,W)\nabla\omega\Big)\\
&\indent+K\Big(\nabla_VW+V\omega\vec{K}(W)+W\omega\vec{K}(V)-K(V,W)\nabla\omega,\nabla\omega\Big)\\
&\indent-\kappa\langle\nabla_VW+V\omega\vec{K}(W)+W\omega\vec{K}(V)-K(V,W)\nabla\omega,\nabla\omega\rangle .
\end{align*}
Isolating the terms of the second line we get
\begin{align*}
(V+V\omega \ubar L)(t(W)-K(W,\nabla\omega)&+\kappa W\omega)\\
&=Vt(W)+V\omega\Big(G_{\ubar L}(W)-\nabla\cdot\hat{K}(W)-\tr Kt(W)+\frac12W\tr K+W\kappa\Big)\\
&\indent-VK(W,\nabla\omega) - V\omega (\mathcal{L}_{\ubar L}K)(W,\nabla\omega)-V\omega K(W,[\ubar L,\nabla\omega])\\
&\indent+V\kappa W\omega+\kappa VW\omega+V\omega\ubar L\kappa W\omega
\end{align*}
where (11) was used to give the first line. To continue we'll need an expression for $[\ubar L,\nabla\omega]$ and use (6) to get it:
\begin{align*}
2K(\nabla\omega,V) &= (\mathcal{L}_{\ubar L}\gamma_s)(\nabla\omega,V) = \ubar L\langle\nabla\omega,V\rangle-\langle[\ubar L,\nabla\omega],V\rangle\\
&=\ubar LV\omega-\langle[\ubar L,\nabla\omega],V\rangle\\
&=-\langle[\ubar L,\nabla\omega],V\rangle
\end{align*}
since $[\ubar L,\nabla\omega]\in\Gamma(T\Sigma_s)$ we conclude that $[\ubar L,\nabla\omega]=-2\vec{K}(\nabla\omega)$. Substitution back into our calculation and using (7) (in the form $\mathcal{L}_{\ubar L}K(V,W) = -\ubar\alpha(V,W)+\langle\vec{K}(V),\vec{K}(W)\rangle+\kappa K(V,W)$) gives
\begin{align*}
(V&+V\omega \ubar L)(t(W)-K(W,\nabla\omega)+\kappa W\omega)\\
&=Vt(W)-VK(W,\nabla\omega)\\
&\indent+V\omega\Big(G_{\ubar L}(W)-\nabla\cdot\hat{K}(W)-\tr Kt(W)+\frac12W\tr K+\ubar\alpha(W,\nabla\omega)+\langle \vec{K}(W),\vec{K}(\nabla\omega)\rangle\Big)\\
&\indent+V\omega W\kappa-\kappa V\omega K(W,\nabla\omega)+V\kappa W\omega+\kappa VW\omega+\ubar L\kappa V\omega W\omega.
\end{align*}
Collecting terms we get
\begin{align*}
(\s\nabla_{\tilde V}\zeta)(\tilde W)&=Vt(W)-t(\nabla_VW) +K(\nabla_VW,\nabla\omega)-VK(W,\nabla\omega)\\
&\indent+V\omega\Big(G_{\ubar L}(W)-\nabla\cdot\hat{K}(W)-\tr Kt(W)+\frac12W\tr K+\ubar\alpha(W,\nabla\omega)+\langle\vec{K}(W),\vec{K}(\nabla\omega)\rangle\Big)\\
&\indent-V\omega K(W,\vec{t})-W\omega K(V,\vec{t})+K(V,W)t(\nabla\omega)\\
&\indent+V\omega\langle\vec{K}(W),\vec{K}(\nabla\omega)\rangle+W\omega\langle\vec{K}(V),\vec{K}(\nabla\omega)\rangle-K(V,W)K(\nabla\omega,\nabla\omega)\\
&\indent+V\omega W\kappa-\kappa V\omega K(W,\nabla\omega)+V\kappa W\omega+\kappa VW\omega+\ubar L\kappa V\omega W\omega\\
&\indent-\kappa\nabla_VW\omega-\kappa V\omega K(W,\nabla\omega)-\kappa W\omega K(V,\nabla\omega)+\kappa K(V,W)|\nabla\omega|^2.
\end{align*}
So taking a trace over $V$ and $W$ 
\begin{align*}
\s\nabla\cdot\zeta &=\nabla\cdot t-\nabla\cdot(\vec{K}(\nabla\omega))\\
&\indent+\Big(G_{\ubar L}(\nabla\omega)-(\nabla\cdot{\hat{K}})(\nabla\omega)-\tr Kt(\nabla\omega)+\frac12\nabla\omega\tr K+\ubar\alpha(\nabla\omega,\nabla\omega)+|\vec{K}(\nabla\omega)|^2\Big)\\
&\indent-2K(\nabla\omega,\vec{t})+\tr K t(\nabla\omega)+2|\vec{K}(\nabla\omega)|^2-\tr KK(\nabla\omega,\nabla\omega)\\
&\indent+2\nabla\omega\kappa-3\kappa K(\nabla\omega,\nabla\omega)+\kappa\Delta\omega+\ubar L\kappa|\nabla\omega|^2+\kappa\tr K|\nabla\omega|^2\\
&=\nabla\cdot t-\Big(\nabla\cdot(\vec{K}(\nabla\omega))+(\nabla\cdot\hat{K})(\nabla\omega)-\frac12\nabla\omega \tr K\Big)-2\Big(K(\nabla\omega,\vec{t})-\frac12\tr Kt(\nabla\omega)\Big)\\
&\indent+3|\vec{K}(\nabla\omega)|^2-\tr KK(\nabla\omega,\nabla\omega)+G_{\ubar L}(\nabla\omega)-\tr Kt(\nabla\omega)+\ubar\alpha(\nabla\omega,\nabla\omega)\\
&\indent+2\nabla\omega\kappa-3\kappa\hat{K}(\nabla\omega,\nabla\omega)+\kappa\Delta\omega+\ubar L\kappa|\nabla\omega|^2-\frac12\kappa\tr K|\nabla\omega|^2\\
&=\nabla\cdot t-\Big(2(\nabla\cdot\hat{K})(\nabla\omega) + H^{\omega}\cdot K\Big)-2\hat{K}(\nabla\omega,\vec{t})+3|\vec{K}(\nabla\omega)|^2-\tr KK(\nabla\omega,\nabla\omega)\\
&\indent+G_{\ubar L}(\nabla\omega)-\tr Kt(\nabla\omega)+\ubar\alpha(\nabla\omega,\nabla\omega)\\
&\indent+2\nabla\omega\kappa-3\kappa\hat{K}(\nabla\omega,\nabla\omega)+\kappa\Delta\omega+\ubar L\kappa|\nabla\omega|^2-\frac12\kappa\tr K|\nabla\omega|^2\\
&=\nabla\cdot t-2(\nabla\cdot\hat{K})(\nabla\omega)-H^{\omega}\cdot\hat{K}-\frac12\tr K\Delta\omega-2\hat{K}(\nabla\omega,\vec{t})+\frac32|\hat{K}|^2|\nabla\omega|^2\\
&\indent+2\tr K\hat{K}(\nabla\omega,\nabla\omega)+\frac14(\tr K)^2|\nabla\omega|^2+G_{\ubar L}(\nabla\omega)-\tr Kt(\nabla\omega)+\ubar\alpha(\nabla\omega,\nabla\omega)\\
&\indent+2\nabla\omega\kappa-3\kappa\hat{K}(\nabla\omega,\nabla\omega)+\kappa\Delta\omega+\ubar L\kappa|\nabla\omega|^2-\frac12\kappa\tr K|\nabla\omega|^2.
\end{align*}
\underline{Step 2} \textit{Comparison between $\s{\nabla}\cdot\zeta-\s{\Delta}\log\tr\ubar\chi$ and $\nabla\cdot t-\Delta\log\tr K$}:\\\\
Since $\tr\ubar\chi=\tr K|_{\Sigma_\omega}$ we start by comparing $\s{\Delta}\log\tr K$ with $\Delta\log\tr K$
$$H^{\log \tr\ubar\chi}(\tilde V,\tilde W) =\langle \s\nabla_{\tilde V}\s\nabla\log\tr K,\tilde W\rangle = \tilde V\tilde W\log\tr K - \s\nabla_{\tilde V}\tilde W\log\tr K$$
So isolating the first term we get
\begin{align*}
\tilde V\tilde W\log\tr K &=(V+V\omega \ubar L)(W+W\omega \ubar L)\log\tr K\\
&=VW\log\tr K+(VW\omega+ V\omega W+W\omega V)\ubar L\log\tr K+V\omega W\omega \ubar L\ubar L\log\tr K
\end{align*} 
and then the second
\begin{align*}
\s\nabla_{\tilde V}\tilde W\log\tr K&=(\nabla_VW+V\omega\vec{K}(W)+W\omega\vec{K}(V)-K(V,W)\nabla\omega)\log\tr K\\
&\quad+(\nabla_VW+V\omega\vec{K}(W)+W\omega\vec{K}(V)-K(V,W)\nabla\omega)\omega \ubar L\log\tr K
\end{align*}
having used Lemma \ref{l13}. Collecting terms
\begin{align*}
H^{\log\tr K}(\tilde V,\tilde W)&=VW\log\tr K - \nabla_VW\log\tr K\\
&\indent+(VW\omega - \nabla_VW\omega)\ubar L\log\tr K+V\omega W\omega \ubar L\ubar L\log\tr K\\
&\indent-\Big(V\omega K(W,\nabla\log\tr K)+W\omega K(V,\nabla\log\tr K)-K(V,W)\langle\nabla\omega,\nabla\log\tr K\rangle\Big)\\
&\indent+\Big(K(V,W)|\nabla\omega|^2-V\omega K(W,\nabla\omega)-W\omega K(V,\nabla\omega)+V\omega W+W\omega V\Big)\ubar L\log\tr K.
\end{align*}
So that a trace over $V$ and $W$ yields
\begin{align*}
\s\Delta\log\tr K&=\Delta\log\tr K+\Delta\omega \ubar L\log\tr K+|\nabla\omega|^2\ubar L\ubar L\log\tr K-2\hat{K}(\nabla\omega,\nabla\log\tr K)\\
&\quad-2\hat{K}(\nabla\omega,\nabla\omega)\ubar L\log\tr K+2\nabla\omega \ubar L\log\tr K.
\end{align*}
We take the opportunity at this point of the calculation to bring to the attention of the reader that we have not yet used any distinguishing characteristics of the function $\log\tr K$ in comparison to an arbitrary $f\in\mathcal{F}(\Omega)$. In particular, we notice if $f\in\mathcal{F}(\Omega)$ satisfies $\ubar Lf=0$ switching with $\log\tr K$ above yields the fact $\s\Delta f = \Delta f-2\hat{K}(\nabla\omega,\nabla f)$. As a result,
\begin{lemma}\label{l14}
$$\s\Delta g = \Delta g+\s\nabla\cdot(\ubar L g\s\nabla\omega)+\nabla\omega\ubar L g-2\hat{K}(\nabla\omega,\nabla g)$$
for any $g\in\mathcal{F}(\Omega)$.
\end{lemma}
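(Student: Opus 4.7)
The reasoning that precedes Lemma \ref{l14} has already established, for every $g\in\mathcal{F}(\Omega)$, the identity
\begin{align*}
\s\Delta g = \Delta g &+ \Delta\omega\,\ubar L g + |\nabla\omega|^2\,\ubar L\ubar L g - 2\hat K(\nabla\omega,\nabla g) \\
&- 2\hat K(\nabla\omega,\nabla\omega)\,\ubar L g + 2\nabla\omega\,\ubar L g,
\end{align*}
since, as the author explicitly notes in the paragraph just above the statement, the preceding computation never invoked any feature special to $\log\tr K$. My plan is to take this as the starting point and repackage these six terms into the more compact form of the lemma.

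First I would specialise to $g=\omega$: because $\ubar L\omega=0$ by construction, four of the six terms vanish identically and one is left with
$$\s\Delta\omega = \Delta\omega - 2\hat K(\nabla\omega,\nabla\omega).$$
Next I would expand the divergence appearing on the right-hand side of the lemma by Leibniz,
$$\s\nabla\cdot(\ubar L g\,\s\nabla\omega) = (\ubar L g)\,\s\Delta\omega + \s\nabla\omega(\ubar L g),$$
and identify $\s\nabla\omega$ explicitly through the tangential isomorphism $T_\omega$ of Lemma \ref{l12}. The relations $\langle T_\omega X,\tilde V\rangle_{\gamma_\omega}=\langle X,V\rangle_{\gamma_s}$ and $\tilde V\omega=V\omega$ (using $\ubar L\omega=0$) force $\s\nabla\omega = T_\omega(\nabla\omega) = \nabla\omega + |\nabla\omega|^2\ubar L$, and consequently
$$\s\nabla\omega(\ubar L g) = \nabla\omega(\ubar L g) + |\nabla\omega|^2\,\ubar L\ubar L g.$$

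Substituting both formulas back gives
$$\s\nabla\cdot(\ubar L g\,\s\nabla\omega) = (\ubar L g)\Delta\omega - 2\hat K(\nabla\omega,\nabla\omega)\,\ubar L g + \nabla\omega\,\ubar L g + |\nabla\omega|^2\,\ubar L\ubar L g,$$
and adding $\Delta g + \nabla\omega\,\ubar L g - 2\hat K(\nabla\omega,\nabla g)$ reconstructs precisely the six-term identity displayed above, with the two copies of $\nabla\omega\,\ubar L g$ combining into the $2\nabla\omega\,\ubar L g$ term. The only non-mechanical ingredient in this plan is the description $\s\nabla\omega = \nabla\omega + |\nabla\omega|^2\ubar L$, which encodes the null tilt of the $\Sigma_\omega$-gradient relative to the background foliation; everything else is pure bookkeeping once the general identity for $\s\Delta g$ is taken as input.
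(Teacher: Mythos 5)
Your proposal is correct and essentially reproduces the paper's own argument: both start from the six-term identity for $\s\Delta g$ established just before the lemma, use $\ubar L\omega=0$ to get $\s\Delta\omega=\Delta\omega-2\hat K(\nabla\omega,\nabla\omega)$, invoke $\s\nabla\omega=\nabla\omega+|\nabla\omega|^2\ubar L$, and regroup via the Leibniz rule for $\s\nabla\cdot(\ubar Lg\,\s\nabla\omega)$. Running the bookkeeping in reverse (verifying the compact form expands to the six-term identity) is an immaterial difference.
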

\begin{proof}
We have 
\begin{align*}
\s\Delta g &= \Delta g+\Delta\omega\ubar Lg+|\nabla\omega|^2\ubar L\ubar Lg-2\hat{K}(\nabla\omega,\nabla g)-2\hat{K}(\nabla\omega,\nabla\omega)\ubar L g+2\nabla\omega\ubar Lg\\
&=\Delta g+(\Delta\omega-2\hat{K}(\nabla\omega,\nabla\omega))\ubar Lg+(\nabla\omega+|\nabla\omega|^2\ubar L)\ubar Lg+\nabla\omega\ubar Lg-2\hat{K}(\nabla\omega,\nabla g)\\
&=\Delta g+\s\Delta\omega\ubar Lg+\s\nabla\omega\ubar Lg+\nabla\omega\ubar Lg-2\hat{K}(\nabla\omega,\nabla g)\\
&=\Delta g+\s\nabla\cdot(\ubar Lg\s\nabla\omega)+\nabla\omega\ubar Lg-2\hat{K}(\nabla\omega,\nabla g)
\end{align*}
having used the fact that $\ubar L\omega = 0$ and the comment immediately preceding the statement of Lemma \ref{l14} to get the third equality.
\end{proof}
Finishing up Step 2 we have
\begin{align*}
\s\nabla\cdot\zeta&-\s\Delta\log\tr\ubar\chi=\nabla\cdot t-\Delta\log\tr K\\
&-2(\nabla\cdot\hat{K})(\nabla\omega)-H^{\omega}\cdot\hat{K}-2\hat{K}(\nabla\omega,\vec{t}-\nabla\log\tr K)-\tr Kt(\nabla\omega)+\nabla\omega\tr K+\frac32|\hat{K}|^2|\nabla\omega|^2\\
&+G_{\ubar L}(\nabla\omega)+\ubar\alpha(\nabla\omega,\nabla\omega)\\
&-\Big(\frac12\tr K\Delta\omega + \Delta\omega \ubar L\log\tr K\Big)+\Big(\frac14(\tr K)^2-\ubar L\ubar L\log\tr K\Big)|\nabla\omega|^2-\Big(\nabla\omega\tr K+2\nabla\omega \ubar L\log\tr K\Big)\\
&+2\hat{K}(\nabla\omega,\nabla\omega)\Big(\tr K+\ubar L\log\tr K\Big)\\
&+2\nabla\omega\kappa-3\kappa\hat{K}(\nabla\omega,\nabla\omega)+\kappa\Delta\omega+\ubar L\kappa|\nabla\omega|^2-\frac12\kappa\tr K|\nabla\omega|^2\\
=\nabla\cdot t&-\Delta\log\tr K\\
&-2(\nabla\cdot\hat{K})(\nabla\omega)-H^{\omega}\cdot\hat{K}-2\hat{K}(\nabla\omega,\vec{t}-\nabla\log\tr K)-\tr Kt(\nabla\omega)+\nabla\omega\tr K+|\hat{K}|^2|\nabla\omega|^2\\
&+G_{\ubar L}(\nabla\omega)+\hat{\ubar\alpha}(\nabla\omega,\nabla\omega)+\frac12\Big(|\hat{K}|^2+G(\ubar L,\ubar L)\Big)|\nabla\omega|^2\\
&+\Big(\Delta\omega-2\hat{K}(\nabla\omega,\nabla\omega)\Big)\frac{|\hat{K}|^2+G(\ubar L,\ubar L)}{\tr K}+\Big(-\frac12(|\hat{K}|^2+G(\ubar L,\ubar L)-\kappa\tr K)+\ubar L\frac{|\hat{K}|^2+G(\ubar L,\ubar L)}{\tr K}\Big)|\nabla\omega|^2\\
&+2\nabla\omega\frac{|\hat{K}|^2+G(\ubar L,\ubar L)}{\tr K}+\tr K\hat{K}(\nabla\omega,\nabla\omega)-\kappa\hat{K}(\nabla\omega,\nabla\omega)-\frac12\kappa\tr K|\nabla\omega|^2\\
=\nabla\cdot t&-\Delta\log\tr K\\
&-2(\nabla\cdot\hat{K})(\nabla\omega)-H^{\omega}\cdot\hat{K}-2\hat{K}(\nabla\omega,\vec{t}-\nabla\log\tr K)-\tr Kt(\nabla\omega)+\nabla\omega\tr K+|\hat{K}|^2|\nabla\omega|^2\\
&+G_{\ubar L}(\nabla\omega)+\hat{\ubar\alpha}(\nabla\omega,\nabla\omega)+\s\Delta\omega\frac{|\hat{K}|^2+G(\ubar L,\ubar L)}{\tr K}+\ubar L\frac{|\hat{K}|^2+G(\ubar L,\ubar L)}{\tr K}|\nabla\omega|^2\\
&+2\nabla\omega\frac{|\hat{K}|^2+G(\ubar L,\ubar L)}{\tr K}+\tr K\hat{K}(\nabla\omega,\nabla\omega)-\kappa\hat{K}(\nabla\omega,\nabla\omega)
\end{align*}
having used (8) to get the last two lines in the second equality, Lemma \ref{l14} to get $\Delta\omega-2\hat{K}(\nabla\omega,\nabla\omega) = \s\Delta\omega$ in the second equality followed by cancellation of the terms $\frac12\Big(|\hat{K}|^2+G(\ubar L,\ubar L)\Big)|\nabla\omega|^2$ and $\frac12\kappa\tr K|\nabla\omega|^2$. \\\\
\underline{Step 3} \textit{Comparison between $\s\rho$ and $\rho$}:\\\\
Denoting the Gauss curvature on $\Sigma_s$ by $\mathcal{C}$ and the mean curvature vector $\vec{h}$ we have from the Gauss equation (3) 
\begin{align*}
\mathcal{K}&-\frac14\langle\vec{H},\vec{H}\rangle+\frac12\hat{\ubar\chi}\cdot\hat{\chi}=-\frac12R-G(\ubar L,L)-\frac14\langle R_{\ubar L L}\ubar L,L\rangle\\
&=-\frac12R-G(\ubar L,l-|\nabla\omega|^2\ubar L-2\nabla\omega)-\frac14\langle R_{\ubar L\,l-|\nabla\omega|^2\ubar L-2\nabla\omega}\ubar L,l-|\nabla\omega|^2\ubar L-2\nabla\omega\rangle\\
&=\mathcal{C}-\frac14\langle\vec{h},\vec{h}\rangle+\frac12\hat{K}\cdot\hat{Q}+|\nabla\omega|^2G(\ubar L,\ubar L)+2G(\ubar L,\nabla\omega)-\langle R_{\ubar L\,\nabla\omega}l,\ubar L\rangle-\langle R_{\ubar L\,\nabla\omega}\ubar L,\nabla\omega\rangle\\
&=\mathcal{C}-\frac14\langle\vec{h},\vec{h}\rangle+\frac12\hat{K}\cdot\hat{Q}+\frac12|\nabla\omega|^2G(\ubar L,\ubar L)+\Big(2G(\ubar L,\nabla\omega)-\langle R_{\ubar L\,\nabla\omega}l,\ubar L\rangle\Big)-\hat{\ubar\alpha}(\nabla\omega,\nabla\omega)
\end{align*}
from this we conclude
\begin{align*}
\Big(\mathcal{K}&-\frac14\langle\vec{H},\vec{H}\rangle+\s\nabla\cdot\zeta-\s\Delta\log \tr\ubar\chi\Big)-\Big(\mathcal{C}-\frac14\langle\vec{h},\vec{h}\rangle+\nabla\cdot t-\Delta\log\tr K\Big)\\
&=\frac12\Big(\hat{K}\cdot\hat{Q}-\hat{\ubar\chi}\cdot\hat{\chi}\Big)\\
&\quad+\frac12|\nabla\omega|^2G(\ubar L,\ubar L)+\Big(2G(\ubar L,\nabla\omega)-\langle R_{\ubar L\,\nabla\omega}l,\ubar L\rangle\Big)-\hat{\ubar\alpha}(\nabla\omega,\nabla\omega)\\
&\quad-2(\nabla\cdot\hat{K})(\nabla\omega)-H^{\omega}\cdot\hat{K}-2\hat{K}(\nabla\omega,\vec{t}-\nabla\log\tr K)-\tr Kt(\nabla\omega)+\nabla\omega\tr K+|\hat{K}|^2|\nabla\omega|^2\\
&\quad+G_{\ubar L}(\nabla\omega)+\hat{\ubar\alpha}(\nabla\omega,\nabla\omega)\\
&\quad+\s\Delta\omega\frac{|\hat{K}|^2+G(\ubar L,\ubar L)}{\tr K}+\ubar L\frac{|\hat{K}|^2+G(\ubar L,\ubar L)}{\tr K}|\nabla\omega|^2+2\nabla\omega\frac{|\hat{K}|^2+G(\ubar L,\ubar L)}{\tr K}+\tr K\hat{K}(\nabla\omega,\nabla\omega)\\
&\quad-\kappa \hat{K}(\nabla\omega,\nabla\omega)
\end{align*}
Isolating the first two terms and using Lemma \ref{l12} we get
\begin{align*}
\hat{K}\cdot\hat{Q}&-\hat{\ubar\chi}\cdot\hat{\chi}\\
&=\hat{K}\cdot\hat{Q}-\Big(\hat{K}\cdot\hat{Q}-|\nabla\omega|^2|\hat{K}|^2-4\hat{K}(\nabla\omega,\vec{t})+2|\hat{K}|^2|\nabla\omega|^2\\
&\,\,\quad\indent\indent\indent\indent\indent\indent\indent\indent+2\tr K\hat{K}(\nabla\omega,\nabla\omega)-2\hat{K}\cdot H^{\omega}-2\kappa\hat{K}(\nabla\omega,\nabla\omega)\Big)\\
&=-|\hat{K}|^2|\nabla\omega|^2-2\tr K\hat{K}(\nabla\omega,\nabla\omega)+2\hat{K}\cdot H^{\omega}+4\hat{K}(\nabla\omega,\vec{t})+2\kappa \hat{K}(\nabla\omega,\nabla\omega)
\end{align*}
and finally we have
\begin{align*}
\Big(\mathcal{K}&-\frac14\langle\vec{H},\vec{H}\rangle+\s\nabla\cdot\zeta-\s\Delta\log \tr\ubar\chi\Big)-\Big(\mathcal{C}-\frac14\langle\vec{h},\vec{h}\rangle+\nabla\cdot t-\Delta\log\tr K\Big)\\
&=\frac12|\nabla\omega|^2\Big(|\hat{K}|^2+G(\ubar L,\ubar L)\Big)+G_{\ubar L}(\nabla\omega)-2\hat{K}(\nabla\omega,\vec{t}-\nabla\log\tr K)\\
&\quad+\Big(2G_{\ubar L}(\nabla\omega)-\langle R_{\ubar L\,\nabla\omega}l,\ubar L\rangle-2(\nabla\cdot\hat{K})(\nabla\omega)+2\hat{K}(\nabla\omega,\vec{t})-\tr Kt(\nabla\omega)+\nabla\omega\tr K\Big)\\
&\quad+\s\Delta\omega\frac{|\hat{K}|^2+G(\ubar L,\ubar L)}{\tr K}+\ubar L\frac{|\hat{K}|^2+G(\ubar L,\ubar L)}{\tr K}|\nabla\omega|^2+2\nabla\omega\frac{|\hat{K}|^2+G(\ubar L,\ubar L)}{\tr K}.
\end{align*}
Amazingly the third line vanishes by the Codazzi equation (4) as well as all terms with a factor $\kappa$ giving
\begin{align*}
\s\rho-\rho&=\frac12\Big(|\hat{K}|^2+G(\ubar L,\ubar L)\Big)|\nabla\omega|^2+G_{\ubar L}(\nabla\omega)-2\hat{K}(\nabla\omega,\vec{t}-\nabla\log\tr K)\\
&\quad+\s\Delta\omega\frac{|\hat{K}|^2+G(\ubar L,\ubar L)}{\tr K}+\ubar L\frac{|\hat{K}|^2+G(\ubar L,\ubar L)}{\tr K}|\nabla\omega|^2+2\nabla\omega\frac{|\hat{K}|^2+G(\ubar L,\ubar L)}{\tr K}
\end{align*}
and the result then follows from the fact that $\s\nabla\omega=\nabla\omega+|\nabla\omega|^2\ubar L$ as well as
 $$\s\nabla\cdot\Big(\frac{|\hat{K}|^2+G(\ubar L,\ubar L)}{\tr K}\s\nabla\omega\Big) = \s\Delta\omega\frac{|\hat{K}|^2+G(\ubar L,\ubar L)}{\tr K}+\s\nabla\omega\frac{|\hat{K}|^2+G(\ubar L,\ubar L)}{\tr K}.$$
\end{proof}
\subsection{Asymptotic flatness}
In this section we wish to study the limiting behaviour of our mass functional in the setting of asymptotic flatness constructed by Mars and Soria \cite{MS1}. Beyond the assumption that we have a cross section $\Sigma_{s_0}$ of $\Omega$ we also assume for some (hence any) choice of past-directed geodesic null generator $\ubar L$ (i.e. $D_{\ubar L}\ubar L=0$) that $S_+=\infty$. So all geodesics $\gamma_q^{\ubar L}$ are `past complete' with domain $(s_-(q),\infty)$. We now take $s_0=0$ ignoring all points $p$ satisfying $s(p)\leq S_-$ and conclude that $\Omega\cong \mathbb{S}^2\times(S_-,\infty)$. Although the value of $S_-$ will depend on our choice of geodesic generator $\ubar L$ our interest lies only on the past of $\Sigma_{0}$ (i.e. $\mathbb{S}^2\times (0,\infty)$) so we ignore this subtlety. A null hypersurface $\Omega$ with all the above properties is called \textit{extending to past null infinity}.\\
In order to impose decay conditions of various \textit{transversal tensors} (i.e. tensors satisfying $T(\ubar L,\cdots) =\cdots= T(\cdots,\ubar L)=0$) we choose a local basis on $\Sigma_{0}$ and extend it to a basis field $\{X_i\}\subset E(\Sigma_0)$. Given a transversal k-tensor $T(s)$ we say,
\begin{itemize}
\item $T=O(1)$ iff $T_{i_1...i_k}:=T(X_{i_1},...,X_{i_k})$ is uniformly bounded and $T=O_n(s^{-m})$ iff $$s^{m+j}(\mathcal{L}_{\ubar L})^jT(s)=O(1)\quad(0\leq j\leq n)$$
\item $T=o(s^{-m})$ iff $\displaystyle{\lim_{s\to\infty}s^mT(s)_{i_1...i_k}=0}$ and $T=o_n(s^{-m})$ iff $$s^{m+j}(\mathcal{L}_{\ubar L})^jT(s)=o(1)\quad(0\leq j\leq n)$$
\item $T=o_n^X(s^{-m})$ iff
$$s^m\mathcal{L}_{X_{i_1}}\cdots\mathcal{L}_{X_{i_j}} T(s) = o(1)\quad(0\leq j\leq n).$$
Now we're ready to define asymptotic flatness for $\Omega$ as given by the authors of \cite{MS1}:
\end{itemize}
\begin{definition}\label{d7}
We say $\Omega$ is \underline{past asymptotically flat} if it extends to past null infinity and there exists a choice of cross section $\Sigma_0$ and null geodesic generator $\ubar L$ with corresponding level set function $s$ satisfying the following:\\
\begin{enumerate}
\item There exists two symmetric 2-covariant transversal and $\ubar L$ Lie constant tensor fields $\mathring\gamma$ and $\gamma_1$ such that
$$\tilde\gamma:=\gamma-s^2\mathring\gamma-s\gamma_1 = o_1(s)\cap o_2^X(s)$$
\item There exists a tansversal and $\ubar L$ Lie constant one-form $t_1$ such that
$$\tilde t :=t-\frac{t_1}{s}=o_1(s^{-1})$$
\item There exist $\ubar L$ Lie constant functions $\theta_0$ and $\theta$ such that
$$\tilde\theta:=\tr Q-\frac{\theta_0}{s}-\frac{\theta}{s^2} = o(s^{-2})$$
\item The scalar $\langle R_{X_{i_1} X_{i_2}}X_{i_3},X_{i_4}\rangle$ along $\Omega$ is such that $\displaystyle{\lim_{s\to\infty}}\frac{1}{s^2}\langle  R_{X_{i_1} X_{i_2}}X_{i_3},X_{i_4}\rangle$ exists while its double trace satisfies $-\frac12 R-G(\ubar L,l)-\frac14\langle R_{\ubar L l}\ubar L,l\rangle=o(s^{-2}).$
\end{enumerate}
\end{definition}
We will have the need to supplement the notion of asymptotic flatness of $\Omega$ with a stronger version of the \textit{energy flux decay condition} ($G_{\ubar L}=o(s^{-2}),\,\,\mathcal{L}_{\ubar L}\tilde \gamma=o_1^X(1)$ as given in \cite{MS1}) with the following:
\begin{definition}\label{d8}
Suppose $\Omega$ is past asymptotically flat. We say $\Omega$ has \underline{strong flux decay} if
$$G_{\ubar L}=o(s^{-2}),\,\,\tilde t=o_1^X(s^{-1})\,\,\text{and}\,\,\mathcal{L}_{\ubar L}^j\tilde\gamma=o_{3-j}^X(s^{1-j})\,\,\text{for}\,\,1\leq j\leq 3$$
and \underline{strong decay} if the condition on $G_{\ubar L}$ is dropped.
\end{definition}
We will also need some results from \cite{MS1} (Proposition 3, Lemma 2, Section 4) resulting directly from the asymptotically flat restriction on $\Omega$. One particularly valuable consequence is the ability to choose our geodesic generator $\ubar L$ to give any conformal change on the `metric at null infinity', which turns out to be given by the 2-tensor, $\mathring\gamma$. By the Uniformization Theorem we conclude that this covers all possible metrics on a Riemannian 2-sphere. We will denote the covariant derivative coming from $\mathring\gamma$ by $\mathring\nabla$.
\begin{proposition}\label{p6}
Suppose $\Omega$ is past asymptotically flat with a choice of affinely parametrized null generator $\ubar L$ and corresponding level set function $s$. Letting $\gamma(s)^{ij}$ denote the inverse of $\gamma(s)_{ij}$, 
\begin{align}
\gamma(s)^{ij}&=\frac{1}{s^2}\mathring\gamma^{ij}-\frac{1}{s^3}\mathring{\gamma_1}^{ij}+o(s^{-3})\\
K_{ij}&=s{\mathring\gamma}_{ij}+\frac12{\gamma_1}_{ij}+o(1)\\
\mathcal{K}_{\gamma(s)}&=\frac{\mathring{\mathcal{K}}}{s^2}+o(s^{-2})\\
\tr Q&=\frac{2\mathring{\mathcal{K}}}{s}+\frac{\theta}{s^2}+o(s^{-2})\\
\tr K&=\frac{2}{s}+\frac{\ubar\theta}{s^2}+o(s^{-2})
\end{align}
where $\mathring\gamma^{ij}$ is the inverse of ${\mathring\gamma}_{ij}$, tensors with $\mathring{\text{ring}}$ highlight the fact that indices have been raised with $\mathring\gamma$ and $\ubar\theta = -\frac12\mathring{\tr}\gamma_1$.  \\
It follows in case $\mathcal{L}_{\ubar L}\tilde\gamma = o_1^X(1)$ that
$$t_1 = \frac12\mathring\nabla\cdot \gamma_1+\s{d}\ubar\theta\iff G_{\ubar L} = o(s^{-2})$$
\end{proposition}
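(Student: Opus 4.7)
The plan is to unpack the asymptotic expansions in Definition~\ref{d7} by feeding them through the structure equations of Proposition~\ref{p5} and the Gauss equation (3), matching orders in powers of $1/s$.

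First, I would write $\gamma(s)=s^2\mathring\gamma(I+s^{-1}\mathring\gamma^{-1}\gamma_1+s^{-2}\mathring\gamma^{-1}\tilde\gamma)$ and invert by Neumann series, using $\tilde\gamma=o_1(s)$, to obtain (12). The structure equation $\mathcal{L}_{\ubar L}\gamma=2K$ then gives (13) directly, since $\mathcal{L}_{\ubar L}\mathring\gamma=\mathcal{L}_{\ubar L}\gamma_1=0$ and $\mathcal{L}_{\ubar L}\tilde\gamma=o(1)$. Contracting (13) with (12), the leading cross terms are $s^{-1}\mathring\gamma^{ij}\mathring\gamma_{ij}=2/s$, and at the next order $\tfrac12 s^{-2}\mathring{\tr}\gamma_1-s^{-2}\mathring{\tr}\gamma_1=-\tfrac12 s^{-2}\mathring{\tr}\gamma_1=\ubar\theta/s^2$, yielding (16). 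For (14), a direct computation of the Gaussian curvature from the metric expansion, using the $o_2^X(s)$ control on $\tilde\gamma$ to absorb two spatial derivatives, gives the conformal scaling $\mathcal{K}_{\gamma(s)}=\mathring{\mathcal{K}}/s^2+o(s^{-2})$.

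For (15), I would plug (14) and (16) into the Gauss equation (3) at $n=3$,
$$2\mathcal{K}-\tfrac12\tr K\,\tr Q+\hat K\cdot\hat Q = -R-2G(\ubar L,l)-\tfrac12\langle R_{\ubar L\,l}\ubar L,l\rangle,$$
whose right-hand side is $o(s^{-2})$ by condition 4 of Definition~\ref{d7}. After verifying $\hat K\cdot\hat Q=o(s^{-2})$ (from $\hat K=O(1)$ together with a matching decay bound on $\hat Q$ obtained via the background Codazzi relation (4)), matching the $s^{-2}$-coefficients forces $\theta_0=2\mathring{\mathcal{K}}$, giving (15).

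For the iff claim I would invoke the structure equation (11) on the background, where $\kappa=0$, $\ubar\chi=K$, $\zeta=t$:
$$\mathcal{L}_{\ubar L}t=G_{\ubar L}-\nabla\cdot\hat K-\tr K\,t+\tfrac12\s{d}\tr K.$$
From (13) and (16) one has $\hat K=-\tfrac12(\gamma_1+\ubar\theta\mathring\gamma)+o(1)$, and the strengthened decay $\mathcal{L}_{\ubar L}\tilde\gamma=o_1^X(1)$ upgrades this expansion to one with an $o_1^X(1)$-controlled remainder. Since the spatial Christoffel symbols of $\gamma(s)$ coincide with those of $\mathring\gamma$ at leading order (as $s^2$ is constant along each $\Sigma_s$), one computes $\nabla\cdot\hat K=-\tfrac{1}{2s^2}(\mathring\nabla\cdot\gamma_1+\s{d}\ubar\theta)+o(s^{-2})$. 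Expanding $t=t_1/s+\tilde t$ with $\tilde t=o_1(s^{-1})$ and matching the $s^{-2}$-coefficients yields
$$\frac{t_1}{s^2}=G_{\ubar L}+\frac{1}{s^2}\Big(\tfrac12\mathring\nabla\cdot\gamma_1+\s{d}\ubar\theta\Big)+o(s^{-2}),$$
from which each of $G_{\ubar L}=o(s^{-2})$ and $t_1=\tfrac12\mathring\nabla\cdot\gamma_1+\s{d}\ubar\theta$ implies the other. The main obstacle I anticipate is justifying $\hat K\cdot\hat Q=o(s^{-2})$ in (15), since Definition~\ref{d7} only controls $\tr Q$; this will likely require an auxiliary decay estimate on $\hat Q$ from the Codazzi equation (4) or the full pointwise curvature control in item 4 of the definition, with the remaining computations reducing to systematic bookkeeping of decay orders.
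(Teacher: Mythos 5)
The paper does not actually prove this proposition; it defers entirely to \cite{MS1} (Proposition 3), so you are supplying an argument where the paper supplies only a citation. Most of what you write is correct and essentially forced: the Neumann-series inversion for (12), $K=\frac12\mathcal{L}_{\ubar L}\gamma$ for (13), the contraction giving (16) with $\ubar\theta=-\frac12\mathring{\tr}\gamma_1$, and (14) by direct computation (which is exactly what the paper itself does, one order higher, in Proposition \ref{p7}). Your route to the final equivalence via the torsion propagation equation (11) with $\kappa=0$ is also sound: $\hat K=-\frac12(\gamma_1+\ubar\theta\mathring\gamma)+o(1)$, Lemma \ref{l16} gives $\nabla\cdot\hat K=-\frac{1}{2s^2}(\mathring\nabla\cdot\gamma_1+\s{d}\ubar\theta)+o(s^{-2})$, and matching the $s^{-2}$ coefficients yields the stated biconditional; you correctly identify $\mathcal{L}_{\ubar L}\tilde\gamma=o_1^X(1)$ as the hypothesis that licenses the tangential derivatives of $\tr K$ and $\hat K$ at that order.

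The genuine gap is the one you flag at (15), and neither of your proposed repairs closes it. The doubly traced Gauss equation gives the single scalar relation $\frac{2\mathring{\mathcal{K}}-\theta_0}{s^2}+\hat K\cdot\hat Q=o(s^{-2})$, which at order $s^{-2}$ contains two unknowns, $\theta_0$ and $\lim s^2\,\hat K\cdot\hat Q$; since $\hat K\to-\frac12(\gamma_1+\ubar\theta\mathring\gamma)\neq0$ in general, you would need $\hat Q=o(s^2)$ in components (at least in the $\hat K$-direction), and this is not among the hypotheses. The Codazzi equation (4) cannot supply it: it relates $\nabla\cdot\hat K$ to $G(\cdot,\ubar L)$ and contains no information about $\hat Q$ at all. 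Nor can item 4 of Definition \ref{d7}: on a two-dimensional $\Sigma_s$ the purely tangential Riemann tensor has a single independent component, so the ``limit exists'' clause and the ``double trace is $o(s^{-2})$'' clause collapse to the one scalar statement you have already used, and the untraced Gauss equation is not an independent equation from which $\hat Q$ could be isolated. Likewise the propagation equation (9) for $Q$ sources $\hat Q$ by $\hat G$, which is also uncontrolled here. To complete (15) you must import the missing control of $\hat Q$ from the actual argument in \cite{MS1}; as written, your proof of (15) assumes exactly the estimate it needs to establish.
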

\begin{proof}
We refer the reader to \cite{MS1} (Proposition 3) for proof.
\end{proof}
As promised in Remark \ref{r1} we are now able to prove the following well known result:
\begin{lemma}\label{l15}
Suppose $\Omega$ extends to past null infinity with null geodesic generator $\ubar L$. Then any cross section $\Sigma\hookrightarrow\Omega$ satisfies $\tr K\geq 0$. If $\Omega$ is past asymptotically flat then $\Sigma$ is expanding along $\ubar L$.
\end{lemma}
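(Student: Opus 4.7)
The central tool is the Raychaudhuri equation (8) specialised to the geodesic generator $\ubar L$ (so $\kappa=0$), which gives along each null generator $\gamma_q^{\ubar L}(s)$
$$\ubar L(\tr K) = -\tfrac12(\tr K)^2 - |\hat K|^2 - G(\ubar L,\ubar L).$$
Under the null energy condition (tacitly in force throughout this section; cf.\ Remark \ref{r1}) we obtain the scalar Riccati inequality $f'(s)\leq -\tfrac12 f(s)^2$ for $f(s):=(\tr K)|_{\gamma_q^{\ubar L}(s)}$. For an arbitrary cross section $\Sigma$, the first two identities of Lemma \ref{l12} identify $\tr\ubar\chi$ on $\Sigma$ at a point $q$ with $\tr K$ of the background foliation at the same point, so it suffices to prove $f\geq 0$ (respectively $f>0$) pointwise on $\Omega$.

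For the first assertion I would argue by contradiction. Suppose $f(s_1)<0$ along some generator. Then $g(s):=-1/f(s)$ is positive at $s_1$, and the Riccati inequality translates into $g'\leq -\tfrac12$, whence $g(s)\leq g(s_1)-\tfrac12(s-s_1)$ for $s\geq s_1$. This forces $g$ to vanish at some finite $s^*\leq s_1+2g(s_1)$, and correspondingly $f(s)\to -\infty$ as $s\nearrow s^*$. Since $\Omega$ extends to past null infinity, the generator is smooth on the interval $(s_-(q),\infty)\ni s^*$, contradicting the smoothness of $\tr K$ on $\Omega$. Hence $f\geq 0$ everywhere.

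For the second assertion, past asymptotic flatness furnishes via Proposition \ref{p6} the expansion $\tr K = 2/s + \ubar\theta/s^2 + o(s^{-2})$, which is strictly positive for all sufficiently large $s$. If along some generator $f(s_1)=0$, then since $f'\leq -\tfrac12 f^2\leq 0$, the function $f$ is non-increasing on $[s_1,\infty)$. Either $f$ becomes strictly negative at some $s_2>s_1$ — in which case the Riccati blowup argument of the previous paragraph (applied with initial value $s_2$) again contradicts past completeness — or else $f\equiv 0$ on $[s_1,\infty)$, which contradicts the strictly positive leading asymptotic $2/s$. Consequently $f>0$ pointwise, i.e.\ every cross section $\Sigma$ is expanding along $\ubar L$.

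The argument is essentially automatic once the Raychaudhuri ODE is rewritten as a scalar Riccati inequality; the only real substance lies in invoking Lemma \ref{l12} to transfer the conclusion from the background foliation to a general cross section, and in using the asymptotic expansion of Proposition \ref{p6} to exclude the constant-zero alternative. No delicate estimate is required, and I do not expect any serious obstacle.
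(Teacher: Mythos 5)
Your proof is correct and follows essentially the same route as the paper's: the Raychaudhuri equation (8) is turned into a scalar Riccati inequality whose finite-affine-parameter blowup contradicts past completeness (the paper tracks $1/\tr K$ rather than your $-1/\tr K$), and Proposition \ref{p6} together with the monotonicity $\ubar L(\tr K)\leq 0$ upgrades $\tr K\geq 0$ to $\tr K>0$. The only cosmetic difference is that you transfer the conclusion to an arbitrary cross section via the pointwise identification of Lemma \ref{l12}, whereas the paper reparametrizes to the geodesic foliation $s=\omega\lambda$ through $\Sigma$; both are valid.
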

\begin{proof}
For $\omega\in\mathcal{F}(\Omega)$ constructed by Lie dragging $s|_{\Sigma}$ along $\ubar L$ we have $\Sigma = \Sigma_1$ for the geodesic foliation $\{\Sigma_\lambda\}$ given by $s = \omega\lambda$. So it suffices to prove the result along an arbitrary geodesic foliation for $\Omega$. From (8) we have, whenever $\tr K(s_0)< 0$ for some $s_0$, that
$$\ubar L\Big(\frac{1}{\tr K}\Big)=\frac12+|\hat{\chi}^-|^2+G(L^-,L^-)\geq \frac12$$
wherever it be defined as well as
$$\frac{1}{\tr K}(s) \geq \frac{1}{\tr K}(s_0)+ \frac{s-s_0}{2}$$
for any such $s\geq s_0$. So we can find an $s_1>s_0$ such that $\tr K(s)\xrightarrow{s\to s_1^-} -\infty$. Since this contradicts smoothness we must have that $\tr K\geq 0$ on all of $\Omega$. If $\Omega$ is past asymptotically flat it follows from Proposition \ref{p6} that $\tr K(s)>0$ for sufficiently large $s$. Since (8) gives
$$\ubar L(\tr K) = -\frac12(\tr K)^2-|\hat{K}|^2-G(\ubar L,\ubar L)\leq 0$$
we have $\tr K(s_0)\geq \tr K(s_1)$ for all $s_0\leq s_1$. So we must have that $\tr K>0$ on all of $\Omega$.
\end{proof}
\begin{lemma}\label{l16}
On each $\Sigma_s$ the difference tensor 
$$\mathcal{D}(V,W):=\nabla_VW-\mathring\nabla_VW$$
admits the decomposition
$$\mathcal{D}_{ij}^k = \frac12(\mathring{\nabla}_i\mathring{\gamma_1}^k\,_j+\mathring\nabla_j{\gamma_1}^k\,_i-\mathring{\nabla}^k{\gamma_1}_{ij})\frac{1}{s}+O(s^{-2}).$$
Moreover, if $f\in\mathcal{F}(\Omega)$ is Lie constant along $\ubar L$ then
$$\Delta f = \frac{1}{s^2}\mathring\Delta f+(-\mathring\gamma_1^{ij}\mathring\nabla_i\mathring\nabla_jf-(\mathring\nabla_i\mathring\gamma_1^{ij})f,_j+(\mathring\nabla^i\ubar\theta)f,_i)\frac{1}{s^3}+o(s^{-3}).$$
\end{lemma}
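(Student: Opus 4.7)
The plan is to treat this as a two-step bookkeeping exercise: first derive a Koszul-type formula for the difference tensor $\mathcal{D}$ directly from the metrics involved, then expand everything using the asymptotics of Proposition \ref{p6} together with the condition $\tilde\gamma=o_1(s)\cap o_2^X(s)$ from Definition \ref{d7}. Throughout I would work in local coordinates on $\Sigma_0$ pulled back via $\pi$, so that the frame $\{X_i\}$ becomes coordinate vector fields and hence $[X_i,X_j]=0$. Then on each $\Sigma_s$, $s$ is a constant and $\mathring\nabla\mathring\gamma=0$, which is the only two facts that really move the computation.

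First I would recall (or re-derive in a single sentence from $\nabla\gamma=0$ and the symmetry of $\mathcal{D}$) the standard tensor identity
\[
\mathcal{D}_{ij}^k=\tfrac12\gamma^{kl}\bigl(\mathring\nabla_i\gamma_{jl}+\mathring\nabla_j\gamma_{il}-\mathring\nabla_l\gamma_{ij}\bigr),
\]
valid for any two torsion-free connections compatible with their respective metrics. Substituting the decomposition $\gamma_{ij}=s^2\mathring\gamma_{ij}+s\gamma_{1,ij}+\tilde\gamma_{ij}$, the $s^2\mathring\gamma$ piece is killed by $\mathring\nabla$ (since $\mathring\nabla\mathring\gamma=0$ and $s$ is constant on $\Sigma_s$), leaving $\mathring\nabla_i\gamma_{jl}=s\,\mathring\nabla_i\gamma_{1,jl}+\mathring\nabla_i\tilde\gamma_{jl}$. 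The conditions $\tilde\gamma\in o_1(s)\cap o_2^X(s)$, together with the fact that $\mathring\Gamma$ is $s$-independent, ensure that the $\tilde\gamma$-contribution sits in the advertised error. Coupling this with $\gamma^{kl}=\tfrac{1}{s^2}\mathring\gamma^{kl}+O(s^{-3})$ from (17) and raising an index with $\mathring\gamma$ then produces exactly the claimed $\frac{1}{2s}(\mathring\nabla_i\mathring\gamma_1{}^k{}_j+\mathring\nabla_j\mathring\gamma_1{}^k{}_i-\mathring\nabla^k\gamma_{1,ij})$ leading contribution.

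For the Laplacian, I would use that $f$ being Lie-constant along $\ubar L$ means $X_i(f)$ is likewise Lie-constant and purely tangential, so $\nabla_i\nabla_jf-\mathring\nabla_i\mathring\nabla_jf=-\mathcal{D}_{ij}^k f_{,k}$ as tensors on $\Sigma_s$. Hence
\[
\Delta f=\gamma^{ij}\mathring\nabla_i\mathring\nabla_jf-\gamma^{ij}\mathcal{D}_{ij}^k f_{,k}.
\]
The first term expands straightforwardly via (17) to give $\frac{1}{s^2}\mathring\Delta f-\frac{1}{s^3}\mathring\gamma_1^{ij}\mathring\nabla_i\mathring\nabla_j f+o(s^{-3})$. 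For the second, tracing the formula for $\mathcal{D}$ with $\frac{1}{s^2}\mathring\gamma^{ij}$ yields at order $s^{-3}$ the two pieces $\mathring\nabla_i\mathring\gamma_1^{ki}$ and $-\tfrac12\mathring\nabla^k\mathring{\rm tr}\gamma_1$, which I would then rewrite using $\ubar\theta=-\tfrac12\mathring{\rm tr}\gamma_1$ (from Proposition \ref{p6}) to produce the $\mathring\nabla^i\ubar\theta$ term. Summing the two contributions gives the stated expansion, and the $O(s^{-3})\cdot O(s^{-1})$ cross-term in $\gamma^{ij}\mathcal{D}_{ij}^k$ is absorbed into $o(s^{-3})$.

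The main obstacle is purely the error-term bookkeeping: one must verify that $\mathring\nabla_i\tilde\gamma_{jl}$ is small enough (using $o_2^X(s)$ to control $X$-derivatives of components while the Christoffel symbols of $\mathring\gamma$ contribute bounded factors) so that the $\tilde\gamma$-contribution, after multiplication by the $O(s^{-2})$ leading piece of $\gamma^{-1}$, is consistent with the stated $O(s^{-2})$ error on $\mathcal{D}$ and $o(s^{-3})$ error on $\Delta f$. Once this is done the rest is a mechanical collection of terms.
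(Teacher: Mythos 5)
Your approach coincides with the paper's: the first display is obtained there from exactly the identity $\langle\mathcal{D}(V,W),U\rangle=\frac12\bigl(\mathring\nabla_V\gamma(W,U)+\mathring\nabla_W\gamma(V,U)-\mathring\nabla_U\gamma(V,W)\bigr)$, the substitution $\gamma=s^2\mathring\gamma+s\gamma_1+\tilde\gamma$ (with $\mathring\nabla(s^2\mathring\gamma)=0$ on each leaf), and the inverse-metric expansion (12); for the Laplacian the paper merely cites Mars--Soria, and your route through $\Delta f=\gamma^{ij}\mathring\nabla_i\mathring\nabla_jf-\gamma^{ij}\mathcal{D}_{ij}^kf_{,k}$ is the natural way to supply that proof.

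However, your final step does not land where you say it does. Tracing your own formula gives $\gamma^{ij}\mathcal{D}_{ij}^k=\frac{1}{s^3}\bigl(\mathring\nabla_i\mathring{\gamma_1}^{ki}+\mathring\nabla^k\ubar\theta\bigr)+\dots$, and this enters $\Delta f$ with an overall minus sign, so the computation produces
\[
\Delta f=\frac{1}{s^2}\mathring\Delta f+\Bigl(-\mathring{\gamma_1}^{ij}\mathring\nabla_i\mathring\nabla_jf-(\mathring\nabla_i\mathring{\gamma_1}^{ij})f_{,j}-(\mathring\nabla^i\ubar\theta)f_{,i}\Bigr)\frac{1}{s^3}+o(s^{-3}),
\]
with $-(\mathring\nabla^i\ubar\theta)f_{,i}$ where the lemma prints $+$. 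You can confirm the minus sign independently from $\Delta f=(\det\gamma)^{-1/2}\partial_i\bigl((\det\gamma)^{1/2}\gamma^{ij}\partial_jf\bigr)$ together with $\sqrt{\det\gamma}=s^2\sqrt{\det\mathring\gamma}\,(1-\ubar\theta/s+\dots)$, or by testing on a conformal perturbation $\gamma_1=2c\,\mathring\gamma$, for which the two gradient terms must cancel. So do not assert that ``summing the two contributions gives the stated expansion''; state what the computation actually yields and note the discrepancy with the printed statement (the $s^{-3}$ coefficient is not used at that precision downstream, so nothing else hinges on it). Separately, the error bookkeeping you defer to the end genuinely cannot be closed as stated: $\tilde\gamma=o_2^X(s)$ only yields $\mathring\nabla\tilde\gamma=o(s)$ componentwise, hence a $\tilde\gamma$-contribution to $\mathcal{D}_{ij}^k$ of size $o(s^{-1})$, which is weaker than the advertised $O(s^{-2})$; you should either record the remainder as $o(s^{-1})$ or invoke the stronger decay actually in force where the lemma is applied.
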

\begin{proof}
The result follows from the well known fact (see, for example, \cite{W}) that 
\begin{align*}
\langle\mathcal{D}(V,W),U\rangle &= \frac12(\mathring\nabla_V\gamma(W,U)+\mathring\nabla_W\gamma(V,U)-\mathring\nabla_U\gamma(V,W))\\
&=\frac{s}{2}(\mathring\nabla_V\gamma_1(W,U)+\mathring\nabla_W\gamma_1(V,U)
-\mathring\nabla_U\gamma_1(V,W))\\
&\indent+\frac12(\mathring\nabla_V\tilde\gamma(W,U)+\mathring\nabla_W\tilde\gamma(V,U)
-\mathring\nabla_U\tilde\gamma(V,W)).
\end{align*}
The second is a simple consequence of the first, we refer the reader to \cite{MS1} (Lemma 2) for proof.
\end{proof}
In the next Proposition we show that the decomposition of the metric given in Definition \ref{d7} part 1 allows us to find $\mathcal{K}_{\gamma(s)}$ up to $O(s^{-4})$:
\begin{proposition}\label{p7}
For a decomposition of the metric $\gamma(s) = s^2\mathring\gamma+s\gamma_1+\tilde\gamma$ for some fixed $s$ we have:
\begin{align}
\mathcal{K}_{\gamma(s)}&=\frac{\mathring{\mathcal{K}}}{s^2}+\frac{1}{s^3}(\mathring{\mathcal{K}}\ubar\theta+\frac12\mathring\nabla\cdot\mathring\nabla\cdot\gamma_1+\mathring\Delta\ubar\theta)+O(s^{-4})
\end{align}
\end{proposition}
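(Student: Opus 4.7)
The plan is to reduce the statement to a linearization of the Gaussian curvature around the limit metric $\mathring\gamma$. In two dimensions the Gaussian curvature rescales as $\mathcal{K}_{c^{2}g} = c^{-2}\mathcal{K}_{g}$, so setting
$$\bar\gamma(s) := s^{-2}\gamma(s) = \mathring\gamma + h(s), \qquad h(s) := \frac{1}{s}\gamma_1 + \frac{1}{s^2}\tilde\gamma,$$
reduces the claim to an expansion of $\mathcal{K}_{\bar\gamma(s)}$ to order $O(s^{-2})$, after which the desired expansion for $\mathcal{K}_{\gamma(s)}$ follows by multiplying by $s^{-2}$. Because $h$ is $O(s^{-1})$ pointwise, this is a first-order perturbation problem, and essentially all the structure sits in the $\frac{1}{s}\gamma_1$ piece.

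Next, I would invoke the standard linearization of the scalar curvature: for $g = \mathring g + h$,
$$R_g = R_{\mathring g} + \mathring\nabla\cdot\mathring\nabla\cdot h - \mathring\Delta\,\mathrm{tr}_{\mathring g} h - \langle h,\mathring{\mathrm{Ric}}\rangle + Q[h],$$
where $Q[h]$ is quadratic (and higher) in $h$ and its first two covariant derivatives. Specialising to dimension two, using $\mathring{\mathrm{Ric}} = \mathring{\mathcal{K}}\mathring\gamma$ and $R = 2\mathcal{K}$, this becomes
$$\mathcal{K}_{\bar\gamma(s)} = \mathring{\mathcal{K}} + \tfrac{1}{2}\mathring\nabla\cdot\mathring\nabla\cdot h - \tfrac{1}{2}\mathring\Delta\,\mathrm{tr}_{\mathring\gamma} h - \tfrac{1}{2}\mathring{\mathcal{K}}\,\mathrm{tr}_{\mathring\gamma} h + \tfrac{1}{2}Q[h].$$
Plugging in only the $\frac{1}{s}\gamma_1$ part of $h$ and using $\mathrm{tr}_{\mathring\gamma}\gamma_1 = -2\ubar\theta$, which is exactly the definition of $\ubar\theta$ recorded in Proposition \ref{p6}, produces
$$\frac{1}{s}\Bigl(\mathring{\mathcal{K}}\ubar\theta + \tfrac{1}{2}\mathring\nabla\cdot\mathring\nabla\cdot\gamma_1 + \mathring\Delta\ubar\theta\Bigr),$$
which after the overall $s^{-2}$ factor is precisely the claimed $s^{-3}$ coefficient.

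What remains is to show that the $\frac{1}{s^2}\tilde\gamma$ piece of $h$ and the quadratic remainder $Q[h]$ together contribute $O(s^{-2})$ to $\mathcal{K}_{\bar\gamma(s)}$, hence $O(s^{-4})$ to $\mathcal{K}_{\gamma(s)}$. For the quadratic remainder this follows from $\|h\|_{C^2} = O(s^{-1})$ once derivatives are bounded, which I would do by passing to the $X_i$ frame, converting $\mathring\nabla$-derivatives into $X$-derivatives up to Christoffel corrections of size $O(s^{-1})$ controlled by Lemma \ref{l16}, and then invoking the boundedness of $\gamma_1$ together with the $o_{2}^{X}(s)$ bound on $\tilde\gamma$. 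The same manipulation applied to the linear contribution from $\frac{1}{s^2}\tilde\gamma$ yields an $o(s^{-3})$ contribution to $\mathcal{K}_{\gamma(s)}$, absorbed into the $O(s^{-4})$ remainder.

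The main obstacle I anticipate is precisely this last step: a careful bookkeeping argument showing that every cross term in $Q[h]$ and every piece of $\frac{1}{s^2}\tilde\gamma$ is indeed of the claimed size, using both components of the decay hypothesis on $\tilde\gamma$. A perhaps cleaner alternative, more directly in the spirit of Lemma \ref{l16}, is to substitute the expansion $\Gamma(\gamma(s)) = \mathring\Gamma + \mathcal{D}$ from that lemma into
$$R^{l}{}_{kij}(\gamma) = R^{l}{}_{kij}(\mathring\gamma) + \mathring\nabla_i \mathcal{D}^{l}_{jk} - \mathring\nabla_j \mathcal{D}^{l}_{ik} + \mathcal{D}^{l}_{im}\mathcal{D}^{m}_{jk} - \mathcal{D}^{l}_{jm}\mathcal{D}^{m}_{ik},$$
then trace twice using the expansion of $\gamma(s)^{ij}$ from Proposition \ref{p6} and isolate the $s^{-1}$ coefficient; the quadratic $\mathcal{D}\cdot\mathcal{D}$ terms contribute at order $s^{-2}$, and the leading piece of $\mathcal{D}$ comes entirely from $\gamma_1$, which reproduces the same formula.
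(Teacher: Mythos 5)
Your primary route is correct and is genuinely different in packaging from what the paper does. The paper proves (17) by expanding the intrinsic Riemann tensor of $\Sigma_s$ through the splitting $\nabla=\mathring\nabla+\mathcal{D}$ of Lemma \ref{l16}, working in a frame with $\mathring\nabla_{X_i}X_j=0$ at a point to isolate the $s^2$ and $s$ coefficients, and then tracing twice against the inverse-metric expansion of Proposition \ref{p6} --- i.e.\ essentially the ``cleaner alternative'' you sketch in your final paragraph. Your main argument instead factors out the constant $s^2$ (legitimate, since $s$ is fixed on $\Sigma_s$ and $\mathcal{K}_{\lambda g}=\lambda^{-1}\mathcal{K}_g$ for constant $\lambda>0$) and quotes the standard linearization $DR_{\mathring\gamma}[h]=\mathring\nabla\cdot\mathring\nabla\cdot h-\mathring\Delta\tr_{\mathring\gamma}h-\langle h,\mathring{\mathrm{Ric}}\rangle$, which in dimension two with $\mathring{\mathrm{Ric}}=\mathring{\mathcal{K}}\mathring\gamma$ and $\tr_{\mathring\gamma}\gamma_1=-2\ubar\theta$ reproduces the coefficient $\mathring{\mathcal{K}}\ubar\theta+\frac12\mathring\nabla\cdot\mathring\nabla\cdot\gamma_1+\mathring\Delta\ubar\theta$ exactly. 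What your route buys is that all of the frame and Christoffel bookkeeping is absorbed into one citable formula; what it costs is having to justify that formula and to check that the $\mathring\gamma$-covariant $C^2$ norms it requires are actually controlled by the $o^X$-type hypotheses of Definition \ref{d7} (which they are, since $\mathring\gamma$ and $\gamma_1$ are Lie constant and the conversion between $\mathcal{L}_{X_i}$ and $\mathring\nabla$ involves only the bounded Christoffels of $\mathring\gamma$). Your treatment of the quadratic remainder $Q[h]=O(s^{-2})$ is also sound.

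One step should be tightened: $o(s^{-3})$ is \emph{not} absorbed into $O(s^{-4})$ (e.g.\ $s^{-3}/\log s$ is $o(s^{-3})$ but not $O(s^{-4})$). Under the stated hypothesis $\tilde\gamma=o_2^X(s)$, the linear contribution of $s^{-2}\tilde\gamma$ involves two derivatives of $\tilde\gamma$, each only $o(s)$, so your argument genuinely yields a remainder $o(s^{-3})$ rather than $O(s^{-4})$. The paper's own proof shares this looseness --- its ``$+O(1)$'' remainder in the Riemann expansion also contains second derivatives of $\tilde\gamma$ --- and the discrepancy is harmless downstream since Proposition \ref{p8} and Theorem \ref{t7} only use the expansion through order $s^{-3}$; but you should state the remainder you actually prove, or note the extra decay on $\tilde\gamma$ needed to upgrade it to $O(s^{-4})$.
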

\begin{proof}
First we take the opportunity to show that $V,W\in E(\Sigma_0)$ gives $\mathring\nabla_VW\in E(\Sigma_0)$. Starting with the Koszul formula 
$$2\mathring\gamma(\mathring\nabla_VW,U) = V\mathring\gamma(W,U)+W\mathring\gamma(U,V)-U\mathring\gamma(V,W) - \mathring\gamma(V,[W,U])+\mathring\gamma(W,[U,V])+\mathring\gamma(U,[V,W])$$
and the fact that $\mathring\gamma$ is Lie constant along $\ubar L$ we conclude that $\ubar L\mathring\gamma(\mathring\nabla_VW,U) = \mathring\gamma([\ubar L,\mathring\nabla_VW],U)$ on the left, applying $\ubar L$ on the right we find everything vanishes since $V,W\in E(\Sigma_0)\implies[V,W]\in E(\Sigma_0)$. Therefore $\mathring\gamma([\ubar L,\mathring\nabla_VW],U) = 0$. Since $[\ubar L,\mathring\nabla_VW]\in \Gamma(T\Sigma_s)$ and $\mathring\gamma$ is positive definite it follows that $[\ubar L,\mathring\nabla_VW]=0$ and therefore $\mathring\nabla_VW\in E(\Sigma_0)$.
To show the decomposition of $\mathcal{K}_{\gamma(s)}$ we start by finding the decomposition of the Riemann curvature tensor on $\Sigma_s$:
\begin{align*}
\langle R^s_{X_iX_j}X_k,X_m\rangle &= \langle \nabla_{[X_i,X_j]}X_k,X_m\rangle - X_i\langle \nabla_{X_j}X_k,X_m\rangle+\langle\nabla_{X_j}X_k,\nabla_{X_i}X_m\rangle+X_j\langle \nabla_{X_i}X_k,X_m\rangle\\
&\indent-\langle\nabla_{X_i}X_k,\nabla_{X_j}X_m\rangle\\
&=\langle\mathring\nabla_{[X_i,X_j]}X_k,X_m\rangle - X_i\langle \mathring\nabla_{X_j}X_k,X_m\rangle+\langle\mathring\nabla_{X_j}X_k,\mathring\nabla_{X_i}X_m\rangle+X_j\langle\mathring\nabla_{X_i}X_k,X_m\rangle\\
&\indent-\langle\mathring\nabla_{X_i}X_k,\mathring\nabla_{X_j}X_m\rangle\\
&\indent+\langle\mathcal{D}([X_i,X_j],X_k),X_m\rangle-X_i\langle\mathcal{D}(X_j,X_k),X_m\rangle+\langle\mathcal{D}(X_j,X_k),\mathring\nabla_{X_i}X_m\rangle\\
&\indent+\langle\mathring\nabla_{X_j}X_k,\mathcal{D}(X_i,X_m)\rangle+X_j\langle\mathcal{D}(X_i,X_k),X_m\rangle-\langle\mathcal{D}(X_i,X_k),\mathring\nabla_{X_j}X_m\rangle\\
&\indent-\langle\mathring\nabla_{X_i}X_k,\mathcal{D}(X_j,X_m)\rangle\\
&\indent+\langle\mathcal{D}(X_j,X_k),\mathcal{D}(X_i,X_m)\rangle-\langle\mathcal{D}(X_i,X_k),\mathcal{D}(X_j,X_m)\rangle.
\end{align*}
Using the decomposition $\gamma_s = s^2\mathring\gamma+O(s)$ we recognize the leading order term, combining lines 3 and 4, is $s^2\mathring\gamma(\mathring{R}_{X_iX_j}X_k,X_m)$. In order to find the next to leading order term the fact that $\langle R^s_{X_iX_j}X_k,X_m\rangle-s^2\mathring\gamma(\mathring{R}_{X_iX_j}X_k,X_m)$ defines a 4-tensor on each $\Sigma_s$ allows us to search independently of our choice of basis $\{X_1,X_2\}$. In particular we may assume that $\mathring\nabla_{X_i}X_j=0$ at $q\in\Sigma_s$ (hence on all of $\gamma_q^{\ubar L}$, since $\mathring\nabla_{X_i}X_j\in E(\Sigma_0)$). So assuming restriction to the generator through $q$ and using Lemma \ref{l16} we have
\begin{align*}
\langle R^s_{X_iX_j}X_k,X_m\rangle&-s^2\mathring\gamma(\mathring{R}_{X_iX_j}X_k,X_m)\\
&=-sX_i\gamma_1(\mathring\nabla_{X_j}X_k,X_m)+sX_j\gamma_1(\mathring\nabla_{X_i}X_k,X_m)\\
&\indent-\frac{s}{2}X_i(\mathring\nabla_{X_j}\gamma_1(X_k,X_m)+\mathring\nabla_{X_k}\gamma_1(X_j,X_m)
-\mathring\nabla_{X_m}\gamma_1(X_j,X_k))\\
&\indent+\frac{s}{2}X_j(\mathring\nabla_{X_i}\gamma_1(X_k,X_m)+\mathring\nabla_{X_k}\gamma_1(X_i,X_m)
-\mathring\nabla_{X_m}\gamma_1(X_i,X_k))\\
&\indent+O(1)\\
&=-sX_i\gamma_1(\mathring\nabla_{X_j}X_k,X_m)+sX_j\gamma_1(\mathring\nabla_{X_i}X_k,X_m)\\
&\quad\frac{s}{2}\Big(\mathring\nabla_{X_j}\mathring\nabla_{X_i}\gamma_1(X_k,X_m)
+\mathring\nabla_{X_j}\mathring\nabla_{X_k}\gamma_1(X_i,X_m)
-\mathring\nabla_{X_j}\mathring\nabla_{X_m}\gamma_1(X_i,X_k)\\
&\quad-\mathring\nabla_{X_i}\mathring\nabla_{X_j}\gamma_1(X_k,X_m)
-\mathring\nabla_{X_i}\mathring\nabla_{X_k}\gamma_1(X_j,X_m)
+\mathring\nabla_{X_i}\mathring\nabla_{X_m}\gamma_1(X_j,X_k)\Big)\\
&\indent+O(1).
\end{align*}
It remains to simplify the two terms of the first line in the second equality. Since
$$X_i\gamma_1(\mathring\nabla_{X_j}X_k,X_m)=
\mathring\nabla_{X_i}\gamma_1(\mathring\nabla_{X_j}X_k,X_m)+
\gamma_1(\mathring\nabla_{X_i}\mathring\nabla_{X_j}X_k,X_m)$$
we conclude that
$$-X_i\gamma_1(\mathring\nabla_{X_j}X_k,X_m)+X_j\gamma_1(\mathring\nabla_{X_i}X_k,X_m)=
\gamma_1(\mathring{R}_{X_iX_j}X_k,X_m).$$
Moreover, it is easily shown using our choice of basis extension that
\begin{align*}
\frac12\mathring\nabla_{X_j}\mathring\nabla_{X_i}\gamma_1(X_k,X_m)&-
\frac12\mathring\nabla_{X_i}\mathring\nabla_{X_j}\gamma_1(X_k,X_m)+
\gamma_1(\mathring{R}_{X_iX_j}X_k,X_m)\\
&=
\frac12(\gamma_1(\mathring{R}_{X_iX_j}X_k,X_m)-\gamma_1(\mathring{R}_{X_iX_j}X_m,X_k)).
\end{align*}
So we finally have from the fact that $\Sigma_s$ is of dimension 2 that
\begin{align*}
\langle R^s_{X_iX_j}X_k,X_m\rangle &= s^2\mathring{\mathcal{K}}(\mathring\gamma_{ik}\mathring\gamma_{jm}-\mathring\gamma_{im}\mathring\gamma_{jk})
+\frac{s}{2}\mathring{\mathcal{K}}(\mathring{\gamma}_{ik}{\gamma_1}_{jm}-\mathring{\gamma}_{im}{\gamma_1}_{jk}+\mathring{\gamma}_{jm}{\gamma_1}_{ik}-\mathring{\gamma}_{jk}{\gamma_1}_{im})\\
&\indent+\frac{s}{2}(\mathring\nabla_j\mathring\nabla_k{\gamma_1}_{im}
-\mathring\nabla_j\mathring\nabla_m{\gamma_1}_{ik}-\mathring\nabla_i\mathring\nabla_k{\gamma_1}_{jm}
+\mathring\nabla_i\mathring\nabla_m{\gamma_1}_{jk})+O(1).
\end{align*}
Using (12) to take a trace over $i,k$:
\begin{align*}
(Ric^s)_{jm}&= \mathring{\mathcal{K}}\mathring{\gamma}_{jm}-\frac{1}{s}\mathring{\mathcal{K}}\ubar\theta\mathring\gamma_{jm}+\frac{1}{2s}(\mathring\nabla_j(\mathring\nabla\cdot\gamma_1)_m+2\mathring\nabla_j\mathring\nabla_m\ubar\theta
-(\mathring\nabla^2\gamma_1)_{jm}+(\mathring\nabla\cdot(\mathring\nabla\gamma_1))_{mj})\\
&\indent+\frac{\mathring{\mathcal{K}}}{s}(2\ubar\theta\mathring\gamma_{jm}+{\gamma_1}_{jm})+O(s^{-2})\\
&=\mathring{\mathcal{K}}\mathring\gamma_{jm}+\frac{1}{s}\Big(\mathring{\mathcal{K}}\ubar\theta\mathring\gamma_{jm}+\mathring{\mathcal{K}}{\gamma_1}_{jm}+\frac12\mathring\nabla_j(\mathring\nabla\cdot\gamma_1)_m
+\frac12(\mathring\nabla\cdot(\mathring\nabla\gamma_1))_{mj}+\mathring\nabla_j\mathring\nabla_m\ubar\theta
-\frac12(\mathring\nabla^2\gamma_1)_{jm}\Big)\\
&\indent+O(s^{-4})
\end{align*}
and then over $j,m$:
$$2\mathcal{K}_{\gamma(s)} = \frac{2}{s^2}\mathring{\mathcal{K}}+\frac{1}{s^3}\Big(2\mathring{\mathcal{K}}\ubar\theta-2\mathring{\mathcal{K}}\ubar\theta
+\mathring\nabla\cdot\mathring\nabla\cdot\gamma_1+2\mathring{\Delta}\ubar\theta\Big)+\frac{2}{s^3}\mathring{\mathcal{K}}\ubar\theta+O(s^{-4})$$
giving the result.
\end{proof}
\begin{remark}\label{r6}
Interestingly, in the case that $\Omega$ is asymptotically flat satisfying the energy flux decay condition we conclude that
$$\mathcal{K}_{\gamma(s)} = \frac{\mathring{\mathcal{K}}}{s^2}+\frac{1}{s^3}(\mathring{\mathcal{K}}\ubar\theta+\mathring\nabla\cdot t_1)+O(s^{-4})$$
according to Proposition \ref{p7}.
\end{remark}
\begin{definition}\label{d9}
For $\Omega$ past asymptotically flat with background geodesic foliation $\{\Sigma_s\}$ we say a foliation $\{\Sigma_{s_\star}\}$ is \underline{asymptotically geodesic} provided
$$s = \phi s_\star+\xi$$
with \underline{scale factor} $\phi>0$ a Lie constant function along $\ubar L$ and $\ubar L^i\xi=o_{2-i}^X(s^{1-i})$ for $0\leq i\leq 2$. In addition (similarly to \cite{MS1}), we will say $\{\Sigma_{s_\star}\}$ \underline{approaches large spheres} provided the class of geodesic foliations measuring $\phi=1$ also induce $\mathring\gamma$ to be the round metric on $\mathbb{S}^2$.
\end{definition}
\begin{remark}\label{r7}
Given a basis extension $\{X_i\}\subset E(\Sigma_{0})$ (on $\{\Sigma_s\}$) and a foliation $\{\Sigma_{s_\star}\}$ as in Definition \ref{d9}, Lie dragging $s|_{\Sigma_{s_\star}}$ along $\ubar L$ to give $\omega\in\mathcal{F}(\Omega)$ we see at $q\in\Sigma_{s_\star}$:
\begin{align*}
\omega_i &= \phi_i s_\star+\xi_s\omega_i+\xi_i\\
\omega_{ij} &= \phi_{ij}s_\star+\xi_{ss}\omega_i\omega_j+\xi_{sj}\omega_i+\xi_s\omega_{ij}+\xi_{ij}
\end{align*}
where $\omega_i:=X_i\omega$, $\omega_{ij}:=X_jX_i\omega$, $\xi_s:=\ubar L\xi$, $\xi_{ss}:=\ubar L\ubar L\xi$, $\xi_i:=X_i(\xi|_{\Sigma_s(q)})$, $\xi_{si} = X_i(\xi_s|_{\Sigma_s})$ and $\xi_{ij} = X_jX_i(\xi|_{\Sigma_s})$. The decay on $\xi$ therefore gives us that:
\begin{align*}
\omega_i&=\frac{\phi_is_\star+\xi_i}{1-\xi_s}=\phi_is_\star+o(s_\star)\\
\omega_{ij}&=\frac{1}{1-\xi_s}\Big(\phi_{ij}s_\star+\xi_{ss}\Big(\frac{\xi_i+\phi_is_\star}{1-\xi_s}\Big)\Big(\frac{\xi_j+\phi_js_\star}{1-\xi_s}\Big)+\xi_{sj}\Big(\frac{\xi_i+\phi_is_\star}{1-\xi_s}\Big)+\xi_{ij}\Big)=\phi_{ij}s_\star+o(s_\star).
\end{align*}
From (12) and Lemma \ref{l16} we conclude that
\begin{align*}
d\omega|_{\Sigma_{s_\star}}&= (s_\star \phi)^2(\frac{-1}{s_\star}d\phi^{-1}|_{\Sigma_{s_\star}}+o(s_\star^{-1}))\\
\Delta\omega|_{\Sigma_{s_\star}} &= \frac{1}{\phi^2s_\star}\mathring{\Delta}\phi+o(s_\star^{-1}).
\end{align*}
\end{remark}
Using Theorem \ref{t5} we prove a slightly weakened version of the beautiful result found by the authors of \cite{MS1} (Theorem 1):
\begin{proposition}\label{p8}
Suppose $\Omega$ is past asymptotically flat and $\{\Sigma_{s_\star}\}$ is an asymptotically geodesic foliation with scale factor $\phi>0$. Assuming $\mathcal{L}_{\ubar L}\tilde\gamma = o_1^X(1)$ we have
$$\lim_{s_\star\to\infty}E_H(\Sigma_{s_\star}) = \frac{1}{16\pi}\sqrt{\frac{\int \phi^2 \mathring{dA}}{4\pi}}\int\frac{1}{\phi}\Big(\mathring{\mathcal{K}}\ubar\theta-\theta-\mathring{\Delta}\ubar\theta+4\mathring{\nabla}\cdot t_1\Big)\mathring{dA}$$
with $\mathring\gamma$, $\mathring{\mathcal{K}}$, $\ubar\theta$, $\theta$ and $t_1$ associated with the background geodesic foliation.
\end{proposition}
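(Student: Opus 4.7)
The plan is to combine Theorem~\ref{t5} with the asymptotic expansions from Propositions~\ref{p6} and~\ref{p7} and Lemma~\ref{l16} to extract the leading $s_\star^{-1}$ behavior of $\int_{\Sigma_{s_\star}}\s\rho\,dA$. Gauss--Bonnet and the vanishing of the intrinsic divergence of $\s\tau$ on the closed leaf give $E_H(\Sigma_{s_\star}) = \sqrt{|\Sigma_{s_\star}|/(16\pi)}\cdot(4\pi)^{-1}\int_{\Sigma_{s_\star}}\s\rho\,dA$. First I would establish the area asymptotics: by Lemma~\ref{l12} the induced area element on $\Sigma_{s_\star}$ is $\sqrt{\det\gamma(s(q))}\,d\mathring{\Omega}$ at each $q$, and combining the expansion $\sqrt{\det\gamma(s)} = s^2\sqrt{\det\mathring\gamma}(1+O(s^{-1}))$ with the identification $s(q) = \phi(q)s_\star(1+o(1))$ from Definition~\ref{d9} and Remark~\ref{r7} yields $\sqrt{|\Sigma_{s_\star}|/(16\pi)} = (s_\star/(4\sqrt\pi))\sqrt{\int\phi^2\mathring{dA}}\,(1+o(1))$.

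Next, I would apply Theorem~\ref{t5} pointwise on $\Sigma_{s_\star}$ to split $\s\rho = \rho_{\mathrm{bg}} + (\text{corrections})$, where $\rho_{\mathrm{bg}}$ is the flux of the background leaf $\Sigma_{s(q)}$ at $q$. The $\s\nabla\cdot$ correction integrates to zero by Stokes, and the remaining corrections will be shown negligible: since $\ubar L$ is geodesic, equation~(8) combined with $\tr K = 2/s + \ubar\theta/s^2 + o(s^{-2})$ forces $|\hat K|^2 + G(\ubar L,\ubar L) = o(s^{-3})$. Paired against $(\nabla\omega)^i = O(s^{-1})$, $\vec t^{\,i}$ and $(\nabla\log\tr K)^i$ of order $O(s^{-3})$, and $\hat K_{ij} = O(1)$, every remaining term in Theorem~\ref{t5} is pointwise $o(s^{-3})$ and hence contributes $o(s_\star^{-1})$ after integration against the $O(s_\star^2)$ area form.

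The core computation is the expansion of $\rho_{\mathrm{bg}} = \mathcal{K}_{\gamma(s)} - \tfrac14\tr K\tr Q + \nabla\cdot(t - \s{d}\log\tr K)$. Proposition~\ref{p7} gives $\mathcal{K}_{\gamma(s)}$ through order $s^{-3}$, Proposition~\ref{p6} gives $\tr K$ and $\tr Q$, and together they yield $\mathcal{K}_{\gamma(s)} - \tfrac14\tr K\tr Q = s^{-3}\bigl(\tfrac12\mathring{\mathcal{K}}\ubar\theta - \tfrac12\theta + \tfrac12\mathring\nabla\cdot\mathring\nabla\cdot\gamma_1 + \mathring\Delta\ubar\theta\bigr) + o(s^{-3})$. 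For the torsion divergence, Lemma~\ref{l16} applied to the Lie-constant form $t_1$ and function $\ubar\theta$, together with $t = t_1/s + o_1(s^{-1})$ and $\log\tr K = \log(2/s) + \ubar\theta/(2s) + O(s^{-2})$, produces $\nabla\cdot(t - \s{d}\log\tr K) = s^{-3}\bigl(\mathring\nabla\cdot t_1 - \tfrac12\mathring\Delta\ubar\theta\bigr) + o(s^{-3})$. The last part of Proposition~\ref{p6} (available under $\mathcal{L}_{\ubar L}\tilde\gamma = o_1^X(1)$ together with the implicit energy flux decay $G_{\ubar L}=o(s^{-2})$) supplies $t_1 = \tfrac12\mathring\nabla\cdot\gamma_1 + \s{d}\ubar\theta$, which rewrites $\tfrac12\mathring\nabla\cdot\mathring\nabla\cdot\gamma_1 = \mathring\nabla\cdot t_1 - \mathring\Delta\ubar\theta$ and consolidates the expansion into
\[
\rho_{\mathrm{bg}} = \tfrac{1}{2s^3}\bigl(\mathring{\mathcal{K}}\ubar\theta - \theta - \mathring\Delta\ubar\theta + 4\mathring\nabla\cdot t_1\bigr) + o(s^{-3}).
\]

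Finally, integrating against $dA = \phi^2 s_\star^2\mathring{dA}\,(1+o(1))$ and using $s(q)^{-3} = (\phi s_\star)^{-3}(1+o(1))$ yields $\int_{\Sigma_{s_\star}}\s\rho\,dA = (2s_\star)^{-1}\int\phi^{-1}(\mathring{\mathcal{K}}\ubar\theta - \theta - \mathring\Delta\ubar\theta + 4\mathring\nabla\cdot t_1)\mathring{dA} + o(s_\star^{-1})$, and multiplication by the area prefactor delivers the claimed limit. The hard part will be the bookkeeping in the third step, where every subleading contribution in the expansions of $\mathcal{K}_{\gamma(s)}$, $\tr K$, $\tr Q$, $t$, and $\log\tr K$ must be tracked precisely; the final reduction to the stated integrand depends crucially on the Proposition~\ref{p6} identity for $t_1$, without which the combination $\tfrac12\mathring\nabla\cdot\mathring\nabla\cdot\gamma_1 + \tfrac12\mathring\Delta\ubar\theta + \mathring\nabla\cdot t_1$ cannot be collapsed to $-\tfrac12\mathring\Delta\ubar\theta + 2\mathring\nabla\cdot t_1$.
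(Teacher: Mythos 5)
There is a genuine gap, and it sits exactly where you flagged "the hard part." Under the hypotheses of Proposition \ref{p8} — asymptotic flatness plus $\mathcal{L}_{\ubar L}\tilde\gamma=o_1^X(1)$ only — you do not have the pointwise expansions you use for $\rho_{\mathrm{bg}}$. Definition \ref{d7} gives $\tilde t=o_1(s^{-1})$, which controls $\ubar L$-derivatives of $\tilde t$ but no tangential derivatives, so $\nabla\cdot t=\tfrac{1}{s^3}\mathring\nabla\cdot t_1+o(s^{-3})$ (your use of (21)) is unavailable: that expansion is Proposition \ref{p9}, proved only under strong decay ($\tilde t=o_1^X(s^{-1})$). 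Likewise $\Delta\log\tr K$ needs two tangential derivatives of $g:=\tr K-\tfrac2s-\tfrac{\ubar\theta}{s^2}$, while $\mathcal{L}_{\ubar L}\tilde\gamma=o_1^X(1)$ supplies only one; and $G_{\ubar L}(\nabla\omega)$ is generically $O(s^{-3})$, not $o(s^{-3})$, since Proposition \ref{p8} does not assume the energy flux decay $G_{\ubar L}=o(s^{-2})$. The paper's proof is structured precisely to dodge all three: it groups $\Delta\log\tr K$ with $\nabla\omega\,\ubar L\log\tr K$ and $2\hat K(\nabla\log\tr K,\nabla\omega)$ into an exact divergence on $\Sigma_{s_\star}$ (Lemma \ref{l14}), and it trades $\int(\nabla\cdot t+G_{\ubar L}(\nabla\omega)-\tfrac12\nabla\omega\tr K-2\hat K(\vec t,\nabla\omega))\,dA_{s_\star}$ for $\int(\nabla\cdot\hat K(\nabla\omega)-\nabla\omega\tr K+\tr K\,t(\nabla\omega))\,dA_{s_\star}$ via the propagation equation (11) and the Divergence Theorem, so that only one tangential derivative of the Lie-constant data $\gamma_1,\ubar\theta,t_1$ is ever needed.

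A second, related error is your claim that every correction term contributes $o(s_\star^{-1})$ after integration. The surviving terms $\nabla\cdot\hat K(\nabla\omega)$, $\nabla\omega\tr K$ and $\tr K\,t(\nabla\omega)$ are exactly order $s_\star^{-3}$ with coefficients proportional to $\mathring\nabla\phi^{-1}$ (e.g.\ $\nabla\cdot\hat K(\nabla\omega)=\tfrac{1}{2\phi^2 s_\star^3}\mathring\nabla\cdot(\gamma_1+\ubar\theta\mathring\gamma)(\mathring\nabla\phi^{-1})+o(s_\star^{-3})$); against the $O(s_\star^2)$ area form they survive in the limit whenever $\phi$ is non-constant, and it is precisely their integration by parts that produces the $-\mathring\Delta\ubar\theta+4\mathring\nabla\cdot t_1$ portion of the integrand while cancelling $\mathring\nabla\cdot\mathring\nabla\cdot\gamma_1$ — with no appeal to $t_1=\tfrac12\mathring\nabla\cdot\gamma_1+\s d\ubar\theta$. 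Your consolidation instead relies on that identity, hence on $G_{\ubar L}=o(s^{-2})$. The net effect is that your argument, even if each expansion were justified, proves the proposition only under the strictly stronger strong-flux-decay hypotheses of Theorem \ref{t7} and Corollary \ref{c5}, not under those actually stated in Proposition \ref{p8}.
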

\begin{proof}
Given any fixed $s_\star$ we define $\omega\in\mathcal{F}(\Omega)$ by Lie dragging along $\ubar L$:
$$s|_{\Sigma_{s_\star}} = (\phi s_\star+\xi)|_{\Sigma_{s_\star}}$$
as before. From the decomposition $\gamma_s = s^2\mathring\gamma+s\gamma_1+\tilde\gamma$ and the standard identity for any invertible matrix $M$:
$$\det(M+sB) = \det M(1+s\tr(M^{-1}B)+O(s^2))$$
we have 
$$\sqrt{\det(\gamma_s)} = s^2\sqrt{\det(\mathring\gamma)}(1-\frac{1}{s}\ubar\theta+o(s^{-1})).$$
From the first identity of Lemma \ref{l12} we therefore conclude that 
 $$dA_{s_\star} = dA_{s}|_{\Sigma_{s_\star}} = s_\star^2\phi^2f\mathring{dA}$$
 where $f=1+o(s_{\star}^0)$.\\
In Theorem \ref{t5}, denoting the sum of all but the first two terms by $\eta_\omega(\nabla\omega)$ we see
\begin{align*}
\eta_\omega(\nabla\omega) &= \frac12(|\hat{K}|^2+G(\ubar L,\ubar L))|\nabla\omega|^2 +G(\ubar L,\nabla\omega)+\nabla\omega\frac{|\hat{K}|^2+G(\ubar L,\ubar L)}{\tr K}-2\hat{K}(\vec{t}-\nabla\log\tr K,\nabla\omega)\\
&= \frac12(|\hat{K}|^2+G(\ubar L,\ubar L))|\nabla\omega|^2+G(\ubar L,\nabla\omega)-\nabla\omega(\ubar L\log\tr K+\frac12\tr K)-2\hat{K}(\vec{t}-\nabla\log\tr K,\nabla\omega)
\end{align*}
giving from Propositions 6 and 7
\begin{align*}
4\pi\frac{E_H(\Sigma_{s_\star})}{\sqrt{\frac{|\Sigma_{s_\star}|}{16\pi}}}=\int \s\rho dA_\omega &=
\int \rho +\s{\nabla}\cdot\Big(\frac{|\hat{K}|^2+G(\ubar L,\ubar L)}{\tr K}\s{\nabla}\omega\Big)+\eta_\omega(\nabla\omega)dA_{s_\star}\\
&=\int \rho +\eta_\omega(\nabla\omega)dA_{s_\star}\\
&=\int\Big( \frac{\mathring{\mathcal{K}}}{\omega^2}+\frac{1}{\omega^3}\Big(\mathring{\mathcal{K}}\ubar\theta+
\frac12\mathring\nabla\cdot\mathring\nabla\cdot\gamma_1+\mathring\Delta\ubar\theta\Big) - \frac14(\frac{2}{\omega}+\frac{\ubar\theta}{\omega^2})(\frac{2\mathring{\mathcal{K}}}{\omega}+\frac{\theta}{\omega^2})\\
&\indent+\nabla\cdot t+G_{\ubar L}(\nabla\omega)\\
&\indent+\frac12(|\hat{K}|^2+G(\ubar L,\ubar L))|\nabla\omega|^2\\
&\indent-\Delta\log\tr K-\nabla\omega(\ubar L\log\tr K+\frac12\tr K)-2\hat{K}(\vec{t}-\nabla\log\tr K,\nabla\omega)\Big)dA_{s_\star}\\
&\indent+o(s_\star^{-1})
\end{align*}
\begin{align*}
&=\int\Big(\frac{1}{\omega^3}\Big(\frac12\mathring{\mathcal{K}}\ubar\theta-\frac12\theta+
\frac12\mathring\nabla\cdot\mathring\nabla\cdot\gamma_1+\mathring\Delta\ubar\theta\Big)\\
&\indent+\nabla\cdot t+G_{\ubar L}(\nabla\omega)-\frac12\nabla\omega\tr K-2\hat{K}(\vec{t},\nabla\omega)\\
&\indent+\frac12(|\hat{K}|^2+G(\ubar L,\ubar L))|\nabla\omega|^2\\
&\indent-\Delta\log\tr K-\nabla\omega\ubar L\log\tr K+2\hat{K}(\nabla\log\tr K,\nabla\omega)\Big)dA_{s_\star}\\
&\indent+o(s_\star^{-1})
\end{align*}
having used the Divergence Theorem to get the second line. From Lemma \ref{l14} we have
$$-\Delta\log\tr K - \nabla\omega\ubar L\log\tr K +2\hat{K}(\nabla\log\tr K,\nabla\omega) = -\s{\Delta}\log\tr K+\s{\nabla}\cdot(\ubar L\log\tr K\s{\nabla}\omega)$$
and therefore integrates to zero on $\Sigma_{s_\star}$ by the Divergence Theorem. We also notice from the fact that $\ubar L$ is geodesic and $\s\nabla\omega = \nabla\omega+|\nabla\omega|^2\ubar L$ that
$$K(V,\nabla\omega) = K(\tilde V,\s\nabla\omega) = \ubar\chi(\tilde V,\s\nabla\omega)$$
for $V\in E(\Sigma_0)$. Lemma \ref{l12} and \ref{l13} therefore gives
\begin{align*}
\s\nabla_{\tilde V}\zeta(\tilde W) +(\s\nabla_{\tilde V}(\ubar\chi\cdot{\s{d}\omega}))(\tilde W) &= \tilde V(\zeta(\tilde W) + \ubar\chi(\tilde W,\s\nabla\omega))-\zeta(\s\nabla_{\tilde V}\tilde W)-\ubar\chi(\s\nabla_{\tilde V}\tilde W,\s\nabla\omega)\\
&=(V+V\omega\ubar L)t(W)\\
&\indent-\Big(t(\nabla_VW+V\omega\vec{K}(W)+W\omega\vec{K}(V)-K(V,W)\nabla\omega)\\
&\indent-K(\nabla_VW+V\omega\vec{K}(W)+W\omega\vec{K}(V)-K(V,W)\nabla\omega,\nabla\omega)\Big) - \ubar\chi(\s\nabla_{\tilde V}{\tilde W},\s\nabla\omega)\\
&=\nabla_Vt(W)+V\omega\mathcal{L}_{\ubar L}t(W)-V\omega K(\vec{t},\nabla\omega)-W\omega K(\vec{t},\nabla\omega)+K(V,W)t(\nabla\omega)
\end{align*}
where all terms in the penultimate line canceled from Lemma \ref{l13}. Taking a trace over $V,W$
\begin{align*}
\s\nabla\cdot\zeta-\s\nabla\cdot(\vec{\ubar\chi}(\s\nabla\omega)) &= \nabla\cdot t+\mathcal{L}_{\ubar L}t(\nabla\omega) - 2\hat{K}(\vec{t},\nabla\omega)\\
&=\nabla\cdot t+G_{\ubar L}(\nabla\omega)-\frac12\nabla\omega\tr K-2\hat{K}(\vec{t},\nabla\omega) - \nabla\cdot\hat{K}(\nabla\omega)+\nabla\omega\tr K-\tr Kt(\nabla\omega)
\end{align*}
having used (11) to get the last line. We conclude that
$$\int \nabla\cdot t+G_{\ubar L}(\nabla\omega)-\frac12\nabla\omega\tr K-2\hat{K}(\vec{t},\nabla\omega)dA_{s_\star} = \int \nabla\cdot\hat{K}(\nabla\omega)-\nabla\omega\tr K+\tr Kt(\nabla\omega)dA_{s_\star}$$
giving
\begin{align*}
4\pi\frac{E_H(\Sigma_{s_\star})}{\sqrt{\frac{|\Sigma_{s_\star}|}{16\pi}}}
=\int\Big(\frac{1}{\omega^3}\Big(\frac12\mathring{\mathcal{K}}\ubar\theta-\frac12\theta+
\frac12\mathring\nabla\cdot\mathring\nabla\cdot\gamma_1+\mathring\Delta\ubar\theta\Big)
&+\nabla\cdot\hat{K}(\nabla\omega)-\nabla\omega\tr K+\tr Kt(\nabla\omega)\\
&+\frac12(|\hat{K}|^2+G(\ubar L,\ubar L))|\nabla\omega|^2\Big)dA_{s_\star}+o(s_\star^{-1}).
\end{align*}
Now we turn to simplifying the final term in the integrand
$$(|\hat{K}|^2+G(\ubar L,\ubar L))|\nabla\omega|^2 = (\nabla\omega-\s{\nabla}\omega)\tr K-\frac12(\tr K)^2|\nabla\omega|^2.$$
Denoting $g:=\tr K-\frac{2}{s}-\frac{\ubar\theta}{s^2}$ we conclude from the hypothesis $\mathcal{L
}_{\ubar L}\tilde\gamma=o_1^X(1)$ that $g=o_1^X(s^{-2})$. So denoting $g_\omega:=g|_{\Sigma_{s_\star}}$ we have from Remark \ref{r7} (regarding the decay) that
\begin{align*}
\nabla\omega\tr K|_{\Sigma_{s_\star}}&=\nabla\omega g|_{\Sigma_{s_\star}}+\frac{1}{s^2}\nabla\omega\ubar\theta|_{\Sigma_{s_\star}}\\
&=\frac{1}{\omega^2}\s{\nabla}\omega\ubar\theta+o(s_\star^{-3})\\
\s{\nabla}\omega\tr K &=\s{\nabla}\omega g_\omega+\s\nabla\omega(\frac{2}{\omega}+\frac{\ubar\theta}{\omega^2})\\
&=\s\nabla\cdot(g_\omega\s\nabla\omega)-\s{\Delta}\omega g_\omega-\frac{2}{\omega^2}|\s\nabla\omega|^2+\s{\nabla}\omega(\frac{\ubar\theta}{\omega^2})\\
&=\s{\nabla}\cdot(g_\omega\s\nabla\omega)-(\Delta\omega-2\hat{K}(\nabla\omega,\nabla\omega))g_\omega-\frac{2}{\omega^2}|\s\nabla\omega|^2+\s{\nabla}\omega(\frac{\ubar\theta}{\omega^2})\\
&=\s{\nabla}\cdot(g_\omega\s\nabla\omega)-\frac{2}{\omega^2}|\s\nabla\omega|^2+\s{\nabla}\omega(\frac{\ubar\theta}{\omega^2})+o(s_\star^{-3})\\
\frac12(\tr K)^2|\nabla\omega|^2|_{\Sigma_{s_\star}} &= \frac12(\frac{2}{\omega}+\frac{\ubar\theta}{\omega^2})^2|\s\nabla\omega|^2+o(s_\star^{-3})\\
&=\frac{2}{\omega^2}|\s\nabla\omega|^2+\frac{2\ubar\theta}{\omega^3}|\s\nabla\omega|^2+o(s_\star^{-3}).
\end{align*}
Combining terms we conclude
$$(|\hat{K}|^2+G(\ubar L,\ubar L))|\nabla\omega|^2\Big|_{\Sigma_{s_\star}} = -\s{\nabla}\cdot(g_\omega\s\nabla\omega)+o(s_\star^{-3}).$$
It's a simple exercise to show $\hat{K} = -\frac12(\gamma_1+\ubar\theta\mathring\gamma)+o_1^X(1)$, so for $d\omega|_{\Sigma_{s_\star}} = (s_\star^2\phi^2)(-\frac{1}{s_\star}d\phi^{-1}|_{\Sigma_{s_\star}}+o(s_\star^{-1}))$ we have from Lemma \ref{l16}
\begin{align*}
\nabla\cdot\hat{K}(\nabla\omega)|_{\Sigma_{s_\star}}&=\frac{1}{s_\star^3}\frac{1}{2\phi^2}\mathring\nabla\cdot(\gamma_1+\ubar\theta\mathring\gamma)(\mathring\nabla{\phi^{-1}})+o(s_\star^{-3})\\
\nabla\omega\tr K|_{\Sigma_{s_\star}}&=-\frac{1}{s_\star^3}\frac{1}{\phi^2}\mathring\nabla\phi^{-1}\ubar\theta+o(s_\star^{-3})\\
\tr Kt(\nabla\omega)|_{\Sigma_{s_\star}}&=-\frac{1}{s_\star^3}\frac{2}{\phi^2}t_1(\mathring\nabla\phi^{-1})+o(s_\star^{-3}).
\end{align*}
Therefore
\begin{align*}
E_H(\Sigma_{s_\star})&=\frac{1}{8\pi}\sqrt{\frac{\int\phi^2 f\mathring{dA}}{4\pi}}\int\Big(\frac{f}{\phi}\Big(\frac12\mathring{\mathcal{K}}\ubar\theta
+\frac12\mathring\nabla\cdot\mathring\nabla\cdot\gamma_1+\mathring\Delta\ubar\theta-\frac12\theta\Big)
\\
&\indent+\frac{f}{2}\mathring\nabla\cdot(\gamma_1+\ubar\theta\mathring\gamma)(\mathring\nabla\phi^{-1})+f\mathring\nabla\phi^{-1}\ubar\theta-2ft_1(\mathring\nabla\phi^{-1})\Big)\mathring{dA}+o(s_\star^0)
\end{align*}
giving
\begin{align*}
\lim_{s_\star\to\infty}E_H(\Sigma_{s_\star}) &= \frac{1}{8\pi}\sqrt{\frac{\int\phi^2\mathring{dA}}{4\pi}}\int\Big(\frac{1}{\phi}\Big(\frac12\mathring{\mathcal{K}}\ubar\theta
+\frac12\mathring\nabla\cdot\mathring\nabla\cdot\gamma_1+\mathring\Delta\ubar\theta-\frac12\theta\Big)
\\
&\indent+\frac{1}{2}\mathring\nabla\cdot(\gamma_1+\ubar\theta\mathring\gamma)(\mathring\nabla\phi^{-1})+\mathring\nabla\phi^{-1}\ubar\theta
-2t_1(\mathring\nabla\phi^{-1})\Big)\mathring{dA}\\
&=\frac{1}{8\pi}\sqrt{\frac{\int\phi^2\mathring{dA}}{4\pi}}\int\Big(\frac{1}{\phi}\Big(\frac12\mathring{\mathcal{K}}\ubar\theta
+\frac12\mathring\nabla\cdot\mathring\nabla\cdot\gamma_1+\mathring\Delta\ubar\theta
-\frac12\theta\Big)\\
&\indent-\frac{1}{2}\phi^{-1}\mathring\nabla\cdot\mathring\nabla\cdot(\gamma_1+\ubar\theta\mathring\gamma)
-\phi^{-1}\mathring\Delta\ubar\theta+\frac{2}{\phi}\mathring\nabla\cdot t_1\Big)\mathring{dA}\\
\end{align*}
$$=\frac{1}{16\pi}\sqrt{\frac{\int\phi^2\mathring{dA}}{4\pi}}\int\frac{1}{\phi}\Big(\mathring{\mathcal{K}}\ubar\theta-\theta-\mathring\Delta\ubar\theta
+4\mathring\nabla\cdot t_1\Big)\mathring{dA}$$
having integrated by parts to get the second equality.
\end{proof}
\begin{remark}\label{r8}
Suppose $\Omega$ is a past asymptotically flat null hypersurface with a background geodesic foliation $\{\Sigma_s\}$ approaching large spheres (i.e $\mathring\gamma$ is the round metric at infinity). Then for any other geodesic foliation of scale factor $\psi$ it follows that the metric at infinity is $\psi^2\mathring\gamma$ (see \cite{MS1}, Section 4) approaching large spheres if and only if $\psi$ solves the equation 
\begin{equation}
1-\psi^2 = \mathring\Delta\log\psi.
\end{equation}
Proposition \ref{p8} shows all asymptotically geodesic foliations $\{\Sigma_{s_\star}\}$ of the same scale factor $\phi$ share the limit
$$E(\phi) = \lim_{s_\star\to\infty}E_H(\Sigma_{s_\star})$$
which measures a Bondi energy $E_B(\psi)$ if $\psi$ solves (18).
The Bondi mass is therefore given by 
$$m_B = \inf\{E_B(\psi)|1-\psi^2=\mathring\Delta\log\psi\}.$$
\end{remark}
\begin{theorem}\label{t6}
Suppose $\Omega$ is a past asymptotically flat null hypersurface inside a spacetime satisfying the null energy condition. Then given the existence of an asymptotically geodesic (P)-foliation $\{\Sigma_{s_\star}\}$ approaching large spheres we have
$$m(0)\leq E_B$$
for $E_B$ the Bondi energy of $\Omega$ associated to $\{\Sigma_{s_\star}\}$. If equality is achieved on an (SP)-foliation then $E_B = m_B$ the Bondi mass of $\Omega$. In the case that $\tr\chi|_{\Sigma_0}=0$ we conclude instead with the weak Null Penrose inequality
$$\sqrt{\frac{|\Sigma_0|}{16\pi}}\leq E_B$$
where equality along an (SP)-foliation enforces that any foliation of $\Omega$ shares its data ($\gamma$, $\ubar\chi$, $\tr\chi$ and $\zeta$) with some foliation of the standard null cone of Schwarzschild spacetime.
\end{theorem}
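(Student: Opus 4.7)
The plan is to chain together three facts already established in the paper: monotonicity of $m$ along a (P)-foliation via Theorem \ref{t1}, the pointwise bound $m(\Sigma)\le E_H(\Sigma)$ of Lemma \ref{l5}, and the limit $\lim_{s_\star\to\infty}E_H(\Sigma_{s_\star})=E_B$ of Proposition \ref{p8}.

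First I would verify that each term in the integrand of Theorem \ref{t1} is non-negative under the (P)+NEC hypothesis. The outside factor $\mathfrak s/\rho^{1/3}$ is positive because $\mathfrak s=\tr\ubar\chi>0$ (expanding) and $\rho>0$ (part of (P)). The piece $(|\hat\chi^-|^2+G(L^-,L^-))\bigl(\tfrac14\langle\vec H,\vec H\rangle-\s\Delta\log\rho^{1/3}\bigr)$ has both factors $\ge 0$: the first by NEC applied to the null vector $L^-$, the second directly from (P). The piece $\tfrac12|\eta_\rho|^2\ge 0$ trivially, and $G(L^-,N)\ge 0$ by NEC since, as noted immediately after Definition \ref{d5}, $N$ is null and lies in the same component of the null cone as $L^-$. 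Hence $dm/ds_\star\ge 0$, so $m(0)\le m(\Sigma_{s_\star})\le E_H(\Sigma_{s_\star})$ by Lemma \ref{l5}; passing to the limit through Proposition \ref{p8} yields $m(0)\le E_B$.

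For the Penrose specialization, $\tr\chi|_{\Sigma_0}=0$ gives $\langle\vec H,\vec H\rangle=\tr\ubar\chi\,\tr\chi=0$ on $\Sigma_0$, so (P) at $s_\star=0$ reduces to $\s\Delta\log\rho^{1/3}\le 0$. Integrating over the closed surface $\Sigma_0$ and invoking the divergence theorem collapses this to $\s\Delta\log\rho^{1/3}\equiv 0$, and harmonicity on a closed 2-manifold makes $\rho$ constant. Gauss-Bonnet, together with $\int\s\nabla\cdot\tau\,dA=0$ and $\langle\vec H,\vec H\rangle=0$, then gives $\int\rho\,dA=4\pi$, so $\rho\equiv 4\pi/|\Sigma_0|$. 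A direct substitution into (2) produces $m(\Sigma_0)=\sqrt{|\Sigma_0|/(16\pi)}$, and combining with the first part yields the weak Null Penrose inequality $\sqrt{|\Sigma_0|/(16\pi)}\le E_B$.

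For the equality statements, equality forces $m$ constant along $s_\star$, hence $dm/ds_\star\equiv 0$. The strict (SP) inequality at $s_\star>0$ rules out any cancellation inside the first integrand piece, so each of the three non-negative pieces must vanish pointwise, and Theorem \ref{t4} applies verbatim to rigidly identify $(\gamma,\ubar\chi,\tr\chi,\zeta)$ on $\Omega$ with the data of a foliation of the standard Schwarzschild null cone. In the sub-case $\tr\chi|_{\Sigma_0}=0$, Theorem \ref{t4} further pins the mass to $r_0/2$, so the Section 2.2 calculation gives $m\equiv r_0/2$ and closes the Penrose equality, establishing that any foliation of $\Omega$ shares its data with a Schwarzschild foliation. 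For the first equality statement $m(0)=E_B$, the Schwarzschild-type rigidity makes $\mathring\gamma$ round and the asymptotic frame canonical; the Lorentzian triangle-inequality argument of the introduction (equivalently, the Cauchy--Schwarz estimate on $E_B(\psi)$ for $\psi$-solutions of (18) that comes out of Proposition \ref{p8}) then identifies our $E_B$ with the infimum $m_B$. The main obstacle is precisely this last identification: verifying that the geometric rigidity produced by Theorem \ref{t4} forces our particular scale factor $\phi$ to be the Bondi-mass minimizer among all conformal factors solving (18). This boils down to showing that the symmetric data delivered by Theorem \ref{t4} pins down the asymptotic reference frame canonically, so that no further boost can lower the energy --- a step that the explicit Schwarzschild identification in the case $\tr\chi|_{\Sigma_0}=0$ makes transparent, but which in the general case requires combining Theorem \ref{t4} with the boost analysis at infinity from \cite{MS1}.
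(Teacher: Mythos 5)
Your treatment of the main inequality $m(0)\le E_B$ and of the weak null Penrose inequality is correct and follows the paper's own route: monotonicity from Theorem \ref{t1} under (P) plus the null energy condition, the bound $m\le E_H$ from Lemma \ref{l5}, the limit from Proposition \ref{p8}, and, when $\tr\chi|_{\Sigma_0}=0$, the observation that $\s\Delta\log\rho\le 0$ on a closed surface forces $\rho$ to be constant, followed by Gauss--Bonnet. Your integrate-then-conclude version of that last step is an acceptable substitute for the paper's appeal to the maximum principle, and your reduction of the equality case to Theorem \ref{t4} is also as in the paper.

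The genuine gap is the one you flag yourself: the identification $E_B=m_B$ in the case of equality. The boost/reference-frame analysis you propose is not needed and is not how the argument closes. What you are missing is that Theorem \ref{t4} delivers more than rigidity of the data --- it shows that $\rho$ on $\Omega$ is independent of the foliation used to compute it, so that $m(\Sigma)=\frac12\bigl(\frac{1}{4\pi}\int r_0^{2/3}\,dA_0\bigr)^{3/2}$ takes one and the same value on \emph{every} cross-section of $\Omega$. Consequently, for \emph{any} asymptotically geodesic foliation of scale factor $\phi$, Lemma \ref{l5} gives $E_B=\lim m(\Sigma_{s_\star})\le\lim E_H(\Sigma_{s_\star})=E(\phi)$; taking the infimum over $\phi$ and recalling from Remark \ref{r8} that $m_B=\inf\{E_B(\psi):1-\psi^2=\mathring\Delta\log\psi\}$ is an infimum over a subclass of such $\phi$, one obtains $E_B\le\inf_{\phi>0}E(\phi)\le m_B$. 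Since also $m_B\le E_B$ by definition ($E_B$ is itself one of the competing Bondi energies, as $\psi=1$ solves (18) for a foliation approaching large spheres), the sandwich closes and $E_B=m_B$. No canonical choice of asymptotic frame, and no verification that your particular $\phi$ is the minimizer, is required: the inequality $E_B\le E(\phi)$ holds for every $\phi$ simultaneously precisely because $m$ does not see the foliation at all.
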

\begin{proof}
Since any asymptotically geodesic (P)-foliation has non-decreasing mass from Theorem \ref{t1} and $m(\Sigma_{s_\star})\leq E_H(\Sigma_{s_\star})$ from Lemma \ref{l5}, it follows from \cite{MS1} (Theorem 1) that $m(\Sigma_{s_\star})$ converges since $E_H(\Sigma_{s_\star})$ does. Moreover,  $\lim_{s_\star\to\infty}m(\Sigma_{s_\star})\leq \lim_{s_\star\to\infty}E_H(\Sigma_{s_\star})$ and from \cite{MS1} (Corollary 3) it follows that $\lim_{s_\star\to\infty}E_H(\Sigma_{s_\star})$ is the Bondi energy associated to the abstract reference frame coupled to the foliation $\{\Sigma_{s_\star}\}$. Given the case of equality, Theorem \ref{t1} enforces that $m(0) = m(\Sigma_{s_\star})$ for all $s_\star$. So Theorem \ref{t4} applies and we conclude that $m(\Sigma) = \frac12\Big(\frac{1}{4\pi}\int r_0^{\frac23}dA_0\Big)^{\frac32}$ (for some positive function $r_0$ on $\Sigma_0$ of area form $r_0^2dA_0$) irrespective of the cross-section $\Sigma\subset\Omega$. This gives, according to Remark \ref{r8} and Lemma \ref{l5},
$$\lim_{s_\star\to\infty}m(\Sigma_{s_\star}) = \frac12\Big(\frac{1}{4\pi}\int r_0^{\frac23}dA_0\Big)^{\frac32} =E_B\leq \inf_{\phi>0}E(\phi)\leq m_B.$$
Since $E_B\leq \inf E(\phi)\leq m_B\leq E_B$ all must be equal.

If $\tr\chi|_{\Sigma_0}=0$ property (P) gives
$$0\geq\s\Delta\log\s\rho|_{\Sigma_0}$$
and the maximum principle implies $\s\rho|_{\Sigma_0} = \mathcal{K}+\s\nabla\cdot\tau$ is constant. From the Gauss-Bonnet and Divergence Theorems we conclude that $\s\rho|_{\Sigma_0} = \frac{4\pi}{|\Sigma_0|}$ and therefore $m(0)=\sqrt{\frac{|\Sigma_0|}{16\pi}}$. Under this restriction Theorem \ref{t4} enforces that any foliation of $\Omega$ corresponds with a foliation of the standard null cone in Schwarzschild with respect to the data $\gamma$, $\ubar\chi$, $\tr\chi$ and $\zeta$.
\end{proof}
From Proposition \ref{p8} and Lemma \ref{l5} 
$$\inf_{\phi>0}E(\phi) = \frac14\Big(\frac{1}{4\pi}\int(\mathcal{K}\ubar\theta-\theta-\mathring{\Delta}\ubar\theta+4\mathring\nabla\cdot t_1)^\frac23\mathring{dA}\Big)^\frac32$$
provided $\mathcal{K}\ubar\theta-\theta-\mathring{\Delta}\ubar\theta+4\mathring\nabla\cdot t_1\geq0$. We show, given that $\Omega$ satisfies the strong flux decay condition, this quantity is in fact  $\lim_{s_\star\to\infty}m(\Sigma_{s_\star})$. We will need the following proposition to do so:
\begin{proposition}\label{p9}
Suppose $\Omega$ is past asymptotically flat with strong decay. Given a choice of affinely parametrized null generator $\ubar L$ and corresponding level set function s we have
\begin{align}
\nabla\cdot\nabla\cdot K &= -\frac{1}{2s^4}\mathring\nabla\cdot\mathring\nabla\cdot\gamma_1+o(s^{-4})\\
\Delta\tr K&=\frac{\mathring\Delta \ubar\theta}{s^4}+o(s^{-4})\\
\nabla\cdot t&=\frac{1}{s^3}\mathring{\nabla}\cdot t_1+o(s^{-3})
\end{align}
\end{proposition}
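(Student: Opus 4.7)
The plan is to derive all three expansions by a unified strategy: write every quantity as a Taylor-type expansion in $s^{-1}$ using Proposition \ref{p6} and Lemma \ref{l16}, then exploit strong decay to control the angular derivatives of the remainders. Strong decay is essential here, because asymptotic flatness alone gives only pointwise decay of $\tilde\gamma$ and $\tilde t$, whereas applying $\nabla$ and $\nabla^2$ and still maintaining a quantitative error requires $X$-derivative decay; the conditions $\mathcal{L}_{\ubar L}^j\tilde\gamma=o_{3-j}^X(s^{1-j})$ for $1\leq j\leq3$ and $\tilde t=o_1^X(s^{-1})$ supply exactly what is needed.

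Equation (21) is the shortest. Write $t_j=(t_1)_j/s+\tilde t_j$ and decompose $\nabla_i t_j=\mathring\nabla_i t_j-\mathcal{D}^k_{ij}t_k$. The Christoffel difference is $\mathcal{D}=O(s^{-1})$ while $t=O(s^{-1})$, so $\mathcal{D}\cdot t=O(s^{-2})$; using $\gamma^{ij}=s^{-2}\mathring\gamma^{ij}+O(s^{-3})$ and $\mathring\nabla\tilde t=o(s^{-1})$ (strong decay), contracting gives $\nabla\cdot t=s^{-3}\mathring\nabla\cdot t_1+o(s^{-3})$. For equation (20), observe that the $2/s$ piece of $\tr K$ has zero angular gradient on $\Sigma_s$, so only the next-order terms contribute. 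Proposition \ref{p6} gives $\tr K-2/s-\ubar\theta/s^2=o(s^{-2})$, and strong decay upgrades this remainder to $o_2^X(s^{-2})$ so that two angular derivatives of it are $o(s^{-2})$. Hence $\Delta\tr K = \Delta(\ubar\theta/s^2)+o(s^{-4})$, and Lemma \ref{l16} applied to the $\ubar L$ Lie-constant function $\ubar\theta$ gives $\Delta\ubar\theta=s^{-2}\mathring\Delta\ubar\theta+O(s^{-3})$, producing (20).

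The substantial computation is (19). Writing $K=s\mathring\gamma+\frac12\gamma_1+K^{(0)}$ with $K^{(0)}=\frac12\mathcal{L}_{\ubar L}\tilde\gamma=o_2^X(1)$ (strong decay with $j=1$), and using Lemma \ref{l16} to extract
$$\mathcal{D}^m_{ik}\mathring\gamma_{mj}=\frac{1}{2s}\bigl(\mathring\nabla_i\gamma_1{}_{kj}+\mathring\nabla_k\gamma_1{}_{ij}-\mathring\nabla_j\gamma_1{}_{ik}\bigr)+O(s^{-2}),$$
the leading contribution to $\nabla_iK_{kj}=\mathring\nabla_iK_{kj}-\mathcal{D}^m_{ik}K_{mj}-\mathcal{D}^m_{ij}K_{km}$ comes from $\mathcal{D}$ acting on the $s\mathring\gamma$ piece of $K$. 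The two Christoffel corrections sum cleanly: $s\mathcal{D}^m_{ik}\mathring\gamma_{mj}+s\mathcal{D}^m_{ij}\mathring\gamma_{km}=\mathring\nabla_i\gamma_1{}_{jk}+O(s^{-1})$. Combining with the direct term $\mathring\nabla_iK_{kj}=\frac12\mathring\nabla_i\gamma_1{}_{kj}+o(1)$ yields
$$\nabla_iK_{kj}=-\tfrac12\mathring\nabla_i\gamma_1{}_{jk}+o(1).$$
Contracting with $\gamma^{ik}$ then gives $(\nabla\cdot K)_j=-\frac{1}{2s^2}(\mathring\nabla\cdot\gamma_1)_j+o(s^{-2})$, and a second identical divergence, whose error terms are again controlled by strong decay, produces $\nabla\cdot\nabla\cdot K=-\frac{1}{2s^4}\mathring\nabla\cdot\mathring\nabla\cdot\gamma_1+o(s^{-4})$.

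The principal difficulty is the Christoffel-difference bookkeeping in (19): the most significant pieces come from $\mathcal{D}$ acting on the $O(s)$ component of $K$, and these $O(1)$ contributions are what flip the sign and produce the factor $\frac12$ distinguishing the correct formula from the naive expectation $\frac12\mathring\nabla\cdot\mathring\nabla\cdot\gamma_1$. A secondary but essential point is propagating the remainder estimates through iterated covariant derivatives; this is precisely the role of the strong decay condition $\mathcal{L}_{\ubar L}^j\tilde\gamma=o_{3-j}^X(s^{1-j})$, which is calibrated so that two angular derivatives of $K^{(0)}$ and one angular derivative of $A=\nabla\cdot K$ remain $o(1)$ and $o(s^{-2})$, respectively.
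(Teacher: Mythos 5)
Your proposal is correct and follows essentially the same route as the paper: expand $K$, $t$, and $\gamma^{-1}$ via Proposition \ref{p6}, decompose $\nabla$ through the difference tensor $\mathcal{D}$ of Lemma \ref{l16}, and observe the key cancellation $s\mathcal{D}^m_{ik}\mathring\gamma_{mj}+s\mathcal{D}^m_{ij}\mathring\gamma_{km}=\mathring\nabla_i{\gamma_1}_{jk}+O(s^{-1})$ that turns the naive $+\tfrac12\mathring\nabla\gamma_1$ into $-\tfrac12\mathring\nabla\gamma_1$, with strong decay supplying the angular-derivative control on the remainders. The only (immaterial) difference is that the paper computes the full tensor $\nabla_i\nabla_jK_{mn}=-\tfrac12\mathring\nabla_i\mathring\nabla_j{\gamma_1}_{mn}+o(1)$ once and obtains both (19) and (20) by contracting in the two different orders, whereas you iterate the divergence for (19) and treat $\Delta\tr K$ separately for (20).
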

\begin{proof}
From Lemma \ref{l16} and (13)
\begin{align*}
\nabla_iK_{jm} &= \mathring\nabla_i(s\mathring\gamma_{jm}+\frac12{\gamma_1}_{jm})-\mathcal{D}^k_{ij}K_{km}
-\mathcal{D}^k_{im}K_{jk}+o_1^X(1)\\
&=\frac12\mathring\nabla_i{\gamma_1}_{jm}-\mathring\nabla_i{\gamma_1}_{jm}+o_1^X(1)\\
&=-\frac12\mathring\nabla_i{\gamma_1}_{jm}+o_1^X(1).
\end{align*}
where the first term of the second line comes from the fact that $\mathring\nabla\mathring\gamma=0$. Next we compute
\begin{align*}
\nabla_i\nabla_jK_{mn}&= \mathring\nabla_i\nabla_jK_{mn}-\mathcal{D}^k_{ij}\nabla_kK_{mn}-\mathcal{D}^k_{im}\nabla_jK_{kn}
-\mathcal{D}^k_{in}\nabla_jK_{mk}\\
&=-\frac12\mathring\nabla_i\mathring\nabla_j{\gamma_1}_{mn}+o(1)
\end{align*}
So contracting with (12) over $j,m$ followed by $i,n$  we get (18) and contracting instead over $m,n$ and then $i,j$ (19) follows. For (20)
\begin{align*}
\nabla_it_j &= \mathring\nabla_it_j - \mathcal{D}^k_{ij}t_k\\
&=\frac{1}{s}\mathring\nabla_i{t_1}_j+o(s^{-1})
\end{align*}
and the result follows as soon as we contract with (12) over $i,j$. 
\end{proof}
\begin{remark}\label{r9}
As soon as we impose that $\Omega$ has strong decay it follows from the fact that $[\mathcal{L}_X,\mathcal{L}_Y]=\mathcal{L}_{[X,Y]}$ that $\mathcal{L}_{X_i}\tilde\gamma,\,\mathcal{L}_{X_i}\mathcal{L}_{X_j}\tilde\gamma = o_1(s)$ (since, for example, $\mathcal{L}_{\ubar L}\mathcal{L}_{X_i}\tilde\gamma = \mathcal{L}_{[\ubar L,X_i]}\tilde\gamma+\mathcal{L}_{X_i}\mathcal{L}_{\ubar L}\tilde\gamma = o(1)$). As a result its not hard to see early in the proof of Proposition  7 that $$\mathcal{K}_{\gamma(s)}=\frac{\mathring{\mathcal{K}}}{s^2}+\frac{\mathcal{K}_1}{s^3}+O_1(s^{-4})$$
for some Lie constant function $\mathcal{K}_1$. We may therefore provide a simpler proof using Proposition \ref{p9} and the propogation equation (5) 
\begin{align*}
\ubar L\mathcal{K}&=-\tr K\mathcal{K}-\Delta\tr K+\nabla\cdot\nabla\cdot K
\end{align*}
in order to find $\mathcal{K}_1$.
\end{remark}
\begin{theorem}\label{t7}
Suppose $\Omega$ is past asymptotically flat with strong flux decay and $\{\Sigma_s\}$ is some background geodesic foliation. Then for any asymptotically geodesic foliation $\{\Sigma_{s_\star}\}$ with scale factor $\phi>0$ we have
$$s_\star^3\s\rho(s_\star) = \frac{1}{2\phi^3}\Big(\mathring{\mathcal{K}}\ubar\theta-\theta-\mathring{\Delta}\ubar\theta
+4\mathring\nabla\cdot t_1\Big)+o(s_{\star}^0)$$ 
\end{theorem}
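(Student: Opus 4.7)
The plan is to apply the Flux Comparison Theorem (Theorem \ref{t5}) with graph function $\omega = \phi s_\star + \xi \in \mathcal{F}(\Omega)$ obtained by Lie-dragging $s|_{\Sigma_{s_\star}}$ along $\ubar L$, identify the leading $s_\star^{-3}$ contribution coming from the background flux $\rho$, and show that the five correction terms in Theorem \ref{t5} each contribute only $o(s_\star^{-3})$.

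For the leading contribution, I would first derive the pointwise expansion of $\rho$ on the background geodesic foliation. Combining the third-order expansion of $\mathcal{K}_{\gamma(s)}$ from Proposition \ref{p7} with the expansions of $\tr K$ and $\tr Q$ from Proposition \ref{p6} gives
\[
\mathcal{K}-\tfrac{1}{4}\langle\vec h,\vec h\rangle = \tfrac{1}{s^{3}}\Bigl(\tfrac{1}{2}\mathring{\mathcal K}\ubar\theta-\tfrac{1}{2}\theta+\tfrac{1}{2}\mathring\nabla\cdot\mathring\nabla\cdot\gamma_1+\mathring\Delta\ubar\theta\Bigr)+o(s^{-3}).
\]
Proposition \ref{p9} then yields $\nabla\cdot t = \mathring\nabla\cdot t_1/s^{3}+o(s^{-3})$, while Taylor-expanding $\log\tr K=\log(2/s)+\ubar\theta/(2s)+o(s^{-1})$ and applying Lemma \ref{l16} to the Lie-constant remainder (with $\Delta|_{\Sigma_s}\log(2/s)=0$) gives $\Delta\log\tr K=\mathring\Delta\ubar\theta/(2s^{3})+o(s^{-3})$. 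Summing and using the identity $t_1=\tfrac{1}{2}\mathring\nabla\cdot\gamma_1+\s{d}\ubar\theta$ from Proposition \ref{p6} (available under strong flux decay) to eliminate $\mathring\nabla\cdot\mathring\nabla\cdot\gamma_1$ in favor of $2\mathring\nabla\cdot t_1-2\mathring\Delta\ubar\theta$ produces
\[
\rho(s)=\tfrac{1}{2s^{3}}\bigl(\mathring{\mathcal K}\ubar\theta-\theta-\mathring\Delta\ubar\theta+4\mathring\nabla\cdot t_1\bigr)+o(s^{-3}).
\]
Pulling back to $\Sigma_{s_\star}$ via $s=\phi s_\star+\xi$ with $\xi=o(s_\star)$ converts $1/s^{3}$ into $1/(\phi s_\star)^{3}+o(s_\star^{-3})$, producing exactly the claimed leading term.

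For the correction terms, the geodesic Raychaudhuri equation $\ubar L\tr K=-\tfrac{1}{2}(\tr K)^{2}-|\hat K|^{2}-G(\ubar L,\ubar L)$ combined with the refined expansion of $\tr K$ available under strong flux decay gives $|\hat K|^{2}+G(\ubar L,\ubar L)=o(s^{-3})$ (the $2/s^{2}$ and $2\ubar\theta/s^{3}$ pieces on the right cancel). Together with $G_{\ubar L}=o(s^{-2})$, $t=O(s^{-1})$, $\hat K=O(1)$ and the estimate $|\nabla\omega|^{2}=O(1)$ on $\Sigma_{s_\star}$ extracted from Remark \ref{r7}, an order count shows that the four algebraic correction terms
\[
\tfrac{1}{2}(|\hat K|^{2}+G(\ubar L,\ubar L))|\nabla\omega|^{2},\quad \nabla\omega\tfrac{|\hat K|^{2}+G(\ubar L,\ubar L)}{\tr K},\quad G(\ubar L,\nabla\omega),\quad 2\hat K(\vec t-\nabla\log\tr K,\nabla\omega)
\]
are each $o(s_\star^{-3})$. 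The remaining divergence term is handled by Lemma \ref{l14}, which converts $\s\nabla\cdot$ into $\nabla\cdot$ plus controlled $\ubar L$-derivative pieces; the full strength of strong flux decay ($\mathcal{L}_{\ubar L}^{j}\tilde\gamma=o_{3-j}^{X}(s^{1-j})$ for $1\leq j\leq 3$) then supplies the tangential regularity of $\hat K$ needed to absorb it into $o(s_\star^{-3})$.

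The main obstacle is precisely this final divergence term. Since the coefficient $(|\hat K|^{2}+G(\ubar L,\ubar L))/\tr K$ is already borderline of size $o(s^{-2})$, differentiating it naively risks losing half an order in $s_\star$, and one must argue carefully through iterated use of the Raychaudhuri equation together with Lemma \ref{l16} to control the tangential derivative. This is exactly where the higher-order hypotheses in Definition \ref{d8} on $\mathcal{L}_{\ubar L}^{j}\tilde\gamma$ for $j=2,3$ become essential.
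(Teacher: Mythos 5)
Your proposal follows essentially the same route as the paper: apply the Flux Comparison Theorem with the graph function from Remark \ref{r7}, kill the five correction terms by the order count based on $|\hat K|^2+G(\ubar L,\ubar L)=-\ubar L\tr K-\tfrac12(\tr K)^2=o_1^X(s^{-3})$ and $G_{\ubar L}=o(s^{-2})$, and assemble the leading $s^{-3}$ coefficient of $\rho$ from Propositions \ref{p6}, \ref{p7} and \ref{p9}, eliminating $\mathring\nabla\cdot\mathring\nabla\cdot\gamma_1$ via $t_1=\tfrac12\mathring\nabla\cdot\gamma_1+\s d\ubar\theta$. The only cosmetic differences (Taylor-expanding $\log\tr K$ rather than using $\Delta\log\tr K=\Delta\tr K/\tr K-|\nabla\tr K|^2/(\tr K)^2$, and routing the divergence term through Lemma \ref{l14} rather than the direct split $\s\nabla\cdot(f\s\nabla\omega)=\s\nabla\omega(f)+f\s\Delta\omega$) are equivalent, so the proposal is correct and matches the paper's argument.
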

\begin{proof}
First let us remind ourselves of Theorem \ref{t5}
\begin{align*}
\s\rho=\rho&+\s{\nabla}\cdot\Big(\frac{|\hat{K}|^2+G(\ubar L,\ubar L)}{\tr K}\s{\nabla}\omega\Big)+\frac12\Big(|\hat{K}|^2+G(\ubar L,\ubar L)\Big)|\nabla\omega|^2\\
&+\nabla\omega\frac{|\hat{K}|^2+G(\ubar L,\ubar L)}{\tr K}+G_{\ubar L}(\nabla\omega)-2\hat{K}(\vec{t}-\nabla\log\tr K,\nabla\omega).
\end{align*}
Denoting the exterior derivative on $\Sigma_s$ by $d_s$, since $\tr K = \frac{2}{s}+\frac{\ubar\theta}{s^2}+o(s^{-2})$, we conclude that $d_s\log\tr K=\frac{1}{2s}d\ubar\theta|_{\Sigma_s}+o(s^{-1})$ giving 
$$\hat{K}(\vec{t}-\nabla\log\tr K,\nabla\omega)|_{\Sigma_{s_\star}} = o(s_\star^{-3}).$$
Since $\mathcal{L}_{\ubar L}^2\tilde\gamma = o_1^X(s^{-1})\cap o_1(s^{-1})$ we also see that
\begin{align*}
|\hat{K}|^2+G(\ubar L,\ubar L)=-\ubar L\tr K-\frac12(\tr K)^2 &= -(-\frac{2}{s^2}-\frac{2}{s^3}\ubar\theta)-\frac12(\frac{2}{s}+\frac{\ubar\theta}{s^2})^2+o_1^X(s^{-3})\cap o_1(s^{-3})\\
&=o_1^X(s^{-3})\cap o_1(s^{-3})
\end{align*}
and therefore, from Remark \ref{r7}:
\begin{align*}
\s\nabla\cdot\Big(\frac{|\hat{K}|^2+G(\ubar L,\ubar L)}{\tr K}\s\nabla\omega\Big) &= \s\nabla\omega\frac{|\hat{K}|^2+G(\ubar L,\ubar L)}{\tr K}+\s\Delta\omega\frac{|\hat{K}|^2+G(\ubar L,\ubar L)}{\tr K}\\
&=\Big(\nabla\omega\frac{|\hat{K}|^2+G(\ubar L,\ubar L)}{\tr K}+|\nabla\omega|^2\ubar L\frac{|\hat{K}|^2+G(\ubar L,\ubar L)}{\tr K}\\
&\indent+(\Delta\omega-2\hat{K}(\nabla\omega,\nabla\omega))\frac{|\hat{K}|^2+G(\ubar L,\ubar L)}{\tr K}\Big)\Big|_{\Sigma_{s_\star}}\\
&=o(s_\star^{-3}).
\end{align*}
From the strong flux decay condition we have $G_{\ubar L}(\nabla\omega)|_{\Sigma_{s_\star}} = o(s_\star^{-3})$ also.
From (19) we have
\begin{align*}
\Delta\log\tr K &= \frac{\Delta\tr K}{\tr K}-\frac{|\nabla\tr K|^2}{(\tr K)^2}\\
&=\frac{\mathring{\Delta}\ubar\theta}{2s^3}+o(s^{-3})
\end{align*}
and combining this with Propositions \ref{p7} and \ref{p9}:
\begin{align*}
\s\rho &= \rho|_{\Sigma_{s_\star}} +o(s_\star^{-3})\\
&=\frac{1}{\omega^3}\Big(\frac12\mathring{\mathcal{K}}\ubar\theta
+\frac12\mathring\nabla\cdot\mathring\nabla\cdot\gamma_1+\mathring\Delta\ubar\theta
-\frac12\theta\Big)+\frac{1}{\omega^3}\mathring\nabla\cdot t_1-\frac{1}{2\omega^3}\mathring\Delta\ubar\theta+o(s_\star^{-3})\\
&=\frac{1}{2\omega^3}\Big(\mathring{\mathcal{K}}\ubar\theta-\theta-\mathring\Delta\ubar\theta+4\mathring{\nabla}\cdot t_1\Big)+o(s_\star^{-3})
\end{align*}
having used Proposition \ref{p6} in the final line to substitute $\frac12\mathring\nabla\cdot\mathring\nabla\cdot\gamma_1 +\mathring\Delta\ubar\theta= \mathring\nabla\cdot t_1$ and the result follows.
\end{proof}
\begin{remark}\label{r10}
We would like to bring to the attention of the reader our use of (15) in the second to last equality in the proof of Theorem \ref{t7}. Assuming $\{\Sigma_{s_\star}\}$ is in fact a geodesic foliation, running a parallel argument to decompose $\s\rho$ as we did for $\rho$ allows us to conclude that (15) must also hold for $\{\Sigma_{s_\star}\}$. We refer the reader to \cite{MS1} (Proposition 3) to observe that under the additional decay of Theorem \ref{t7}, part 4 from Definition \ref{d7} is no longer necessary to give (15) for an arbitrary geodesic foliation provided it holds for at least one. We will exploit this fact in Section 5.
\end{remark}
\begin{corollary}\label{c5}
With the same hypotheses as in Theorem \ref{t7} we have
$$\lim_{s_\star\to\infty}m(\Sigma_{s_\star}) = \frac14\Big(\frac{1}{4\pi}\int (\mathring{\mathcal{K}}\ubar\theta-\theta-\mathring{\Delta}\ubar\theta+4\mathring\nabla\cdot t_1)^\frac23\mathring{dA}\Big)^\frac32$$
\end{corollary}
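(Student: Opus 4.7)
The plan is to substitute the pointwise asymptotic expansion of $\s\rho$ from Theorem~\ref{t7} into the definition of the mass functional and track how the factors of $s_\star$ and $\phi$ combine with the area form. Define $F := \mathring{\mathcal{K}}\ubar\theta-\theta-\mathring{\Delta}\ubar\theta+4\mathring\nabla\cdot t_1$ for brevity; this is a Lie-constant function on $\Sigma_0$ (equivalently a function on $\mathbb{S}^2$ via $\pi$), so it does not depend on $s_\star$. Theorem~\ref{t7} gives $s_\star^3\s\rho = \tfrac{1}{2\phi^3}F + o(s_\star^0)$, hence raising to the $2/3$ power (using that $F \geq 0$ is inherited from the (P)-foliation hypothesis needed to form $m$, or more generally that we consider the principal asymptotic term)
$$\s\rho^{2/3} = \frac{1}{s_\star^2}\,\frac{F^{2/3}}{(2\phi^3)^{2/3}} + o(s_\star^{-2}).$$

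Next I would recycle the area-form expansion already established in the proof of Proposition~\ref{p8}: on $\Sigma_{s_\star}$ one has $dA_{s_\star} = s_\star^2\phi^2 f\,\mathring{dA}$ with $f = 1 + o(s_\star^0)$, uniformly on $\mathbb{S}^2$. Multiplying,
$$\s\rho^{2/3}\,dA_{s_\star} = \frac{\phi^2 f}{(2\phi^3)^{2/3}}\,F^{2/3}\,\mathring{dA} + o(1) = \frac{1}{2^{2/3}}\,F^{2/3}\,\mathring{dA} + o(1),$$
where the $\phi$-dependence cancels because $\phi^2/\phi^2 = 1$. Thus the integrand converges pointwise and is dominated (in absolute value) on the fixed compact surface $\mathbb{S}^2$ by a constant multiple of $F^{2/3}$ for all large $s_\star$; dominated convergence yields
$$\lim_{s_\star \to \infty} \int_{\Sigma_{s_\star}} \s\rho^{2/3}\,dA = \frac{1}{2^{2/3}} \int_{\mathbb{S}^2} F^{2/3}\,\mathring{dA}.$$

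Finally, inserting this into the definition $m(\Sigma_{s_\star}) = \tfrac12\bigl(\tfrac{1}{4\pi}\int_{\Sigma_{s_\star}}\s\rho^{2/3}\,dA\bigr)^{3/2}$ and using $\tfrac12 \cdot (2^{-2/3})^{3/2} = \tfrac12 \cdot \tfrac12 = \tfrac14$ gives
$$\lim_{s_\star\to\infty} m(\Sigma_{s_\star}) = \frac14\Big(\frac{1}{4\pi}\int_{\mathbb{S}^2} F^{2/3}\,\mathring{dA}\Big)^{3/2},$$
which is the claimed identity.

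The argument is essentially bookkeeping once Theorem~\ref{t7} is in hand; the only mild subtlety is the uniformity of the $o(s_\star^0)$ remainder in Theorem~\ref{t7} across $\mathbb{S}^2$ (needed to pass to the limit under the integral), but this is already implicit in the pointwise estimates built from the $o^X$-type norms of Definition~\ref{d7} which are uniform in the chosen basis extension. The slickness of the statement comes from the fact that all dependence on the scale factor $\phi$ of the foliation drops out at leading order, reflecting the observation already used for $m(\Sigma)$ in the Schwarzschild example that the mass limit depends only on the asymptotic geometry of $\Omega$, not on the choice of foliation approaching infinity.
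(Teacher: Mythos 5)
Your proposal is correct and follows essentially the same route as the paper: substitute the expansion of Theorem \ref{t7} into $\int\s\rho^{2/3}dA$, use the area-form expansion $dA_{s_\star}=s_\star^2\phi^2 f\,\mathring{dA}$ from the proof of Proposition \ref{p8} so that the $\phi$-dependence cancels, and pass to the limit by dominated convergence. The paper's own proof is just a more compressed version of this same bookkeeping (written with $\omega^3\s\rho$ in place of $\phi^3 s_\star^3\s\rho$), so there is nothing further to add.
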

\begin{proof}
From Theorem \ref{t7} we directly conclude
$$4\pi(4m(\Sigma_{s_\star}))^{\frac23} =\int (2\s{\rho})^\frac23 dA_\omega= \int\frac{1}{\omega^2}\Big(\mathring{\mathcal{K}}\ubar\theta-\theta-\mathring\Delta\ubar\theta+4\mathring{\nabla}\cdot t_1+o(1)\Big)^{\frac23}f\omega^2\mathring{dA}$$
giving 
$$4\pi(4\lim_{s_\star\to\infty}m(\Sigma_{s_\star}))^{\frac23} = \int\Big(\mathring{\mathcal{K}}\ubar\theta-\theta-\mathring\Delta\ubar\theta+4\mathring{\nabla}\cdot t_1\Big)^{\frac23}\mathring{dA}$$
by the Dominated Convergence Theorem.
\end{proof}
 Finally we're ready to prove Theorem \ref{t2}:
\begin{proof}(Theorem \ref{t2})
The first claim of Theorem \ref{t2} is a simple consequence of Theorem \ref{t1}. 
Property (P) and Theorem \ref{t7} enforces that 
$$0\leq \lim_{s_\star\to\infty}s_\star^3\s\rho = \frac{1}{2\phi^3}(\mathring{\mathcal{K}}\ubar\theta-\theta-\mathring{\Delta}\ubar\theta+4\mathring\nabla\cdot t_1)$$
and therefore Theorem \ref{t1}, Corollary \ref{c5}, Lemma \ref{l5} and Proposition \ref{p8} gives 
$$m(\Sigma_0)\leq \lim_{s_\star\to\infty}m(\Sigma_{s_\star}) = \frac14\Big(\frac{1}{4\pi}\int (\mathring{\mathcal{K}}\ubar\theta-\theta-\mathring{\Delta}\ubar\theta+4\mathring\nabla\cdot t_1)^\frac23\mathring{dA}\Big)^\frac32=\inf_{\phi>0}E(\phi)\leq m_B.$$
The rest of the proof is settled identically as in Theorem \ref{t6}.
\end{proof}
\newpage
\section{Spherical Symmetry}
For the known null Penrose inequality in spherical symmetry (see \cite{Ha}) we provide proof within our context in order to motivate a class of perturbations on the black hole exterior that maintain both the asymptotically flat and strong flux decay conditions. We also show the existence of an asymptotically geodesic (SP)-foliation for a subclass of these perturbations toward a proof of the null Penrose conjecture. 

\subsection{The metric}
 In polar areal coordinates \cite{P} the metric takes the form 
$$g=-a(t,r)^2dt\otimes dt+b(t,r)^2dr\otimes dr+r^2\mathring\gamma$$
for $\mathring\gamma$ the standard round metric on $\mathbb{S}^2$. From which the change in coordinates $(t,r)\to(v,r)$ given by
$$dv = dt+\frac{b}{a}dr$$
produces the metric and metric inverse given by 
\begin{align*}
g &= -\mathfrak{h}e^{2\beta}dv\otimes dv+e^\beta(dv\otimes dr+dr\otimes dv)+r^2\mathring\gamma\\
g^{-1}&=e^{-\beta}(\partial_v\otimes\partial_r+\partial_r\otimes\partial_v)+\mathfrak{h}\partial_r\otimes\partial_r+\frac{1}{r^2}\mathring\gamma^{-1}
\end{align*}
for $\mathfrak{h}=(1-\frac{2M(t,r)}{r})$ where $M(t,r):=\frac{r}{2}(1-\frac{1}{b^2})$ and $a(t,r)^2 = \mathfrak{h}e^{2\beta}$.\\
It's a well known fact that assigning $M(t,r)=m_0>0$ and $\beta(t,r) = 0$ for $m_0$ a constant the above metric covers the region given by $v>0$ in Kruskal spacetime or Schwarzschild geometry in an `Eddington-Finkelstein' coordinate chart. We will therefore refer to the null hypersurfaces $\Omega:=\{v=v_0\}$ as the \textit{standard null-cones} (of spherically symmetric spacetime) as they agree with the similarly named hypersurfaces in the Schwarzschild case.
\subsection{Calculating $\rho$}
We approach the calculation similarly to the case of Schwarzschild. Denoting the gradient of $v$ by $Dv$ we use the identity $D_{D v}D v = \frac12D |D v|^2$ to see $\ubar L := Dv=e^{-\beta}\partial_r$ satisfies $D_{\ubar L}\ubar L =0$ providing us our choice of geodesic generator for $\Omega$ and level set function $s$ (as in Section 3). For convenience we will choose our background foliation $\{\Sigma_r\}$ of $\Omega$ to be the level sets of the coordinate $r$. An arbitrary cross section $\Sigma$ of $\Omega$ is therefore given as a graph over $\Sigma_{r_0}$ (for some $r_0$) which we Lie drag along $\partial_r$ to the rest of $\Omega$ giving some $\omega\in\mathcal{F}(\Omega)$. On $\Sigma$ we therefore have the linearly independent normal vector fields 
\begin{align*}
\ubar L&=e^{-\beta}\partial_r\\
D(r-\omega)&=e^{-\beta}\partial_v+\mathfrak{h}\partial_r-\nabla\omega
\end{align*}
where in this subsection (5.2) $\nabla$ will temporarily denote the induced covariant derivative on $\Sigma_r$. We wish to find the null section $L\in\Gamma(T^\perp\Sigma)$ satisfying $\langle L,\ubar L\rangle=2$. Since $L = c_1\ubar L+c_2D(r-\omega)$ we have
\begin{align*}
2&=\langle L,\ubar L\rangle = c_2e^{-\beta}\partial_r(r-\omega)\\
&=c_2e^{-\beta}\\
0&=\langle L,L\rangle = 2c_1c_2\langle\ubar L,D(r-\omega)\rangle+c_2^2\langle D(r-\omega),D(r-\omega)\rangle\\
&=2c_1c_2e^{-\beta}+c_2^2(e^{-2\beta}\langle\partial_v,\partial_v\rangle+|\nabla\omega|^2+2e^{-\beta}\mathfrak{h}\langle\partial_v,\partial_r\rangle)\\
&=2c_1c_2e^{-\beta}+c_2^2(\mathfrak{h}+|\nabla\omega|^2)
\end{align*}
giving $c_2 = 2e^{\beta}$ and $c_1 = -e^{2\beta}(\mathfrak{h}+|\nabla\omega|^2)$ so that
\begin{align*}
L &= -e^{\beta}(\mathfrak{h}+|\nabla\omega|^2)\partial_r+2\partial_v+2\mathfrak{h}e^{\beta}\partial_r-2e^{\beta}\nabla\omega\\
&=2\partial_v +e^\beta(\mathfrak{h}-|\nabla\omega|^2)\partial_r-2e^\beta\nabla\omega\\
&= 2\partial_v+e^{\beta}(\mathfrak{h}-|\s\nabla\omega|^2)\partial_r-2e^\beta(\s\nabla\omega-|\s\nabla\omega|^2\partial_r)\\
&=2\partial_v +e^\beta(\mathfrak{h}+|\s\nabla\omega|^2)\partial_r-2e^\beta\s\nabla\omega
\end{align*}
having used the fact that $\s\nabla\omega = \nabla\omega+|\nabla\omega|^2\partial_r$ to get the third equality. We note from the warped product structure (as for Kruskal spacetime) that $E_{\partial_r}(\Sigma_{r_0}) = \mathcal{L}(\mathbb{S}^2)|_\Omega$ where $\mathcal{L}(\mathbb{S}^2)$ is the set of lifted vector fields from the $\mathbb{S}^2$ factor of the spacetime product manifold. As a result we may globally extend $V\in E_{\partial_r}(\Sigma_{r_0})$ to satisfy $[\partial_v,V]=0$. The following facts are a direct application of the Koszul formula, we refer the reader to \cite{O} (pg 206) for the details:
\begin{align}
D_{\partial_r}\partial_v &= -\frac12\partial_r(\mathfrak{h}e^{2\beta})e^{-\beta}\partial_r\\
D_V\partial_v&=0\\
D_{\partial_r}\partial_r &= \partial_r\beta\partial_r\\
D_V\partial_r&=\frac{1}{r}V.
\end{align}
\begin{lemma}\label{l17}
Suppose $\Omega=\{v=v_0\}$ is the standard null cone in a spherically symmetric spacetime of metric
$$g = -\mathfrak{h}e^{2\beta(v,r)}dv\otimes dv+e^\beta(v,r)(dv\otimes dr+dr\otimes dv)+r^2\mathring\gamma$$
where $\mathfrak{h} = (1-\frac{2M(v,r)}{r})$ and $\mathring\gamma$ is the round metric on $\mathbb{S}^2$. Then for some cross section $\Sigma_{r_0}\subset\Omega$ and $\omega\in\mathcal{F}(\Sigma_{r_0})$, $\Sigma:=\{r=\omega\circ\pi\}$ produces the data (writing $\omega\circ\pi$ as $\omega$):
\begin{align*}
\gamma &= \omega^2\mathring\gamma\\
\ubar\chi&=\frac{e^{-\beta(v_0,\omega)}}{\omega}\gamma\\
\tr\ubar\chi&=\frac{2e^{-\beta(v_0,\omega)}}{\omega}\\
\chi&=e^{\beta(v_0,\omega)}\Big((\mathfrak{h}+|\s\nabla\omega|^2)\frac{\gamma}{\omega}-2\tilde H^{\omega}-2\beta_r\s{d}\omega\otimes\s{d}\omega\Big)\\
\tr\chi&=\frac{2e^\beta(v_0,\omega)}{\omega}(\mathfrak{h}-\omega^2\s\Delta\log\omega-\omega\beta_r|\s\nabla\omega|^2)\\
\zeta&=-\s{d}\log\omega\\
\rho&=\frac{2M(v_0,\omega)}{\omega^3}+\s\Delta\omega+\frac{\beta_r}{\omega}|\s\nabla\omega|^2
\end{align*}
\end{lemma}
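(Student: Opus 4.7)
The plan is to mimic the Schwarzschild computation of Lemma \ref{l4} line by line, carrying through the extra $e^\beta$ and $\beta_r$ factors introduced by a non-trivial conformal pre-factor. The decisive structural fact throughout is that for any function $f(v,r)$ (in particular $\beta$, $\mathfrak{h}$, $M$) and any lift $V\in\mathcal{L}(\mathbb{S}^2)$, one has $Vf=0$; every appearance of $\beta_r$ must therefore come either from the ambient covariant derivative $D_{\partial_r}$ or from the $V\omega\,\partial_r$ piece of the tangent lift $\tilde V = V + (V\omega)\partial_r$ to $\Sigma=\{r=\omega\}$.

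With that observation the seven identities follow one after another. The induced metric is immediate from $\langle\partial_r,\partial_r\rangle=0$ and $\langle V,\partial_r\rangle=0$, giving $\gamma(\tilde V,\tilde W) = \langle V,W\rangle = \omega^2\mathring\gamma(V,W)$. For $\ubar\chi$ use identities (23) and (22), $D_V\partial_r = V/r$ and $D_{\partial_r}\partial_r = \beta_r\partial_r$, together with $V\beta=0$: these collapse $D_{\tilde V}\ubar L$ to $(e^{-\beta}/r)V$, and the claimed $\ubar\chi$ and $\tr\ubar\chi$ follow. For $\chi(\tilde V,\tilde W) = \langle D_{\tilde V}L,\tilde W\rangle$ I would differentiate the three summands of $L$ separately: the $2\partial_v$ piece drops out because (21) places $D_{\partial_r}\partial_v$ along $\partial_r$ and $\langle\partial_r,\tilde W\rangle=0$; the $e^\beta(\mathfrak{h}+|\s\nabla\omega|^2)\partial_r$ piece contributes the $(\mathfrak{h}+|\s\nabla\omega|^2)\gamma/\omega$ term through (23); and the $-2e^\beta\s\nabla\omega$ piece splits into a Hessian contribution $-2e^\beta \tilde H^\omega$ from $D_{\tilde V}\s\nabla\omega$ plus a residual $-2\beta_r\,\s d\omega\otimes\s d\omega$ from $\tilde V(e^\beta) = V\omega\,\beta_r\,e^\beta$ paired against $\langle\s\nabla\omega,\tilde W\rangle = W\omega$. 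Tracing with $\gamma^{-1} = \omega^{-2}\mathring\gamma^{-1}$ and using the two-dimensional identity $\omega\,\s\Delta\log\omega = \s\Delta\omega - |\s\nabla\omega|^2/\omega$ gives the stated $\tr\chi$. The torsion $\zeta(\tilde V) = \tfrac12\langle D_{\tilde V}\ubar L, L\rangle$ reduces to $-\tilde V(\log\omega)$ because only the $-2e^\beta\s\nabla\omega$ piece of $L$ pairs nontrivially with $D_{\tilde V}\ubar L = (e^{-\beta}/\omega)V$.

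For $\rho$ one assembles the three ingredients via Definition \ref{d4}. Conformal change in two dimensions gives $\mathcal{K} = \omega^{-2}(1 - \mathring\Delta\log\omega) = 1/\omega^2 - \s\Delta\log\omega$, and multiplying $\tr\ubar\chi$ by $\tr\chi$ produces $\langle\vec H,\vec H\rangle$; inserting $\mathfrak{h}=1-2M/\omega$ and simplifying then extracts the $2M(v_0,\omega)/\omega^3$ term. The remaining torsion contribution $\s\nabla\cdot\tau$ with $\tau = \zeta - \s d\log\tr\ubar\chi$ telescopes, the $\pm\s d\log\omega$ pieces cancelling against each other, which both isolates the surviving $\beta_r$-dependence and lets one compute $\s\nabla\cdot\tau$ explicitly on $\Sigma$ using only the chain-rule expression $\s\nabla\beta_r = \beta_{rr}\,\s\nabla\omega$.

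The one obstacle is purely bookkeeping: tracking the three distinct places where $\beta$ enters (the pre-factor in $\ubar L$, the pre-factor in $L$, and the $V\omega\,\partial_r$ component of $\tilde V$ acting on $e^\beta$), and keeping straight the distinction between the $\Sigma_r$-gradient $\nabla\omega$ and the $\Sigma$-gradient $\s\nabla\omega = \nabla\omega + |\nabla\omega|^2\partial_r$. Once this is set up, every identity in the lemma follows from a few lines of straightforward manipulation mirroring Lemma \ref{l4}.
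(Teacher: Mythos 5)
Your outline is correct and follows essentially the same route as the paper's own proof: the same Koszul identities for $D_{\partial_r}\partial_v$, $D_V\partial_v$, $D_{\partial_r}\partial_r$, $D_V\partial_r$, the same decomposition $L=2\partial_v+e^\beta(\mathfrak{h}+|\s\nabla\omega|^2)\partial_r-2e^\beta\s\nabla\omega$, and the same telescoping of $\tau=\zeta-\s{d}\log\tr\ubar\chi$ down to $\beta_r\s{d}\omega$. The only point worth flagging is that your computation gives $\s\nabla\cdot\tau=\beta_{rr}|\s\nabla\omega|^2+\beta_r\s\Delta\omega=\s\Delta\big(\beta(v_0,\omega)\big)$, which agrees with the paper's proof and with how the formula is used afterwards; the term $\s\Delta\omega$ in the printed statement of the lemma is a typo for $\s\Delta\beta$.
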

\begin{proof}
For any $V\in E_{\partial_r}(\Sigma_{r_0})$ we have from Lemma \ref{l12} that $\tilde V := V+V\omega\partial_r|_\Sigma\in\Gamma(T\Sigma)$ so that the first identity follows directly from the metric restriction. From (25):
$$D_{\tilde V}\ubar L = e^{-\beta}D_V(\partial_r)+e^{\beta}V\omega D_{\ubar L}\ubar L =\frac{e^{-\beta}}{r}V$$ 
so the second identity is given by
\begin{align*}
\ubar\chi(\tilde V,\tilde W)&=\langle D_{\tilde V}\ubar L,\tilde W\rangle\\
&=\frac{e^{-\beta}}{r}\langle V,W\rangle
\end{align*}
and a trace over $V,W$ gives the third so that $\s\Delta\log\tr\ubar\chi = -\s\Delta\beta-\s\Delta\log\omega$. For the forth identity:
\begin{align*}
\chi(\tilde V,\tilde W) &= 2\langle D_{\tilde V}\partial_v,\tilde W\rangle+e^\beta(\mathfrak{h}+|\s\nabla\omega|^2)\langle D_{\tilde V}\partial_r,\tilde W\rangle-2e^\beta\langle D_{\tilde V}\s\nabla\omega,\tilde W\rangle-2\beta_re^\beta\tilde V\omega\tilde W\omega\\
&=e^{\beta}(\mathfrak{h}+|\s\nabla\omega|^2)\frac{1}{\omega}\langle\tilde V,\tilde W\rangle-2e^\beta\tilde H^{\omega}(\tilde V,\tilde W)-2\beta_re^\beta(\s{d}\omega\otimes\s{d}\omega)(\tilde V,\tilde W)
\end{align*}
where $\langle D_{\tilde V}\partial_v,\tilde W\rangle = 0$ from (22) and (23) to give the second equality. Taking a trace over $\tilde V, \tilde W$ we conclude with the fifth identity:
\begin{align*}
\tr\chi|_\Sigma &= \frac{2e^{\beta(v_0,\omega)}}{\omega}(\mathfrak{h}+|\s\nabla\omega|^2-\omega\s\Delta\omega)-2\beta_re^{\beta(v_0,\omega)}|\s\nabla\omega|^2\\
&=\frac{2e^\beta}{\omega}(\mathfrak{h}-\omega^2(\frac{\s\Delta\omega}{\omega}-\frac{|\s\nabla\omega|^2}{\omega^2}))-2\beta_re^\beta|\s\nabla\omega|^2\\
&=\frac{2e^\beta}{\omega}(\mathfrak{h}-\omega^2\s\Delta\log\omega-\omega\beta_r|\s\nabla\omega|^2).
\end{align*}
As a result we have that
$$\langle \vec{H},\vec{H}\rangle = \tr\ubar\chi\tr\chi = \frac{4}{\omega^2}(\mathfrak{h}-\omega^2\s\Delta\log\omega-\omega\beta_r|\s\nabla\omega|^2).$$
Since the metric on $\Sigma$ is given by $\omega^2\mathring\gamma$ we conclude that it has Gaussian curvature 
$$\mathcal{K} = \frac{1}{\omega^2}(1-\mathring\Delta\log\omega)=\frac{1}{\omega^2}-\s\Delta\log\omega$$ and therefore
$$\mathcal{K}-\frac14\langle\vec{H},\vec{H}\rangle = \frac{2M(v_0,\omega)}{\omega^3}+\frac{\beta_r}{\omega}|\s\nabla\omega|^2.$$
Moreover, the torsion is given by
\begin{align*}
\zeta(\tilde V) &= \frac12\langle D_{\tilde V}\ubar L,L\rangle\\
&=\frac{e^{-\beta}}{2r}\langle V,L\rangle\\
&=-\frac{1}{r}V\omega\\
&=-\frac{1}{r}\tilde V\omega
\end{align*}
from which we conclude $\zeta(\tilde V)|_\Sigma = -\tilde V\log\omega$ and $\s\nabla\cdot\zeta = -\s\Delta\log\omega$, giving
$$\rho = \frac{2M(v_0,\omega)}{\omega^3}+\s\Delta\beta+\frac{\beta_r}{\omega}|\s\nabla\omega|^2.$$
\end{proof}
\begin{remark}\label{r11}
We recover the data of Lemma \ref{l4} as soon as we set $m_0 = M$, $\beta=0$ and $r_0 = 2m_0$ as expected.
\end{remark}
So in comparison to Schwarzschild spacetime we have the additional terms $\s\Delta\beta+\frac{\beta_r}{\omega}|\s\nabla\omega|^2$ in the flux function $\rho$. It turns out that a non-trivial $G(\ubar L,\ubar L)$ is responsible. Since $\s\Delta\beta=\beta_{rr}|\s\nabla\omega|^2+\beta_r\s\Delta\omega$ and
\begin{align*}
G(\ubar L,\ubar L)&= -\ubar L\tr K-\frac12(\tr K)^2-|\hat{K}|^2\\
&=-e^{-\beta}\partial_r(\frac{2e^{-\beta}}{r})-\frac12\frac{4e^{-2\beta}}{r^2}\\
&=\frac{2\beta_r}{r}e^{-2\beta}
\end{align*}
it follows, for arbitrary $\omega$, that $\s\Delta\beta(\omega) + \frac{\beta_r}{\omega}|\s\nabla\omega|^2= 0$ if and only if $\beta$ is independent of the $r$-coordinate and therefore $G(\ubar L,\ubar L) = 0$. For the function $M(v_0,r)$ we look to $G(\ubar L, L)$ along the foliation $\{\Sigma_r\}$ since:
\begin{align*}
G(\ubar L,L) &= \ubar L\tr\chi-2\mathcal{K}_s+2\nabla\cdot t+2|\vec{t}|^2+\langle\vec{H},\vec{H}\rangle\\
&=e^{-\beta}\partial_r(\frac{2e^\beta}{r}(1-\frac{2M}{r}))-\frac{2}{r^2}+\frac{4}{r^2}(1-\frac{2M}{r})\\
&=\frac{2\beta_r}{r}(1-\frac{2M}{r})-\frac{4M_r}{r^2}.
\end{align*}
It follows from Lemma \ref{l11}, on $\Sigma_r$, that
\begin{align*}
G_{\ubar L} &=0.
\end{align*}
Since these components are all that contribute to the monotonicity of (2) for the foliation $\{\Sigma_r\}$ we see that our need of the null energy condition reduces to 
$$0\leq\mathfrak{h}\beta_r\leq \frac{2M_r}{r}$$
on $\{\mathfrak{h}\geq 0\}\cap\Omega$. Next we show that $\{\Sigma_r\}$ is a re-parametrization of a geodesic (SP)-foliation:
\subsection{Asymptotic flatness}
We now wish to choose the necessary decay on $\beta$ and $M$ in order to employ Theorem \ref{t2}. For $\ubar L = e^{-\beta}\partial_r$ the geodesic foliation $\{\Sigma_s\}$ has level set function given by
$$s(r) = \int_{r_0}^re^{\beta(t)}dt$$
for which $\omega = const.\iff s = const.$ and therefore
$$\rho(s) = \frac{2M(r(s))}{r(s)^3}.$$
It follows from Lemma \ref{l17} that $\frac{1}{4}\langle\vec{H},\vec{H}\rangle-\frac13\s\Delta\log\rho = \frac{\mathfrak{h}}{r(s)^2}>0\iff r(s)>r_0=2M(v_0,r_0)$ as in Schwarzschild.
\begin{lemma}\label{l18}
Choosing $|\beta(v_0,r)| = o_2(r^{-1})$ integrable and $M(v_0,r) = m_0+o(r^0)$ for some constant $m_0$, $\Omega$ is asymptotically flat with strong flux decay.
\end{lemma}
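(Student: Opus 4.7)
The plan is to verify the four conditions of Definition \ref{d7} and the strong flux decay conditions of Definition \ref{d8} directly by exploiting the fact that every fundamental quantity in spherical symmetry is, up to tensoring with the (Lie-constant) round metric $\mathring\gamma$ on $\mathbb{S}^2$, a scalar function of the single coordinate $r$. I would take $\ubar L = e^{-\beta}\partial_r$ as the affinely parametrized geodesic generator, and let $\{\Sigma_s\}$ be its level sets with $s(r)=\int_{r_0}^r e^{\beta(t)}dt$. As a frame $\{X_i\}\subset E(\Sigma_0)$ I would take vector fields lifted from a basis on $\mathbb{S}^2$; these commute with $\ubar L$ because $[\ubar L,V]=-V(e^{-\beta})\partial_r=0$ for $V$ angular and $\beta$ spherically symmetric.

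First I would extract the leading asymptotics. By integrability of $\beta$, the tail $\int_r^\infty(e^{\beta}-1)dt=o(1)$ and I can define the constant
\[
c_0:=r_0-\int_{r_0}^\infty(e^{\beta(t)}-1)\,dt,
\]
so that $r(s)=s+c_0+o(1)$, with the derivatives $\frac{dr}{ds}=e^{-\beta}=1+o(r^{-1})$, $\frac{d^2r}{ds^2}=-\beta_r e^{-2\beta}=o(r^{-2})$ and $\frac{d^3r}{ds^3}=o(r^{-3})$ inherited from $\beta=o_2(r^{-1})$. With this and Lemma \ref{l17} (evaluated at the constant graph function $\omega=r(s)$, so $\s\nabla\omega=0$, $H^\omega=0$) the background data take the explicit form
\[
\gamma(s)=r(s)^2\mathring\gamma,\qquad K=re^{-\beta}\mathring\gamma,\qquad \tr K=\tfrac{2e^{-\beta}}{r},\qquad \tr Q=\tfrac{2e^{\beta}\mathfrak{h}}{r},\qquad t\equiv 0,
\]
the last by spherical symmetry (rotation-invariant $1$-forms on $\mathbb{S}^2$ vanish).

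Given these formulas, each condition reduces to a scalar Taylor expansion in $1/s$. I would identify $\gamma_1=2c_0\mathring\gamma$ and check $\tilde\gamma=(r(s)^2-s^2-2sc_0)\mathring\gamma=f(s)\mathring\gamma$ with $f(s)=o(s)$ and $f'(s)=2(re^{-\beta}-s-c_0)=o(1)$; the $o_2^X(s)$ part is automatic since $\mathcal{L}_{X_k}\tilde\gamma=f(s)\mathcal{L}_{X_k}\mathring\gamma$ has components uniformly bounded by $f(s)$. For $\tilde t$ there is nothing to verify since $t_1=0$. For $\tilde\theta$ I would Taylor-expand $\tfrac{2e^{\beta}\mathfrak{h}}{r}=\tfrac{2}{s}+\tfrac{-2c_0-4m_0}{s^2}+o(s^{-2})$, using $\beta=o(r^{-1})$ and $M=m_0+o(1)$; this pins down $\theta_0=2$, $\theta=-2c_0-4m_0$, and consistency $\ubar\theta=-\frac12\mathring\tr\gamma_1=-2c_0$ can be cross-checked against $\tr K$ (equation (17)). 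The Riemann condition (part 4) I would handle using the Gauss equation (3): $\hat K=0$ in spherical symmetry, so the double trace equals $\mathcal{K}_s-\frac14\langle\vec H,\vec H\rangle=\tfrac{2M}{r^3}=O(s^{-3})=o(s^{-2})$, while the existence of the $1/s^2$-rescaled Riemann limit follows from the warped product structure, giving $\mathring{\mathcal{K}}(\mathring\gamma_{ik}\mathring\gamma_{jm}-\mathring\gamma_{im}\mathring\gamma_{jk})$.

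For strong flux decay I would note that $G_{\ubar L}$, viewed as a $1$-form on $\Sigma_s$, vanishes identically: the spherically symmetric Einstein tensor has no mixed radial-angular components, so $G(\ubar L,V)=e^{-\beta}G_{rV}=0$ for $V$ tangent to the spheres. The refined decay $\mathcal{L}_{\ubar L}^j\tilde\gamma=o_{3-j}^X(s^{1-j})$ reduces, via $\tilde\gamma=f(s)\mathring\gamma$, to the scalar estimates $f''(s)=-4\beta-2r\beta_r+O(\beta^2)=o(s^{-1})$ and $f'''(s)=-6\beta_r-2r\beta_{rr}+\cdots=o(s^{-2})$, which follow directly from $\beta=o_2(r^{-1})$; the $X$-differentiations again only act on the fixed $\mathring\gamma$ and cost nothing. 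The least automatic step, and the one I would be most careful about, is the bookkeeping that relates the integrability hypothesis on $\beta$ to the precise constant $c_0$ and the Lie-constant nature of $\gamma_1,\ubar\theta,\theta$: this is where the assumption that $\beta$ be integrable (not merely $o(r^{-1})$) is actually used, to guarantee that $r-s$ converges to a finite constant rather than drifting logarithmically.
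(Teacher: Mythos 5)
Your proposal is correct and follows essentially the same route as the paper: identify the geodesic level set function $s(r)=\int_{r_0}^r e^{\beta}dt$, use integrability of $\beta$ to get $r(s)=s+c_0+o_3(1)$, read off $\gamma_1=2c_0\mathring\gamma$, $t_1=0$ and $\theta=-2(c_0+2m_0)$ from the explicit spherically symmetric data, and observe $G_{\ubar L}=0$. The only real difference is that you verify condition 4 of Definition \ref{d7} directly via the Gauss equation (using $\hat K=0$ so the double trace reduces to $\mathcal{K}_s-\tfrac14\langle\vec h,\vec h\rangle=\tfrac{2M}{r^3}=o(s^{-2})$), whereas the paper sidesteps that condition entirely by appealing to Remark \ref{r10}; both are valid.
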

\begin{proof}
We've already verified that $G_{\ubar L} = 0$. Since $\frac{ds}{dr} = e^{\beta(r)} = (1+\beta\frac{e^\beta-1}{\beta})$, $|\beta|$ is integrable and $\frac{e^\beta-1}{\beta}$ is bounded it follows that $\frac{ds}{dr} = 1+f$ where $|f|=o_2(r^{-1})$ is integrable. As a result
$$s = r-r_0+\int_{r_0}^\infty f(t)dt-\int_{r}^{\infty}f(t)dt=r-c_0+o_3(r^0)$$
where $\beta_0 = \int_{r_0}^\infty f(t)dt$ and $c_0 = r_0-\beta_0$. We conclude that
$r(s) = s+c_0+o_3(s^0)$ since our assumptions on $\beta$ imply that $\int_{r(s)}^\infty f(t)dt = o_3(s^0)$. From the fact that
$$\gamma_s = r^2\mathring\gamma|_{\Sigma_s} = (s+c_0+o_3(1))^2\mathring\gamma = s^2(1+\frac{c_0}{s}+o_3(s^{-1}))^2\mathring\gamma=s^2\mathring\gamma+2c_0s\mathring\gamma+o_3(s)\mathring\gamma$$
we see $\tilde \gamma = o_3(s)\mathring\gamma$ ensuring condition 1 of Definition \ref{d7} holds up to strong decay given that all dependence on tangential derivatives falls on the $\ubar L$ Lie constant tensor $\mathring\gamma$. Since $\vec{t} = 0$ for this foliation condition 2 follows trivially up to strong decay. If we assume that $M(v_0,r) = m_0+o(1)$ for some constant $m_0$ we see directly from Lemma \ref{l17}
\begin{align*}
\tr Q &= \tr\chi|_{\Sigma_s}\\
&=\frac{2}{r}(1-\frac{2M}{r})|_{\Sigma_s}+o(s^{-2})\\
&=\frac{2}{s}(1-\frac{c_0}{s})(1-\frac{2m_0}{s})+o(s^{-2})\\
&=\frac{2}{s}-2\frac{c_0+2m_0}{s^2}+o(s^{-2})
\end{align*}
giving us the third condition of Definition \ref{d7}.\\
We refer the reader to \cite{MS1} to observe the use of the forth condition of Definition \ref{d7} in proving (15) for an arbitrary geodesic foliation. As mentioned in Remark \ref{r10}, strong flux decay bypasses our need of this condition since $\tr Q = \frac{2\mathring{\mathcal{K}}}{s}+o(s^{-1})$ is verified above. 
\end{proof}
From Lemma \ref{l18}, Theorem \ref{t2}, Theorem \ref{t4} and the comments immediately proceeding Remark \ref{r11} we have the following proof of the known (see \cite{Ha}) null Penrose conjecture in spherical symmetry:
\begin{theorem}\label{t8}
Suppose $\Omega:=\{v=v_0\}$ is a standard null cone of a spherically symmetric spacetime of metric
$$ds^2 = -\Big(1-\frac{2M(v,r)}{r}\Big)e^{2\beta(v,r)}dv^2+2e^{\beta(v,r)}dvdr+r^2\Big(d\vartheta^2+\sin\vartheta^2d\varphi^2\Big)$$
where 
\begin{enumerate}
\item $|\beta(v_0,r)| = o_2(r^{-1})$ is integrable
\item $M(v_0,r) = m_0+o(r^0)$ for some constant $m_0>0$
\item $0\leq\mathfrak{h}\beta_r\leq\frac{2M_r}{r}$
\end{enumerate}
Then, 
$$\sqrt{\frac{|\Sigma|}{16\pi}}\leq m_0$$
for $m_0$ the Bondi mass of $\Omega$ and $\Sigma:=\{r_0 = 2M(v_0,r_0)\}$. In the case of equality we have $\beta = 0$ and $M=m_0$ so that $\Omega$ is a standard null cone of Schwarzschild spacetime.
\end{theorem}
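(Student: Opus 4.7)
The plan is to verify the hypotheses of Theorem \ref{t2} for the foliation of $\Omega$ by constant-$r$ cross sections $\{\Sigma_r\}$, and then compute the Bondi mass explicitly in terms of the asymptotic data of $M$ and $\beta$. Since $\ubar L = Dv = e^{-\beta}\partial_r$ is a past-directed geodesic generator with affine parameter $s(r) = \int_{r_0}^{r} e^{\beta(t)}dt$, the level sets of $s$ coincide with the level sets of $r$, so $\{\Sigma_r\}$ is a geodesic foliation with scale factor $\phi = 1$, i.e.\ asymptotically geodesic approaching large spheres in the sense of Definition \ref{d9}.

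First I would use Lemma \ref{l17} evaluated on constant-$r$ slices ($\omega = r$, $\s\nabla\omega = 0$) to extract the explicit data: $\rho = 2M(v_0,r)/r^3 > 0$ by hypothesis~(2), and
\[
\tfrac{1}{4}\langle \vec H,\vec H\rangle - \s\Delta\log\rho^{1/3} = \frac{\mathfrak h}{r^2} = \frac{1}{r^2}\Bigl(1-\tfrac{2M(v_0,r)}{r}\Bigr),
\]
which vanishes precisely on $\Sigma := \{r=r_0\}$ and is strictly positive for $r>r_0$. This shows $\{\Sigma_r\}$ is an (SP)-foliation. To verify the null energy condition enters Theorem \ref{t1} non-negatively, I would check the Einstein-tensor components computed immediately after Lemma \ref{l17}: $G_{\ubar L} = 0$ along $\{\Sigma_r\}$, $G(\ubar L,\ubar L) = 2\beta_r e^{-2\beta}/r$, and $G(\ubar L,L) = 2\beta_r\mathfrak h/r - 4M_r/r^2$. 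Hypothesis~(3), $0 \le \mathfrak h\beta_r \le 2M_r/r$, is exactly what makes $G(\ubar L,\ubar L)\ge 0$ and $G(\ubar L,L)\le 0$ on $\{\mathfrak h\ge 0\}\cap\Omega$, hence (using $\langle N,N\rangle=0$, $\langle N,L^-\rangle=-\tfrac12$) the integrand of Theorem \ref{t1} is non-negative on the exterior region $\{r\ge r_0\}$. Strong flux decay is then Lemma \ref{l18}, whose hypotheses are precisely~(1) and~(2).

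Next I would apply Theorem \ref{t2} to conclude $m(\Sigma) \le M$ where $M = \lim_{s\to\infty} m(\Sigma_s)$, and identify $M$ with $m_0$ via Corollary \ref{c5}. For this I need the asymptotic quantities associated to $\{\Sigma_s\}$: since the foliation has $\zeta = 0$ (Lemma \ref{l17} with $\omega$ constant) we get $t_1 = 0$; from $r(s) = s + c_0 + o_3(s^0)$ established in Lemma \ref{l18} (with $c_0 = r_0 - \int_{r_0}^{\infty} f\,dt$) and $\tr K = 2e^{-\beta}/r$, $\tr Q = 2e^{\beta}\mathfrak h/r$, one reads off $\ubar\theta = -2c_0$ and $\theta = -2(c_0 + 2m_0)$ while $\mathring{\mathcal K} = 1$. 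Substituting into Corollary \ref{c5} gives
\[
\mathring{\mathcal K}\ubar\theta - \theta - \mathring\Delta\ubar\theta + 4\mathring\nabla\cdot t_1 \;=\; -2c_0 + 2(c_0+2m_0) - 0 + 0 \;=\; 4m_0,
\]
so $M = \tfrac{1}{4}\bigl(\tfrac{1}{4\pi}\int_{\mathbb S^2}(4m_0)^{2/3}\mathring{dA}\bigr)^{3/2} = m_0$. On $\Sigma=\{r=r_0\}$, Lemma \ref{l17} gives $\tr\chi|_\Sigma = 0$ (since $\mathfrak h=0$ and $\s\nabla r =0$), triggering the final clause of Theorem \ref{t2} and yielding $\sqrt{|\Sigma|/(16\pi)} \le m_B = m_0$.

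For the equality case, Theorem \ref{t2} (via Theorem \ref{t4}) forces $(\gamma,\ubar\chi,\tr\chi,\zeta)$ on every cross section of $\Omega$ to agree with some foliation of the Schwarzschild standard null cone of mass $r_0/2$. I would compare with Lemma \ref{l17}: matching $\tr\ubar\chi = 2e^{-\beta(v_0,r)}/r$ with the Schwarzschild value $2/r$ forces $\beta(v_0,r)\equiv 0$, then matching $\rho = 2M(v_0,r)/r^3$ with $2m_0/r^3$ forces $M(v_0,r)\equiv m_0$, so $r_0 = 2m_0$ and $\Omega$ agrees with a standard null cone of Schwarzschild of mass $m_0$. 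The only non-routine step is the Bondi-mass computation, and within it the main point is confirming that the $o(s^{-2})$ remainder in $M(v_0,r)=m_0+o(1)$ does not contribute to the asymptotic coefficient $\theta$ in $\tr Q$; this requires controlling the interplay between the $\beta$-correction in $r(s)$ and the expansion of $\mathfrak h$, but given the integrability of $\beta$ and the form of $\tr\chi$ in Lemma \ref{l17} the bookkeeping is straightforward.
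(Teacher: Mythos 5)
Your proposal is correct and assembles exactly the ingredients the paper itself cites for this theorem: the computations following Lemma \ref{l17} showing $G_{\ubar L}=0$, $G(\ubar L,\ubar L)=\tfrac{2\beta_r}{r}e^{-2\beta}$ and $G(\ubar L,L)=\tfrac{2\beta_r}{r}\mathfrak{h}-\tfrac{4M_r}{r^2}$ (so that hypothesis (3) supplies the needed sign conditions), the (SP)-property $\tfrac14\langle\vec H,\vec H\rangle-\tfrac13\s\Delta\log\rho=\mathfrak{h}/r^2$, Lemma \ref{l18} for strong flux decay, and Theorem \ref{t2} with the Bondi mass identified as $m_0$ via $\ubar\theta=-2c_0$, $\theta=-2(c_0+2m_0)$, $t_1=0$ in Corollary \ref{c5}. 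The only cosmetic difference is in the rigidity step, where the paper's route through Theorem \ref{t4} gives $G(L^-,\cdot)=0$ directly, forcing $\beta_r=M_r=0$ and hence $\beta=0$, $M=m_0$ slightly more immediately than your matching of $\tr\ubar\chi$ and $\rho$ against the Schwarzschild data.
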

\subsection{Perturbing spherical symmetry}
We wish to study perturbations off of the spherically symmetric metric given in Theorem \ref{t8} for the coordinate chart $(v,r,\vartheta,\varphi)$. We start by choosing a 1-form $\eta$ such that $\eta(\partial_r(\partial_v)) = \mathcal{L}_{\partial_v}\eta = 0$ and a 2-tensor $\gamma$ satisfying $\gamma(\partial_r(\partial_v),\cdot) = \mathcal{L}_{\partial_v}\gamma = 0$ with restriction $\gamma|_{(v,r)\times\mathbb{S}^2}$ positive definite. Finally we choose smooth functions $M$, $\beta$ and $\alpha$. Defining $\vec{\eta}$ to be the unique vector field satisfying $\gamma(\vec{\eta},X) = \eta(X)$ for arbitrary $X\in \Gamma(TM)$ and $r^2|\vec\eta|^2 := \gamma(\vec{\eta},\vec{\eta})$ the spacetime metric and its inverse are given by
\begin{align*}
g &= -(\mathfrak{h}+\alpha)e^{2\beta}dv\otimes dv+e^{\beta}(dv\otimes(dr+\eta)+(dr+\eta)\otimes dv)+r^2\gamma\\
g^{-1}&=e^{-\beta}(\partial_v\otimes \partial_r+\partial_r\otimes\partial_v)+(\mathfrak{h}+\alpha+|\vec{\eta}|^2)\partial_r\otimes\partial_r-(\vec{\eta}\otimes \partial_r+\partial_r\otimes\vec{\eta})+\frac{1}{r^2}\gamma^{-1}.
\end{align*}
We see that $\Omega:= \{v = v_0\}$ remains a null hypersurface with $\ubar L (= Dv) = e^{-\beta}\partial_r\in \Gamma(T\Omega)\cap\Gamma(T^\perp\Omega)$. Our metric resembles the perturbed metric used by Alexakis \cite{A} to successfully verify the Penrose inequality for vacuum perturbations of the standard null cone of Schwarzschild spacetime. 
We'll need the following to specify our decay conditions:
\begin{definition}\label{d10}
Suppose $\Omega$ extends to past null infinity with level set function $s$ for some null generator $\ubar L$.
For a transversal $k$-tensor $T$
\begin{itemize}
\item We say $T(s,\delta)=\delta o_n^X(s^{-m})$ if $T=o_n^X(s^{-m})$ and
$$\limsup_{\delta\to 0}\sup_\Omega\frac{1}{\delta}|s^m(\mathcal{L}_{X_{i_1}}...\mathcal{L}_{X_{i_j}}T)(s,\delta)|<\infty\,\,\text{for}\,\, 0\leq j\leq n$$
\item We define 
$$|T|^2_{\mathring{H}^m}=|T|^2_{\mathring\gamma}+|\mathring\nabla T|^2_{\mathring\gamma}+\dots+|\mathring\nabla^mT|^2_{\mathring\gamma}.$$
\end{itemize}
\end{definition}
\underline{Decay Conditions on $\Omega$:}
\begin{enumerate}
\item $r^2\gamma = r^2\mathring\gamma+r\delta\gamma_1+\tilde\gamma$ where:
\begin{enumerate}
\item $\mathring\gamma$ is the $\partial_r$-Lie constant, transversal standard round metric on $\mathbb{S}^2$ independent of $\delta$
\item $\gamma_1$ is a $\partial_r$-Lie constant, transversal 2-tensor independent of $\delta$
\item $\tilde\gamma$ is a transversal 2-tensor satisfying $(\mathcal{L}_{\partial_r})^i\tilde\gamma = \delta o_{5-i}^X(r^{1-i})$ for $0\leq i\leq 3$
\end{enumerate}
\item $\alpha=\delta\frac{\alpha_0}{r}+\tilde\alpha$ where $\alpha_0$ is a $\partial_r$-Lie constant function independent of $\delta$ and $|\tilde\alpha|_{\mathring{H}^2}\leq \delta h_1(r)$ for $h_1=o(r^{-1})$
\item $\beta$ satisfies:
\begin{enumerate}
\item $|\beta| = o_2(r^{-1})$ is r-integrable
\item $|\mathring\nabla\beta|_{\mathring{H}^3}\leq \delta h_2(r)$ for some integrable $h_2 = o(r^{-1})$
\item $|\mathring\nabla\beta_r|_{\mathring{H}^2}=O(r^{-1})$
\end{enumerate}
\item $M = m_0+\tilde m$ where $m_0>0$ is constant independent of $\delta$ and $|\tilde m|_{\mathring{H}^2}\leq \delta h_3(r)$ for $h_3 = o(1)$ 
\item $\eta$ is a transversal 1-form satisfying:
\begin{enumerate}
\item $\eta = o_2(1)$
\item $|\eta|_{\mathring{H}^3}+r|\mathcal{L}_{\partial_r}\tilde\eta|_{\mathring{H}^3}\leq\delta h_4(r)$ for $h_4=o(1)$.
\end{enumerate}
\end{enumerate}
\subsubsection{The geodesic foliation}
As in the spherically symmetric case we identify the null geodesic generator $Dv = e^{-\beta}\partial_r$. We will again for convenience take the background foliation to be level sets of the coordinate $r$. We wish therefore to relate the given decay in $r$ to the geodesic foliation given by the generator $\ubar L:=Dv$ in order to show $\Omega$ is asymptotically flat with strong flux decay. \\
Once again $\frac{ds}{dr} = e^{\beta} = 1+f$ where $f = \beta\frac{e^\beta-1}{\beta}$ is $r$-integrable due to decay condition 3. Taking local coordinates $(\vartheta,\varphi)$ on $\Sigma_{r_0}$ (for some $r_0$) we have
\begin{equation}
s = r-c_0(\vartheta,\varphi)-\beta_1(r,\vartheta,\varphi)
\end{equation}
for $\beta_0(\vartheta,\varphi): = \int_{r_0}^\infty f(t,\vartheta,\varphi)dt$, $c_0=r_0-\beta_0$ and $\beta_1(r,\vartheta,\varphi) = \int_r^\infty f(t,\vartheta,\varphi)dt$. Since each $\Sigma_r$ is compact, an $m$-th order partial derivative of $f$ is bounded by $C|\mathring\nabla f|_{\mathring{H}^{m-1}}$ for some constant $C$ independent of $r$ (from decay condition 3). From decay condition 3, provided $m\leq 4$, derivatives in $\vartheta,\varphi$ of $\beta_0$ and $\beta_1$ pass into the integral (for fixed $r$) onto $f$ and are bounded. On any $\Sigma_s$ (i.e fixed $s$) it follows from (26) that
$$\partial_{\vartheta(\varphi)} r = -\frac{\int_{r_0}^r\partial_{\vartheta(\varphi)} f(t,\vartheta,\varphi)dt}{1+f}=-e^{-\beta}\int_{r_0}^r\beta_{\vartheta(\varphi)}e^\beta dt$$
with bounded derivatives up to third order. It's a simple verification in local coordinates, from
$$r(s,\vartheta,\varphi) = s+c_0(\vartheta,\varphi)+\beta_1(r(s,\vartheta,\varphi),\vartheta,\varphi),$$ 
that $\partial_s^i\beta_1=o_{3-i}^X(s^{-i})$ for $0\leq i\leq 3$. Coupled with the fact that $\mathcal{L}_{\ubar L} = e^{-\beta}\mathcal{L}_{\partial_r}$ on transversal tensors we conclude that $(\mathcal{L}_{\ubar L})^i\tilde\gamma=o_{3-i}^X(s^{1-i})$ for $0\leq i\leq 3$ and therefore
\begin{equation}
\gamma_s = r^2\gamma|_{\Sigma_s} = s^2\mathring\gamma+s\Gamma_1
+\tilde\Gamma
\end{equation}
where
\begin{align*}
\Gamma_1 &= 2c_0\mathring\gamma+\delta\gamma_1\\
\tilde\Gamma &= \tilde\gamma+2s\beta_1\mathring\gamma+c_0^2\mathring\gamma+c_0\delta\gamma_1 +\beta_1^2\mathring\gamma+2c_0\beta_1\mathring\gamma+\beta_1\delta\gamma_1
\end{align*}
satisfies the requirements towards strong decay.
\subsubsection{Calculating $\rho$}
Since we will compare computations for the foliation $\{\Sigma_r\}$ with the geodesic foliation of 5.4.1 we will revert back to denoting the covariant derivative on $\Sigma_s$ by $\nabla$ and the covariant derivative on $\Sigma_r$ by $\s\nabla$. For the foliation $\{\Sigma_r\}$ we have the linearly independent normal vector fields
\begin{align*}
\ubar L&=e^{-\beta}\partial_r\\
Dr &= e^{-\beta}\partial_v+(\mathfrak{h}+\alpha+|\vec{\eta}|^2)\partial_r-\vec{\eta}
\end{align*}
from which similar calculations as in spherical symmetry yield the unique null normal satisfying $\langle \ubar L,L\rangle = 2$ to be given by
$$L = 2\partial_v+e^{\beta}(\mathfrak{h}+\alpha+|\vec{\eta}|^2)\partial_r-2e^{\beta}\vec{\eta}.$$
\begin{lemma}\label{l19}
We have 
\begin{align*}
\ubar\chi &= e^{-\beta}(r\mathring\gamma+\frac{\delta}{2}\gamma_1+\frac{1}{2}(\mathcal{L}_{\partial_r}\tilde\gamma))\\
\tr\ubar\chi &= e^{-\beta}(\frac{2}{r}+\frac{\delta\ubar\theta}{r^2})+\delta o_4^X(r^{-2})
\end{align*}
for $\ubar\theta:= - \frac12\mathring{\tr}\gamma_1$. Moreover,
$$\s\nabla^m\ubar\chi = -e^{-\beta}\frac{\delta}{2}\mathring\nabla^m\gamma_1+\delta o_{4-m}^X(1),\,\,\,0\leq m\leq 4.$$
\end{lemma}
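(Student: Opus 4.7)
The plan is a direct computation built on three ingredients: the Koszul formula exploited via a good choice of vector-field extensions, the fact that $\partial_r$ pairs trivially with the angular directions, and a Neumann-type perturbation expansion of the induced metric.

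First I would compute $\ubar\chi$ from its definition $\ubar\chi(V,W)=\langle D_V\ubar L,W\rangle$ for $V,W\in\Gamma(T\Sigma_r)$. Writing $\ubar L=e^{-\beta}\partial_r$ and expanding $D_V(e^{-\beta}\partial_r)=-e^{-\beta}V(\beta)\partial_r+e^{-\beta}D_V\partial_r$, the key observation is that $\langle\partial_r,W\rangle=0$: from the explicit form of $g$, the cross term $e^{\beta}(dv\otimes(dr+\eta)+(dr+\eta)\otimes dv)$ vanishes on $(\partial_r,W)$ since $dv(W)=0$ and $dv(\partial_r)=0$, while the warped piece $r^2\gamma(\partial_r,W)$ vanishes by the hypothesis $\gamma(\partial_r,\cdot)=0$. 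Hence the $V(\beta)\partial_r$ term contributes nothing to $\ubar\chi$, and only $e^{-\beta}\langle D_V\partial_r,W\rangle$ survives. Extending $V,W$ by $\mathbb{S}^2$-coordinate lifts so that $[V,\partial_r]=[W,\partial_r]=0$ and $[V,W]\in T\Sigma_r$, the Koszul identity collapses -- every term with a $\langle\partial_r,\cdot\rangle$ factor is zero -- to $2\langle D_V\partial_r,W\rangle=\partial_r\langle V,W\rangle=\partial_r(r^2\gamma(V,W))$. Substituting $r^2\gamma=r^2\mathring\gamma+r\delta\gamma_1+\tilde\gamma$ and using $\mathcal{L}_{\partial_r}\mathring\gamma=\mathcal{L}_{\partial_r}\gamma_1=0$ immediately yields the first formula.

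For $\tr\ubar\chi$ I would expand the inverse
\[
\gamma_r^{-1}=\tfrac{1}{r^2}\mathring\gamma^{-1}-\tfrac{\delta}{r^3}\mathring\gamma^{-1}\gamma_1\mathring\gamma^{-1}+\delta o_4^X(r^{-3})
\]
via the Neumann series applied to $\gamma_r=r^2(\mathring\gamma+\tfrac{\delta\gamma_1}{r}+\tfrac{\tilde\gamma}{r^2})$ together with decay condition 1(c). Contracting against $\ubar\chi$, the $r\mathring\gamma$ piece produces $\tfrac{2}{r}+\tfrac{2\delta\ubar\theta}{r^2}$ (using $\mathring{\tr}\gamma_1=-2\ubar\theta$), the $\tfrac{\delta}{2}\gamma_1$ piece contributes $-\tfrac{\delta\ubar\theta}{r^2}$, and the $\mathcal{L}_{\partial_r}\tilde\gamma$ contribution together with the subleading pieces of $\gamma_r^{-1}$ assemble into a remainder of precisely the form $\delta o_4^X(r^{-2})$.

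For the covariant-derivative claim I would set up the difference tensor $\mathcal{D}$ between $\s\nabla$ and $\mathring\nabla$ exactly as in Lemma \ref{l16}, obtaining $\mathcal{D}=O(r^{-1})\mathring\nabla\gamma_1+\delta o_3^X(r^{-1})$ from condition 1. Since $\mathring\nabla\mathring\gamma=0$, the identity $\mathring\gamma=r^{-2}\gamma_r-\delta\gamma_1/r-\tilde\gamma/r^2$ converts $\s\nabla\mathring\gamma$ into $-\tfrac{\delta}{r}\s\nabla\gamma_1+\delta o_2^X(r^{-2})$. Iterating $\s\nabla^m$ on the formula for $\ubar\chi$ up to $m=4$, each derivative on the $r\mathring\gamma$ term is forced either onto $e^{-\beta}$ (producing a factor $\mathring\nabla\beta=\delta o(r^{-1})$ by 3(b,c)) or onto $\mathring\gamma$ (producing $\delta\mathring\nabla\gamma_1/r$ via the identity above); after the cancellation between the leading $\s\nabla^m(r\mathring\gamma)$ and $\s\nabla^m(\tfrac{\delta}{2}\gamma_1)$ one is left with $-e^{-\beta}\tfrac{\delta}{2}\mathring\nabla^m\gamma_1$ plus a remainder of the stated order.

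The main obstacle is the iterated-derivative bookkeeping in this last step: up to four derivatives must be tracked, commuted between $\s\nabla$ and $\mathring\nabla$, and distributed across products of $e^{-\beta}$, $r\mathring\gamma$, $\gamma_1$, $\mathcal{L}_{\partial_r}\tilde\gamma$, and the inverse. At each stage one must verify that no derivative costs a factor of $r$ that the decay conditions 1, 3, and 5 cannot absorb. Organized as an induction on $m$, every non-surviving contribution carries at least one factor of $\mathring\nabla\beta$, $\mathcal{D}$, or $\mathcal{L}_{\partial_r}\tilde\gamma/r$, each of which is $\delta o(1)$ in the appropriate tangential Sobolev norm, so the remainder lands squarely inside $\delta o_{4-m}^X(1)$.
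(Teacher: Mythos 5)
Your proposal is correct and follows essentially the same route as the paper: the Koszul formula with $\partial_r$-commuting lifts to get $\ubar\chi(V,W)=\tfrac12 e^{-\beta}\partial_r\langle V,W\rangle$, contraction with the expanded inverse metric (Proposition \ref{p6}) for the trace, and the difference tensor of Lemma \ref{l16} for the iterated covariant derivatives. Your way of producing the sign flip in the last identity --- writing $\s\nabla\mathring\gamma=-\tfrac{\delta}{r}\s\nabla\gamma_1-\tfrac{1}{r^2}\s\nabla\tilde\gamma$ from $\s\nabla\gamma_r=0$ rather than applying $\mathcal{D}$ directly to $\ubar\chi$ --- is just a reorganization of the same computation and yields the same cancellation $-\delta+\tfrac{\delta}{2}=-\tfrac{\delta}{2}$.
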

\begin{proof}
First we extend $V,W\in E_{\partial_r}(\Sigma_{r_0})$ off of $\Omega$ such that $[\partial_v,V(W)] = 0$. Then for $\ubar\chi$:
\begin{align*}
\ubar\chi(V,W)&=\langle D_V(e^{-\beta}\partial_r),W\rangle\\
&=e^{-\beta}\langle D_V\partial_r,W\rangle\\
&=e^{-\beta}\frac12\partial_r\langle V,W\rangle\\
&=e^{-\beta}(r\mathring\gamma(V,W)+\frac{\delta}{2}\gamma_1(V,W)+\frac{1}{2}\mathcal{L}_{\partial_r}\tilde\gamma(V,W))
\end{align*}
having used the Koszul formula to get the third line. So using a basis extension $\{X_1,X_2\}\subset E_{\partial_r}(\Sigma_{r_0})$ Proposition \ref{p6} provides the inverse metric $\gamma(r)^{ij}=\frac{1}{r^2}\mathring\gamma^{ij}-\frac{\delta}{r^3}\mathring{\gamma_1}^{ij}+\delta o_5^X(r^{-3})$ and $\tr\ubar\chi$ follows by contracting $\gamma(r)^{-1}$ with $\ubar\chi$. For the final identity we note from Lemma \ref{l16} we have for the decomposition $\gamma_r = r^2\mathring\gamma+r\delta\gamma_1+\tilde\gamma$ the difference tensor 
\begin{align*}
\langle \mathcal{D}(V,W),U\rangle&=\langle \s\nabla_VW-\mathring\nabla_VW,U\rangle\\
&=\frac{r\delta}{2}\Big(\mathring\nabla_V{\gamma_1}(W,U)+\mathring\nabla_W{\gamma_1}(V,U)-\mathring\nabla_U{\gamma_1}(V,W)\Big)\\
&\indent+ \frac{1}{2}\Big(\mathring\nabla_V\tilde\gamma(W,U)+\mathring\nabla_W\tilde\gamma(V,U)
-\mathring\nabla_U\tilde\gamma(V,W)\Big)
\end{align*}
for $V,W,U\in E(\Sigma_{r_0})$. So proceeding as in Proposition \ref{p9}
\begin{align*}
\s\nabla_i\ubar\chi_{jk} &= \mathring\nabla_i\ubar\chi_{jk}-\mathcal{D}^m_{ij}\ubar\chi_{mk}-\mathcal{D}^m_{ik}\ubar\chi_{jm}\\
&=\mathring\nabla_i(re^{-\beta}\mathring\gamma_{jk}+e^{-\beta}\frac{\delta}{2}{\gamma_1}_{jk}+e^{-\beta}\frac{1}{2}(\mathcal{L}_{\partial_r}\tilde\gamma)_{jk})-e^{-\beta}\delta\mathring\nabla_i({\gamma_1}_{jk})+\delta o_4^X(1)\\
&=r\mathring\gamma_{jk}\mathring\nabla_i(e^{-\beta})-e^{-\beta}\frac{\delta}{2}\mathring\nabla_i{\gamma_1}_{jk}+\frac{\delta}{2}{\gamma_1}_{jk}\mathring\nabla_i(e^{-\beta})+\delta o_3^X(1)\\
&=-e^{-\beta}\frac{\delta}{2}\mathring\nabla_i{\gamma_1}_{jk}+\delta o_3^X(1).
\end{align*}
Iteration provides our result
\begin{align*}
\s\nabla^{m}\ubar\chi &= -e^{-\beta}\frac{\delta}{2}\mathring\nabla^m{\gamma_1}_{jk}+\delta o_{4-m}^X(1),\,\,\,1\leq m\leq 4
\end{align*}
from decay condition 3.
\end{proof}
 For $\chi$ we have
\begin{align*}
\chi(V,W) &= 2\langle D_V\partial_v,W\rangle+e^\beta(\mathfrak{h}+\alpha+|\vec{\eta}|^2)\langle D_V\partial_r,W\rangle-2\langle D_V{e^{\beta}\vec\eta},W\rangle\\
&=2\langle D_V\partial_v,W\rangle+e^{2\beta}(\mathfrak{h}+\alpha+|\vec\eta|^2)\ubar\chi(V,W)-2\s\nabla_V(e^\beta\eta)(W)
\end{align*}
and using the Koszul formula on the first term we see
\begin{align*}
2\langle D_V\partial_v,W\rangle &= V(e^\beta\eta(W))+\partial_v\langle V,W\rangle - W(e^\beta\eta(V))-\langle V,[\partial_v,W]\rangle +\langle \partial_v,[W,V]\rangle+\langle W,[V,\partial_v]\rangle\\
&=\s\nabla_V(e^\beta\eta)(W) - \s\nabla_W(e^\beta\eta)(V)\\
&=\text{curl}(e^\beta\eta)(V,W)
\end{align*}
so that a trace over $V,W$ yields $\tr\chi = e^{2\beta}(\mathfrak{h}+\alpha+|\vec\eta|^2)\tr\ubar\chi-2\s\nabla\cdot(e^\beta\eta)$ and therefore
\begin{align*}
\langle \vec{H},\vec{H}\rangle &= e^{2\beta}(\mathfrak{h}+\alpha+|\vec\eta|^2)(\tr\ubar\chi)^2-2\s\nabla\cdot(e^\beta\eta)\tr\ubar\chi\\
&=\Big(1-\frac{2M}{r}+\delta\frac{\alpha_0}{r}\Big)\Big(\frac{2}{r}+\frac{\delta\ubar\theta}{r^2}\Big)^2+\delta o_2^X(r^{-3})\\
&=\Big(1-\frac{2M}{r}+\delta\frac{\alpha_0}{r}\Big)\Big(\frac{4}{r^2}+\frac{4\delta\ubar\theta}{r^3}\Big)+\delta o_2^X(r^{-3})\\
&=\frac{4}{r^2}\Big(1-\frac{2m_0}{r}+\delta\frac{\ubar\theta}{r}+\delta\frac{\alpha_0}{r}\Big)+\delta o_2^X(r^{-3})
\end{align*}
from decay conditions 2-5.
For $\zeta$ we have
\begin{align*}
\zeta(V)&=\langle D_V(e^{-\beta}\partial_r),\partial_v\rangle-e^\beta\langle D_V(e^{-\beta}\partial_r),\vec\eta\rangle\\
&=-V\beta+e^{-\beta}\langle D_V\partial_r,\partial_v\rangle-\langle D_V\partial_r,\vec\eta\rangle\\
&=-V\beta+e^{-\beta}\langle D_V\partial_r,\partial_v\rangle-e^\beta\ubar\chi(V,\vec\eta).
\end{align*}
From the Koszul formula 
\begin{align*}
2\langle D_V\partial_r,\partial_v\rangle&=V\langle\partial_r,\partial_v\rangle+\partial_r\langle V,\partial_v\rangle-\partial_v\langle V,\partial_r\rangle-\langle V,[\partial_r,\partial_v]\rangle+\langle \partial_r,[\partial_v,V]\rangle+\langle \partial_v,[V,\partial_r]\rangle\\
&=e^\beta V\beta+\partial_r(e^\beta\eta(V))\\
&=e^\beta V\beta + \mathcal{L}_{\partial_r}(e^\beta\eta)(V)
\end{align*}
from which we conclude that $\zeta(V) = -\frac12V(\beta)+\frac{e^{-\beta}}{2}\mathcal{L}_{\partial_r}(e^\beta\eta)(V)-e^\beta\ubar\chi(V,\vec\eta)$ and
\begin{align*}
\s\nabla\cdot\zeta &= -\frac12\s\Delta\beta+\frac12\s\nabla\cdot(e^{-\beta}\mathcal{L}_{\partial_r}(e^\beta\eta))-\s\nabla\cdot(e^\beta\ubar\chi(\vec\eta))\\
&=-\frac12\s\Delta\beta+\frac12\s\nabla\cdot(\beta_r\eta)+\frac12\s\nabla\cdot(\mathcal{L}_{\partial_r}\eta)
-e^\beta\ubar\chi(\s\nabla\beta,\vec\eta)-e^\beta\s\nabla\cdot(\ubar\chi(\vec\eta))\\
&=\delta o_2^X(r^{-3})
\end{align*}
having used decay conditions 3, 5 and Lemma \ref{l19} for the final line.
\begin{lemma}\label{l20}
$\Omega$ satisfies conditions 1, 2 and 3 of Definition \ref{d7}. $\Omega$ additionally satisfies strong flux decay if and only if 
$$\frac12\mathring\nabla\cdot\gamma_1+d\ubar\theta=0$$
for $\ubar\theta = -\frac12\mathring{\tr}\gamma_1$ and is subsequently past asymptotically flat.
\end{lemma}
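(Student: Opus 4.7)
I will verify the clauses of Definitions~\ref{d7} and~\ref{d8} in turn, with $G_{\ubar L} = o(s^{-2})$ being the only nontrivial clause and the one that encodes the claimed identity.

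Condition~1 of Definition~\ref{d7}, together with the strong-decay requirement $(\mathcal{L}_{\ubar L})^j\tilde\gamma = o_{3-j}^X(s^{1-j})$ for $1 \leq j \leq 3$, is already furnished by the decomposition (27): $\gamma_s = s^2\mathring\gamma + s\Gamma_1 + \tilde\Gamma$ with $\Gamma_1 = 2c_0\mathring\gamma + \delta\gamma_1$ Lie-constant along $\ubar L$ and $\tilde\Gamma$ of the correct decay, all of which were established in \S5.4.1. Condition~3 follows from the formula for $\tr\chi$ in the perturbed setting of \S5.4.2 (the analogue of Lemma~\ref{l17}): substituting $r = s + c_0 + \beta_1$ from (26) and invoking decay conditions 2--5 extracts the leading $\theta_0 = 2\mathring{\mathcal K}_{\mathbb{S}^2} = 2$ and a Lie-constant $\theta$ at order $s^{-2}$.

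For Condition~2 and the stronger $\tilde t = o_1^X(s^{-1})$ demanded by strong decay, I would relate the torsion $t$ of the geodesic leaf $\Sigma_s = \{r = \omega\}$ (with $\omega = s + c_0 + \beta_1$) to the already-computed torsion $\zeta_r = -\tfrac12 \s{d}\beta + \tfrac12 e^{-\beta}\mathcal{L}_{\partial_r}(e^\beta\eta) - e^\beta \ubar\chi(\cdot,\vec\eta)$ on $\{\Sigma_r\}$ via Lemma~\ref{l12}. Decay conditions 3 and 5 then yield both the required $\tilde t = o_1^X(s^{-1})$ and an explicit expression for $t_1$ in terms of the perturbation data.

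For $G_{\ubar L} = o(s^{-2})$, Proposition~\ref{p6} converts the problem to the identity $t_1 = \tfrac12\mathring\nabla\cdot\Gamma_1 + d\ubar\theta_{\rm geo}$, where $\ubar\theta_{\rm geo} = -\tfrac12\mathring{\tr}\,\Gamma_1 = -2c_0 + \delta\ubar\theta$. A direct expansion gives
\[
\tfrac12\mathring\nabla\cdot\Gamma_1 + d\ubar\theta_{\rm geo} = -dc_0 + \delta\bigl(\tfrac12\mathring\nabla\cdot\gamma_1 + d\ubar\theta\bigr),
\]
and recalling $c_0 = r_0 - \int_{r_0}^\infty (e^\beta - 1)\,dt$, the key remaining claim is that the $\beta$-dependent contribution to $t_1$ is precisely $-dc_0$. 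Granted this cancellation, $G_{\ubar L} = o(s^{-2})$ becomes equivalent to $\tfrac12\mathring\nabla\cdot\gamma_1 + d\ubar\theta = 0$, and past asymptotic flatness follows (using Remark~\ref{r10} to bypass Condition~4 of Definition~\ref{d7} once strong flux decay is in hand).

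The main obstacle is the identification of the $\beta$-part of $t_1$ with $-dc_0$: $\beta$ enters the geometry both through the normalization $\ubar L = e^{-\beta}\partial_r$ and through the shift $\beta_1$ relating $r$ to $s$, and one must carefully track how these contributions accumulate when the torsion is transported from $\{\Sigma_r\}$ to $\{\Sigma_s\}$ via Lemma~\ref{l12}. A sanity check is the spherically symmetric limit $\mathring\nabla\beta = 0$: there $c_0$ is a constant, $\eta = 0$, and $\zeta_r = 0$, so both sides of the reduced identity vanish, consistently recovering Lemma~\ref{l18}.
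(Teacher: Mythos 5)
Your plan reproduces the paper's proof essentially step for step: conditions 1--3 are read off from the geodesic reparametrization (26)--(27) together with Lemma~\ref{l12}, and strong flux decay is reduced via Proposition~\ref{p6} to $t_1=\tfrac12\mathring\nabla\cdot\Gamma_1+d\ubar\theta_{\mathrm{geo}}=-dc_0+\delta\bigl(\tfrac12\mathring\nabla\cdot\gamma_1+d\ubar\theta\bigr)$, exactly as in the paper. The single step you defer --- that the $\beta$-contribution to $t_1$ is $-dc_0=d\beta_0$ --- is precisely the paper's central computation and goes through as you anticipate: applying Lemma~\ref{l12} in the form $t(V)=\zeta(V)+\ubar\chi(V,\s\nabla s)$, decay conditions 3 and 5 make $\zeta$ of order $\delta\,o_3^X(r^{-1})$, while $\ubar\chi(V,\s\nabla s)=re^{-\beta}\mathring\gamma\bigl(V,\tfrac{1}{r^2}\mathring\nabla s\bigr)$ contributes $\tfrac1r V\beta_0$ at leading order, giving $t_1=d\beta_0$.
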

\begin{proof}
Having already verified condition 1 up to strong decay for $\gamma_s$ of our geodesic foliation $\{\Sigma_s\}$ we continue to show conditions 2 and 3.\\
Given $V\in E_{\partial_r}(\Sigma_{r_0})$ Lemma \ref{l12} ensures $V-Vs\ubar L|_{\Sigma_s}\in \Gamma(T\Sigma_s)$ and we see that 
\begin{align*}
[V-Vs\ubar L, \ubar L]&=[V,\ubar L]+\ubar LVs\ubar L\\
&=e^{\beta}V(e^{-\beta})\ubar L+e^{-\beta}V(\partial_rs) \ubar L\\
&=(e^{\beta}V(e^{-\beta})+e^{-\beta}V(e^{\beta}))\ubar L\\
&=0.
\end{align*}
So $V-Vs\ubar L\in E(\Sigma_0)$ and Lemma \ref{l12} gives 
\begin{align*}
t(V-Vs\ubar L) = t(V) &= \zeta(V)+\ubar\chi(V,\s\nabla s)= -\frac12V(\beta)+\frac12\beta_r\eta(V)+\frac12(\mathcal{L}_{\partial_r}\eta)(V) - e^\beta\ubar\chi(V,\vec\eta)+\ubar\chi(V,\s\nabla s)\\
&=(\delta o_3^X(r^{-1})\cap o_1(r^{-1}))(V)+\ubar\chi(V,\s\nabla s)\\
&=(\delta o_3^X(r^{-1})\cap o_1(r^{-1}))(V)+re^{-\beta}\mathring\gamma(V,\frac{1}{r^2}\mathring\nabla s)\\
&=\frac{e^{-\beta}}{r}V\beta_0+(\delta o_3^X(r^{-1})\cap o_1(r^{-1}))(V)
\end{align*}
having used decay conditions 3 and 5 to get the second line, Lemma \ref{l19} for the third and (26) for the last. Moreover,
\begin{align*}
(\mathcal{L}_{V-Vs\ubar L}t)(W-Ws\ubar L)&= (V-Vs\ubar L)(t(W-Ws\ubar L)) - t([V,W])\\
&=(\mathcal{L}_Vt)(W)-Vs\ubar L(t(W))\\
&=(\mathcal{L}_V-e^{-\beta}Vs\mathcal{L}_{\partial_r})(\frac{d\beta_0}{r})(W)+o(r^{-1})\\
&=\frac{1}{r}\mathcal{L}_V(d\beta_0)(W)+o(r^{-1})\\
&=\frac{1}{r}(\mathcal{L}_{V-Vs\ubar L}d\beta_0)(W-Ws\ubar L)+o(r^{-1})
\end{align*}
where the last line follows since $\beta_0$ is $\ubar L$-Lie constant. With a basis extension $\{X_i\}\subset E(\Sigma_0)$ we therefore conclude that $\mathcal{L}_{X_i}t=\frac{1}{s}\mathcal{L}_{X_i}d\beta_0 + o(s^{-1})$ so that condition 2 for asymptotic flatness is satisfied up to strong decay with $t_1 = d\beta_0$. From Proposition \ref{p6} and (27):
\begin{align*}
\tr K &=\frac{2}{s}-\frac{1}{2s^2}\mathring{\tr}\Gamma_1+o(s^{-2}) \\
&=\frac{2}{s}-\frac{1}{2s^2}\mathring{\tr}(2c_0\mathring\gamma+\delta\gamma_1)+o(s^{-2})\\
&=\frac{2}{s}+\frac{\delta\ubar\theta-2c_0}{s^2}+o(s^{-2})
\end{align*}
and
\begin{align*}
\hat{K} &=K - \frac12\tr K\gamma_s\\
&= s\mathring\gamma+\frac12\Gamma_1-\frac12\Big(\frac{2}{s}+\frac{\delta\ubar\theta-2c_0}{s^2}+o(s^{-2})\Big)\gamma_s+o(1)\\
&=s\mathring\gamma+\frac12(2c_0\mathring\gamma+\delta\gamma_1)-\frac12(\frac{2}{s}+\frac{\delta\ubar\theta-2c_0}{s^2})(s^2\mathring\gamma+s(2c_0\mathring\gamma+\delta\gamma_1))+o(1)\\
&=-\frac{\delta}{2}(\gamma_1+\ubar\theta\mathring\gamma)+o(1).
\end{align*}
For condition 3 we take $r|_{\Sigma_s}\in \mathcal{F}(\Sigma_s)$ and Lie drag it to the the rest of $\Omega$ along $\partial_r$ (hence $\ubar L$) to give $r_s\in \mathcal{F}(\Omega)$. Using Lemma \ref{l12} from the vantage point of the cross section $\Sigma_s$ amongst the background foliation $\{\Sigma_r\}$:
$$e^{-\beta}\tr Q = e^{-\beta}\tr\chi-4(\zeta+\s{d}\log e^\beta)(\s\nabla r_s) - 2\Delta r_s+|\s\nabla r_s|^2e^\beta\tr\ubar\chi-2\beta_r|\s\nabla r_s|^2$$
From the expression of $r(s)$ in 5.4.1, recalling Remark \ref{r7}, we see $dr_s = -d\beta_0+o(1)$ from which Lemma \ref{l16} implies that $\Delta r_s = -\frac{1}{s^2}\mathring{\Delta}\beta_0+o(s^{-2})$. From decay conditions 3, 5 and Lemma \ref{l19} we have
\begin{align*}
\tr Q&=\tr\chi|_{\Sigma_s}+2\frac{\mathring\Delta\beta_0}{s^2}+o(s^{-2})\\
&=\Big(e^{2\beta}(\mathfrak{h}+\alpha+|\vec\eta|^2)\tr\ubar\chi-2\s\nabla\cdot(e^{\beta}\eta)\Big)|_{\Sigma_s}+2\frac{\mathring\Delta\beta_0}{s^2}+o(s^{-2})\\
&=(\frac{2}{s}+\frac{\delta\ubar\theta-2c_0}{s^2})(1-\frac{2M}{s}+\delta\frac{\alpha_0}{s})+2\frac{\mathring\Delta\beta_0}{s^2}+o(s^{-2})\\
&=\frac{2}{s}+\frac{\delta\ubar\theta-2c_0-4M+2\delta\alpha_0}{s^2}+2\frac{\mathring\Delta\beta_0}{s^2}+o(s^{-2})\\
&=\frac{2}{s}-2\frac{c_0+2M}{s^2}+\frac{1}{s^2}(\delta\ubar\theta+2\mathring\Delta\beta_0+2\delta\alpha_0)+o(s^{-2})
\end{align*}
and condition 3 follows as soon as we set $M = m_0+\delta o_2^X(1)$. As in the spherically symmetric case the highest order term for $\tr Q$ agrees with $\frac{2\mathring{\mathcal{K}}}{s}$ where $\mathring{\mathcal{K}}=1$ is the Gaussian curvature of $\mathring\gamma$. We recall that our use of condition 4 depends on whether $\Omega$ has strong flux decay (Remark \ref{r10}). From Proposition \ref{p6} and (27) we will have strong flux decay if and only if
\begin{align*}
d\beta_0 = t_1&=\frac12\mathring\nabla\cdot\Gamma_1-\frac12\s{d}\mathring{\tr}\Gamma_1 \\
&= \frac12\mathring\nabla\cdot(2c_0\mathring\gamma+\delta\gamma_1)+\s{d}(\delta\ubar\theta-2c_0)\\
&=\frac12{d}(-2\beta_0)+\frac{\delta}{2}\mathring\nabla\cdot\gamma_1+\delta d\ubar\theta+2d\beta_0\\
&=d\beta_0+\delta(\frac12\mathring\nabla\cdot\gamma_1+d\ubar\theta)
\end{align*}
which in turn holds if and only if $\frac12\mathring\nabla\cdot\gamma_1 +d\ubar\theta = 0$.
\end{proof}
Henceforth we will adopt the conditions of Lemma \ref{l20} for $\Omega$. From Proposition \ref{p7} 
\begin{align*}
\mathcal{K}_{r^2\gamma}&=\frac{1}{r^2}+\frac{\delta}{r^3}\Big(\ubar\theta
+\frac12\mathring\nabla\cdot\mathring\nabla\cdot\gamma_1+\mathring\Delta\ubar\theta\Big)
+\delta o_4^X(r^{-3})\\
&=\frac{1}{r^2}+\frac{\delta}{r^3}\ubar\theta+\delta o_4^X(r^{-3}).
\end{align*}
From Lemma \ref{l19} we have
$$\s\nabla_i\s\nabla_j\ubar\chi_{mn} = -\frac{\delta}{2}\mathring\nabla_i\mathring\nabla_j{\gamma_1}_{mn}+\delta o_2^X(1)$$
so that contraction with $\gamma(r)^{-1}$ first in $mn$ then $ij$ gives
$$\s\Delta\tr\ubar\chi = \frac{\delta}{r^4}\mathring\Delta\ubar\theta+\delta o_2^X(r^{-4})$$ which we use in $\s\Delta\log\tr\ubar\chi = \frac{\s\Delta\tr\ubar\chi}{\tr\ubar\chi} - \frac{|\s\nabla\tr\ubar\chi|^2}{(\tr\ubar\chi)^2}$ to conclude
$$\s\Delta\log\tr\ubar\chi = \frac{\delta}{2r^3}\mathring\Delta\ubar\theta+\delta o_2^X(r^{-3}).$$
Finally we have $\rho$
\begin{align*}
\rho&=\mathcal{K}_{r^2\gamma}-\frac{1}{4}\langle\vec{H},\vec{H}\rangle+\s\nabla\cdot\zeta-\s\Delta\log\tr\ubar\chi\\
&=\frac{1}{r^2}+\frac{\delta}{r^3}\ubar\theta-\frac{1}{r^2}+\frac{2m_0}{r^3}-\delta\frac{\ubar\theta}{r^3}-\delta\frac{\alpha_0}{r^3}-\frac{\delta}{2r^3}\mathring\Delta\ubar\theta+\delta o_2^X(r^{-3})\\
&=\frac{2m_0}{r^3}-\frac{\delta}{r^3}(\frac12\mathring\Delta\ubar\theta+\alpha_0)+\delta o_2^X(r^{-3})\\
&=\frac{2m_0}{r^3}-\frac{\delta}{r^3}(\frac12\mathring\Delta\ubar\theta+\alpha_0)+\delta o_2^X(r^{-3})
\end{align*}
and
\begin{align*}
\frac14\langle\vec{H},\vec{H}\rangle-\frac13\s\Delta\log\rho&=\frac{1}{r^2}\Big(1-\frac{2m_0}{r}+\delta\frac{\ubar\theta}{r}+\delta\frac{\alpha_0}{r}\Big)\\
&\indent-\frac13\s\Delta\log\Big(\frac{2m_0}{r^3}-\frac{\delta}{r^3}(\frac12 \mathring\Delta\ubar\theta+\alpha_0)+\delta o_2^X(r^{-3})\Big)+\delta o_2^X(r^{-3}).
\end{align*}
We may now use Lemma \ref{l16} to decompose the last term 
\begin{align*}
\s\Delta\log\Big(\frac{2m_0}{r^3}-\frac{\delta}{r^3}(\frac52 \mathring\Delta\ubar\theta+\alpha_0)+\delta o_2^X(r^{-3})\Big)&=\frac{1}{r^2}\mathring\Delta\log\Big(1-\frac{\delta}{2m_0}(\frac12\mathring\Delta\ubar\theta+\alpha_0)+\delta o_2^X(1)\Big)+\delta o(r^{-2})\\
&=\frac{1}{r^2}\mathring\Delta\log\Big(1-\frac{\delta}{2m_0}(\frac12\mathring\Delta\ubar\theta+\alpha_0)\Big)+\delta o(r^{-2})
\end{align*}
giving
$$\frac14\langle\vec{H},\vec{H}\rangle-\frac13\s\Delta\log\rho=\frac{1}{r^2}\Big(1-\frac{2m_0}{r}-\frac13\mathring\Delta\log\Big(1-\frac{\delta}{2m_0}(\frac12\mathring\Delta\ubar\theta+\alpha_0)\Big)\Big)
+\delta o(r^{-2}).$$
Since $m_0>0$ we notice for sufficiently small $\delta$ our perturbation ensures $\rho>0$ for all $r>0$. However, from our construction so far it's not yet possible to conclude that some $\delta>0$ will enforce $\frac14\langle\vec{H},\vec{H}\rangle\geq \frac13\s{\Delta}\log\rho$ along the foliation. Moreover, the existence of a horizon ($\tr\chi = 0$) is equally questionable. 
\subsubsection{Smoothing to Spherical Symmetry}
We will solve this difficulty by `smoothing' away all perturbations in a neighborhood of the (desired) horizon in order to obtain spherical symmetry on $r<r_1$ for some $r_1>0$ yet to be chosen. The resulting spherical symmetry will uncover the horizon at $r=r_0<r_1$ and will also provide a choice of $\delta>0$ so that $\frac14\langle\vec{H},\vec{H}\rangle>\frac13\s{\Delta}\log\rho$ away from it, causing the foliation $\{\Sigma_r\}$ to be an (SP)-foliation.\\\\
\indent We will use a smooth step function $0\leq S_\delta(r)\leq 1$ such that $S_\delta(r)=0$ for $r<r_1$ and $S_\delta(r)=1$ for $r>r_2$ for some finite $r_2(\delta)$ chosen to ensure $|S_\delta'(r)|\leq \delta$. 
By first choosing \textit{parameter functions} for the desired spherically symmetric region; $\tilde\beta(v,r)$ and $0<\tilde{M}(v,r)=m_0+o(1)$ such that $r_0 = 2\tilde{M}(v_0,r_0)$ and $2\tilde{M}(v_0,r)<r$ for $r>r_0$ we induce spherical symmetry on $r<r_1$ with the following substitutions:
\begin{align*}
\tilde\gamma&\to \delta r(S_\delta(r)-1)\gamma_1+S_\delta(r)\tilde\gamma\\
\beta(r,\vartheta,\varphi)&\to S_\delta(r)\beta(r,\vartheta,\varphi)+(1-S_\delta(r))\tilde\beta(v_0,r)\\
M(r,\vartheta,\varphi)&\to S_\delta(r) M(r,\vartheta,\varphi)+(1-S_\delta(r))\tilde M(v_0,r)\\
\tilde\alpha&\to S_\delta(r)\tilde\alpha-(1-S_\delta(r))\frac{\delta\alpha_0}{r}\\
\eta&\to S_\delta(r)\eta.
\end{align*}
We leave the reader the simple verification that these changes to our perturbation tensors $\tilde\gamma$, $\beta$, $M$, $\tilde\alpha$ and $\eta$ maintain the decay conditions 1-5. Clearly for $r>r_2$ our substitutions leave the metric unchanged while inducing spherical symmetry on $r<r_1$ with the spherical parameter functions $\tilde\beta\,,\tilde M$:

\begin{figure}[h]
\centering
\def\svgwidth{200pt} 
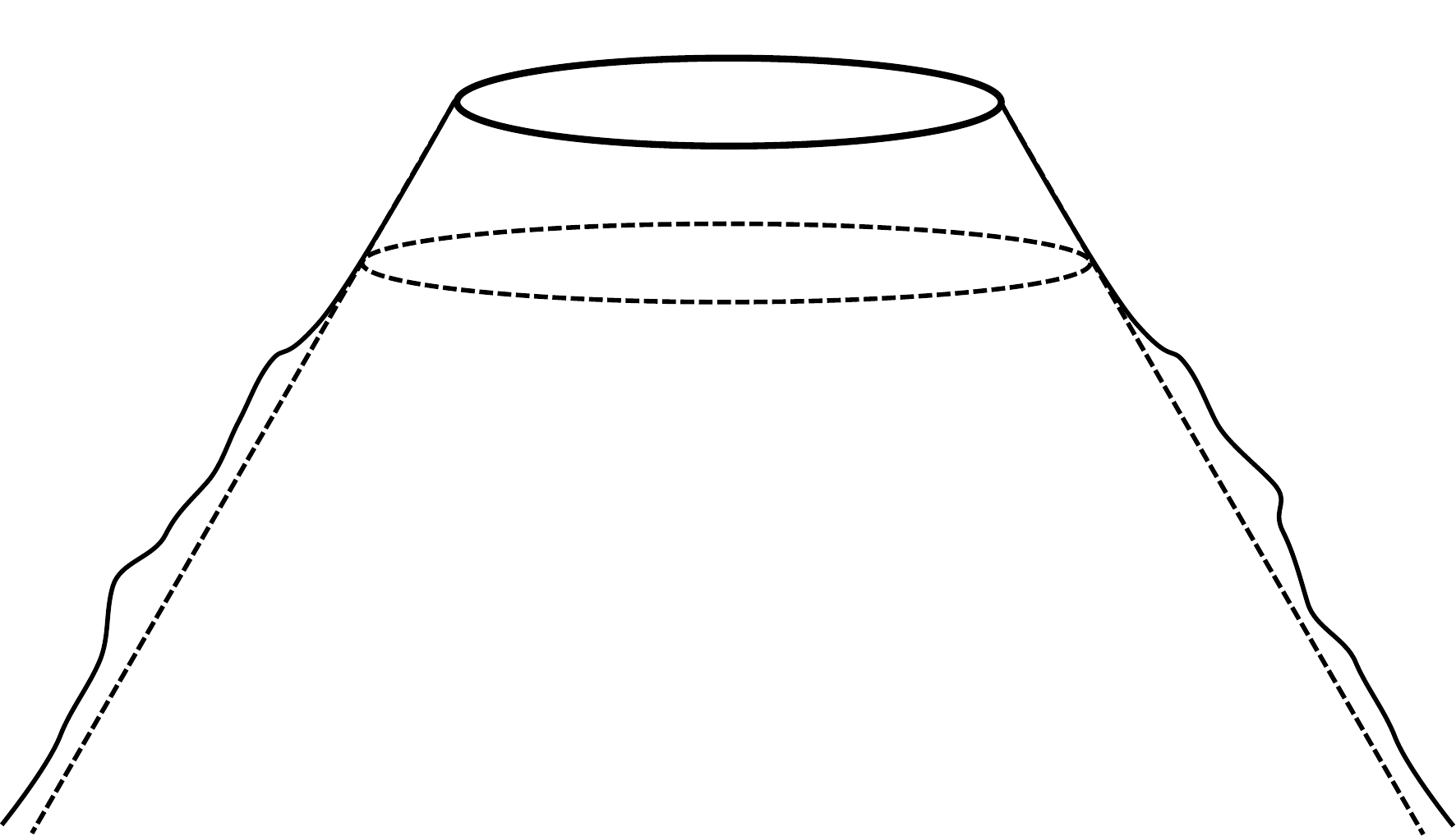
\end{figure}

 An example $S_\delta(r)$ is given by the function
\[S_\delta(r)= \begin{cases} 
      0 & r\leq r_1 \\
      \frac{e^{\frac{k}{r_1-r}}}{e^{\frac{k}{r_1-r}}+e^{\frac{k}{r-r_2}}} & r_1< r< r_2 \\
      1 & r_2\leq r 
   \end{cases}
\]
where $k=\frac{4e^4}{\delta}$ and $r_2(\delta) = r_1+k$. Since $S_\delta(r) = P(\frac{1}{r_1-r}+\frac{1}{r_2-r})$ for $P(r) = \frac{e^{kr}}{1+e^{kr}}$ satisfying the logistic equation
$$P'(r) = kP(1-P)$$
we have
\begin{align*}
S_\delta'(r) &= kS_\delta(r)(1-S_\delta(r))(\frac{1}{(r-r_1)^2}+\frac{1}{(r-r_2)^2})\\
&=k\frac{S_\delta(r)}{(r-r_1)^2}(1-S_\delta(r))+kS_\delta(r)\frac{1-S_\delta(r)}{(r-r_2)^2}\\
&\leq k\Big(\frac{S_\delta(r)}{(r-r_1)^2}+\frac{1-S_\delta(r)}{(r-r_2)^2}\Big).
\end{align*}
Elementary analysis reveals on the interval $r_1<r<r_2$ that 
\begin{align*}
0&\leq \frac{e^{\frac{k}{r_1-r}}}{(r-r_1)^2}\leq\frac{4e^2}{k^2}\\
0&\leq \frac{1}{e^{\frac{k}{r_1-r}}+e^{\frac{k}{r-r_2}}}\leq \frac{1}{2}e^{2}
\end{align*} 
yielding from simple symmetry arguments that both $\frac{S_\delta(r)}{(r-r_1)^2},\,\,\frac{1-S_\delta(r)}{(r-r_2)^2}\leq\frac{2e^4}{k^2}$ and therefore
$$0\leq S'_\delta(r)\leq k\frac{4e^4}{k^2} = \delta$$
as desired.
Denoting $m(r,\delta):= S_\delta(r)m_0+(1-S_\delta(r))\tilde M(r)$ the new metric gives
\[\rho= \begin{cases} 
      \frac{2\tilde M(v_0,r)}{r^3},& r < r_1 \\
       \frac{2m(r,\delta)}{r^3}-\frac{\delta}{r^3}(\frac12\mathring\Delta\ubar\theta+\alpha_0)+\delta o_2^X(r^{-3}), &r_1\leq r\leq r_2 \\
      \frac{2m_0}{r^3}-\frac{\delta}{r^3}(\frac12\mathring\Delta\ubar\theta+\alpha_0)+\delta o_2^X(r^{-3}), & r_2< r
   \end{cases}
\]
and
\[\frac14\langle\vec{H},\vec{H}\rangle-\frac13\s\Delta\log\rho= \begin{cases} 
      \frac{1}{r^2}(1-\frac{2\tilde M(v_0,r)}{r}), & r< r_1 \\
      \frac{1}{r^2}\Big(1-\frac{2m(r,\delta)}{r}-\frac13\mathring\Delta\log\Big(1-\frac{\delta}{2m(r,\delta)}(\frac12\mathring\Delta\ubar\theta+\alpha_0)\Big)\Big) 
+\delta o(r^{-2}),&r_1\leq r\leq r_2\\
      \frac{1}{r^2}\Big(1-\frac{2m_0}{r}-\frac13\mathring\Delta\log\Big(1-\frac{\delta}{2m_0}(\frac12\mathring\Delta\ubar\theta+\alpha_0)\Big)\Big)
+\delta o(r^{-2}), & r_2< r.
   \end{cases}
\]
Since $C(r_1)\geq m(r,\delta)\geq m_0$ for $C(r_1):=\sup_{r_0<r<r_1}\tilde M$ we see for any choice of $r_1>C(r_1)$ (which is possible since $\tilde M = m_0+o(1)$) and sufficiently small $\delta$ the foliation $\{\Sigma_r\}$ satisfies property (SP). If we therefore restrict to perturbations satisfying the null energy condition on $\Omega$ then Theorem \ref{t2} implies the following:
\begin{theorem}\label{t9}
Let $g_\delta$ be a metric perturbation off of spherical symmetry given by
$$g_\delta = -(\mathfrak{h}+\alpha)e^{2\beta}dv\otimes dv+e^{\beta}(dv\otimes(dr+\eta)+(dr+\eta)\otimes dv)+r^2\gamma$$
where
\begin{enumerate}
\item $r^2\gamma = r^2\mathring\gamma+r\delta\gamma_1+\tilde\gamma$ is trasversal with $\mathring\gamma$ the transversal $\partial_r$-Lie constant round metric on $\mathbb{S}^2$ independent of $\delta$, $\gamma_1$ a transversal $\partial_r$-Lie constant 2-tensor independent of $\delta$ satisfying $\mathring\nabla\cdot\gamma_1=d(\mathring{\tr}\gamma_1)$ and $(\mathcal{L}_{\partial_r})^i\tilde\gamma = \delta o_{5-i}^X(r^{1-i})$ for $0\leq i\leq 3$.
\item $\alpha=\delta\frac{\alpha_0}{r}+\tilde\alpha$ where $\alpha_0$ is $\partial_r$-constant, independent of $\delta$ and $|\tilde\alpha|_{\mathring{H}^2}\leq \delta h_1(r)$ for $h_1=o(r^{-1})$
\item $\beta$ satisfies:
\begin{enumerate}
\item $|\beta| = o_2(r^{-1})$ is  r-integrable
\item $|\mathring\nabla\beta|_{\mathring{H}^3}\leq \delta h_2(r)$ for some integrable $h_2 = o(r^{-1})$
\item $|\mathring\nabla\beta_r|_{\mathring{H}^2}=O(r^{-1})$
\end{enumerate}
\item $M=m_0+\tilde m$ where $m_0>0$ is constant, independent of $\delta$ and $|\tilde m|_{\mathring{H}^2}\leq \delta h_3(r)$ for $h_3 = o(1)$
\item $\eta$ is a transversal 1-form satisfying:
\begin{enumerate}
\item $\eta = o_2(1)$
\item $|\eta|_{\mathring{H}^3}+r|\mathcal{L}_{\partial_r}\tilde\eta|_{\mathring{H}^3}\leq\delta h_4(r)$ for $h_4=o(1)$.
\end{enumerate}\end{enumerate}
Then for sufficiently small $\delta$, $\Omega:=\{v=v_0\}$ is past asymptotically flat with strong flux decay. 
In addition, for any choice of spherical parameters $\tilde\beta(v,r)$ and $\tilde{M}(v,r)$ such that $0<\tilde{M}(v_0,r)=m_0+o(1)$, $r_0 = 2\tilde{M}(v_0,r_0)$ and $2\tilde{M}(v_0,r)<r$ for $r>r_0$, smoothing to spherical symmetry with the step function $S_\delta(r)$ (as above) according to:
\begin{align*}
\tilde\gamma&\to \delta r(S_\delta(r)-1)\gamma_1+S_\delta(r)\tilde\gamma\\
\beta(r,\vartheta,\varphi)&\to S_\delta(r)\beta(r,\vartheta,\varphi)+(1-S_\delta(r))\tilde\beta(r)\\
M(r,\vartheta,\varphi)&\to S_\delta(r) M(r,\vartheta,\varphi)+(1-S_\delta(r))\tilde M(r)\\
\tilde\alpha&\to S_\delta(r)\tilde\alpha-(1-S_\delta(r))\frac{\delta\alpha_0}{r}\\
\eta&\to S_\delta(r)\eta
\end{align*}
we have that $\Sigma:=\{r_0 = 2\tilde{M}(v_0,r_0)\}$ is marginally outer trapped and the coordinate spheres $\{\Sigma_r\}_{r\geq r_0}$ form an (SP)-foliation. Moreover, if $g_\delta$ respects the null energy condition on $\Omega$ we have the Penrose inequality:
$$\sqrt{\frac{|\Sigma|}{16\pi}}\leq m_B$$
where $m_B$ is the Bondi mass of $\Omega$. 
\end{theorem}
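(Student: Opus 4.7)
The proof proposal has three essentially independent pieces, each of which has most of its computational content already laid out in Sections 5.4.1--5.4.3 of the paper. My plan is to assemble these pieces in the following order.

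First I would verify past asymptotic flatness with strong flux decay for the unsmoothed metric $g_\delta$. Lemma \ref{l20} already does exactly this, reducing the problem to checking the algebraic constraint $\tfrac12\mathring{\nabla}\cdot\gamma_1+\s{d}\ubar{\theta}=0$, which is built into hypothesis (1). I would therefore simply recall the computations of Section 5.4.1 (the change of variables $r\mapsto s$ using $|\beta|=o_2(r^{-1})$ and $r$-integrability of $\beta$ to get $s=r-c_0+o_3(r^0)$ with $c_0=r_0-\beta_0$, hence the decomposition (27) with $\Gamma_1=2c_0\mathring{\gamma}+\delta\gamma_1$), and note that the $t_1$ computation in Lemma \ref{l20} forces agreement with $\tfrac12\mathring{\nabla}\cdot\Gamma_1+\s{d}\ubar{\theta}_\Gamma$ precisely when the constraint on $\gamma_1$ holds.

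Next I would check that the five substitutions defining the smoothed metric preserve decay conditions (1)--(5) uniformly in $\delta$. Since $0\leq S_\delta\leq 1$ and $|S_\delta'|\leq \delta$ (verified for the explicit choice via the logistic ODE bound) and the substitutions are supported in $r\leq r_2(\delta)$, each modified tensor differs from the original by terms bounded by a fixed multiple of $\delta$ on a compact region, so the required $O(\cdot)$, $o(\cdot)$, and $\delta o^X_n(\cdot)$ estimates persist. Crucially, $\frac12\mathring\nabla\cdot\gamma_1+\s{d}\ubar\theta=0$ is unchanged because $\gamma_1$ itself is unaltered at large $r$. Thus $\Omega$ remains past asymptotically flat with strong flux decay, and the full hypotheses of Theorem \ref{t2} (asymptotic flatness, null energy condition when assumed) are available.

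The main geometric task is to show the coordinate foliation $\{\Sigma_r\}_{r\geq r_0}$ is an (SP)-foliation for sufficiently small $\delta$, and that $\Sigma:=\{r=r_0\}$ is marginally outer trapped. Here I use the three-region formulas for $\rho$ and $\tfrac14\langle\vec H,\vec H\rangle-\tfrac13\s\Delta\log\rho$ derived just before the theorem. In the inner spherical region $r<r_1$, Lemma \ref{l17} applied to the parameter functions $\tilde\beta,\tilde M$ gives $\rho=2\tilde M/r^3>0$ and $\tfrac14\langle\vec H,\vec H\rangle-\tfrac13\s\Delta\log\rho=\tfrac1{r^2}(1-2\tilde M(v_0,r)/r)$; by construction this vanishes at $r=r_0$ (so $\tr\chi|_\Sigma=0$, giving the MOTS condition) and is strictly positive for $r_0<r<r_1$ by the strict inequality $2\tilde M(v_0,r)<r$. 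In the transition band $r_1\leq r\leq r_2$ and the outer region $r_2<r$, the explicit expressions display $\tfrac14\langle\vec H,\vec H\rangle-\tfrac13\s\Delta\log\rho=\tfrac1{r^2}(1-2m(r,\delta)/r)+\delta o(r^{-2})$. Since $m_0\leq m(r,\delta)\leq C(r_1)$ and we choose $r_1>C(r_1)$ (possible because $\tilde M=m_0+o(1)$), this quantity is bounded below by $\tfrac1{r_1^2}(1-C(r_1)/r_1)-\delta o(1)>0$ uniformly; likewise $\rho>0$ with positive uniform lower bound away from the horizon. Hence for $\delta$ small enough, $\{\Sigma_r\}$ is an (SP)-foliation.

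With these three facts in hand, the Penrose inequality follows by direct invocation of Theorem \ref{t2}: $\Omega$ is past asymptotically flat with strong flux decay in a spacetime satisfying the null energy condition, and $\{\Sigma_r\}$ is an (SP)-foliation of $\Omega$ with $\tr\chi|_{\Sigma_0}=0$; one checks that this geodesic-up-to-reparametrization foliation is asymptotically geodesic (with scale factor $\phi=1$, from $s=r-c_0+o_3(r^0)$) and approaches large spheres (since $\mathring\gamma$ is the round metric). The expected main obstacle is not any one of these verifications individually but rather confirming that the smoothing procedure does not disrupt the null energy condition assumption in a way that would force $\delta$ to depend on quantities we have not controlled; in particular one must ensure the chosen $\tilde\beta,\tilde M$ can be arranged compatibly so that the smoothed metric admits at least one example satisfying $\mathrm{NEC}$ on $\Omega$. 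Modulo this compatibility — which the paper sidesteps by taking NEC as a hypothesis on the smoothed $g_\delta$ — the three pieces above assemble into the stated inequality, with the rigidity statement inherited from the equality case of Theorem \ref{t2}.
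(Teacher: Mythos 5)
Your proposal follows the paper's own route essentially step for step: Lemma \ref{l20} for asymptotic flatness with strong flux decay (reduced to the constraint $\tfrac12\mathring\nabla\cdot\gamma_1+\s{d}\ubar\theta=0$ built into hypothesis (1)), verification that the $S_\delta$-substitutions preserve the decay conditions, the three-region formulas for $\rho$ and $\tfrac14\langle\vec H,\vec H\rangle-\tfrac13\s\Delta\log\rho$ to obtain the MOTS at $r=r_0$ and the (SP) property for small $\delta$, and finally an invocation of Theorem \ref{t2}. This is exactly how the paper argues (the theorem is stated as the culmination of Sections 5.4.1--5.4.3), so the proposal is correct and matches the paper's approach.
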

\section*{Acknowledgments}
The author would like to deeply thank Hubert L. Bray for his supervision and support during the development of these ideas as well as Marc Mars for insightful conversation and his invaluable comments upon a close reading of this paper. The author is also very appreciative for the financial support for his last year of graduate school and for travel to conferences provided by NSF grant DMS-1406396.

\end{document}